\documentclass[11pt,11pt,letterpaper,reqno]{amsart}
\usepackage[utf8]{inputenc}
\usepackage{mathrsfs}
\usepackage{enumitem}
\usepackage{amstext}
\usepackage{amsthm}
\usepackage{amssymb}
\usepackage{graphicx}
\usepackage{geometry}
\PassOptionsToPackage{normalem}{ulem}
\usepackage{ulem}
\usepackage[bookmarks=false,
 breaklinks=false,pdfborder={0 0 1},backref=false,colorlinks=false]
 {hyperref}

\makeatletter
\numberwithin{equation}{section}
\numberwithin{figure}{section}

\usepackage{amscd}
\usepackage{latexsym}
\usepackage{amscd}
\usepackage[mathscr]{euscript}
\usepackage{mathrsfs}
\usepackage{xcolor}
\usepackage{color}

\usepackage{lscape}
\usepackage{epstopdf}
\usepackage{array}
\usepackage{tabularx}
\usepackage{longtable}
\usepackage{ltxtable}
\usepackage{color}

\usepackage{tikz}
\usetikzlibrary{decorations,arrows}
\usetikzlibrary{decorations.pathmorphing}
\usepgflibrary{decorations.pathreplacing}
\usepackage{pgf}
\usetikzlibrary{patterns}
\tikzset{
  schraffiert/.style={pattern=horizontal lines,pattern color=#1},
  schraffiert/.default=black
}
\tikzset{
    ultra thin/.style= {line width=0.1pt},
    very thin/.style=  {line width=0.2pt},
    thin/.style=       {line width=0.4pt},
    semithick/.style=  {line width=0.6pt},
    thick/.style=      {line width=0.8pt},
    very thick/.style= {line width=1.2pt},
    ultra thick/.style={line width=2.4pt}
}
 \usetikzlibrary{shapes}
\definecolor{hellgrau}{rgb}{0.93,0.93,0.93}
\definecolor{hellergrau}{rgb}{0.97,0.97,0.97}
\definecolor{hellgruen}{rgb}{0.6,1.35,0.5}
\definecolor{grau}{rgb}{0.93,0.93,0.93}
\definecolor{hellblau}{rgb}{0.8,0.8,2.0}
\definecolor{blau}{rgb}{0.3,0.5,2.0}
\definecolor{hellrot}{rgb}{2.0,0.6,0.6}
\definecolor{gruen}{rgb}{0.3,0.75,0.2}
\definecolor{rot}{rgb}{0.9,0.1,0.1}
\definecolor{orang}{rgb}{1.3,0.65,0}
\allowdisplaybreaks
\DeclareMathAlphabet{\mathpzc}{OT1}{pzc}{m}{it}
\newcommand{\Aa}{{\mathcal A}}

\newcommand{\precd}{\prec\!\!\prec}					
\newcommand{\strict}{\subseteq_{\text{strict}}} 	
\newcommand{\ind}{\mbox{ind}}						
\newcommand{\indB}{\mbox{ind}_B}						
\newcommand{\motw}{=_2}

\allowdisplaybreaks
\usepackage{xcolor}
\newcommand{\Hmm}[1]{\leavevmode{\marginpar{\tiny%
$\hbox to 0mm{\hspace*{-0.5mm}$\leftarrow$\hss}%
\vcenter{\vrule depth 0.1mm height 0.1mm width \the\marginparwidth}%
\hbox to 0mm{\hss$\rightarrow$\hspace*{-0.5mm}}$\\\relax\raggedright #1}}}
\definecolor{green}{RGB}{0, 180, 0}
\definecolor{cyan}{RGB}{0, 180, 180}
\definecolor{yellow}{RGB}{211,211,0}

\renewcommand{\subset}{\subseteq}

\makeatother

\theoremstyle{plain}
\newtheorem{thm}{\protect\theoremname}[section]
\theoremstyle{remark}
\newtheorem*{rem*}{\protect\remarkname}
\theoremstyle{plain}
\newtheorem{prop}[thm]{\protect\propositionname}
\theoremstyle{definition}
\newtheorem{defn}[thm]{\protect\definitionname}
\newtheorem{example}[thm]{\protect\examplename}
\theoremstyle{plain}
\newtheorem{lem}[thm]{\protect\lemmaname}
\newtheorem{cor}[thm]{\protect\corollaryname}
\theoremstyle{remark}
\newtheorem{rem}[thm]{\protect\remarkname}
\providecommand{\corollaryname}{Corollary}
\providecommand{\definitionname}{Definition}
\providecommand{\examplename}{Example}
\providecommand{\lemmaname}{Lemma}
\providecommand{\propositionname}{Proposition}
\providecommand{\remarkname}{Remark}
\providecommand{\theoremname}{Theorem}

\begin{document}

\global\long\def\theenumi{\alph{enumi}}%

\global\long\def\ui{\mathbf{\textrm{i}}}%

\global\long\def\ue{\mathbf{\textrm{e}}}%

\global\long\def\ud{\mathbf{\textrm{d}}}%

\global\long\def\sgn{\mathrm{sign}}%

\global\long\def\id{\mathbf{1}}%

\global\long\def\C{\mathbb{C}}%

\global\long\def\R{\mathbb{R}}%

\global\long\def\Q{\mathbb{Q}}%

\global\long\def\N{\mathbb{N}}%

\global\long\def\Z{\mathbb{Z}}%

\global\long\def\Id{\mathbf{\mathbb{I}}}%

\global\long\def\V{\mathcal{V}}%


\global\long\def\Aa{\mathcal{A}}%

\global\long\def\tree{\mathcal{T}_{\alpha}}%

\global\long\def\graph{\mathcal{G}_{\alpha}}%

\global\long\def\paths{\Gamma_{\alpha}}%

\global\long\def\emap{E_{\alpha}}%

\global\long\def\Cen{Z}%

\global\long\def\co{\textbf{c}}%

\global\long\def\cop{\textbf{\ensuremath{\widetilde{\co}}}}%

\global\long\def\cm{[\textbf{c},m]}%

\global\long\def\cmn{[\textbf{c},m,n]}%

\global\long\def\Co{\mathscr{C}}%

\global\long\def\emp{\emptyset}%

\global\long\def\tr{\textit{tr}}%

\global\long\def\Nz{\N_{0}}%

\global\long\def\Nmo{\N_{-1}}%


\global\long\def\precd{\prec_{\mathrm{str}}}%

\global\long\def\strict{\subseteq_{\mathrm{str}}}%

\global\long\def\Mult{\textrm{\ensuremath{\mathcal{M}}}}%

\global\long\def\relA{\mbox{ind}^{A}_{\textrm{rel}}}%

\global\long\def\ind{\mbox{ind}}%

\global\long\def\indA{\mbox{ind}_{A}}%

\global\long\def\indB{\mbox{ind}_{B}}%

\global\long\def\indC{\mbox{ind}_{C}}%

\global\long\def\dR{\delta_{R}}%

\global\long\def\motw{=_{2}}%

\global\long\def\spec{\mathrm{spec}}%

\global\long\def\tA{\mathit{A}}%

\global\long\def\tB{\mathit{B}}%

\global\long\def\tC{\mathit{G}}%

\global\long\def\sigc{\sigma_{\co}}%

\global\long\def\sigcm{\sigma_{[\co,m]}}%

\global\long\def\sigcmn{\sigma_{[\co,m,n]}}%

\global\long\def\alk{\alpha_{k}}%

\global\long\def\Ham{H_{\alpha,V}}%

\global\long\def\Hrat{H_{\frac{p}{q},V}}%

\global\long\def\Halk{H_{\alk,V}}%

\global\long\def\nHcV{H^{\times n}_{\co,V}}%

\global\long\def\nHc{H^{\times n}_{\co}}%

\global\long\def\nHcm{H^{\times n}_{[\co,m]}}%

\global\long\def\nHcmV{H^{\times n}_{[\co,m],V}}%

\global\long\def\Sturm{\omega_{\alpha}}%

\global\long\def\IDS{N_{\alpha,V}}%

\global\long\def\Hc{H_{\co}}%

\global\long\def\HcV{H_{\co,V}}%

\global\long\def\Hcm{H_{[\co,m]}}%

\global\long\def\HcmV{H_{[\co,m],V}}%

\global\long\def\Hcmn{H_{[\co,m,n]}}%

\global\long\def\HcmnV{H_{[\co,m,n],V}}%

\global\long\def\Hcmo{H_{[\co,m,1]}}%

\global\long\def\Hcz{H_{[\co,0]}}%

\global\long\def\Hco{H_{[\co,1]}}%

\global\long\def\thc{\theta_{\co}}%

\global\long\def\thcm{\theta_{[\co,m]}}%

\global\long\def\thcmn{\theta_{[\co,m,n]}}%

\global\long\def\pc{p_{\co}}%

\global\long\def\pcm{p_{[\co,m]}}%

\global\long\def\pcmn{p_{[\co,m,n]}}%

\global\long\def\qc{q_{\co}}%

\global\long\def\qcm{q_{[\co,m]}}%

\global\long\def\qcmn{q_{[\co,m,n]}}%

\global\long\def\tc{t_{\co}}%

\global\long\def\tcm{t_{[\co,m]}}%

\global\long\def\tcmn{t_{[\co,m,n]}}%

\global\long\def\tco{t_{[\co,1]}}%

\global\long\def\tcmo{t_{[\co,-1]}}%

\global\long\def\tcz{t_{[\co,0]}}%

\global\long\def\lc{\lambda_{\co}}%

\global\long\def\lcm{\lambda_{[\co,m]}}%

\global\long\def\lcmn{\lambda_{[\co,m,n]}}%

\global\long\def\lo{\lambda_{\mathbf{o}}}%

\global\long\def\mo{\mu_{\mathbf{o}}}%

\global\long\def\Nc{N_{\co}}%

\global\long\def\Ncm{N_{[\co,m]}}%

\global\long\def\Ncmn{N_{[\co,m,n]}}%

\global\long\def\Ic{I_{\co}}%

\global\long\def\Ico{I^{1}_{[\co,1]}}%

\global\long\def\Icmno{I^{1}_{[\co,m,n]}}%

\global\long\def\IcmnN{I^{M+1}_{[\co,m,n]}}%

\global\long\def\Icz{I_{[\co,0]}}%

\global\long\def\Jcz{J_{[\co,0]}}%

\global\long\def\Kcz{K_{[\co,0]}}%

\global\long\def\cz{[\co,0]}%

\global\long\def\sigcz{\sigma_{[\co,0]}}%

\global\long\def\Jcm{J_{[\co,m]}}%

\global\long\def\Kcm{K_{[\co,m]}}%

\global\long\def\Icm{I_{[\co,m]}}%

\global\long\def\Icmn{I_{[\co,m,n]}}%

\global\long\def\Icmo{I_{[\co,m,1]}}%

\global\long\def\Icmoo{I^{1}_{[\co,m,1]}}%

\global\long\def\Icmoi{I^{i}_{[\co,m,1]}}%

\global\long\def\Icmi{I^{i}_{[\co,m]}}%

\global\long\def\Icmni{I^{i}_{[\co,m,n]}}%

\global\long\def\Icmnj{I^{j}_{[\co,m,n]}}%

\global\long\def\Icmii{I^{i+1}_{[\co,m]}}%

\global\long\def\Icmnii{I^{i+1}_{[\co,m,n]}}%

\global\long\def\ohn{\omega_{\frac{p_{k}}{q_{k}}}}%

\global\long\def\ohnm{\omega_{\frac{p_{k-1}}{q_{k-1}}}}%

\global\long\def\ohnp{\omega_{\frac{p_{k+1}}{q_{k+1}}}}%

\global\long\def\fl#1{\emph{\ensuremath{\left\lfloor #1\right\rfloor }}}%

\global\long\def\set#1#2{\left\{  #1\thinspace:\thinspace#2\right\}  }%

\global\long\def\newmacroname{\{\}}%

\global\long\def\crit{V_{\mathrm{crit}}}%

\global\long\def\crito{V^{quasi}_{\mathrm{crit}}}%

\global\long\def\TmV{P}%

\global\long\def\Tquas{P^{quasi}}%

\title{Complete hierarchical structure of the spectral bands in the Kohmoto model}
\author{Ram Band, Siegfried Beckus, Raphael Loewy}
\address{Department of Mathematics\\
Technion - Israel Institute of Technology\\
Haifa, Israel}
\email{ramband@technion.ac.il}
\address{Institute of Mathematics\\
University of Potsdam\\
Potsdam, Germany}
\email{beckus@uni-potsdam.de}
\address{Department of Mathematics\\
 Technion - Israel Institute of Technology\\
 Haifa, Israel}
\email{loewy@technion.ac.il}
\begin{abstract}
We study the Kohmoto model, a family of discrete Schr\"odinger operators with Sturmian potentials depending on a frequency and a coupling constant.
We prove that, for all non-vanishing coupling constants, all spectral bands admit a hierarchical structure. 
This structure offers
a variety of applications, including a detailed description of the
Kohmoto butterfly and a central step towards the resolution of the
dry ten Martini problem for Sturmian Hamiltonians, which we carry
out in a subsequent work.
\end{abstract}

\maketitle

\tableofcontents{}

\section{Introduction and main results \label{sec: introduction and main results}}

For $\alpha\in[0,1]$ and $V\in\R$, consider the self-adjoint operator
$H_{\alpha,V}:\ell^{2}(\Z)\to\ell^{2}(\Z)$ defined by 
\begin{align}
(\Ham\psi)(n) & :=\psi(n+1)+\psi(n-1)+V\chi_{\left[1-\alpha,1\right)}(n\alpha\mod 1)\thinspace\psi(n),\label{eq: Hamiltonian defined}
\end{align}
where $\chi_{[1-\alpha,1)}$ is the characteristic function of the
interval $[1-\alpha,1)$, $V\in\R$ is the strength of the potential,
called the \emph{coupling constant}, and $\alpha$ is called the \emph{frequency}.
Here, $n\alpha\mod 1$ is the fractional part of $n\alpha$ in $[0,1)$.

When $\alpha\notin\Q$, this operator $\Ham$ is called a \emph{Sturmian
Hamiltonian}, since the sequence 
\[
\omega_{\alpha}\in\{0,1\}^{\Z},\qquad\omega_{\alpha}(n):=\chi_{\left[1-\alpha,1\right)}(n\alpha\mod 1),\:n\in\Z,
\]
is called a \emph{Sturmian sequence}. When $\alpha=\frac{p}{q}\in\Q$,
the sequence $\omega_{\alpha}\in\{0,1\}^{\Z}$ is $q$-periodic, that
is, $\omega_{\alpha}(n)=\omega_{\alpha}(n+q)$ for all $n\in\Z$.
The associated operator $\Hrat$ is periodic and assuming that $V\neq0$
and $p$ and $q$ are coprime, its spectrum, $\sigma(H_{\frac{p}{q},V})$,
consists of exactly $q$ closed intervals, which are called \emph{spectral
bands}; see, e.g., \cite{Tes00,Raym95,DaFi24-book_2,BaBeBiTh22}.
These periodic operators may also serve as approximations of Sturmian
Hamiltonians with irrational frequencies $\alpha\in[0,1]\backslash\Q$
\cite{Raym95,BIST89,DaFi24-book_2,BaBeLo_DTMP26}. The family of all
operators $(\Ham)_{\alpha\in[0,1]}$ as $\alpha$ ranges over $\left[0,1\right]$
is called the \emph{Kohmoto model} \cite{KohKadTan_prl83,OstKim_phys85}.
Plotting the spectra $\sigma(H_{\alpha,V})$ for various values of
$\alpha\in\Q$ gives rise to the Kohmoto butterfly; see Figure~\ref{fig: Initial Spectra}.
This structure exhibits a striking self-similar and fractal nature,
attracting interest from both mathematicians and physicists.

We start by presenting some notations which are needed to state our
main theorem. Finite continued fraction expansions are commonly used
to represent rational numbers via
\[
c_{0}+\frac{1}{c_{1}+\frac{1}{\ddots+\frac{1}{c_{k}}}}\in\Q.
\]

We extend the standard definitions and introduce the space of \emph{augmented
finite continued fraction expansions,}
\[
\Co:=\left\{ [0],~[0,0]\right\} \cup\bigcup_{k\in\N}\set{[0,0,c_{1},\ldots,c_{k}]}{c_{1},\ldots,c_{k-1}\in\N,~c_{k}\in\N_{-1}},
\]
where $\Nmo:=\N\cup\{-1,0\}$. This notation uses the convention that
the two first entries of all $\co\in\Co$, satisfy $c_{-1}=c_{0}=0$.
The connection between the finite continued fraction expansions and
the rational numbers is as follows.

The \emph{evaluation map} $\varphi:\Co\to\R\cup\left\{ \infty\right\} $
is defined for all $\co=[0,c_{0},c_{1},\ldots,c_{k}]\in\Co\backslash\left\{ [0]\right\} $
by
\begin{equation}
\varphi([0,c_{0},c_{1},\dots,c_{k}]):=\begin{cases}
\varphi([0,c_{0},c_{1},\dots,c_{k-2}]), & k\in\N\textrm{ and }c_{k}=0,\\[0.1cm]
c_{0}+\frac{1}{c_{1}+\frac{1}{\ddots+\frac{1}{c_{k}}}}, & \text{otherwise},
\end{cases}\label{eq: phi map from C to Q}
\end{equation}
and $\varphi([0]):=\infty$.

We use the $\Co$-space to show that there is a hierarchical structure
which involves both nesting and interlacing of spectral bands. In
particular, we define two types ($\tA$ and $\tB$) of spectral bands
(Definition~\ref{def: A B types}). In order not to delay the presentation
of the main theorem, we defer the introduction of spectral bands types
and the related hierarchical structure to Section~\ref{sec: Spectral Types}.
We only mention here that the definition of these types depends on
$\co\in\Co$, rather than on the rational value $\varphi(\co)$; noting
that there exist $\co,\tilde{\co}\in\Co$ with $\varphi(\co)=\varphi(\tilde{\co})$,
see a discussion in Section~\ref{sec: Spectral Types}. Our main
theorem states the following.
\begin{thm}
\label{thm: Every band is A or B} For all $V\neq0$ and $\co\in\Co$
with $\varphi(\co)\in[0,1]\cap\Q$, every spectral band in $\sigma\left(H_{\varphi(\co),V}\right)$
is either of type $A$ or $B$ and its type is independent of the
value of $V>0$ respectively $V<0$.
\end{thm}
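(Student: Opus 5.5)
The proof will be an induction along the tree structure of $\Co$. Each $\co=[0,c_0,\dots,c_k]$ has a parent $[0,c_0,\dots,c_{k-1}]$ and a grandparent $[0,c_0,\dots,c_{k-2}]$. Periodic approximants obey the standard transfer-matrix (trace map) recursion $M_{[\co,m]} = M_{[\co,m-1]}\,\cdots$, so the discriminants obey
\[
t_{[\co,m+1]} = t_{\co}\,t_{[\co,m]} - t_{[\co,m-1]},
\]
in the paper's notation $\tc,\tcm$. The spectrum is $\sigma(\HcV)=\{E:|\tc(E)|\le 2\}$. For $V\neq 0$ and coprime $p,q$ it consists of exactly $\qc$ bands, one for each of the $\qc$ monotonic branches of $\tc$ where it crosses $[-2,2]$.

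Definition~\ref{def: A B types}, deferred to Section~\ref{sec: Spectral Types}, gives the types. My expectation is that a band $I$ of $\sigcm$ is of type $A$ if it is contained in a band of $\sigma_{[\co,m-1]}$ and disjoint from $\sigc$, and of type $B$ if it is contained in a band of $\sigc$ and disjoint from $\sigma_{[\co,m-1]}$. The statement then says that no band meets both or neither of the two preceding spectra in the wrong way.

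\textbf{Base cases.} I will check $[0]$, $[0,0]$, and the cases with $c_k\in\{-1,0\}$ explicitly. For these, $H_{0,V}$ and $H_{1,V}$ have a single band, $[-2,2]$ or $[-2+V,2+V]$, and $t$ is linear. The types are then read off directly.

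\textbf{Inductive step.} Assume the claim holds for all $[\co,j]$ with $j\le m$. Fix a band $I$ of $\sigma_{[\co,m+1]}$. On $I$ we have $|t_{[\co,m+1]}|\le 2$. Using the Fricke–Vogt invariant $t_{\co}^2+t_{[\co,m]}^2+t_{[\co,m+1]}^2 - t_\co t_{[\co,m]}t_{[\co,m+1]} = 4+V^2$, I will show that at every $E\in I$ at most one of $|t_\co(E)|\le 2$ and $|t_{[\co,m]}(E)|\le 2$ holds. The invariant exceeds $4$ when $V\neq0$, and if all three discriminants lie in $[-2,2]$ the left side is at most $4$. So $\sigma_{[\co,m+1]}\cap\sigma_{[\co,m]}\cap\sigc=\emptyset$.

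Next I need that $I$ lies inside a single band of $\sigma_{[\co,m]}$ or of $\sigc$, rather than straddling a gap. For this I will count bands and interlacing. Write $q_{[\co,m+1]} = q_{[\co,m]}+q_\co$. By Cauchy-type interlacing of the zeros of $t_{[\co,m+1]}$ with those of $t_{[\co,m]}$ and $t_\co$, which follows from the recursion and the induction hypothesis on the monotone branches, each of the $q_{[\co,m]}$ bands of $\sigma_{[\co,m]}$ contains exactly one band of $\sigma_{[\co,m+1]}$ on which $t_\co$ is large. Likewise each band of $\sigc$ contains one on which $t_{[\co,m]}$ is large. This accounts for all $q_{[\co,m+1]}$ bands. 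The branch of $t_{[\co,m+1]}$ on $I$ is monotone, and the sign of $t_\co$ or $t_{[\co,m]}$ at a band edge determines which case occurs. These facts assign the type.

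\textbf{Independence of $V$.} The assignment depends only on combinatorial data: the ordering of the zeros of the discriminants and which parent band contains $I$. These vary continuously in $V$ on $(0,\infty)$. A band cannot change type without passing through a configuration where it meets both parent spectra, and that is excluded for all $V\neq0$ by the invariant argument. Hence the type is constant on each of $V>0$ and $V<0$ by connectedness.

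\textbf{Main obstacle.} The hardest step is the containment and counting: showing that a band lies entirely within one parent band and that the bijection matches exactly. I would first try to prove, by induction, the sharper statement that on each band of $\sigcm$ the sign of $t_{[\co,m-1]}$ (for type $A$) or of $t_\co$ (for type $B$) is fixed. I would then feed this into the recursion, treating the index-shift cases where $c_k$ is $0$ or $-1$ via the identification of $\varphi$.
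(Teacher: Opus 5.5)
\textbf{The key step fails for $0<|V|\le 4$.} You assert that if $|t_{\co}|,|t_{[\co,m]}|,|t_{[\co,m+1]}|\le 2$, then $t_{[\co,m+1]}^2+t_{[\co,m]}^2+t_{\co}^2-t_{[\co,m+1]}t_{[\co,m]}t_{\co}\le 4$. This is false. The function $x^2+y^2+z^2-xyz$ is convex in each variable, so its maximum over $[-2,2]^3$ is attained at a vertex and equals $20$ (e.g.\ at $(2,2,-2)$). Hence the Fricke--Vogt value $4+V^2$ excludes triple intersections only when $|V|>4$, which is exactly Raymond's regime. Below that, triple intersections genuinely occur. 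For $\co=[0,0]$, $m=1$, $V=\tfrac12$ and $E=-1$ you get $t_{[0,0]}=E=-1$, $t_{[0,0,1]}=E-V=-\tfrac32$ and $t_{[0,0,2]}=E^2-EV-2=-\tfrac12$, so $E\in\sigma_{[0,0,2]}\cap\sigma_{[0,0,1]}\cap\sigma_{[0,0]}$. More generally, the type-$A$ band $[E_0(V),0]$ of $\sigma_{[0,0,2]}$, with $E_0(V)=\tfrac V2-\sqrt{\tfrac{V^2}{4}+4}$, meets $\sigma_{[0,0,1]}=[-2+V,2+V]$ along $[-2+V,0]$ for every $0<V<2$. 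This band overlap is precisely what the paper says defeats pure trace-map arguments. Your $V$-independence step (``a band cannot change type without meeting both parent spectra'') relies on the same false exclusion, so it collapses too. The paper instead takes Raymond's $V>4$ hierarchy as input. It then proves, via the critical value $\crit([\co,m])$ and a ``first failure'' argument based on Lipschitz continuity of band edges, that the strict inclusions never degenerate as $V\downarrow 0$. A degeneration would force two band edges to coincide. This is ruled out by the interlacing theorem for Floquet--Bloch matrices: $H_{[\co,m,n],V}(\theta_{[\co,m,n]})$ is a traceless rank-two perturbation of $H^{\times n}_{[\co,m],V}(\theta_{[\co,m]})\oplus H_{\co,V}(\theta_{\co})$ whenever the phases contain an even number of $\pi$'s, with strict inequalities at simple eigenvalues. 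This is used together with the index identities and the trace estimates at band edges.

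\textbf{The notion of type and the counting are also wrong.} In the paper, a band of $\sigma_{[\co,m]}$ is of backward type $A$ if it lies strictly inside a band of $\sigma_{[\co,m,0]}=\sigc$. It is of backward type $B$ if it lies strictly inside a band of $\sigma_{[\co,m,-1]}=\sigma_{[\co,m-1]}$, so your roles are reversed. There is no disjointness condition; the example above shows a type-$A$ band meeting the other parent spectrum. Crucially, type $A$/$B$ also requires $m$-forward type for every $m\in\N$. That means the nested children (A1), their non-containment (A2), the tower property (B1) for all $n$, (B2), and the interlacing (I). Your proposal never touches this forward structure, which is the bulk of the proof. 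In the paper it is the proposition that a $V$-independent backward type implies forward type, fed into the horizontal/vertical induction over $\Co$ with explicitly computed base cases $[0,0]$ and $[0,0,1]$. Your bijection is also false, even for large $V$. Of the $q_{[\co,m+1]}=q_{[\co,m]}+q_{\co}$ bands, $q_{[\co,m]}$ lie inside bands of $\sigc$, with $m$ or $m+1$ in each. Only $q_{\co}$ lie inside bands of $\sigma_{[\co,m]}$: one in each type-$B$ band and none in the type-$A$ bands. For instance, for $\co=[0,0]$, $m=2$, $V=10$, the band $[5-\sqrt{29},0]$ of $\sigma_{[0,0,2]}$ contains no band of $\sigma_{[0,0,3]}$, whereas $[-2,2]$ contains two. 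Finally, the ``Cauchy-type interlacing of zeros'' you say follows from the recursion is not justified by the three-term relation alone. Supplying exactly this is the job of the Floquet--Bloch interlacing theorem with admissible phases.
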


This theorem forms a central step towards our resolution of the dry
ten Martini problem \cite{BaBeLo_DTMP26}. In the large coupling regime
$V>4$, an analogue of Theorem~\ref{thm: Every band is A or B} was
proven by Raymond \cite{Raym95} in a slightly different terminology;
see the review \cite{BaBeBiTh22}, which is adapted to the present
framework. The hierarchical structure which that result provided was
a powerful tool for estimating the fractal dimensions in the large
coupling regime; see \cite{KiKiLa03,Damanik2008,Liu2014,Damanik2015,CaoQu_arXiv23}.
Hence, the result above for all $V\neq0$ might enhance the analysis
of Kohmoto model's fractal dimensions. Moreover, we expect that this
structure provides new insights into the self-similarity of the Kohmoto
butterfly in terms of number-theoretic properties of $\alpha$, such
as its continued fraction expansion; see \cite{BecBelTho25} for progress
in this direction.

It is not trivial to extend the original result of Raymond \cite{Raym95}
from the large coupling regime to $0<V\leq4$ is more delicate. The
difficulty arises since spectral bands start to overlap and classical
approaches based on trace maps are no longer sufficient to control
the spectral band relative structure. We overcome this challenge by
\begin{itemize}
\item studying the whole $\Co$-space of finite continued fractions expansions
simultaneously, see Section~\ref{sec: Proof_AB_Type}.
\item establishing an interlacing theorem for the Floquet--Bloch matrices,
see Section~\ref{subsec: Perturbation argument}. For this sake,
a new concept of eigenvalue admissibility plays a central role.
\item using a uniform Lipschitz bound in $V$ of the spectral bands, see
Section~\ref{sec: Spectral Types}.
\end{itemize}
The paper is organized as follows. Section~\ref{sec: Spectral Types}
introduces the formal definitions of spectral bands of type $\tA$
and $\tB$. Section~\ref{sec: Proof_AB_Type} contains the proof
of the main theorem via induction over the $\Co$-space, subject to
two key ingredients: that the backward type implies the forward type,
and the induction base. The necessary spectral descriptions via Floquet-Bloch
matrices and transfer matrix traces are developed in Section~\ref{sec: Basic spectral analysis},
where we also show a symmetry of the spectrum -- reducing the study
to $V>0$ -- and state an interlacing theorem, whose proof is deferred
to the Appendix~\ref{App: Perturbation argument}. The assumptions
of this theorem are then verified in Section~\ref{sec: Admissibility. index relations},
where we also develop the necessary tools and the partial results
which support the proof of the main theorem. Sections~\ref{subsec:Backward-implies-forward}
and \ref{sec: Pf_InductionBase} are devoted to proving these two
key ingredients: Section~\ref{subsec:Backward-implies-forward} shows
that the backward type implies the forward type, while Section~\ref{sec: Pf_InductionBase}
establishes the induction base. Basic combinatorial properties of
Sturmian sequences are collected in Appendix~\ref{App: Sturmian dynamical systems},
while standard techniques for trace identities are deferred to Appendix~\ref{App: Trace Maps}.

\section{The $\protect\tA$/$\protect\tB$ type of spectral bands\label{sec: Spectral Types}}

We start by returning to the definition of the $\Co$-space and further
exploring it.

~From (\ref{eq: phi map from C to Q}) follows that for $k\geq1$,
\begin{align*}
\varphi([0,c_{0},c_{1},\dots,c_{k},-1]) & =\varphi([0,c_{0},c_{1},\dots,c_{k}-1])\quad\text{and}\\
\varphi([0,c_{0},\dots,c_{k-1},0]) & =\varphi([0,c_{0},\dots,c_{k-2}]).
\end{align*}
The second identity may be intuitively understood if one allows $c_{k}$
to take real values and then consider the limit $c_{k}\to0$. Such
non-standard continued fraction expansions, ending with $0$ or $-1$
play a special role in the theory presented here, see e.g. Definition~\ref{def: backward type}.
\begin{rem*}
Observe that $\mathrm{Image}(\varphi)\subset(\Q\cap[0,1])\cup\{-1,\infty\}$.
The values $-1$ and $\infty$ deserve a special treatment and this
is done in the forthcoming statements, see e.g. Definition~\ref{def: sigma_c}
(see also \cite[Sec.~2.1]{BaBeBiTh22}). Currently, just note that
$\varphi(\co)=-1\Leftrightarrow\co=[0,0,-1]$, and 
\[
\varphi(\co)=\infty\quad\Leftrightarrow\quad\co\in\left\{ [0],~[0,0,0],~[0,0,1,-1]\right\} .
\]
\end{rem*}
It should be noted that the evaluation map $\varphi$ is surjective
but not injective. In fact for each rational $\frac{p}{q}\in(0,1)$,
there exist exactly two $\co,\tilde{\co}\in\Co$ whose last digit
is not $0$ nor $-1$ and $\varphi(\co)=\frac{p}{q}=\varphi(\tilde{\co})$,
see \cite[Ch.~I.4]{Khinchin_book64}. This is used in Proposition~\ref{prop: duality of A-B types}
below.

The following are well-known properties of the periodic Schrödinger
operators $\Hrat$ with $p,q$ coprime, see e.g. \cite{Tes00,Raym95,DaFi24-book_2,BaBeBiTh22}.
\begin{prop}
\label{prop: Basic spectral prop periodic} Let $V\in\mathbb{R}\backslash\left\{ 0\right\} $
and $\frac{p}{q}\in[0,1]$ such that $p$ and $q$ are coprime. Then
$\Hrat$ has absolutely continuous spectrum and the spectrum $\sigma\left(H_{\frac{p}{q},V}\right)$
consists of exactly $q$ connected components, each being a closed
interval \textup{$I$}.
\end{prop}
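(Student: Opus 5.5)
The plan is standard Floquet--Bloch theory for the $q$-periodic operator $H:=\Hrat$. First I will treat the degenerate cases $q=1$, i.e. $\frac{p}{q}\in\{0,1\}$. Here the potential is constant ($0$ or $V$), so the spectrum is $[-2,2]$ or $[-2+V,2+V]$: one band, with purely absolutely continuous spectrum (the operator is unitarily equivalent to multiplication by $2\cos\theta+\mathrm{const}$ on $L^2(\mathbb{T})$).

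For $q\ge2$, I will conjugate $H$ with the discrete Floquet--Bloch transform. This makes $H$ unitarily equivalent to the direct integral $\int^{\oplus}_{[0,2\pi)} H(\theta)\,\frac{d\theta}{2\pi}$. Here $H(\theta)$ is the $q\times q$ Hermitian matrix of $H$ restricted to one period, with corner entries $e^{\pm\ui\theta}$; for $q=2$ these add to the off-diagonal entries. Let $E_1(\theta)\le\dots\le E_q(\theta)$ be its eigenvalues. Then the spectrum is $\bigcup_{j}E_j([0,2\pi])$, and each $E_j([0,2\pi])$ is a closed interval by continuity.

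Next I will use the transfer-matrix discriminant $D(E):=\tr\bigl(T_q(E)\bigr)$. It is a real polynomial of degree $q$ with leading coefficient $1$. Moreover, $E$ is an eigenvalue of $H(\theta)$ if and only if $D(E)=2\cos\theta$. Hence $\sigma(H)=D^{-1}([-2,2])$. The standard Sturm--Liouville/oscillation argument shows that the equation $D(E)=t$ has $q$ real simple roots for every $t\in(-2,2)$. Equivalently, every critical point $E^*$ of $D$ satisfies $|D(E^*)|\ge2$. Consequently $D^{-1}([-2,2])$ is a union of $q$ intervals $B_j=E_j([0,\pi])$, on each of which $D$ is strictly monotone. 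Adjacent bands can touch only at points where $|D|=2$ and $D'=0$.

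Absolute continuity will come from the direct integral. Each $E_j$ is real-analytic and non-constant on $(0,\pi)$, since $D\circ E_j=2\cos\theta$ forces $E_j'\neq0$ there. Therefore the push-forward of Lebesgue measure under $E_j$ is absolutely continuous, and so are all spectral measures.

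The main obstacle is showing that there are exactly $q$ connected components, i.e. that no gap is closed. This amounts to proving that every critical point $E^*$ of $D$ satisfies $|D(E^*)|>2$ strictly, so that there are $q-1$ open gaps. I would first note that $|D(E^*)|=2$ with $D'(E^*)=0$ would force $T_q(E^*)=\pm\Id$. That would make $E^*$ a doubly degenerate eigenvalue of $H(0)$ or $H(\pi)$. I then expect to rule this out using the specific two-letter Sturmian structure together with $p,q$ coprime and $V\neq0$. My first attempt would use the trace map recursion over the continued fraction expansion of $\frac{p}{q}$: if $T_q=\pm\Id$, then the commutator invariant $\tr(T_{q}T_{q'}T_q^{-1}T_{q'}^{-1})-2=V^2\neq0$ (Fricke--Vogt invariant) for the consecutive approximant pair gives a contradiction, since a scalar $T_q$ commutes with everything and would make the commutator trace equal to $2$. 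This is where $V\neq0$ and coprimality enter essentially. As a fallback, I would cite the cited references \cite{Raym95,BaBeBiTh22}, where this counting is carried out.
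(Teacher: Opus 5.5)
Your proposal is correct. The paper gives no proof of Proposition~\ref{prop: Basic spectral prop periodic}. It quotes the result as standard from the cited literature and then uses it as an input (e.g.\ in Lemma~\ref{lem: Traces and spectral edges}(a)), so what you wrote is a self-contained version of the standard argument rather than an alternative to anything in the text. The following parts of your argument are sound:
the Floquet--Bloch reduction;
the identity $\det(E-H(\theta))=D(E)-2\cos\theta$, which is (\ref{eq: Chatacteristic polynomial and traces}) with $D=t_{\co}$ for $\varphi(\co)=\frac pq$;
simplicity of the roots of $D-t$ for $|t|<2$;
absolute continuity via $D'(E_j(\theta))\,E_j'(\theta)=-2\sin\theta\neq0$ on $(0,\pi)$;
and the reduction of a closed gap to $T_q(E^*)=\pm\Id$.

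Your gap-opening step also works as you hope, using only tools already in the paper. Write $\co=[\co',c_k]$. Since $M_{[\co',n+1]}=M_{[\co',n]}M_{\co'}$, apply the Fricke--Vogt identity of Proposition~\ref{Prop-traceMaps}(b) with $n=c_k$. Combined with the $SL(2)$ identity $\tr(ABA^{-1}B^{-1})=\tr(A)^2+\tr(B)^2+\tr(AB)^2-\tr(A)\tr(B)\tr(AB)-2$, it gives exactly $\tr(ABA^{-1}B^{-1})=V^2+2$ for $A=M_{\co}$, $B=M_{\co'}$. Equivalently, the relation $[X,YX^{c}]=[X,Y]$ propagates the value $\tr[M_{[0]},M_{[0,0]}]=V^2+2$ along the recursion. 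So $M_{\co}(E^*)=\pm\Id$ forces $V=0$.

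The one link you should make explicit is from your monodromy $T_q$ to $M_{\co}$, since a priori you only know $\tr T_q=\tr M_{\co}$. That is enough. $M_{\co}$ is itself a product of $q$ one-step matrices $M_{[0,0]}$ and $M_{[0]}M_{[0,0]}$, so it is the monodromy of some $q$-periodic Schr\"odinger operator with the same discriminant $D$. Your double-root argument applied to that operator gives $M_{\co}(E^*)=\pm\Id$ directly, without comparing words.

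Compared with the bare citation, your route shows where each hypothesis enters. $V\neq0$ enters through positivity of the Fricke--Vogt invariant. Coprimality enters because $q$ is the denominator of the continued fraction $\co$, so $\deg D=q$.
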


Let $I\subseteq\R$ be a closed interval. We define its left and right
endpoints by
\[
L(I):=\inf_{x\in I}x\quad\textrm{respectively}\quad R(I):=\sup_{x\in I}x.
\]
An antisymmetry of the Floquet--Bloch matrices yields $\sigma\left(H_{\frac{p}{q},V}\right)=-\sigma\left(H_{\frac{p}{q},-V}\right)$,
as shown in Lemma~\ref{lem: Symmetry Spectrum negative coupling}
below, reducing the study to $V>0$.
\begin{defn}
\label{def: A spectral band is continuous}For $p,q$ coprime, a map
$I:V\mapsto I(V)=\left[L(I(V)),R(I(V))\right],~V>0,$ is called a
\emph{spectral band} in $\sigma\left(H_{\frac{p}{q},V}\right)$ if
there is a $0\leq j<q$, such that for all $V>0$, $I(V)$ is the
$j$-th connected component (counted from the left) of $\sigma\left(H_{\frac{p}{q},V}\right)$.
\end{defn}

\begin{rem*}
In the following, we will abuse terminology and also refer to the
evaluation of that map, i.e., $I(V)$, as a spectral band. This is
a common terminology in the literature. Whether a spectral band means
the map itself or its evaluation will be either understood from the
context or explicitly mentioned.
\end{rem*}
The spectral bands vary continuously with respect to the Hausdorff
metric $d_{H}$ on the compact subsets of $\R$ induced by the Euclidean
distance,
\[
d_{H}(X,Y):=\max\left\{ \sup_{x\in X}d(x,Y),~\sup_{y\in Y}d(y,X)\right\} .
\]

\begin{prop}
\label{prop: Lipschitz spectral edges}Let $\alpha\in[0,1]$ and $V,V'\in\R$.
Then
\[
d_{H}\big(\sigma(H_{\alpha,V}),\sigma(H_{\alpha,V'})\big)\leq|V-V'|.
\]

In particular, if $I:V\mapsto I(V)$ is a spectral band of $\sigma\left(H_{\frac{p}{q},V}\right))$,
then for all $V,V'>0$,
\[
\max\big\{|L(I(V))-L(I(V'))|,\,|R(I(V))-R(I(V'))|\big\}\leq|V-V'|.
\]
\end{prop}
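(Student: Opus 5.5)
We need to prove Proposition \ref{prop: Lipschitz spectral edges}. The approach is a standard perturbation bound, followed by a continuity and counting argument for the individual band edges.

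\emph{Step 1 (Hausdorff bound).} Write $H_{\alpha,V}=\Delta+V\,M_{\omega_\alpha}$, where $\Delta$ is the discrete Laplacian and $M_{\omega_\alpha}$ is multiplication by $\omega_\alpha\in\{0,1\}^{\Z}$. Then
\[
\|H_{\alpha,V}-H_{\alpha,V'}\|=|V-V'|\,\|M_{\omega_\alpha}\|\le|V-V'|.
\]
For self-adjoint bounded $A,B$ one has $\sup_{x\in\sigma(A)}d(x,\sigma(B))\le\|A-B\|$. To see this, take $x\in\sigma(A)$ with $d(x,\sigma(B))>\|A-B\|$. Then
\[
\|(B-x)^{-1}\|=d(x,\sigma(B))^{-1}<\|A-B\|^{-1},
\]
so $A-x=(B-x)\bigl(1+(B-x)^{-1}(A-B)\bigr)$ is invertible by a Neumann series, a contradiction. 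Applying this symmetrically in $A$ and $B$ gives the first claim.

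\emph{Step 2 (edges of a fixed band).} Fix $p/q$ and $0\le j<q$. By Proposition \ref{prop: Basic spectral prop periodic}, for every $V>0$ the spectrum $\sigma(H_{p/q,V})$ consists of exactly $q$ disjoint closed intervals $I_0(V)<\dots<I_{q-1}(V)$. It suffices to prove the bound for $|V-V'|$ small, since the bound is additive along a chain $V=V_0<V_1<\dots<V_N=V'$ by the triangle inequality.

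Fix $V$. Let $g>0$ be the smallest gap length among these $q$ intervals, and take $|V-V'|<g/2$. By Step 1 each band $I_i(V')$ lies in the $|V-V'|$-neighbourhood of $\sigma(H_{p/q,V})$. These neighbourhoods of the $I_i(V)$ are pairwise disjoint, so each connected band $I_i(V')$ lies in the neighbourhood of exactly one $I_{\pi(i)}(V)$. Conversely, every $I_i(V)$ lies in the neighbourhood of $\sigma(H_{p/q,V'})$, so each neighbourhood meets some band of $V'$; hence the map $\pi$ is onto. As there are exactly $q$ bands on each side, $\pi$ is a bijection, and it preserves order, so $\pi=\mathrm{id}$.

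Consequently the restriction to the $j$-th neighbourhood gives
\[
d_H\bigl(I_j(V),I_j(V')\bigr)\le|V-V'|.
\]
For closed intervals, the Hausdorff distance equals $\max\{|L-L'|,|R-R'|\}$, which yields the claimed edge bound for $|V-V'|<g(V)/2$.

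\emph{Step 3 (globalization).} For arbitrary $V<V'$, cover $[V,V']$ by the open intervals $\bigl(u-g(u)/2,\,u+g(u)/2\bigr)$ for $u\in[V,V']$, and extract a finite subcover by compactness. Choose a chain $V=V_0<V_1<\dots<V_N=V'$ in which consecutive points lie in a common interval of the subcover. Summing the local estimates along this chain gives the result.

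The only delicate point is the band-matching in Step 2, which uses the exact count $q$ of bands at every $V>0$ (this relies on $V\neq0$). This is precisely why the statement is restricted to $V,V'>0$.
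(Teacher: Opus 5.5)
Your Steps 1 and 2 are correct. Step 2 is more careful than the paper's proof. The paper proves the operator-norm bound and then says the edge estimate ``follows directly from the definition of the Hausdorff metric''. But a Hausdorff bound on the whole spectra does not by itself control the edges of a fixed band when a gap is small, so a band-matching argument like yours is really needed.

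The gap is in the chaining of Step 3. Step 2 gives an estimate anchored at the point whose gap is used: for every $u>0$ and every $w>0$ with $|w-u|<g(u)/2$, each edge $E$ of the $j$-th band satisfies $|E(w)-E(u)|\le|w-u|$. Suppose consecutive chain points $V_i<V_{i+1}$ only lie in a common covering interval centred at $u$. Then the triangle inequality yields
\[
|E(V_{i+1})-E(V_i)|\le|V_i-u|+|V_{i+1}-u|,
\]
which is strictly larger than $V_{i+1}-V_i$ whenever $u\notin[V_i,V_{i+1}]$. You also cannot re-centre Step 2 at $V_i$: $V_{i+1}-V_i$ can be close to $g(u)$, while $g(V_i)/2$ may be smaller. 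So summing along such a chain does not give the constant $1$.

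A continuity argument repairs this. For an edge $E$ and $0<V<V'$, put $s:=\sup\{w\in[V,V']:\ |E(w)-E(V)|\le w-V\}$; note $g(s)>0$ since $s>0$.
\begin{itemize}
\item Apply the anchored estimate at the centre $s$ to points $w_n\uparrow s$ of this set. This gives $|E(s)-E(V)|\le (s-w_n)+(w_n-V)=s-V$, so $s$ belongs to the set.
\item If $s<V'$, the same estimate gives $|E(w)-E(V)|\le (w-s)+(s-V)=w-V$ for all $w\in(s,s+g(s)/2)\cap[V,V']$. This contradicts the choice of $s$, hence $s=V'$.
\end{itemize}

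You could also avoid globalization altogether. By Lemma~\ref{lem: Floquet-Bloch matrices}, the left and right edges of the $j$-th band are $\lambda_j(H_{\mathbf{c},V}(\theta))$ and $\lambda_j(H_{\mathbf{c},V}(\pi-\theta))$, with $\theta\in\{0,\pi\}$ depending only on $j$ and $q$. Weyl's inequality $|\lambda_j(A)-\lambda_j(B)|\le\|A-B\|$, together with $\|H_{\mathbf{c},V}(\theta)-H_{\mathbf{c},V'}(\theta)\|\le|V-V'|$, then gives the edge bound for all $V,V'>0$ at once.
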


\begin{proof}
The first statement follows from the operator norm estimate $\|H_{\alpha,V}-H_{\alpha,V'}\|\leq|V-V'|$.
The second follows directly from the definition of the Hausdorff metric.
\end{proof}

We continue introducing basic relations on spectral bands.
\begin{defn}
\label{def: order relations on spectral bands - as intervals} For
two closed intervals $I$ and $J$ define the following order relations.
\begin{enumerate}
\item The interval \emph{$I$ is} contained \emph{in $J$}:
\[
I\subseteq J\quad\Leftrightarrow\quad L(J)\leq L(I)\mathrm{<R(I)\leq R(J)}.
\]
\item The interval \emph{$I$ is }strictly contained \emph{in $J$}:
\[
I\strict J\quad\Leftrightarrow\quad L(J)<L(I)\mathrm{<R(I)<R(J)}.
\]
\item The interval \emph{$I$ is} to the left of \emph{$J$ (respectively
$J$} \emph{is} to the right of\emph{ $I$):}
\[
I\prec J\quad\Leftrightarrow\quad L(I)<L(J)\mathrm{~and~R(I)<R(J).}
\]
\item The interval \emph{$I$ is} strictly to the left of \emph{$J$ (respectively
$J$} \emph{is} strictly to the right of\emph{ $I$):}
\[
I\precd J\quad\Leftrightarrow\quad R(I)<L(J).
\]
\end{enumerate}
\end{defn}

For two spectral bands $I:V\mapsto\left[L(I(V)),R(I(V))\right]$ and $J:V\mapsto\left[L(J(V)),R(J(V))\right]$, we extend these strict (i.e., irreflexive) order relations if they hold for all $V>0$.
\begin{enumerate}
\item The spectral band \emph{$I$ is }strictly contained \emph{in $J$}:
\[
I\strict J\quad\Leftrightarrow\quad\forall V>0:\quad I(V)\strict J(V)
\]
\item The spectral band \emph{$I$ is} to the left of \emph{$J$ (respectively
$J$} \emph{is} to the right of\emph{ $I$):}
\[
I\prec J\quad\Leftrightarrow\quad\forall V>0:\quad I(V)\prec J(V).
\]
\end{enumerate}
Note that it is possible that $I$ is to the left of $J$ even if
$I(V)\cap J(V)\neq\emptyset$ for some value of $V$.

We now have all tools to define the types of spectral bands. As discussed
in Section~\ref{sec: introduction and main results} the types depend
on the finite continued fraction expansion and so the following terminology
will be used.
\begin{defn}
\label{def: sigma_c}For all $V\in\R$, and $\co\in\Co$ define 
\[
\sigc(V):=\begin{cases}
\sigma\left(H_{\varphi(\co),V}\right), & \varphi(\co)\in[0,1],\\
\sigma\left(H_{1,-V}\right)=[-2-V,2-V], & \varphi(\co)=-1,\\
\R & \varphi(\co)=\infty.
\end{cases}
\]
\end{defn}

We will see in Section~\ref{subsec: spectra via transfer matrices}
that these sets are naturally defined by traces associated with $\co\in\Co$.
Towards the next definition we introduce the following notation:
\[
[\co,m]:=[0,0,c_{1},\ldots,c_{k},m],\qquad m\in\N_{-1},
\]
which is employed only when $[\co,m]\in\Co$. This notation will be
used in several statements and proofs below.
\begin{defn}
\label{def: backward type}Let $V\in\R$ and $\co\in\Co$ be such
that $\varphi(\co)\not\in\left\{ -1,\infty\right\} $ and $[\co,0],[\co,-1]\in\Co$.
A spectral band $I(V)$ of $\sigma_{\co}(V)$ is called
\begin{itemize}
\item \begin{flushleft}
\emph{backward type $\tA$} \\
if there exists a spectral band $J(V)$ in $\sigma_{[\co,0]}(V)$
such that $\mbox{\ensuremath{I(V)\strict J(V)}}$.
\par\end{flushleft}
\item \begin{flushleft}
\emph{weak backward type $\tA$} \\
if there exists a spectral band $J(V)$ in $\sigma_{[\co,0]}(V)$
such that $\mbox{\ensuremath{I(V)\subseteq J(V)}}$.
\par\end{flushleft}
\item \begin{flushleft}
\emph{backward type $\tB$} \\
if there exists a spectral band $J(V)$ in $\sigma_{[\co,-1]}(V)$
such that $\mbox{\ensuremath{I(V)\strict J(V)}}$.
\par\end{flushleft}
\item \begin{flushleft}
\emph{weak backward type $\tB$} \\
if there exists a spectral band $J(V)$ in $\sigma_{[\co,-1]}(V)$
such that $\mbox{\ensuremath{I(V)\subseteq J(V)}}$.
\par\end{flushleft}

\end{itemize}
\end{defn}

We note that, by definition, whether a spectral band is of (weak)
backward type $\tA$ or $\tB$ depends on V, since bands are treated
as intervals here. In Theorem~\ref{thm: Every band is A or B}, we
state that each spectral band in $\sigc(V)$ has a unique type (either
$\tA$ or $\tB$), independent of $V>0$ (and likewise for $V<0$).
For $V=0$, all spectra coincide with $[-2,2]$, so this case is excluded.
The notation $\tA$ and $\tB$ is adopted from \cite{KiKiLa03}, where
it appeared for the specific case $\alpha=\frac{\sqrt{5}-1}{2}$.
For visual reasons we do not use the $II$, $III$ notation as in
\cite{Raym95}, see also a discussion in \cite[Sec.~4.2]{BaBeBiTh22}.

By definition $\sigma_{\cop}(V)=\sigc(V)$ for $\co,\widetilde{\co}\in\Co$
with $\varphi(\co)=\varphi(\widetilde{\co})$. Nevertheless, we emphasize
that the backward type of a spectral band $I(V)$ of $\sigc(V)$ depends
on $\co\in\Co$ and not on its evaluation $\varphi(\co)\in[0,1]$.
This is demonstrated in the next proposition, which shows why it is
advantageous to use the space $\Co$ and not just rational numbers.
\begin{prop}
\label{prop: duality of A-B types}Let $V\in\R$ and $\co\in\Co$
with $\varphi(\co)\not\in\left\{ -1,\infty\right\} $ and $[\co,m]\in\Co$
for all $m\in\N_{-1}$. For $m\geq2$, we have $\sigma_{[\co,m]}(V)=\sigma_{[\co,m-1,1]}(V)$.
Moreover, if $I(V)$ is a spectral band in $\sigma_{[\co,m]}(V)=\sigma_{[\co,m-1,1]}(V)$,
then both of the following hold
\begin{itemize}
\item \begin{flushleft}
$I(V)$ is of (weak) backward type $A$ in $\sigma_{[\co,m]}(V)$
if and only if\\
 $I(V)$ is of (weak) backward type $B$ in $\sigma_{[\co,m-1,1]}(V)$.\\
~
\par\end{flushleft}
\item \begin{flushleft}
$I(V)$ is of (weak) backward type $B$ in $\sigma_{[\co,m]}(V)$
if and only if \\
$I(V)$ is of (weak) backward type $A$ in $\sigma_{[\co,m-1,1]}(V)$.
\par\end{flushleft}
\end{itemize}
\end{prop}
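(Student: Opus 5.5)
The plan is to reduce the statement to identities of the evaluation map $\varphi$ on $\Co$. Once we know which spectra $\sigma_{[\cdot]}(V)$ coincide, the comparison sets in Definition~\ref{def: backward type} can be matched directly. Throughout, write $\co=[0,0,c_1,\dots,c_k]$; by hypothesis $\varphi(\co)\notin\{-1,\infty\}$ and all extensions $[\co,m]$ are in $\Co$.

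\emph{Equality of spectra.} For $m\geq2$ the standard continued fraction identity gives
\[
\frac{1}{m}=\frac{1}{(m-1)+\frac{1}{1}} .
\]
Applied in the last entry, this yields $\varphi([\co,m])=\varphi([\co,m-1,1])$. The expression $[\co,m-1,1]$ lies in $\Co$ because $m-1\in\N$. Since $\sigma_{\co}(V)$ is defined only through $\varphi(\co)$ (Definition~\ref{def: sigma_c}), we get $\sigma_{[\co,m]}(V)=\sigma_{[\co,m-1,1]}(V)$.

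\emph{Matching the comparison spectra.} For $[\co,m]$, backward type $A$ compares with $\sigma_{[\co,m,0]}$ and type $B$ with $\sigma_{[\co,m,-1]}$. For $[\co,m-1,1]$, type $A$ compares with $\sigma_{[\co,m-1,1,0]}$ and type $B$ with $\sigma_{[\co,m-1,1,-1]}$. The displayed rules after \eqref{eq: phi map from C to Q} give the following four evaluations:
\begin{align*}
\varphi([\co,m,0]) &= \varphi(\co),\\
\varphi([\co,m-1,1,-1]) &= \varphi([\co,m-1,0])=\varphi(\co),\\
\varphi([\co,m,-1]) &= \varphi([\co,m-1]),\\
\varphi([\co,m-1,1,0]) &= \varphi([\co,m-1]).
\end{align*}
In the second line the trailing $-1$ lowers the preceding $1$ to $0$, and the trailing $0$ then deletes two entries. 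In the third line the trailing $-1$ lowers $m$ to $m-1$. In the fourth line the trailing $0$ deletes two entries.

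It follows that
\[
\sigma_{[\co,m,0]}(V)=\sigma_{[\co,m-1,1,-1]}(V),\qquad \sigma_{[\co,m,-1]}(V)=\sigma_{[\co,m-1,1,0]}(V).
\]
The comparison sets for $A$ and $B$ are therefore exchanged. Since $I(V)$ is the same interval in both spectra, $I(V)\strict J(V)$ (respectively $I(V)\subseteq J(V)$) for a band $J(V)$ of one set holds exactly when it holds for the same $J(V)$ viewed in the other. This gives both equivalences, in both the strict and weak versions.

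\emph{Anticipated obstacle.} The main step to check is that the recursive rules really apply at each level. The rule $\varphi([\dots,c_k,-1])=\varphi([\dots,c_k-1])$ is only stated for $k\geq1$, and when $\co=[0,0]$ the term $[\co,m-1,0]$ must be evaluated as $\varphi([0,0])$. One must also verify that $\varphi(\co)$ and $\varphi([\co,m-1])$ lie in $[0,1]$, or else fall under the special cases $-1$ and $\infty$, and that both sides are then treated identically by Definition~\ref{def: sigma_c}. Case $m=2$ gives $[\co,1]$, whose value may be $\infty$ when $\co=[0,0]$; then both comparison spectra equal $\R$, and the equivalence still holds trivially in that case. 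I would check these edge cases explicitly, but they pose no conceptual difficulty.
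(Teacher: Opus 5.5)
Your proof is correct and follows the paper's argument. Both reduce the claim to the evaluations $\varphi([\co,m,0])=\varphi([\co,m-1,1,-1])=\varphi(\co)$ and $\varphi([\co,m,-1])=\varphi([\co,m-1,1,0])=\varphi([\co,m-1])$; the second pair is what the paper leaves as ``follows similarly.'' Hence the comparison spectra for backward types $A$ and $B$ are simply exchanged.

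One harmless slip is in your edge-case remark: for $\co=[0,0]$ and $m=2$ one has $\varphi([0,0,1])=1$, not $\infty$. In any case no case distinction is needed, because $\sigma_{\cdot}(V)$ depends only on the value of $\varphi$.
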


\begin{proof}
If $m\geq2$, then $\varphi([\co,m])=\varphi([\co,m-1,1])$ follows
by the definition of $\varphi$ (this is actually a well-known duality
for finite continued fraction expansions \cite[Ch.~I.4]{Khinchin_book64}).
Now, $\sigma_{[\co,m]}(V)=\sigma_{[\co,m-1,1]}(V)$ follows. We suppress
the $V$ dependence in the following.

Let $I$ be a spectral band in $\sigma_{[\co,m]}=\sigma_{[\co,m-1,1]}$.
By definition, $I$ is of backward type $A$ in $\sigma_{[\co,m]}$
if and only if it is strictly contained in a spectral band of $\sigma_{[\co,m,0]}=\sigma_{\co}$
(where we used $\varphi([\co,m,0])=\varphi(\co)$). On the other hand,
$I$ is of backward type $B$ in $\sigma_{[\co,m-1,1]}$ if and only
if it is strictly contained in a spectral band of $\sigma_{[\co,m-1,1,-1]}=\sigma_{\co}$
(where we used $\varphi([\co,m-1,1,-1])=\varphi([\co,m-1,0])=\varphi(\co)$).
This proves the first equivalence and the second one follows similarly.
\end{proof}

\begin{figure}
\includegraphics[scale=0.75]{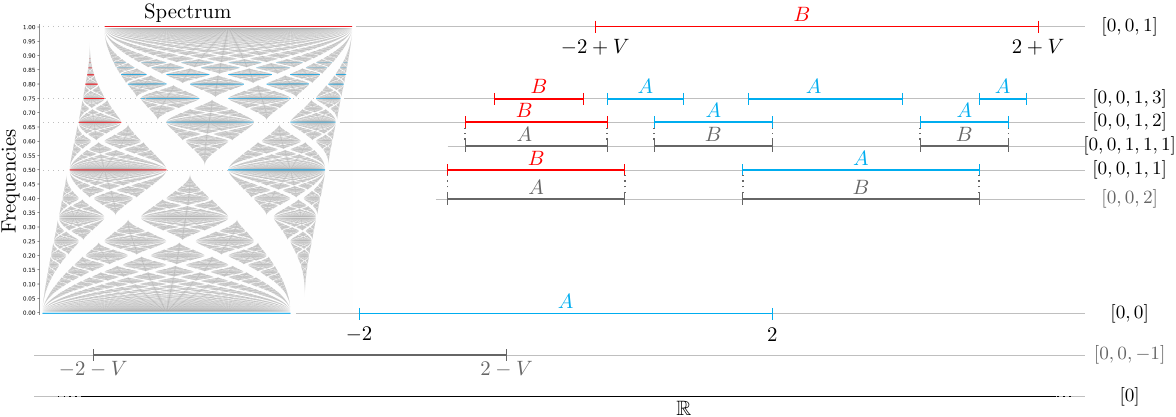} \caption{A plot of various spectra $\sigma_{\protect\co}(V)$ for $\protect\co\in\protect\Co$.
The spectral bands are colored according to their backward types ($A$
in blue and $B$ in red). The embedding of these spectral bands within
the Kohmoto butterfly is highlighted.}
\label{fig: Initial Spectra}
\end{figure}

\begin{example}
\label{exa:A-B types} A short computation shows 
\[
\sigma_{[0,0]}(V)=[-2,2]=:I_{[0,0]}(V)\quad\text{and}\quad\sigma_{[0,0,1]}(V)=[-2+V,2+V]=:I_{[0,0,1]}(V)
\]
Moreover, $I_{[0,0]}(V)$ is of backward type $\tA$ but not of weak
backward type $\tB$, while $I_{[0,0,1]}(V)$ is of backward type
$\tB$ but not of weak backward type $\tA$, for all $V>0$.

Since $0=\varphi([0,0])$ and $1=\varphi([0,0,1])$ have unique finite
continued fraction expansions, their spectral bands have a unique
type. In contrast, each rational $\frac{p}{q}\in(0,1)$ admits exactly
two representations, so Proposition~\ref{prop: duality of A-B types}
applies. For example, $\frac{1}{2}=\varphi([0,0,2])=\varphi([0,0,1,1])$
and $\frac{2}{3}=\varphi([0,0,1,2])=\varphi([0,0,1,1,1])$, see Figure~\ref{fig: Initial Spectra}.

In addition, we observe in Figure~\ref{fig: Initial Spectra} that
$I_{[0,0]}$ contains exactly one band from each of $\sigma_{[0,0,1,1]}$,
$\sigma_{[0,0,1,2]}$ and $\sigma_{[0,0,1,3]}$. These bands form
a nested sequence in $\sigma_{[0,0,1,n]}$ for $n\in\N$ and are therefore
of backward type $\tB$. This spectral band nesting is part of a hierarchical
structure of the spectral bands which is proved in the current paper.
We formalize this next (the spectral band nesting mentioned above
appears as Property~\ref{enu: B1 property}). 
\end{example}

\begin{defn}
\label{def: forward type} Let $V\in\R$. Let $\co\in\Co$ and $m\in\N$
be such that $\varphi(\co)\not\in\left\{ -1,\infty\right\} $ and
$[\co,m]\in\Co$. A spectral band $I_{\co}(V)$ of $\sigma_{\co}(V)$
is called of \emph{$m$-forward type $\tA$ }with $M=m-1$ (respectively\emph{
$m$-forward type $\tB$} with $M=m$) if the following holds.
\begin{enumerate}[align=left, leftmargin={*}, label=(\Alph*)]
\item  \label{enu: A property} There exist $M$ spectral bands in $\sigma_{[\co,m]}(V)$
(denoted $I^{1}_{[\co,m]}(V),\ldots,I^{M}_{[\co,m]}(V)$) such that\linebreak{}
~
\begin{enumerate}[label=(A\arabic*)]
\item  \label{enu: A1 property} $I^{i}_{[\co,m]}(V)\strict I_{\co}(V)$
for all $1\leq i\leq M$.\\
In particular, these bands are of backward type $A$.\\
~
\item \label{enu: A2 property} $I^{i}_{[\co,m]}(V)$ is not of weak backward
type $B$ for all $1\leq i\leq M$.\\
~
\end{enumerate}
\item \label{enu: B property} For each $n\in\N$, there exist $M+1$ spectral
bands in $\sigma_{[\co,m,n]}(V)$\\
(denoted $I^{1}_{[\co,m,n]}(V),\ldots,I^{M+1}_{[\co,m,n]}(V)$) such
that\\
~
\begin{enumerate}[label=(B\arabic*)]
\item  \label{enu: B1 property}$I^{j}_{[\co,m,n]}(V)\strict I^{j}_{[\co,m,n-1]}(V)$
for all $1\leq j\leq M+1$, where $I^{j}_{[\co,m,0]}(V):=I_{\co}(V)$.\\
In particular, these bands are of backward type $B$.\\
~
\item \label{enu: B2 property} $I^{j}_{[\co,m,n]}$ is not of weak backward
type $A$ for all $1\leq j\leq M+1$.
\end{enumerate}
\end{enumerate}
~
\begin{enumerate}[align=left, leftmargin={*}, label=(I)]
\item \label{enu: I property} For each $n\in\N$, we have 
\[
I^{1}_{[\co,m,n]}\prec I^{1}_{[\co,m]}\prec I^{2}_{[\co,m,n]}\prec I^{2}_{[\co,m]}\ldots\prec I^{M}_{[\co,m]}\prec I^{M+1}_{[\co,m,n]}.
\]
\end{enumerate}
\end{defn}

\begin{rem*}
The nested spectral band structure given by Property~\ref{enu: B1 property}
will be called the \emph{tower property.}
\end{rem*}
Finally, the notions above are combined to define the type $A$ and
$B$ spectral bands.
\begin{defn}
\label{def: A B types} Let $V\in\R$. Let $\co\in\Co$ and $m\in\N$
such that $\varphi(\co)\not\in\left\{ -1,\infty\right\} $ and $[\co,m]\in\Co$.
A spectral band $I_{\co}(V)$ of $\sigma_{\co}(V)$ is called
\begin{itemize}
\item \emph{$m$-type $A$} if $I_{\co}(V)$ is of backward type $A$ and
of $m$-forward type $A$.
\item \emph{$m$-type $B$} if $I_{\co}(V)$ is of backward type $B$ and
of $m$-forward type $B$.
\item \emph{type $A$} if $I_{\co}(V)$ is of $m$-type $A$ for all $m\in\N$.
\item \emph{type $B$} if $I_{\co}(V)$ is of $m$-type $B$ for all $m\in\N$.
\end{itemize}
\end{defn}

The starting point of this work is the following substantial result
of Raymond which appeared more than three decades ago.
\begin{thm}
\cite{Raym95}\label{thm: V>4 Type A =000026 B} For all $V>4$ and
$\co\in\Co$ with $\varphi(\co)\not\in\left\{ -1,\infty\right\} $,
every spectral band in $\sigma_{\co}(V)$ is either of type $A$ or
$B$ and its type is independent of the value of $V>4$.

Moreover, for a spectral band $\Ic(V)$ and $m,n\in\N$, the spectral
bands $\Icmi(V)$ and $\Icmnj(V)$ introduced in the forward property
\ref{enu: A property} and \ref{enu: B property} are unique for $V>4$,
i.e. $\Ic(V)$ does not contain any other spectral band of $\sigcm(V)$
respectively $\sigcmn(V)$.
\end{thm}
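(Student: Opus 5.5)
The statement is Raymond's theorem, so my plan reconstructs its argument in the present framework, via trace maps and a strong large-coupling estimate.

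\textbf{Trace-map description.} For each $\co\in\Co$ I introduce the transfer matrix $M_{\co}(E,V)$ over one period and its half-trace $t_{\co}(E)=\frac12\operatorname{tr}M_{\co}$. By Floquet--Bloch theory,
\[
\sigc(V)=\{E:|t_{\co}(E)|\le 1\}.
\]
The traces satisfy the Fibonacci-type recursion
\[
t_{[\co,m+1]}=2t_{[\co,1]}t_{[\co,m]}-t_{[\co,m-1]},
\]
with the conventions for the digits $0,-1$ chosen so that $t_{[\co,0]}$ and $t_{[\co,-1]}$ are the traces of the two ``parent'' periodic operators. Along the recursion the Fricke--Vogt invariant
\[
t_{\co}^2+t_{[\co,0]}^2+t_{[\co,-1]}^2-2t_{\co}t_{[\co,0]}t_{[\co,-1]}-1=\frac{V^2}{4}
\]
is conserved. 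For $V>4$ the right-hand side exceeds $4$, and I will use this to show that no point $E$ can have two of the three traces simultaneously in $[-1,1]$. Consequently the bands of $\sigma_{\co}$, $\sigma_{[\co,0]}$ and $\sigma_{[\co,-1]}$ are pairwise disjoint.

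\textbf{Induction over the $\Co$-tree.} I then induct on the length of $\co$. The base cases are $[0,0]$ and $[0,0,1]$, treated as in Example~\ref{exa:A-B types}. For the induction step I fix a band $\Ic$ that is contained in a band $J$ of $\sigma_{[\co,0]}$ (backward type $A$) or of $\sigma_{[\co,-1]}$ (backward type $B$).
\begin{itemize}
\item On $\Ic$, the conservation law together with $|t_{\co}|\le1$ forces $|t_{[\co,-1]}|$, respectively $|t_{[\co,0]}|$, to be large.
\item Writing $t_{[\co,m]}$ via Chebyshev polynomials of $t_{[\co,1]}$, I show that $t_{[\co,m]}$ oscillates across $[-1,1]$ exactly $m-1$ times (type $A$) or $m$ times (type $B$) inside $\Ic$. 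This gives the $M$ bands $\Icmi$ with \ref{enu: A1 property}.
\item Disjointness from the bands of $\sigma_{[\co,m,-1]}=\sigma_{[\co,m-1]}$ gives \ref{enu: A2 property}.
\item Repeating the argument with $[\co,m,n]$ and the pair $([\co,m],\co)$ yields the $M+1$ nested towers \ref{enu: B property}, with the interlacing \ref{enu: I property} coming from the alternation of the zeros of consecutive Chebyshev-type polynomials.
\end{itemize}
Uniqueness follows by counting: the degree of $t_{[\co,m]}$ equals $q_{[\co,m]}$, and summing the bands produced over all bands of $\sigc$ already exhausts all $q_{[\co,m]}$ bands.

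\textbf{Independence of $V$ and main obstacle.} Independence of $V>4$ follows from continuity, via Proposition~\ref{prop: Lipschitz spectral edges}: strict containment and disjointness cannot change along a continuous path in $V$ without the bands touching, which the disjointness for $V>4$ forbids.

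I expect the main obstacle to be the quantitative control of $t_{[\co,1]}$ on $\Ic$. Specifically, one must show that $|t_{[\co,1]}|$ is monotone and runs through the interval $[-1,1]$ exactly the right number of times on each band, uniformly in $V>4$. For this I would first try Raymond's device: a lower bound $|t_{[\co,0]}|\ge V/2-1>1$ on bands of $\sigc$, combined with monotonicity of $t_{\co}$ on its bands, which comes from the real-rootedness of the discriminant.
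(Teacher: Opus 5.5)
The paper does not prove this statement; it imports it from Raymond \cite{Raym95}, in the adapted form of \cite[Thm.~4.22]{BaBeBiTh22}. Your outline has the right architecture for $V>4$: trace recursion, Fricke--Vogt invariant, Chebyshev expansion on a band, counting, and induction over $\mathscr{C}$. However, its central lemma is false.

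The invariant does not prevent two of the three half-traces from lying in $[-1,1]$ at the same $E$, so $\sigma_{\mathbf c},\sigma_{[\mathbf c,0]},\sigma_{[\mathbf c,-1]}$ are not pairwise disjoint. A band of backward type $A$ is by definition strictly inside a band of $\sigma_{[\mathbf c,0]}$, so your lemma contradicts the very dichotomy you induct on in the next step. For example, with $\mathbf c=[0,0,2]$ the band $[\frac{V}{2}-\sqrt{\frac{V^2}{4}+4},\,0]$ lies in $(-2,2)=\sigma_{[\mathbf c,0]}(V)$.

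What the invariant actually gives is this: if two half-traces lie in $[-1,1]$, the third has modulus $\ge\frac{V}{2}-1>1$. Equivalently, since $x^2+y^2+z^2-2xyz-1\le 4$ on $[-1,1]^3$, one has $\sigma_{\mathbf c}\cap\sigma_{[\mathbf c,0]}\cap\sigma_{[\mathbf c,-1]}=\emptyset$ for $V>4$. This empty triple intersection is the tool you need:
\begin{itemize}
\item It makes backward types $A$ and $B$ mutually exclusive.
\item Applied to $[\mathbf c,m],[\mathbf c,m,0],[\mathbf c,m,-1]$, it gives (A2): $I^i_{[\mathbf c,m]}\subseteq I_{\mathbf c}\subseteq\sigma_{[\mathbf c,m,0]}$ forces $I^i_{[\mathbf c,m]}\cap\sigma_{[\mathbf c,m-1]}=\emptyset$. (B2) follows in the same way from (B1).
\item It shows that $|t_{[\mathbf c,-1]}|$ is large on $A$-bands and $|t_{[\mathbf c,0]}|$ is large on $B$-bands. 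Your closing claim that $|t_{[\mathbf c,0]}|\ge\frac{V}{2}-1$ on all bands of $\sigma_{\mathbf c}$ fails on $A$-bands, where $|t_{[\mathbf c,0]}|\le 1$.
\end{itemize}

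Several further steps are wrong as stated.

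\textbf{The recursion.} Since $M_{[\mathbf c,m]}=M_{[\mathbf c,0]}M_{\mathbf c}^{m}$, the recursion is $t_{[\mathbf c,m+1]}=2t_{\mathbf c}t_{[\mathbf c,m]}-t_{[\mathbf c,m-1]}$ in your normalization, not a recursion in $t_{[\mathbf c,1]}$. In the paper's full traces this reads $t_{[\mathbf c,m+1]}=t_{\mathbf c}t_{[\mathbf c,m]}-t_{[\mathbf c,m-1]}$. For $\mathbf c=[0,0]$ one has $t_{[0,0,2]}=E^2-EV-2=t_{[0,0]}t_{[0,0,1]}-t_{[0,0,0]}$, whereas your formula yields $(E-V)^2-2$. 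Hence, in full traces, $t_{[\mathbf c,m]}=S_m(t_{\mathbf c})t_{[\mathbf c,0]}-S_{m-1}(t_{\mathbf c})t_{[\mathbf c,-1]}$.

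This means the obstacle you single out is misplaced. On $I_{\mathbf c}$ it is $t_{\mathbf c}$ that sweeps monotonically through $[-2,2]$, exactly once. The counts $m-1$ and $m$ come from the zeros of $S_{m-1}$ (dominant on $A$-bands) and of $S_m$ (dominant on $B$-bands).

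\textbf{The counting.} Write $q_{\tilde{\mathbf c}}$ for the denominator of $\varphi(\tilde{\mathbf c})$. By the inductive structure, $\sigma_{\mathbf c}$ has $q_{\mathbf c}-q_{[\mathbf c,0]}$ bands of type $A$ and $q_{[\mathbf c,0]}$ of type $B$. So the bands produced inside $\sigma_{\mathbf c}$ number $(m-1)q_{\mathbf c}+q_{[\mathbf c,0]}$, which is $q_{\mathbf c}$ short of $q_{[\mathbf c,m]}=mq_{\mathbf c}+q_{[\mathbf c,0]}$. To get uniqueness you must also exhibit the $q_{\mathbf c}$ backward-type-$B$ bands of $\sigma_{[\mathbf c,m]}$ lying in bands of $\sigma_{[\mathbf c,m-1]}$. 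These are disjoint from $\sigma_{\mathbf c}$ by the triple-intersection lemma.

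\textbf{Independence of $V$.} ``Bands never touch'' cannot justify this. Even the correct lemma allows $I_{\mathbf c}$ and its parent band to share an edge: there two full traces equal $\pm2$ and the third equals $\pm2\pm V$, with no contradiction. Instead argue as follows:
\begin{itemize}
\item At each $V>4$, the per-$V$ result gives strict containment in a band of $\sigma_{[\mathbf c,0]}$ or of $\sigma_{[\mathbf c,-1]}$.
\item Strict containment in a fixed band is open in $V$, by Lipschitz continuity of the edges.
\item Types $A$ and $B$ are exclusive.
\end{itemize}
Connectedness of $(4,\infty)$ then fixes the type. The same argument fixes which bands play the roles of $I^i_{[\mathbf c,m]}$ and $I^j_{[\mathbf c,m,n]}$.
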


We took liberty with phrasing Theorem~\ref{thm: V>4 Type A =000026 B}
differently than it originally appeared in \cite{Raym95} (it actually
did not appear there as a single theorem). In particular, the notation
used in \cite{Raym95} is different than ours; we had to adapt the
notation for the sake of our proofs. We have done such an adaptation
already in the review \cite[Thm.~4.22]{BaBeBiTh22}, as a preliminary
step towards the current paper.
\begin{defn}
\label{def: Notion Icmi. Icmnj and M}Let $m,n\in\N$, $\co\in\Co$
be such that $\varphi(\co)\not\in\left\{ -1,\infty\right\} $ and
$[\co,m]\in\Co$. For a spectral band $\Ic$ in $\sigc$ define the
associated unique value 
\[
M:=\begin{cases}
	m-1,\qquad & I_{\co}(V)\text{ is of backward type \ensuremath{A} for all }V>4,\\
	m,\qquad & I_{\co}(V)\text{ is of backward type \ensuremath{B} for all }V>4,
\end{cases}
\]
and the unique spectral bands $\left\{ I^{i}_{[\co,m]}\right\} ^{M}_{i=1}$
of $\sigcm$ and the unique spectral bands $\left\{ I^{j}_{[\co,m,n]}\right\} ^{M+1}_{j=1}$
of $\sigcmn$ satisfy \ref{enu: A property}, \ref{enu: B property}
and \ref{enu: I property} for all $V>4$.

Note that $M$ actually depends both on the backward type of $\Ic$
and on $m$, but we omit this dependence from the notation.
\end{defn}

The existence and uniqueness of the spectral bands $\left\{ I^{i}_{[\co,m]}\right\} ^{M}_{i=1}$
and $\left\{ I^{j}_{[\co,m,n]}\right\} ^{M+1}_{j=1}$ are justified
by Theorem~\ref{thm: V>4 Type A =000026 B}. Due to Corollary~\ref{prop: Lipschitz spectral edges},
we may consider the continuous maps $V\mapsto\Icmi(V)$ and $V\mapsto\Icmnj(V)$
on $V\in(0,\infty)$.

A word of caution is needed regarding the notation in Definition~\ref{def: Notion Icmi. Icmnj and M}.
In order to know to which spectral band the notation $\Icmi$ refers
to within $\sigcm$, one needs to know which spectral band $\Ic$
was designated. For different choices of spectral bands $\Ic$ within
$\sigc$, the spectral bands $\Icmi$ and $\Icmnj$ will also differ.
This should not lead to confusion, since in the beginning of each
proof or discussion, the spectral band $\Ic$ will be explicitly indicated.
\begin{defn}
\label{def:(mV)-type property}Let $m\in\N$, $\co\in\Co$ be such
that $\varphi(\co)\not\in\left\{ -1,\infty\right\} $ and $[\co,m]\in\Co$.
For $V>0$, we say that a spectral band $\Ic$ in $\sigc$ satisfies
$\TmV(m,V)$ (the \emph{$(m,V)$-property}) if both of the following
hold:
\end{defn}

\begin{enumerate}
\item either
\begin{itemize}
\item for all $V'\geq V$, $\Ic(V')$ is of backward type $\tA$ with $M=m-1$,
\end{itemize}
or
\begin{itemize}
\item for all $V'\geq V$, $\Ic(V')$ is of backward type $\tB$ with $M=m$,
\end{itemize}
\item for all $V'\geq V$ and for all $n\in\N$, the unique spectral bands
$\left\{ I^{i}_{[\co,m]}(V')\right\} ^{M}_{i=1}$ of $\sigcm(V')$
and the unique spectral bands $\left\{ I^{j}_{[\co,m,n]}(V')\right\} ^{M+1}_{j=1}$
of $\sigcmn(V')$, as in Definition~\ref{def: Notion Icmi. Icmnj and M},
satisfy \ref{enu: A property}, \ref{enu: B property} and \ref{enu: I property}.
\end{enumerate}

\section{Proof of the main theorem and further consequences\label{sec: Proof_AB_Type}}

In this section we prove Theorem~\ref{thm: Every band is A or B}
by induction over the space $\Co$ of finite continued fractions.
For all $\co\in\Co$ and $m\in\N$ such that $\varphi(\co)\not\in\left\{ -1,\infty\right\} $
and $[\co,m]\in\Co$, define
\[
\crit([\co,m]):=\inf\set{V\geq0}{\begin{array}{c}
\text{each spectral band \ensuremath{\Ic} in \ensuremath{\sigma_{\co}} \text{satisfies \ensuremath{\TmV(m,V)}}}\end{array}}.
\]

Hence, to prove Theorem~\ref{thm: Every band is A or B} we should
then show that $\crit([\co,m])=0$ for all relevant $\co\in\Co$ and
$m\in\N$. Theorem~\ref{thm: Every band is A or B} is stated for
both $V>0$ and $V<0$, but it is possible to restrict to the case
$V>0$ thanks to an antisymmetry of the Floquet-Bloch matrices, stated
in Lemma~\ref{lem: Symmetry Spectrum negative coupling}. The proof
of Theorem~\ref{thm: Every band is A or B} is carried by induction
over the space $\Co$, and it relies on two key ingredients: Proposition~\ref{prop: Backward implies forward}
and the induction base. Both ingredients require some detailed analysis
and their proofs are postponed to Sections~\ref{subsec:Backward-implies-forward}
and \ref{sec: Pf_InductionBase}, respectively. In the current section
we prove Theorem~\ref{thm: Every band is A or B} assuming Proposition~\ref{prop: Backward implies forward}
and the induction base.
\begin{prop}
[Backward implies forward type] \label{prop: Backward implies forward}
Let $\co\in\Co$ and $\varphi(\co)\in(0,1)$. If each spectral band
$I_{\co}(V)$ in $\sigma_{\co}(V)$ is either of backward type $A$
for all $V>0$ or of backward type $B$ for all $V>0$, then $\crit([\co,m])=0$
for all $m\in\N$.
\end{prop}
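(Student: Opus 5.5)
We prove the proposition by a continuity argument in the coupling constant. Fix $\co$ with $\varphi(\co)\in(0,1)$ and $m\in\N$, and set $V_{0}:=\crit([\co,m])$. By Theorem~\ref{thm: V>4 Type A =000026 B}, $V_{0}\leq4$. We assume $V_{0}>0$ and derive a contradiction. By hypothesis, each band $\Ic$ keeps one backward type for all $V>0$. Hence item (1) of $\TmV(m,V)$ holds for every $V>0$, and $M$ is a fixed number for each band. So only item (2) can fail as $V$ decreases. The bands $\Icmi(V)$ and $\Icmnj(V)$ are defined for all $V>0$ by continuation from $V>4$. By Proposition~\ref{prop: Lipschitz spectral edges} their endpoints are $1$-Lipschitz in $V$.

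\textbf{Closedness step.} Properties \ref{enu: A property}, \ref{enu: B property} and \ref{enu: I property} are strict inequalities between band endpoints, together with the exclusions \ref{enu: A2 property} and \ref{enu: B2 property}. These hold for all $V'>V_{0}$. By continuity they pass to $V_{0}$ in weak form: $\subseteq$ instead of $\strict$, and $\leq$ instead of $<$. The main task is to show that no equality can occur at $V_{0}$, that is:
\begin{itemize}
\item no endpoint of $\Icmi$ or $\Icmnj$ touches an endpoint of $\Ic$ or of $I^{j}_{[\co,m,n-1]}$;
\item consecutive bands in \ref{enu: I property} do not share endpoints;
\item no band acquires weak backward type of the wrong kind.
\end{itemize}
Once strictness at $V_{0}$ is established, Lipschitz continuity propagates it to an interval $(V_{0}-\varepsilon,\infty)$. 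This contradicts the definition of $V_{0}$ as an infimum.

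There is one subtlety: infinitely many $n$ are involved, so a uniform $\varepsilon$ is needed. I would handle this through the tower property \ref{enu: B1 property}. As $n\to\infty$, the bands of $\sigcmn$ shrink towards a limit set. A trace-map argument bounding the band lengths of $\sigcmn$, via the transfer-matrix descriptions in Section~\ref{sec: Basic spectral analysis}, should then show that only finitely many $n$ can be critical near $V_{0}$.

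\textbf{Main obstacle.} The hardest step is ruling out coincidences of endpoints at $V_{0}$. Endpoints of bands are eigenvalues of periodic and antiperiodic Floquet--Bloch matrices. A touching of endpoints of bands in $\sigc$, $\sigcm$ and $\sigcmn$ would mean that these matrices share eigenvalues at the same energy. Here I would invoke the interlacing theorem of Section~\ref{subsec: Perturbation argument}. The Floquet--Bloch matrix for $[\co,m,n]$ is a finite-rank perturbation of a direct sum of the matrices for $[\co,m]$ and $[\co,m,n-1]$, coming from the concatenation of periodic words; this uses the Sturmian recursion in Appendix~\ref{App: Sturmian dynamical systems}. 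Its eigenvalues then strictly interlace those of the unperturbed blocks, provided the admissibility conditions of Section~\ref{sec: Admissibility. index relations} hold. I expect to verify admissibility using the backward-type hypothesis: a band of fixed backward type $A$ stays strictly inside a band of $\sigma_{[\co,0]}$ and away from $\sigma_{[\co,-1]}$, and this yields the nondegeneracy that admissibility requires. The index relations, which count how many eigenvalues lie left of a given band, pin down which eigenvalues interlace. This gives strict inequalities at $V_{0}$ and completes the argument.
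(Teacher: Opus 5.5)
Your overall frame matches the paper's: take $V_0=\crit([\co,m])>0$, let $V\downarrow V_0$, exclude coincidences at $V_0$, and push below $V_0$ with the uniform $1$-Lipschitz bound. But the step you call the main obstacle is not resolved, and the mechanism you propose for it is misplaced.

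\textbf{Where the hypothesis actually enters.} Admissibility is not where the backward-type hypothesis is used. By Lemma~\ref{lem: Admissibility criterion} and the index identities of Lemma~\ref{lem: index identities for spectral bands}, admissibility is a $V$-independent parity condition on band indices. Several comparisons the argument needs are in fact \emph{not} admissible, for instance parts (c), (d) of Lemma~\ref{lem: Task (I)} for even $n$. These are handled by the $\pi-\theta_{[\co,m]}$ shift in Lemma~\ref{lem: Basics for eigenavlue inequalities}~(\ref{enu: lem-Basics for eigenavlue inequalities - 2}). Also, backward type $A$ gives nothing about $\sigma_{[\co,-1]}$.

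More seriously, interlacing alone cannot exclude the decisive coincidence $L(\Ic(V_0))=L(I^{1}_{[\co,m,1]}(V_0))$. Suppose the common edge $E$ sat in the gap of $\sigcm(V_0)$ immediately to the right of $J_{[\co,m]}$. Since $\ind(I^{1}_{[\co,m,1]})=\ind(J_{[\co,m]})+1+\ind(\Ic)$, the counts in Lemma~\ref{lem: Basics for eigenavlue inequalities} then balance exactly, $N_{\co}+N_{[\co,m]}=N_{[\co,m,1]}$, and no strict inequality follows.

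The missing ingredient is the trace estimate of Lemma~\ref{lem: Trace estimates =00005Bc.m.n=00005D}~(\ref{enu: prop-Trace estimates =00005Bc.m.n=00005D - 3}). Because $\Ic$ has a fixed backward type and $\varphi(\co)\in(0,1)$, one has $|t_{[\co,\ell]}(E)|<2$ at its edges, where $\ell=0$ for type $A$ and $\ell=-1$ for type $B$. So $|t_{[\co,m]}(E)|\geq2$ would force $|t_{[\co,m,1]}(E)|>2$. Hence $E$ lies in the interior of a band of $\sigcm(V_0)$. This gives $E<R(J_{[\co,m]})$, the strict count $N_{\co}+N_{[\co,m]}<N_{[\co,m,1]}$, and simplicity of $E$ as an eigenvalue of $H^{\times1}_{[\co,m]}(\theta_{[\co,m]})\oplus H_{\co}(\theta_{\co})$. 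Only then does Theorem~\ref{thm: perturbation thm for our matrices} give the contradiction; this is Lemma~\ref{lem: Backward Implies Bands strictly included}. This is exactly where the proposition's hypotheses are used, and your sketch does not contain it.

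\textbf{Uniformity in $n$.} Your fix, bounding band lengths of $\sigcmn$ so that only finitely many $n$ are critical, is unproven and unnecessary. For $V>V_0$, properties \ref{enu: B1 property} and \ref{enu: I property} trap every $I^{i}_{[\co,m]}(V)$ and every $I^{j}_{[\co,m,n]}(V)$, for all $n$, inside $[L(I^{1}_{[\co,m,1]}(V)),R(I^{M+1}_{[\co,m,1]}(V))]$. So once the two inclusions $I^{1}_{[\co,m,1]}(V_0),I^{M+1}_{[\co,m,1]}(V_0)\strict\Ic(V_0)$ are known (from the previous step, for all $V>0$), the uniform Lipschitz bound gives one $\delta$ for all $n$. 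This is Lemma~\ref{lem: keyResult}.

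\textbf{Remaining properties.} The paper does not prove ``no touching at $V_0$'' property by property. It shows that the remaining properties follow, at each fixed $V$, from strict containment in $\Ic(V)$. Properties \ref{enu: A1 property}, \ref{enu: B2 property}, \ref{enu: I property} come from Corollary~\ref{cor: Properties A1+B2+I}. Property \ref{enu: B1 property} comes from Corollary~\ref{cor: Tower property}, which reapplies the strict-inclusion lemma to $[\co,m,n]$ with $m$ replaced by $1$.

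\textbf{Property \ref{enu: A2 property}.} This compares $\sigma_{[\co,m]}$ with $\sigma_{[\co,m,-1]}=\sigma_{[\co,m-1]}$, which your fixed-$m$ limiting argument never does. The paper obtains it by induction on $m$. It uses $\sigma_{[\co,m]}=\sigma_{[\co,m-1,1]}$ with the $A$/$B$ roles swapped (Proposition~\ref{prop: duality of A-B types}) to turn \ref{enu: A2 property} at level $m$ into \ref{enu: B2 property} at level $m-1$; hence the hypothesis $\crito([\co,m-1])=0$ in Lemma~\ref{Lem-keyResult_A2}. For $m=1$ it uses Lemma~\ref{lem: task (B1)}, which needs $\varphi(\co)\neq1$. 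Your sentence that no band acquires weak backward type of the wrong kind has no argument behind it.
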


The induction step in the proof of of Theorem~\ref{thm: Every band is A or B}
is divided into two: increasing the number of digits of $\co\in\Co$
for which the statement is valid (call it a horizontal step), and
showing the validity for all values of the last digit of $\co$ (vertical
step). These steps are stated and proven in the next two lemmata with
the aid of Proposition~\ref{prop: Backward implies forward} and
afterwards we are able to prove Theorem~\ref{thm: Every band is A or B}.
Throughout this section we use the notational conventions of Definition~\ref{def: Notion Icmi. Icmnj and M}.
\begin{lem}
[horizontal induction step] \label{lem: horizontal induction} Let
$m\in\N$ and $\co\in\Co$ be such that $\varphi(\co)\not\in\left\{ -1,\infty\right\} $
and $[\co,m]\in\Co$. If $\crit([\co,m])=0$ and $\crit([\co,m,1])=0$,
then $\crit([\co,m,1,n])=0$ for all $n\in\N$.
\end{lem}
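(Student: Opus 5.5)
The approach is to reduce the statement to Proposition~\ref{prop: Backward implies forward} applied to $\co':=[\co,m,1]$. That proposition needs two things: $\varphi(\co')\in(0,1)$, and every spectral band of $\sigma_{\co'}(V)$ is of backward type $A$ for all $V>0$ or of backward type $B$ for all $V>0$. Its conclusion is $\crit([\co',n])=\crit([\co,m,1,n])=0$ for all $n\in\N$, which is exactly the claim. The first requirement is a routine check on continued fractions. Since $\varphi(\co)\notin\{-1,\infty\}$ and $m\ge1$, the expansion $[\co,m,1]$ ends with positive digits, so it evaluates to a rational in $(0,1)$. The degenerate cases, such as $\co=[0,0]$ giving $[0,0,m,1]$, I would verify separately; they also lie in $(0,1)$.

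The substance is the backward-type dichotomy for bands of $\sigma_{[\co,m,1]}$. I would use the two hypotheses as follows.
\begin{enumerate}
\item Since $\crit([\co,m])=0$, every band $\Ic$ of $\sigc$ satisfies $\TmV(m,V)$ for all $V>0$. By (B1) with $n=1$, the bands $I^{j}_{[\co,m,1]}$, $1\le j\le M+1$, are strictly contained in $I_{\co}$ for every $V>0$. Note that $\varphi([\co,m,1,-1])=\varphi([\co,m,0])=\varphi(\co)$, so $\sigma_{[\co,m,1,-1]}=\sigc$, and these bands are of backward type $B$ for all $V>0$.
\item Since $\crit([\co,m,1])=0$, we get the analogous statement one level up. Here $\sigma_{[\co,m,1,0]}=\sigma_{[\co,m]}$, and by (A1) the bands $I^{i}_{[\co,m]}$ inside a band of $\sigma_{[\co,m]}$ are of backward type $A$ for all $V>0$. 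This second hypothesis is what makes the $[\co,m]$-level data available.
\end{enumerate}

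The main obstacle is exhaustiveness: every band of $\sigma_{[\co,m,1]}$ must arise in step 1 for some band $\Ic$. I would settle this by counting. For each band $\Ic$, write $M_{\Ic}$ for its $M$. Then
\[
\sum_{\Ic}\bigl(M_{\Ic}+1\bigr)=q_{\co}\,m+\#\{\Ic\text{ of backward type }B\}.
\]
I claim this equals the number of bands of $\sigma_{[\co,m,1]}$, namely $q_{[\co,m,1]}=q_{[\co,m]}+q_{\co}$. Establishing this uses the analogous count one level up, $q_{[\co,m]}=\sum_{\Ic}M_{\Ic}+\#\{\text{$A$-type bands of }\sigma_{\co}\text{ inside }\dots\}$. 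Equivalently, one can compare with Raymond's count in Theorem~\ref{thm: V>4 Type A =000026 B} for $V>4$, where these $M+1$ bands exhaust $\sigma_{[\co,m,1]}$.

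I also need the bands produced for different $\Ic$, and for different $j$, to be distinct spectral bands. Their indices are fixed as maps on $V>4$ by Definition~\ref{def: Notion Icmi. Icmnj and M}, and a spectral band is determined by its index $j$ in Definition~\ref{def: A spectral band is continuous}. So the correspondence valid for $V>4$ persists for all $V>0$, and counting gives a bijection. Every band of $\sigma_{[\co,m,1]}$ is then some $I^{j}_{[\co,m,1]}$, hence of backward type $B$ for all $V>0$.

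If this holds, all bands of $\sigma_{[\co,m,1]}$ are of backward type $B$, which is consistent with Proposition~\ref{prop: duality of A-B types}. The hypothesis of Proposition~\ref{prop: Backward implies forward} is then met, and the lemma follows. The delicate points are the counting identity and making sure the band labelling is consistent across all $V>0$. I would check both against the $V>4$ picture, using the continuity from Proposition~\ref{prop: Lipschitz spectral edges}.
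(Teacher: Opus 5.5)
The reduction to Proposition~\ref{prop: Backward implies forward} with $\co'=[\co,m,1]$ is the right one. Your step~1 is exactly half of the paper's argument: each $I^{j}_{[\co,m,1]}$ is strictly inside its $\Ic$ for all $V>0$ by (B1) and $\crit([\co,m])=0$, hence of backward type $B$.

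The gap is the exhaustiveness claim. It is false that every band of $\sigma_{[\co,m,1]}$ is some $I^{j}_{[\co,m,1]}$, and false that all its bands are of backward type $B$. Take $\co=[0,0,c_1,\dots,c_k]$ and use the $V>4$ count of type-$B$ bands. Then your sum is $\sum_{\Ic}(M_{\Ic}+1)=m\,q_{\co}+\#\{\Ic\text{ of type }B\}=m\,q_{k}+q_{k-1}=q_{[\co,m]}$. This falls short of $q_{[\co,m,1]}=q_{[\co,m]}+q_{\co}$ by $q_{\co}\geq1$.

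Concretely, let $\co=[0,0]$ and $m=1$. The band $[-2,2]$ is a single type-$A$ band with $M=0$, so it yields only $I^{1}_{[0,0,1,1]}=[\tfrac V2-\sqrt{V^2/4+4},\,0]$. But $\sigma_{[0,0,1,1]}$ also contains $[V,\tfrac V2+\sqrt{V^2/4+4}]$. This band lies strictly inside $\sigma_{[0,0,1]}=[-2+V,2+V]$, so it is of backward type $A$. It is not contained in $\sigma_{[0,0,1,1,-1]}=[-2,2]$, so it is not even of weak backward type $B$. See the end of the proof of Lemma~\ref{Lem-=00005B0.0.1=00005D_m-forward_B}. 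Proposition~\ref{prop: duality of A-B types} gives no support for ``all $B$'' either; it only swaps the labels between $[\co,m+1]$ and $[\co,m,1]$.

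What is missing is the use of the second hypothesis that your step~2 started but then dropped. Proposition~\ref{prop: Backward implies forward} only needs each band to have a fixed backward type for all $V>0$, not the same type for all bands. The paper takes an arbitrary band $I_{\co'}$ and splits according to its backward type for $V>4$ (Theorem~\ref{thm: V>4 Type A =000026 B}).
\begin{itemize}
\item If it is of type $B$, it lies strictly inside a band of $\sigma_{[\co',-1]}=\sigc$. By Raymond's uniqueness it is some $I^{j}_{[\co,m,1]}$, and your step~1 applies.
\item If it is of type $A$, it lies strictly inside a band $I_{[\co,m]}$ of $\sigma_{[\co',0]}=\sigcm$. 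By uniqueness (with $\co$ replaced by $[\co,m]$ and $m$ by $1$) it is the band $I^{1}_{[[\co,m],1]}$ attached to $I_{[\co,m]}$. The hypothesis $\crit([\co,m,1])=0$ says every band of $\sigcm$ satisfies $\TmV(1,V)$ for all $V>0$. Property (A1) then gives $I_{\co'}(V)\strict I_{[\co,m]}(V)$ for all $V>0$.
\end{itemize}
Treating bands one at a time through their $V>4$ type also makes your distinctness and bijection argument unnecessary.
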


\begin{proof}
Let $\co':=[\co,m,1]$. We have to show that $\crit([\co',n])=0$
for all $n\in\N$. Since $m\in\N$, we conclude $\varphi(\co')\in(0,1)$.
Thus, Proposition~\ref{prop: Backward implies forward} implies that
it suffices to prove that each spectral band in $\sigma_{\co'}(V)$
is either of backward type $A$ for all $V>0$ or of backward type
$B$ for all $V>0$. Let $I_{\co'}$ be a spectral band in $\sigma_{\co'}$.
By Theorem~\ref{thm: V>4 Type A =000026 B}, we already have that
$I_{\co'}(V)$ is either of backward type $A$ for all $V>4$ or of
backward type $B$ for all $V>4$. We treat each of these two cases
separately.

\underline{Case 1:} (For all $V>4$, $I_{\co'}(V)$ is of backward
type $A$). In this case, using $\sigma_{[\co',0]}(V)=\sigma_{[\co,m]}(V)$
(as $c':=[\co,m,1]$ and so $\varphi([\co',0])=\varphi([\co,m])$)
we conclude that $I_{\co'}(V)$ is strictly included in a spectral
band of $\sigma_{[\co,m]}(V)$ for all $V>4$. By Theorem~\ref{thm: V>4 Type A =000026 B},
there is a unique spectral band $I_{[\co,m]}(V)$ such that $I_{\co'}(V)=I^{i}_{[\co,m,1]}(V)\strict I_{[\co,m]}(V)$,
for all $V>4$. Since $\crit([\co,m,1])=0$, we conclude $I_{\co'}(V)=I^{i}_{[\co,m,1]}(V)\strict I_{[\co,m]}(V)$
for all $V>0$ implying that $I_{\co'}(V)$ is of backward type $A$
for all $V>0$.

\underline{Case 2:} (For all $V>4$, $I_{\co'}(V)$ is of backward
type $B$). In this case, using $\sigma_{[\co',-1]}(V)=\sigma_{[\co,m,0]}(V)=\sigma_{\co}(V)$
(as $c':=[\co,m,1]$ and so $\varphi([\co',-1])=\varphi([\co,m,0])$)
we conclude that $I_{\co'}(V)$ is strictly included in a spectral
band of $\sigma_{\co}(V)$ for all $V>4$. By Theorem~\ref{thm: V>4 Type A =000026 B},
there is a unique spectral band $I_{\co}(V)$ such that $I_{\co'}(V)=I^{j}_{[\co,m,1]}(V)\strict I_{\co}(V)$,
for all $V>4$. Since $\crit([\co,m])=0$, we conclude $I_{\co'}(V)=I^{j}_{[\co,m,1]}(V)\strict I_{\co}(V)$
for all $V>0$ implying that $I_{\co'}(V)$ is of backward type $B$
for all $V>0$.
\end{proof}

\begin{lem}
[vertical induction step] \label{lem:vertical-induction-step} Let
$\co\in\Co$ be such that $\varphi(\co)\not\in\left\{ -1,\infty\right\} $
and $[\co,m]\in\Co$ for all $m\in\N$. If $\crit([\co,m])=0$ for
all $m\in\N$ and $\crit([\co,1,n])=0$ for all $n\in\N$, then $\crit([\co,m,n])=0$
for all $m,n\in\N$.
\end{lem}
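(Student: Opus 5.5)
We argue by induction on $m\in\N$, mirroring the proof of Lemma~\ref{lem: horizontal induction}. For $m=1$ the claim $\crit([\co,1,n])=0$ for all $n\in\N$ is part of the hypothesis. Assume now $m\geq2$ and that $\crit([\co,m-1,n])=0$ for all $n\in\N$; in particular $\crit([\co,m-1,1])=0$. Set $\co':=[\co,m]$. We must show $\crit([\co',n])=0$ for all $n\in\N$. We first check that $\varphi(\co')\in(0,1)$: since $m\geq2$, $[\co,m]$ ends with a digit $\geq2$, so its value is a genuine rational in $(0,1)$. This needs a short verification from the definition of $\varphi$, including the cases $\co=[0,0]$ and $\co=[0]$ where relevant. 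Then Proposition~\ref{prop: Backward implies forward} reduces the claim to the following: every spectral band $I_{\co'}$ of $\sigma_{\co'}$ is either of backward type $A$ for all $V>0$ or of backward type $B$ for all $V>0$.

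Fix a spectral band $I_{\co'}$. By Theorem~\ref{thm: V>4 Type A =000026 B} it is either of backward type $A$ for all $V>4$ or of backward type $B$ for all $V>4$. We treat the two cases separately.

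\emph{Case A.} Here we use $\sigma_{[\co',0]}=\sigma_{[\co,m,0]}=\sigma_{\co}$. For $V>4$, $I_{\co'}(V)$ is strictly contained in some band $I_{\co}(V)$ of $\sigma_{\co}(V)$. By the uniqueness part of Theorem~\ref{thm: V>4 Type A =000026 B} and Definition~\ref{def: Notion Icmi. Icmnj and M}, $I_{\co'}$ coincides with one of the bands $I^{i}_{[\co,m]}$ associated with $I_{\co}$. The hypothesis $\crit([\co,m])=0$ then gives \ref{enu: A1 property} for all $V>0$. Hence $I_{\co'}(V)\strict I_{\co}(V)$ for all $V>0$, i.e. $I_{\co'}$ is of backward type $A$ for all $V>0$.

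\emph{Case B.} Here we use $\sigma_{[\co',-1]}=\sigma_{[\co,m-1]}$, since $\varphi([\co,m,-1])=\varphi([\co,m-1])$. We also use $\sigma_{[\co,m]}=\sigma_{[\co,m-1,1]}$ from Proposition~\ref{prop: duality of A-B types}. For $V>4$, $I_{\co'}(V)$ is strictly contained in a band of $\sigma_{[\co,m-1]}(V)=\sigma_{[\co,m-1,1,0]}(V)$. Thus, viewed as a band of $\sigma_{[\co,m-1,1]}$, it is of backward type $A$ for $V>4$. By uniqueness in Theorem~\ref{thm: V>4 Type A =000026 B}, applied with base word $[\co,m-1]$ and last digit $1$, it equals some $I^{i}_{[\co,m-1,1]}$ associated with a band $I_{[\co,m-1]}$. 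The induction hypothesis $\crit([\co,m-1,1])=0$ gives $I_{\co'}(V)\strict I_{[\co,m-1]}(V)$ for all $V>0$. Hence $I_{\co'}$ is of backward type $B$ (with respect to $\co'=[\co,m]$) for all $V>0$.

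This closes the induction on $m$, and so $\crit([\co,m,n])=0$ for all $m,n\in\N$. The main obstacle I expect is the bookkeeping in Case B. One must use the right expansion, $[\co,m-1,1]$ rather than $[\co,m]$, so that the containment is delivered by a $\crit$ hypothesis already available. One must also make sure that the spectral band, as a map $V\mapsto I(V)$ indexed by its position (Definition~\ref{def: A spectral band is continuous}), is the same object in both expansions. A smaller point is verifying $\varphi(\co')\in(0,1)$ so that Proposition~\ref{prop: Backward implies forward} applies in all admissible cases of $\co$.
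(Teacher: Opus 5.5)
Your proposal is correct and follows the paper's own proof essentially verbatim. Both run an induction on $m$ and use Proposition~\ref{prop: Backward implies forward} to reduce to showing that each band has a fixed backward type for all $V>0$. Case A uses $\sigma_{[\co',0]}=\sigma_{\co}$ together with $\crit([\co,m])=0$, and Case B uses $\sigma_{[\co',-1]}=\sigma_{[\co,m-1]}$, the rewriting $\sigma_{\co'}=\sigma_{[\co,m-1,1]}$, and the inductive hypothesis $\crit([\co,m-1,1])=0$; the only difference is the cosmetic shift of index ($m-1\to m$ rather than the paper's $m\to m+1$).
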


\begin{proof}
We denote by $T(m)$ the statement that $\crit([\co,m,n])=0$ for
all $n\in\N$. We use induction over $m\in\N$ to prove that $T(m)$
holds for all $m\in\N$ and the lemma follows. The induction base
$T(1)$ is true by the assumption in the lemma.

Suppose $T(m)$ holds. Denote $\co':=[\co,m+1]$. We need to show
$\crit([\co,m+1,n])=\crit([\co',n])=0$ for all $n\in\N$. Since $m\in\N$,
we have $m+1\geq2$ and so $\varphi(\co')\in(0,1)$. Thus, Proposition~\ref{prop: Backward implies forward}
implies that it suffices to prove that each spectral band in $\sigma_{\co'}(V)$
is either of backward type $A$ for all $V>0$ or of backward type
$B$ for all $V>0$. Let $I_{\co'}$ be a spectral band in $\sigma_{\co'}$.
By Theorem~\ref{thm: V>4 Type A =000026 B}, $I_{\co'}(V)$ is either
of backward type $A$ for all $V>4$ or of backward type $B$ for
all $V>4$. We treat each of these two cases separately.

\underline{Case 1:} (For all $V>4$, $I_{\co'}(V)$ is of backward
type $A$). In this case, using $\sigma_{[\co',0]}(V)=\sigma_{\co}(V)$
(as $c':=[\co,m+1]$ and so $\varphi([\co',0])=\varphi([\co])$) we
conclude that $I_{\co'}(V)$ is strictly included in a spectral band
of $\sigc(V)$ for all $V>4$. By Theorem~\ref{thm: V>4 Type A =000026 B}
there is a unique spectral band $\Ic(V)$ such that $I_{\co'}(V)=I^{i}_{[\co,m+1]}(V)\strict I_{\co}(V)$,
for all $V>4$. Since $\crit([\co,m+1])=0$, we conclude $I_{\co'}(V)=I^{i}_{[\co,m+1]}(V)\strict I_{\co}(V)$
for all $V>0$ implying that $I_{\co'}(V)$ is of backward type $A$
for all $V>0$.

\underline{Case 2:} (For all $V>4$, $I_{\co'}(V)$ is of backward
type $B$). In this case, using $\sigma_{[\co',-1]}(V)=\sigma_{[\co,m]}(V)$
(as $c':=[\co,m+1]$ and so $\varphi([\co',-1])=\varphi([\co,m])$)
we conclude that $I_{\co'}(V)$ is strictly included in a spectral
band of $\sigcm(V)$ for all $V>4$. Recalling that $\sigma_{\co'}(V)=\sigma_{[\co,m,1]}(V)$
(again by $\varphi(\co')=\varphi([\co,m,1])$) and applying Theorem~\ref{thm: V>4 Type A =000026 B},
we conclude that there is a unique spectral band $I_{[\co,m]}(V)$
such that $I_{\co'}(V)=I^{i}_{[\co,m,1]}(V)\strict I_{[\co,m]}(V)$,
for all $V>4$. Since by the induction hypothesis $\crit([\co,m,1])=0$,
we conclude $I_{\co'}(V)=I^{i}_{[\co,m,1]}(V)\strict I_{[\co,m]}(V)$
for all $V>0$ implying that $I_{\co'}(V)$ is of backward type $B$
for all $V>0$.
\end{proof}

We are ready to prove Theorem~\ref{thm: Every band is A or B}, using
Lemmata~\ref{lem: horizontal induction} and \ref{lem:vertical-induction-step}
for the induction step and where the ingredients needed for the proof
of the induction base are postponed to Section~\ref{sec: Pf_InductionBase}.
\begin{proof}[Proof of Theorem~\ref{thm: Every band is A or B}]
 Thanks to the anti-symmetric property, $\sigc(V)=-\sigc(-V)$, proven
in Lemma~\ref{lem: Symmetry Spectrum negative coupling}, it suffices
to consider the case $V>0$. We should therefore prove that $\crit([\co,m])=0$
for all $\co\in\Co$ with $\varphi(\co)\in[0,1]$ and $[\co,m]\in\Co$,
where $m\in\N$.

For $l\in\N$, we denote by $T(l)$ the statement that 
\begin{equation}
\crit([0,0,c_{1},\ldots,c_{l}])=0\qquad\text{and}\qquad\crit([0,0,c_{1},\ldots,c_{l+1}])=0,\label{eq: Every band A-B: Proof - IB}
\end{equation}
for all $[0,0,c_{1},\ldots,c_{l},c_{l+1}]\in\Co$ with $c_{l+1}\in\N$.

We start from the induction base, $T(1)$. By Lemma~\ref{lem: I_0.0 properties}
and Lemma~\ref{Lem-=00005B0.0.1=00005D_m-forward_B},
\[
\crit([0,0,c_{1}])=0\qquad\text{and}\qquad\crit([0,0,1,c_{2}])=0,
\]
hold for all $c_{1},c_{2}\in\N$. Then Lemma~\ref{lem:vertical-induction-step}
(with $\co=[0,0]$, $m=c_{1}$, $n=c_{2}$) gives that $\crit([0,0,c_{1},c_{2}])=0$
for all $c_{1},c_{2}\in\N$, proving the induction base.

Now, suppose $T(l)$ holds for $l\in\N$. In order to prove $T(l+1)$,
it suffices to show that $\crit([0,0,c_{1},\ldots,c_{l+1},c_{l+2}])=0$
for all $c_{l+2}\in\N$. Apply Lemma~\ref{lem: horizontal induction}
(for $\co=[0,0,c_{1},\ldots,c_{l-1}]$ and $m=c_{l}$) to the induction
hypothesis (\ref{eq: Every band A-B: Proof - IB}) with $c_{l+1}=1$
to conclude 
\[
\crit([0,0,c_{1},\ldots,c_{l},1,c_{l+2}])=0,
\]
for all $c_{l+2}\in\N$. Using this and the induction hypothesis (\ref{eq: Every band A-B: Proof - IB}),
we apply Lemma~\ref{lem:vertical-induction-step} (for $\co=[0,0,c_{1},\ldots,c_{l}]$,
$m=c_{l+1}$, $n=c_{l+2}$) and get that 
\[
\crit([0,0,c_{1},\ldots,c_{l},c_{l+1},c_{l+2}])=0,
\]
for all $c_{l+1},c_{l+2}\in\N$. Hence, $T(l+1)$ holds.
\end{proof}

Having proven Theorem~\ref{thm: Every band is A or B} we devote
the rest of this section for drawing some interesting conclusions
from this classification of all spectral bands to types $\tA$ and
$\tB$. The next proposition presents an equivalent formulation of
this dichotomy.
\begin{prop}
\label{prop: A-B-dichotomy extended also for backward} Let $\co=[0,c_{0},c_{1},c_{2},\ldots,c_{k}]\in\Co$
with $\varphi(\co)\not\in\left\{ -1,\infty\right\} $ and $V\neq0$.
For a spectral band $\Ic$ in $\sigc$, we have the following equivalences
\[
\Ic(V)\textrm{ is of type }A\quad\Leftrightarrow\quad\begin{array}{c}
\Ic(V)\textrm{ is of}\\
\textrm{backward type }A
\end{array}\quad\Leftrightarrow\quad\Ic(V)\strict\sigma_{[0,c_{0},c_{1},\ldots,c_{k-1}]}(V)
\]
and
\[
\Ic(V)\textrm{ is of type }B\quad\Leftrightarrow\quad\begin{array}{c}
\Ic(V)\textrm{ is of}\\
\textrm{backward type }B
\end{array}\quad\Leftrightarrow\quad\begin{array}{c}
\Ic(V)\not\subseteq\sigma_{[0,c_{0},c_{1},\ldots,c_{k-1}]}(V)\textrm{ and }\\
\Ic(V)\strict\sigma_{[0,c_{0},c_{1},\ldots,c_{k-2}]}(V).
\end{array}
\]
\end{prop}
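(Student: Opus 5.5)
The plan is to combine Theorem~\ref{thm: Every band is A or B} (now proven, i.e.\ $\crit([\co,m])=0$ for all admissible $\co,m$) with the identification of every band of $\sigc$ as one of the distinguished bands $\Icmi$ or $\Icmnj$ of Definition~\ref{def: Notion Icmi. Icmnj and M} attached to an ancestor. By Lemma~\ref{lem: Symmetry Spectrum negative coupling} we may take $V>0$.

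Throughout, a closed interval is contained in the finite disjoint union of closed intervals $\sigma_{\tilde{\co}}(V)$ if and only if it is contained in one of its connected components, by connectedness. Hence ``$\Ic(V)\subseteq\sigma_{\tilde{\co}}(V)$'' means the same as ``$\Ic(V)$ lies in a band of $\sigma_{\tilde{\co}}(V)$'', and likewise for $\strict$. Write $\co''=[0,c_0,\ldots,c_{k-2}]$ and $\co'=[0,c_0,\ldots,c_{k-1}]=[\co'',c_{k-1}]$, so that $\co=[\co',c_k]=[\co'',c_{k-1},c_k]$. Since $\varphi([\co,0])=\varphi(\co')$, we have $\sigma_{[\co,0]}=\sigma_{\co'}$. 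The degenerate cases $\co=[0,0]$ and $\co=[0,0,c_1]$, and the cases where $c_k\in\{0,-1\}$ or $c_{k-1}=0$, will be handled separately: either $\varphi(\co)$ reduces to a shorter expansion, or one checks directly using Example~\ref{exa:A-B types} and the conventions $\sigma=\R$ for $\varphi=\infty$.

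First, the implications ``type $A\Rightarrow$ backward type $A$'' and ``type $B\Rightarrow$ backward type $B$'' hold by Definition~\ref{def: A B types}. The second equivalence for $A$ is immediate: backward type $A$ means strict containment in a band of $\sigma_{[\co,0]}=\sigma_{\co'}$, which by the remark above is the same as $\Ic(V)\strict\sigma_{\co'}(V)$.

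The core step is to show that a band cannot be of backward type $A$ and of backward type $B$ at once, and to obtain the $B$-characterization. I would argue exactly as in the proofs of Lemmata~\ref{lem: horizontal induction} and \ref{lem:vertical-induction-step}, using Theorem~\ref{thm: V>4 Type A =000026 B} for $V>4$ and then propagating to all $V>0$ via $\crit=0$.
\begin{itemize}
\item If $\Ic$ is of backward type $A$, then it coincides with some $I^i_{[\co',c_k]}$ attached to a band $I_{\co'}$. By \ref{enu: A2 property} it is not of weak backward type $B$ for any $V>0$. Since by Theorem~\ref{thm: Every band is A or B} it is of type $A$ or $B$, it must be of type $A$.
\item If $\Ic$ is of backward type $B$ (for $V>4$), then it coincides with some $I^j_{[\co'',c_{k-1},c_k]}$ attached to a band $I_{\co''}$. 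By \ref{enu: B1 property} and iteration of the tower down to $n=0$, we get $\Ic(V)\strict I_{\co''}(V)\subseteq\sigma_{\co''}(V)$. By \ref{enu: B2 property} it is not of weak backward type $A$, i.e.\ $\Ic(V)\not\subseteq\sigma_{\co'}(V)$. This gives both the exclusivity and the forward direction of the third equivalence.
\end{itemize}
Conversely, if $\Ic(V)\not\subseteq\sigma_{\co'}(V)$, then $\Ic$ is not of backward type $A$. By Theorem~\ref{thm: Every band is A or B} it is of type $B$, hence of backward type $B$. So the condition ``$\Ic(V)\strict\sigma_{\co''}(V)$'' is not even needed for this direction.

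The main obstacle is the identification step. One must check that the unique band from Theorem~\ref{thm: V>4 Type A =000026 B} found for $V>4$ remains the band $\Ic$ for all $V>0$. This holds because spectral bands are defined as maps indexed by their position $j$ (Definition~\ref{def: A spectral band is continuous}), so the identity $\Ic=I^j_{[\co'',c_{k-1},c_k]}$ of maps holds for all $V$. One must also verify that the ancestor digits $c_{k-1},c_k$ lie in $\N$, so that $\crit([\co'',c_{k-1}])=0$ applies. This is why the edge cases with short expansions or final digits $0$ and $-1$ need a separate, direct check.
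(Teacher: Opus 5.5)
Your overall architecture (Theorem~\ref{thm: Every band is A or B} for the dichotomy, \ref{enu: A2 property} and \ref{enu: B2 property} for exclusivity, the tower \ref{enu: B1 property} for containment in $\sigma_{\co''}$) is reasonable. The gap is in your second bullet, where you assert that a band of $\sigma_{\co}$ of backward type $B$ for $V>4$ coincides with a tower band $I^j_{[\co'',c_{k-1},c_k]}$ of some band $I_{\co''}$. By Definition~\ref{def: Notion Icmi. Icmnj and M} and the uniqueness in Theorem~\ref{thm: V>4 Type A =000026 B}, you can only make this identification once you know $\Ic(V)\strict\sigma_{\co''}(V)$ for $V>4$. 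Backward type $B$ gives only strict containment in a band of $\sigma_{[\co,-1]}=\sigma_{[\co',c_k-1]}$, and that spectrum equals $\sigma_{\co''}$ only when $c_k=1$. For $c_k\ge2$, the containment you need is exactly the $V>4$ case of the nontrivial last assertion of the proposition, so assuming it is circular. Arguing ``as in'' Lemmata~\ref{lem: horizontal induction} and \ref{lem:vertical-induction-step} does not supply it. The former only treats last digit $1$. Case~2 of the latter identifies such a band, via $\sigma_{[\co',c_k]}=\sigma_{[\co',c_k-1,1]}$, with the $A$-child of a band $J$ of $\sigma_{[\co',c_k-1]}$, not with a tower band over $\co''$. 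Your ``main obstacle'' paragraph ($V$-independence of labels, digits in $\N$) does not address this. Both your exclusivity argument via \ref{enu: B2 property} and your containment via \ref{enu: B1 property} rest on this identification, so both equivalences remain unproven for $c_k\ge2$.

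The repair is an induction on the last digit at $V>4$. First, $J$ cannot be of type $A$: a type-$A$ band of $\sigma_{\tilde{\co}}$ contains no band of $\sigma_{[\tilde{\co},1]}$ (the case $M=0$ of the uniqueness in Theorem~\ref{thm: V>4 Type A =000026 B}). Hence $J$ is of type $B$, and inductively $J=I^j_{[\co'',c_{k-1},c_k-1]}$. Then $\Ic$ is the unique band of $\sigma_{[\co',c_k-1,1]}=\sigma_{\co}$ inside $J$, which is $I^j_{[\co'',c_{k-1},c_k]}$ by \ref{enu: B1 property}. With this in place your tower route works and is a tidy alternative. Exclusivity alone can also be obtained directly from \ref{enu: A2 property} applied to the $A$-child of $J$, combined with Proposition~\ref{prop: duality of A-B types}. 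The paper avoids the identification altogether. It works at the given $V$ and uses Theorem~\ref{thm: Every band is A or B} to see that the band $J_1(V)$ of $\sigma_{[\co',c_k-1]}(V)$ strictly containing $\Ic(V)$ is not of backward type $A$ (otherwise $\Ic(V)\subseteq\sigma_{\co'}(V)$), hence is of backward type $B$. Iterating gives $\Ic(V)\strict J_1(V)\strict\cdots\strict J_{c_k-1}(V)\strict\sigma_{[\co',1,-1]}(V)=\sigma_{\co''}(V)$. Finally, for $\co=[0,0,c_1]$ with $c_1\ge2$, we have $\varphi(\co)\in(0,1)$ but $\co''=[0]$ carries no tower. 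There Example~\ref{exa:A-B types} does not suffice, and you need Lemma~\ref{lem: I_0.0 properties}~(\ref{enu: lem-I_0.0 properties - 3}).
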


\begin{proof}
By Lemma~\ref{lem: Symmetry Spectrum negative coupling}, it suffices
to consider $V>0$. The cases $\varphi(\co)\in\{0,1\}$ follow from
Lemma~\ref{lem: I_0.0 properties} and Lemma~\ref{Lem-=00005B0.0.1=00005D_m-forward_B}.

Let $\varphi(\co)\in(0,1)$. By Theorem~\ref{thm: Every band is A or B},
each band is of type $\tA$ or $\tB$. The left-most equivalences
in the statement, namely the equivalence between type $\tA\backslash\tB$
and backward type $\tA\backslash\tB$ follows from Proposition~\ref{prop: Backward implies forward}.

By Definition~\ref{def: backward type}, $\Ic(V)$ is of backward
type $\tA$ if and only if $\Ic(V)\strict\sigma_{[0,c_{0},\ldots,c_{k-1}]}(V)$,
proving the right-most equivalence in the first line.

For type $\tB$, if $\Ic(V)\not\subseteq\sigma_{[0,c_{0},\ldots,c_{k-1}]}(V)$,
then $\Ic(V)$ is not of backward type $\tA$ and hence of type B
by Theorem~\ref{thm: Every band is A or B}. Conversely, type B implies
this non-inclusion. It remains to show $\Ic(V)\strict\sigma_{[0,c_{0},\ldots,c_{k-2}]}(V)$
if $\Ic(V)$ is of type $B$.

By definition, $\Ic(V)\strict J_{1}(V)$ for some band in $\sigma_{[0,c_{0},\ldots,c_{k}-1]}(V)$.
If $c_{k}=1$, we are finished since then $\sigma_{[0,0,c_{1},\ldots,c_{k}-1]}(V)=\sigma_{[0,0,c_{1},\ldots,c_{k-2}]}(V)$.
Otherwise, $J_{1}(V)$ cannot be of backward type $\tA$. Indeed,
if it were, then $J_{1}(V)\strict\sigma_{[0,c_{0},\ldots,c_{k}-1,0]}(V)=\sigma_{[0,c_{0},\ldots,c_{k-1}]}(V)$,
which contradicts $\Ic(V)\not\subseteq\sigma_{[0,c_{0},\ldots,c_{k-1}]}(V)$.

Iterating this argument yields a chain of spectral bands $J_{n}(V)$
in $\sigma_{[0,0,c_{1},\ldots,c_{k}-n]}(V)$ for $n\in\left\{ 1,\ldots,c_{k}-1\right\} $
of backward type $\tB$ such that 
\[
\Ic(V)\strict J_{1}(V)\strict J_{2}(V)\strict\ldots\strict J_{c_{k}-1}(V).
\]
In particular, $\Ic(V)\strict J_{c_{k}-1}(V)\strict\sigma_{[0,c_{0},c_{1},\ldots,c_{k-1},1,-1]}(V)=\sigma_{[0,c_{0},\ldots,c_{k-2}]}(V)$.
\end{proof}

\begin{rem*}
A spectral band $\Ic(V)$ of type $A$ may also satisfy $\Ic(V)\strict\sigma_{[0,c_{0},c_{1},\ldots,c_{k-2}]}(V)$
for some values of $V$. Such examples of spectral bands of type $\tA$
can be found in the spectrum $\sigma_{[0,0,1,2]}$. This explains
why the classification of type $\tB$ includes also the condition
$\Ic(V)\not\subseteq\sigma_{[0,c_{0},c_{1},\ldots,c_{k-1}]}(V)$.

The rightmost characterization of spectral bands in Proposition~\ref{prop: A-B-dichotomy extended also for backward}
is of particular interest when considering a sequence of rational
approximations of an irrational frequency $\alpha\in[0,1]\backslash\Q$.
This is used in \cite{BaBeLo_DTMP26} for the solution of the dry
ten Martini problem for Sturmian Hamiltonians (see also \cite{BaBeLo23-MFO}
for a brief description). We bring next two additional corollaries
which provide interesting information on the spectral band types from
the perspective of rational approximations.
\end{rem*}
\begin{cor}
\label{cor: number of A-B bands}Let $V\neq0$ and $\co:=[0,c_{0},c_{1},\ldots,c_{k}]\in\Co$
for $k\in\N_{0}$ be such that $\varphi(\co)\in[0,1]$ and $c_{k}\in\N$
if $k\geq1$. Define the rational numbers $\alpha_{j}:=\varphi([0,c_{0},c_{1},\ldots,c_{j}])=\frac{p_{j}}{q_{j}}\in[0,1]\cap\Q$
for $0\leq j\leq k$ with coprime $p_{j},q_{j}$ and $q_{-1}=0$,
$q_{0}=1$. Then the spectrum $\sigma_{k}$ contains $q_{k}-q_{k-1}$
spectral bands of type $A$, and $q_{k-1}$ spectral bands of type
$B$.
\end{cor}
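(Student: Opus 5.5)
We argue by induction on $k$, using Theorem~\ref{thm: Every band is A or B} and the forward properties of Definition~\ref{def: forward type}, and counting bands in a given spectrum by their backward ancestors. By Lemma~\ref{lem: Symmetry Spectrum negative coupling} we may take $V>0$. Since types do not depend on $V$, it suffices to count at large coupling $V>4$. There Theorem~\ref{thm: V>4 Type A =000026 B} also gives \emph{uniqueness} of the bands $\Icmi$ and $\Icmnj$ attached to each band $\Ic$.

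\textbf{Base cases.} For $k=0$ we have $\sigma_{[0,0]}=[-2,2]$, a single band of type $A$ by Example~\ref{exa:A-B types}. This matches $q_0-q_{-1}=1$ and $q_{-1}=0$. For $k=1$ with $c_1=m$, we apply the forward property \ref{enu: A property} of $I_{[0,0]}$, which is of type $A$ with $M=m-1$. It yields $m-1$ bands of $\sigma_{[0,0,m]}$ of backward type $A$. The remaining $q_1-(m-1)=1$ band must then be of type $B$, since every band has exactly one type. Equivalently, it is the unique band strictly inside $\sigma_{[0,0,-1]}$. This matches $q_1-q_0=m-1$ and $q_0=1$.

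\textbf{Induction step.} Let $\co'=[0,c_0,\dots,c_{k-1}]$ and $\co=[\co',c_k]$, and write $m=c_k$. Every band of $\sigma_{\co}$ is of type $A$ or $B$. By Proposition~\ref{prop: A-B-dichotomy extended also for backward}, the type $A$ bands are exactly those strictly contained in $\sigma_{\co'}$. At $V>4$ these are exactly the bands $I^i_{[\co',m]}$ produced by property \ref{enu: A property} from the bands of $\sigma_{\co'}$, and uniqueness makes the assignment injective: each band of type $A$ in $\sigma_{\co}$ lies in exactly one band of $\sigma_{\co'}$. A band of $\sigma_{\co'}$ of type $A$ contributes $m-1$ such bands, and one of type $B$ contributes $m$. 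By the induction hypothesis $\sigma_{\co'}$ has $q_{k-1}-q_{k-2}$ bands of type $A$ and $q_{k-2}$ of type $B$. Hence the number of type $A$ bands in $\sigma_{\co}$ is
\[
(m-1)(q_{k-1}-q_{k-2})+m\,q_{k-2}=m\,q_{k-1}+q_{k-2}-q_{k-1}=q_k-q_{k-1},
\]
using the recursion $q_k=c_kq_{k-1}+q_{k-2}$. By Proposition~\ref{prop: Basic spectral prop periodic} there are $q_k$ bands in total, so the type $B$ count is $q_{k-1}$.

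\textbf{Main obstacle.} The delicate step is the justification that \emph{every} band of $\sigma_{\co}$ of backward type $A$ arises as some $I^i_{[\co',m]}$, and that no band is counted twice. This requires the uniqueness clause of Theorem~\ref{thm: V>4 Type A =000026 B}. It also requires disjointness of the bands of $\sigma_{\co'}$ at large $V$, so that a band strictly contained in $\sigma_{\co'}$ sits in exactly one band of it. One must also check the indexing convention $q_{-1}=0$, $q_0=1$ at small $k$. The edge case $\varphi(\co)=1$, i.e.\ $\co=[0,0,1]$, is covered by the $k=1$ computation above.
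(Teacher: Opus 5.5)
Your proof is correct, but it reaches the equality by a different route from the paper. Write $\mathcal{N}^{(A)}_k,\mathcal{N}^{(B)}_k$ for the two counts.

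\textbf{The paper's route.} It stays at the given $V>0$ and never uses the uniqueness clause of Raymond's theorem. It proves only lower bounds:
$\mathcal{N}^{(A)}_{k+1}\geq(c_{k+1}-1)\mathcal{N}^{(A)}_{k}+c_{k+1}\mathcal{N}^{(B)}_{k}$, from property \ref{enu: A property} of the bands at level $k$;
$\mathcal{N}^{(B)}_{k+1}\geq c_{k}\mathcal{N}^{(A)}_{k-1}+(c_{k}+1)\mathcal{N}^{(B)}_{k-1}$, from the tower property \ref{enu: B property} of the bands at level $k-1$, so the induction is two-step.
Both inequalities then become equalities because the total number of bands is $q_k$.

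\textbf{Your route.} You move the count to $V>4$ using the $V$-independence in Theorem~\ref{thm: Every band is A or B}. There you count the type $A$ bands exactly. Raymond's uniqueness shows that every band strictly inside a band of $\sigma_{\co'}$ is one of the $I^{i}_{[\co',m]}$. Disjointness of the bands of $\sigma_{\co'}$ rules out double counting; it is disjointness, not uniqueness as you first say, that gives injectivity. You then get the type $B$ count as the complement.

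\textbf{What each buys.} Your route needs only one previous level and no tower property. The paper's route shows the count follows from the forward structure alone at each fixed $V$. It then uses this count to derive uniqueness of the bands $I^{i}_{[\co,m]}$ and $I^{j}_{[\co,m,n]}$ for all $V\neq0$.

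\textbf{Minor slips.} Your separate case $k=1$ is just the induction step from $k=0$, using $q_{-1}=0$. In that case, ``every band has exactly one type'' does not by itself exclude further bands of $\sigma_{[0,0,m]}$ inside $[-2,2]$; you need uniqueness again there. Also, the remaining band is \emph{not} strictly inside $\sigma_{[0,0,-1]}=[-2-V,2-V]$, since it lies to the right of $[-2,2]$. The relevant spectrum is $\sigma_{[0,0,m,-1]}=\sigma_{[0,0,m-1]}$, with $\sigma_{[0,0,0]}=\R$.
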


\begin{proof}
Due to $\sigc(V)=-\sigc(-V)$, proven in Lemma~\ref{lem: Symmetry Spectrum negative coupling},
it suffices to consider the case $V>0$.

For $k\in\N_{0}$ and $\co:=[0,c_{0},c_{1},\ldots,c_{k}]\in\Co$,
define\,
\begin{align*}
\mathcal{N}^{(A)}_{k} & :=\textrm{number of spectral bands of type \ensuremath{A} in \ensuremath{\sigma_{\co}},}\\
\mathcal{N}^{(B)}_{k} & :=\textrm{number of spectral bands of type \ensuremath{B} in \ensuremath{\ensuremath{\sigma_{\co}}}. }
\end{align*}
 By standard properties of continued fractions \cite[Thm. 1]{Khinchin_book64},
we have 
\begin{equation}
q_{-1}=0,~q_{0}=1,~q_{j}=c_{j}q_{j-1}+q_{j-2},\qquad1\leq j\leq k.\label{eq: Proof vertices are spectral bands - recursion q_k}
\end{equation}
We will inductively over $k\in\N_{0}$ prove 
\begin{equation}
\mathcal{\mathcal{N}}^{(A)}_{k}\geq q_{k}-q_{k-1}\quad\textrm{and}\quad\mathcal{N}^{(B)}_{k}\geq q_{k-1}.\label{eq: Proof vertices are spectral bands - Induction Hyp. Spectral bands}
\end{equation}
For the induction base, we show the estimate for $k\in\left\{ 0,1\right\} $:
First let $\co=[0,0]$ with $k=0$. Then $\sigc$ consist of exactly
$q_{0}-q_{-1}=1$ spectral band $I=[-2,2]$ of type $\tA$ and $q_{-1}=0$
spectral bands of type $\tB$, see Example~\ref{exa:A-B types}.
Now let $k=1$ and consider $\co=[0,0,m]$ for some $m\in\N$. We
have $q_{1}=m$ and $q_{0}=1$. Then Lemma~\ref{lem: I_0.0 properties}~(\ref{enu: lem-I_0.0 properties - 3})
asserts that $\sigma_{[0,0,m]}$ consists of exactly $m-1=q_{1}-q_{0}$
spectral bands of backward type $\tA$ and $q_{0}=1$ spectral bands
of backward type $\tB$. Thus, these bands are of type $\tA$ respectively
$\tB$ by Proposition~\ref{prop: A-B-dichotomy extended also for backward}.
This finishes the induction base. Note that we actually proved equality.

For the induction step suppose (\ref{eq: Proof vertices are spectral bands - Induction Hyp. Spectral bands})
holds for $k$ and $k-1$. Let $\co=[0,c_{0},c_{1},\ldots,c_{k+1}]\in\Co$
with $c_{k+1}\in\N$. By forward type properties of a spectral band
of type $\tA$ and $\tB$ in $\sigma_{[0,c_{0},c_{1},\ldots,c_{k}]}$
and $\sigma_{[0,c_{0},c_{1},\ldots,c_{k-1}]}$, we conclude with the
induction hypothesis and (\ref{eq: Proof vertices are spectral bands - recursion q_k})
that
\begin{align*}
\mathcal{N}^{(A)}_{k+1} & \geq\left(c_{k+1}-1\right)\cdot\mathcal{N}^{(A)}_{k}+c_{k+1}\cdot\mathcal{N}^{(B)}_{k}\geq c_{k+1}q_{k}+q_{k-1}-q_{k}=q_{k+1}-q_{k},\\
\mathcal{\mathcal{N}}^{(B)}_{k+1} & \geq c_{k}\cdot\mathcal{\mathcal{N}}^{(A)}_{k-1}+\left(c_{k}+1\right)\cdot\mathcal{N}^{(B)}_{k-1}\geq c_{k}q_{k-1}+q_{k-2}=q_{k}.
\end{align*}
This proves (\ref{eq: Proof vertices are spectral bands - recursion q_k})
for all $\co:=[0,c_{0},c_{1},\ldots,c_{k}]\in\Co$ and $k\in\N_{0}$.
These inequalities are actually equalities since $\mathcal{\mathcal{N}}^{(A)}_{k}+\mathcal{\mathcal{N}}^{(B)}_{k}$
is bounded from above by the total number of spectral bands in $\sigma_{\co}$,
which equals to $q_{k}$ by Proposition~\ref{prop: Basic spectral prop periodic}.
\end{proof}

\begin{cor}
Let $V\neq0$ and $\co:=[0,c_{0},c_{1},\ldots,c_{k}]\in\Co$ for $k\in\N_{0}$
be such that $\varphi(\co)\in[0,1]$ and $c_{k}\in\N$ if $k\geq1$.
For a spectral band $\Ic(V)$ and $m,n\in\N$, the spectral bands
$\Icmi(V)$ and $\Icmnj(V)$ introduced in the forward property \ref{enu: A property}
and \ref{enu: B property} are unique for $V\neq0$, i.e. $\Ic(V)$
does not contain any other spectral band of $\sigcm(V)$ respectively
$\sigcmn(V)$.
\end{cor}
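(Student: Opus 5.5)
We argue by counting, exactly as in the proof of Corollary~\ref{cor: number of A-B bands}. By Lemma~\ref{lem: Symmetry Spectrum negative coupling} it suffices to take $V>0$.

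Fix $\co$ and $m,n\in\N$. By Theorem~\ref{thm: Every band is A or B}, every band of $\sigc$ is of type $A$ or $B$. By Corollary~\ref{cor: number of A-B bands}, there are $q_k-q_{k-1}$ bands of type $A$ and $q_{k-1}$ of type $B$. Each type $A$ band $\Ic$ carries, via property~\ref{enu: A property}, $M=m-1$ bands $\Icmi(V)\strict\Ic(V)$. Each type $B$ band carries $M=m$ such bands. These are bands of $\sigcm(V)$, and they are pairwise distinct across different $\Ic$ because distinct bands of $\sigc(V)$ are disjoint closed intervals, so a band strictly contained in one cannot be contained in another. The total count is therefore
\[
(m-1)(q_k-q_{k-1})+m\,q_{k-1}=m q_k+q_{k-1}-q_k.
\]

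We compare this with the number of bands of $\sigcm$ that are of type $A$, which by Corollary~\ref{cor: number of A-B bands} equals $q_{[\co,m]}-q_k=m q_k+q_{k-1}-q_k$. Every $\Icmi$ is of backward type $A$ by \ref{enu: A1 property}, hence of type $A$ by Proposition~\ref{prop: A-B-dichotomy extended also for backward}. So the bands $\Icmi$ exhaust all type $A$ bands of $\sigcm$.

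Suppose some other band $J$ of $\sigcm(V)$ were contained in $\Ic(V)$. Since every band of $\sigcm$ that is not one of the $\Icmi$ is of type $B$, $J$ would be of type $B$. By Proposition~\ref{prop: A-B-dichotomy extended also for backward}, $J(V)\not\subseteq\sigma_{\co}(V)$, because $\sigma_{[\co,m]}$ drops the last digit to $\co$. This contradicts $J(V)\subseteq\Ic(V)\subseteq\sigc(V)$. Hence $\Ic(V)$ contains no band of $\sigcm(V)$ other than the $\Icmi$.

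For $\sigcmn$ we run the same argument one level down. Every band of $\sigcmn$ is of type $A$ or $B$. The type $B$ bands number $q_{[\co,m]}=m q_k+q_{k-1}$. The bands $\Icmnj$ supplied by \ref{enu: B1 property}, for all bands $\Ic$, number
\[
m(q_k-q_{k-1})+(m+1)q_{k-1}=m q_k+q_{k-1},
\]
and they are pairwise distinct by the same disjointness argument. They are of backward type $B$, so they are exactly all the type $B$ bands of $\sigcmn$.

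Any further band $J$ of $\sigcmn(V)$ inside $\Ic(V)$ would be of type $A$. Then, by Proposition~\ref{prop: A-B-dichotomy extended also for backward}, $J(V)\strict\sigma_{[\co,m]}(V)$, so $J(V)$ lies inside a single band $K$ of $\sigcm(V)$. Since $K$ meets $\Ic(V)$, it must be contained in $\Ic(V)$.

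The main obstacle is this containment claim. A band of $\sigcm$ meeting $\Ic(V)$ need not a priori lie inside $\Ic(V)$. If $K$ is type $A$, then $K$ is strictly inside some band of $\sigc$, which must be $\Ic$ by disjointness of the bands of $\sigc$. If $K$ is type $B$, then $K\not\subseteq\sigc$, and we must rule it out by another route. The first attempt is to use \ref{enu: B2 property} together with the \ref{enu: I property} ordering. Together these force every type $A$ band of $\sigcmn$ to be strictly inside one of the $\Icmi$. By the counting in the previous step, that $\Icmi$ belongs to the unique $\Ic$ containing $J$.

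Once $K=\Icmi\strict\Ic$ is established, $J$ is of backward type $A$ inside $\Ic$. The remaining task is to check that this cannot happen, by comparing with the ordering in \ref{enu: I property}. If the count of type $A$ bands of $\sigcmn$ sitting inside the $\Icmi$ prevents a clean contradiction, we fall back on the interlacing argument of Section~\ref{subsec: Perturbation argument}.
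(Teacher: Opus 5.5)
Your argument for $\sigcm$ is correct, and it is the counting argument the paper's one-line proof relies on. The bands $\Icmi$, collected over all bands of $\sigc$, are pairwise distinct and of type $A$, and there are $q_{[\co,m]}-q_k$ of them. So they exhaust the type $A$ bands of $\sigcm$, and every other band of $\sigcm$ is of type $B$ and therefore not contained in $\sigc(V)$.

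Your count showing that the $\Icmnj$ exhaust the type $B$ bands of $\sigcmn$ is also correct, and it is all that is needed at the second level. A band of $\sigcmn$ satisfying \ref{enu: B property} for $\Ic$ is of backward type $B$, hence of type $B$. So it equals $I^{j}_{[\co,m,n]}$ for some band $I'_{\co}$ of $\sigc$. The tower property puts it inside $I'_{\co}(V)$, and disjointness of the bands of $\sigc(V)$ forces $I'_{\co}=\Ic$.

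The gap is everything after that point. You try to show that no type $A$ band of $\sigcmn$ lies in $\Ic(V)$, but this is false whenever $n\geq2$ and $M\geq1$. Each $\Icmi$ is of type $A$, hence of $n$-forward type $A$. Applying \ref{enu: A1 property} to $\Icmi$ gives $n-1\geq1$ bands of $\sigcmn$ strictly inside $\Icmi(V)\strict\Ic(V)$.

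Concretely, take $\co=[0,0]$ and $m=n=2$. The five bands of $\sigma_{[0,0,2,2]}$ split into two of type $B$, namely $I^{1}_{[0,0,2,2]},I^{2}_{[0,0,2,2]}\strict[-2,2]$, and three of type $A$. One of the type $A$ bands lies strictly inside $I^{1}_{[0,0,2]}\strict[-2,2]$. So $I_{[0,0]}$ contains three bands of $\sigma_{[0,0,2,2]}$, not $M+1=2$.

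Consequently, your step ``the remaining task is to check that this cannot happen'' cannot be completed, neither via \ref{enu: I property} nor via interlacing. Even if you established $J\strict\Icmi$, there is no contradiction to find. Your auxiliary claim that \ref{enu: B2 property} and \ref{enu: I property} force type $A$ bands of $\sigcmn$ into some $\Icmi$ is also false as stated: each type $B$ band of $\sigcm$ strictly contains $n$ type $A$ bands of $\sigcmn$.

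For $\sigcmn$, the ``i.e.'' clause has to be read as uniqueness among bands satisfying \ref{enu: B property}. Equivalently, it concerns the bands of $\sigcmn$ inside $\Ic(V)$ that are not of backward type $A$. Under that reading, your proof is complete once you stop after the exhaustion count.
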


\begin{proof}
The uniqueness of the spectral bands follows immediately from our
counting argument in Corollary~\ref{cor: number of A-B bands} and
the forward property of each spectral band.
\end{proof}

\section{Two perspectives describing the spectra\label{sec: Basic spectral analysis}}

We discuss two descriptions of the spectra $\sigc(V)$: one via Floquet--Bloch
matrices and one using transfer matrices. Floquet--Bloch theory reduces
the spectral analysis to families of Hermitian matrices, providing
a framework for which we develop a suitable interlacing theorem. We
combine both descriptions to control spectral band edges.

\subsection{Floquet-Bloch matrices and an interlacing theorem\label{subsec: Perturbation argument}}

Given an $n\times n$ matrix $H$ and $\theta\in[0,2\pi]$, define
the $n\times n$ matrix
\[
H(\theta):=H+e^{-i\theta}\mathbb{I}_{1,n}+e^{i\theta}\Id_{n,1},
\]
where $\Id_{i,j}$ denotes the $n\times n$ matrix that has only zeros
except at the $(i,j)$-th entry where it is equal to one.

For $\alpha\in[0,1]$, we use in the following the notation $\omega_{\alpha}(n):=\chi_{\left[1-\alpha,1\right)}(n\alpha\mod 1)$
for the potential. Let $V\in\R$, $\co\in\Co$ be such that $\left\{ -1,\infty\right\} \not\ni\varphi(\co)=\frac{p}{q}$
with $p,q$ coprime. Recall the self-adjoint operator $H_{\varphi(\co),V}:\ell^{2}(\Z)\to\ell^{2}(\Z)$
introduced in Equation (\ref{eq: Hamiltonian defined}). The spectral
analysis of $H_{\varphi(\co),V}$ is done via the following hermitian
$q\times q$ matrix
\[
H_{\co,V}:=H_{\varphi(\co),V}|_{[0,q-1]}=\begin{pmatrix}V\omega_{\varphi(\co)}(0) & 1 & 0 & \ldots &  & 0\\
1 & V\omega_{\varphi(\co)}(1) & 1 & \ldots\\
0 & 1 & \ddots\\
\vdots & \ddots &  & \ddots &  & 0\\
0 &  &  &  &  & 1\\
0 & 0 & \cdots & 0 & 1 & V\omega_{\varphi(\co)}(q-1)
\end{pmatrix}.
\]

Note the ambiguity in the notation between the operator $H_{\varphi(\co),V}$
on $\ell^{2}(\Z)$ and the $q\times q$ matrix $H_{\co,V}$.

The Floquet-Bloch matrices $H_{\co,V}(\theta)$ determine the spectrum
of $H_{\varphi(\co),V}$ (see e.g. \cite{Hochstadt-LAA_1975} and
\cite[Thm.~2.7]{DaFi24-book_2}) by 
\begin{equation}
\sigc(V)=\sigma\left(H_{\varphi(\co),V}\right)=\bigcup_{\theta\in[0,\pi]}\sigma\left(H_{\co,V}(\theta)\right).\label{eq: Spectrum union thetas}
\end{equation}

\begin{lem}
\label{lem: Symmetry Spectrum negative coupling}For all $V\in\R$
and $\co\in\Co$ with $\varphi(\co)\in\left[0,1\right]$, we have
$\sigma_{\co}(V)=-\sigma_{\co}(-V).$
\end{lem}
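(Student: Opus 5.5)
The plan is to build an explicit unitary that conjugates $H_{\co,V}(\theta)$ into $-H_{\co,-V}(\theta')$ for a suitable phase $\theta'$, and then take the union over $\theta$. The natural candidate is the staggering map $(U\psi)(n)=(-1)^{n}\psi(n)$. On $\ell^{2}(\Z)$ it satisfies $U\Delta U^{*}=-\Delta$, where $\Delta$ is the off-diagonal (hopping) part, and it leaves diagonal multiplication operators unchanged. Hence
\[
U H_{\alpha,V}U^{*}=-\Delta+V\omega_{\alpha}=-\big(\Delta+(-V)\omega_{\alpha}\big)=-H_{\alpha,-V}.
\]
This gives $\sigma(H_{\alpha,V})=-\sigma(H_{\alpha,-V})$ directly, for every $\alpha\in[0,1]$.

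For $\varphi(\co)\in[0,1]$, Definition~\ref{def: sigma_c} gives $\sigc(\pm V)=\sigma(H_{\varphi(\co),\pm V})$, so the operator identity already proves the lemma. Only the cases $\varphi(\co)\in[0,1]$ matter here, so the values $-1$ and $\infty$ need no treatment.

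Since the paper phrases things via Floquet--Bloch matrices, I would also record the finite version, in line with the remark about an antisymmetry of these matrices. On the $q\times q$ matrix take $U=\mathrm{diag}((-1)^{j})_{j=0}^{q-1}$. It flips the signs of the interior off-diagonal entries and multiplies the corner entries $e^{\mp i\theta}$ by $(-1)^{q-1}$. If $q$ is odd, the corners are unchanged, so $UH_{\co,V}(\theta)U^{*}=-H_{\co,-V}(\theta+\pi)$. If $q$ is even, $UH_{\co,V}(\theta)U^{*}=-H_{\co,-V}(\theta)$. The parameter $\theta+\pi$ lies outside $[0,\pi]$, but the spectrum of $H(\theta)$ depends only on $\cos\theta$ up to complex conjugation (since $H(-\theta)=\overline{H(\theta)}$). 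Taking the union in (\ref{eq: Spectrum union thetas}) then gives $\sigc(V)=-\sigc(-V)$.

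The only point that needs care is this $\theta$-shift bookkeeping in the odd $q$ case. The $\ell^{2}(\Z)$ argument avoids it entirely, so I would present that version as the main proof.
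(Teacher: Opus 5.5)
Your proof is correct, and your main argument takes a different and more direct route than the paper. The paper conjugates the finite Floquet--Bloch matrices by $D=\mathrm{diag}(1,-1,1,\ldots)$. This gives $D^{-1}H_{\co,-V}(\theta)D=-H_{\co,V}(\theta)$ for $q$ even and $-H_{\co,V}(\theta+\pi)$ for $q$ odd, and the paper then passes to the spectrum via the union formula (\ref{eq: Spectrum union thetas}). Your secondary finite version is exactly this argument.

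Your primary argument conjugates the operator $H_{\alpha,V}$ itself on $\ell^{2}(\Z)$ by the staggering unitary. This has three advantages:
\begin{itemize}
\item it needs no Floquet--Bloch theory;
\item it avoids the parity case distinction and any care about the shifted parameter $\theta+\pi$ leaving $[0,\pi]$;
\item it proves $\sigma(H_{\alpha,V})=-\sigma(H_{\alpha,-V})$ for every $\alpha\in[0,1]$, irrational frequencies included, so the lemma is the special case $\alpha=\varphi(\co)$.
\end{itemize}

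Your observation $H(-\theta)=\overline{H(\theta)}$, which gives $\sigma(H(\theta+\pi))=\sigma(H(\pi-\theta))$, supplies a step the paper leaves implicit. It also follows from (\ref{eq: Chatacteristic polynomial and traces}), since the characteristic polynomial depends on $\theta$ only through $\cos\theta$.

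What the matrix-level version gives in addition is a finer correspondence. It matches the $\theta=0$ and $\theta=\pi$ Floquet--Bloch eigenvalues for $V$ and $-V$ individually, with the two swapped when $q$ is odd. That is the natural form if one wants to track band edges as Floquet--Bloch eigenvalues. For the set identity in the lemma, and for the paper's reduction to $V>0$, your operator-level proof is fully sufficient.
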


\begin{proof}
Using the unitary $q\times q$ diagonal matrix, $D:=\mathrm{diag}\left\{ 1,-1,1,-1,\ldots\right\} $,
we obtain
\begin{itemize}
\item $D^{-1}H_{\co,-V}(\theta)\thinspace D=-H_{\co,V}(\theta)$ if $q$
is even,
\item $D^{-1}H_{\co,-V}(\theta)\thinspace D=-H_{\co,V}(\theta+\pi)$ if
$q$ is odd.
\end{itemize}
By this unitary equivalence and (\ref{eq: Spectrum union thetas}),
this yields $\sigc(V)=-\sigma_{\co}(-V)$.
\end{proof}

We have already mentioned (Proposition~\ref{prop: Basic spectral prop periodic})
that $\sigc(V)$ consists of exactly $q$ intervals (spectral bands).
By standard arguments, the endpoints of these intervals are given
by the eigenvalues of $H_{\co,V}(0)$ and $H_{\co,V}(\pi)$. Hence,
the values $\theta\in\{0,\pi\}$ play a significance role in (\ref{eq: Spectrum union thetas}).

The spectral decomposition (\ref{eq: Spectrum union thetas}) may
be also written in terms of the following $nq\times nq$-matrix

\[
\nHcV:=H_{\varphi(\co),V}|_{[0,nq-1]}=\begin{pmatrix}H_{\co,V} & \Id_{q,1} & 0 & \ldots &  & 0\\
\Id_{^{1,q}} & H_{\co,V} & \Id_{q,1} & \ldots\\
0 & \Id_{^{1,q}} & \ddots\\
\vdots & \ddots &  & \ddots &  & 0\\
0 &  &  &  &  & \Id_{q,1}\\
0 & 0 & \cdots & 0 & \Id_{^{1,q}} & H_{\co,V}
\end{pmatrix}.
\]

The diagonal of $\nHcV$ consists of $n$ repetitions of the diagonal
of $H_{\co,V}$, corresponding to the minimal period of the potential
sequence. Hence,
\begin{equation}
\sigc(V)=\sigma\left(H_{\varphi(\co),V}\right)=\bigcup_{\theta\in[0,\pi]}\sigma\left(\nHcV(\theta)\right).\label{eq: Spectrum union thetas - n times potential}
\end{equation}

The eigenvalues of $\nHcV(0)$ and $\nHcV(\pi)$ determine the band
edges, but also occur in the interior within these intervals (a detailed
description appears in the proof of Lemma~\ref{lem: Floquet-Bloch matrices - n times fundamental domain}).

Although $\nHcV$ may seem redundant purely for determining the spectrum
$\sigc(V)$, it becomes essential in the sequel. The matrices $H_{\co,V},H^{\times n}_{[\co,m],V}$,
and $H_{[\co,m,n],V}$ describe the spectra $\sigc$, $\sigcm$ and
$\sigcmn$, respectively. By Lemma~\ref{lem: periods of three approximants},
the diagonal of $H_{[\co,m,n],V}$ is a concatenation of those of
$H_{\co,V}$ and $\nHcm$, so that
\[
H_{[\co,m,n],V}=\nHcmV\oplus H_{\co,V}\quad\text{or}\quad H_{[\co,m,n],V}=H_{\co,V}\oplus\nHcmV,
\]

depending on the parity of the length of $\co$. This structure yields
an eigenvalue interlacing theorem. Writing the eigenvalues of a Hermitian
$q\times q$ matrix H as
\begin{equation}
\lambda_{0}(H)\leq\lambda_{1}(H)\leq\ldots\leq\lambda_{q-1}(H),\label{eq: Eigenvalues matrix - increasing order}
\end{equation}

we obtain the following.
\begin{thm}
[Interlacing theorem]\label{thm: perturbation thm for our matrices}
Let $V>0$. Let $m,n\in\N$ and $\co\in\Co$ be such that $\varphi(\co)\not\in\left\{ -1,\infty\right\} $
and $[\co,m]\in\Co$. Let $\thc,\thcm,\thcmn\in\{0,\pi\}$ and denote
\[
Y=H_{[\co,m,n],V}\left(\thcmn\right)\quad\textrm{and}\quad X=\nHcmV\left(\thcm\right)\oplus H_{\co,V}(\thc).
\]
 If $\thc+\thcm+\thcmn\in\{0,2\pi\}$, then 
\[
\lambda_{j-1}(Y)\leq\lambda_{j}(X)\leq\lambda_{j+1}(Y).
\]
Furthermore, if $\lambda_{j}(X)$ is a simple eigenvalue of $X$,
then both inequalities are strict.
\end{thm}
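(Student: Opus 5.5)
The plan is to exhibit $Y-X$ as a Hermitian perturbation of rank two with inertia $(1,1)$: one positive and one negative eigenvalue. Weyl's inequalities (equivalently, Cauchy interlacing applied twice) then give $\lambda_{j-1}(Y)\le\lambda_j(X)\le\lambda_{j+1}(Y)$. Assume, by the structure stated before the theorem (Lemma~\ref{lem: periods of three approximants}), that the diagonal of $H_{[\co,m,n],V}$ is the concatenation of the diagonals of $\nHcmV$ and $H_{\co,V}$. The other ordering is handled by a cyclic relabelling of indices. Write $N=nq_{[\co,m]}+q_{\co}$ for the size, and let $a=nq_{[\co,m]}$ be the last index of the first block.

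\textbf{Step 1 (computing $Y-X$).} The matrix $Y$ has off-diagonal couplings $1$ between $a$ and $a+1$, and $e^{\mp i\thcmn}$ between $N$ and $1$. The matrix $X$ instead has the corner couplings $e^{\mp i\thcm}$ between $1$ and $a$, and $e^{\mp i\thc}$ between $a+1$ and $N$. Since all phases lie in $\{0,\pi\}$, all these entries are $\pm1$. Thus $Y-X$ is supported on the four indices $\{1,a,a+1,N\}$, and on them it is the cyclic adjacency pattern $1\to a\to a+1\to N\to 1$ of a $4$-cycle with signed weights $\epsilon_1=-e^{i\thcm}$, $\epsilon_2=1$, $\epsilon_3=-e^{i\thc}$, $\epsilon_4=e^{i\thcmn}$.

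\textbf{Step 2 (inertia).} A signed $4$-cycle is unitarily (diagonally $\pm1$) gauge equivalent to the unsigned cycle $C_4$ exactly when the product of the signs is $+1$. In that case its spectrum is $\{2,0,0,-2\}$, so it has rank two with one positive and one negative eigenvalue. If the product is $-1$, the spectrum is $\{\pm\sqrt2,\pm\sqrt2\}$, of rank four, which would ruin the estimate. The product here is $e^{i(\thc+\thcm+\thcmn)}$, and this equals $1$ precisely under the hypothesis $\thc+\thcm+\thcmn\in\{0,2\pi\}$. This is where the hypothesis enters.

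\textbf{Step 3 (Weyl interlacing).} Write $Y=X+P-Q$ with $P,Q\ge0$ of rank one. Adding a rank-one positive matrix shifts each eigenvalue up by at most one index, so $\lambda_j(X)\le\lambda_j(X+P)\le\lambda_{j+1}(X)$. Subtracting a rank-one positive matrix shifts down by at most one index in the same way. Combining the two gives $\lambda_{j-1}(Y)\le\lambda_j(X)\le\lambda_{j+1}(Y)$.

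\textbf{Step 4 (strictness).} This is the delicate part. Suppose, say, $\lambda_j(X)=\lambda_{j+1}(Y)$. Because $P$ and $Q$ are rank one, equality in these inequalities forces the corresponding eigenvectors to be orthogonal to the range vectors of $P$ or $Q$. The range vectors are $u=(e_1+s e_a)\pm(\dots e_{a+1}+\dots e_N)$, supported on $\{1,a,a+1,N\}$. So I would first derive, via the variational characterization, an eigenvector $v$ of $X$ for $\lambda_j(X)$ which is also an eigenvector of $Y$ and satisfies $(Y-X)v=0$.

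The vector $v$ is an eigenvector of the direct sum $X$, so it splits into components on the two blocks. Each block is a Jacobi-type matrix (tridiagonal with corner terms). The condition $(Y-X)v=0$ says, for example, $v_a = \pm v_N$ and $v_1=\pm v_{a+1}$ as signed combinations.

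The main obstacle is to show that simplicity of $\lambda_j(X)$ excludes this. If $\lambda_j(X)$ is simple, $v$ lives in only one block, say the $\co$-block, so $v_1=v_a=0$. The conditions $(Y-X)v=0$ then force $v_{a+1}=v_N=0$, both endpoint values of a Floquet eigenvector of $H_{\co,V}(\thc)$ vanishing. Through the three-term recurrence (transfer matrices) this forces $v\equiv0$, which is a contradiction. The symmetric argument handles the other block and the lower inequality.
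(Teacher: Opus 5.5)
Your proposal is correct and is essentially the paper's proof. The paper writes $Y-Z=xx^{t}-yy^{t}$ with $x,y$ supported on the four boundary indices; this is your signed $4$-cycle, of rank two with inertia $(1,1)$ exactly under admissibility. It obtains the weak inequalities from two rank-one Weyl steps. It proves strictness by showing that the eigenvector of the simple eigenvalue is orthogonal to $x$ and $y$, so its two boundary entries in its block vanish, and the nearest-neighbour recurrence then forces it to be zero.

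One point that neither you nor the paper spells out concerns the case $q_{\mathbf{c}}=1$, i.e.\ $\mathbf{c}\in\{[0,0],[0,0,1]\}$, which the paper actually uses in the induction base. There the indices $a+1$ and $N$ coincide, so $Y-X$ is not a signed $4$-cycle: it picks up the diagonal entry $-2\cos\theta_{\mathbf{c}}$, and for $[\mathbf{c},m,n]=[0,0,1,1]$ it collapses to a $2\times 2$ matrix that may have rank one. A direct check shows it still has at most one positive and at most one negative eigenvalue, which is all your Steps~3 and~4 need.
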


Theorem~\ref{thm: perturbation thm for our matrices} is proven in
the Appendix~\ref{App: Perturbation argument}. Note that even though
the eigenvalues depend on the parameter $V>0$, the inequalities of
the eigenvalues hold independently of the value $V>0$ attains\footnote{Note that simplicity may depend on $V>0$.}.
The condition in the previous theorem naturally leads to the following
useful definition of admissibility.
\begin{defn}
[Admissibility]\label{def: admissibility} Let $m,n\in\N$ and $\co\in\Co$
be such that $\varphi(\co)\not\in\left\{ -1,\infty\right\} $ and
$[\co,m]\in\Co$.
\begin{enumerate}
\item The values $\thc,\thcm,\thcmn\in\{0,\pi\}$ are called \emph{admissible}
if $\thc+\thcm+\thcmn\in\{0,2\pi\}$.
\item ~For each $\tilde{\co}\in\left\{ \co,\cm,\cmn\right\} $, let $I_{\tilde{\co}}:V\mapsto I_{\tilde{\co}}(V),V>0,$
be a spectral band in $\sigma_{\tilde{\co}}$, and let $\lambda_{\tilde{\co}}:V\mapsto\lambda_{\tilde{\co}}(V),V>0,$
satisfy either $\lambda_{\tilde{\co}}(V)=L\left(I_{\tilde{\co}}(V)\right)$
for all $V>0$ or $\lambda_{\tilde{\co}}(V)=R\left(I_{\tilde{\co}}(V)\right)$
for all $V>0$. Then we call $\lc,\lcm,\lcmn$ admissible, if there
exist admissible $\thc,\thcm,\thcmn\in\{0,\pi\}$ such that for all
$V>0,$ 
\[
\lc(V)\in\sigma\left(\HcV(\thc)\right),~\lcm(V)\in\sigma\left(\nHcmV(\thcm)\right),~\lcmn(V)\in\sigma\left(\HcmnV(\thcmn)\right).
\]
\end{enumerate}
\end{defn}

\begin{rem}
\label{rem:Admissible=00003DEvenNumberPis}We emphasize here that
$\thc,\thcm,\thcmn$ are admissible if the triple has an even number
of $\pi$'s. In particular, 
\[
\thc,\thcm,\thcmn\;\textrm{are not admissible}\quad\Leftrightarrow\quad\thc,\pi-\thcm,\thcmn\;\textrm{are admissible.}
\]

We further note that the maps $\lambda_{\tilde{\co}}:V\mapsto\lambda_{\tilde{\co}}(V),V>0$,
appearing in the definition are Lipschitz continuous by Proposition~\ref{prop: Lipschitz spectral edges}.
Moreover, Lemma~\ref{lem: Floquet-Bloch matrices} implies that for
each such map -- left or right end point of a fixed spectral band
$I_{\tilde{\co}}$ -- there exists a unique $\theta_{\tilde{\co}}\in\left\{ 0,\pi\right\} $
such that $\lambda_{\tilde{\co}}(V)\in\sigma\left(H_{\tilde{\co,V}}(\theta_{\tilde{\co}})\right)$
for all $V>0$. In particular, admissibility of $\lc,\lcm,\lcmn$
is independent of $V>0$.
\end{rem}

\subsection{Transfer matrices and their traces \label{subsec: spectra via transfer matrices}}

We present the well-known formalism for transfer matrices, see e.g.
\cite[Ch.~2]{DaFi22-book_1}. For $V\in\mathbb{R}$, define 
\[
M_{[0]}(E,V):=\begin{pmatrix}1 & -V\\
0 & 1
\end{pmatrix},\qquad M_{[0,0]}(E,V):=\begin{pmatrix}E & -1\\
1 & 0
\end{pmatrix}
\]
and recursively define the transfer matrices for $\co=[0,0,c_{1},\ldots,c_{k}]\in\Co$
(where $k\in\N$) by
\[
M_{\co}(E,V):=M_{[0,0,c_{1}\ldots,c_{k-2}]}(E,V)M_{[0,0,c_{1},\ldots,c_{k-1}]}(E,V)^{c_{k}}.
\]
Denote the traces of the transfer matrices by
\begin{equation}
\tc(E,V):=\tr(M_{\co}(E,V)).\label{eq: traces}
\end{equation}
Our description only slightly deviates from the conventional one,
by referring to all the elements of $\Co$ (within the literature
above we take a route which is the closest to \cite{Raym95}).

Denote by $\chi_{\HcV(\theta)}$ the characteristic polynomial of
the Floquet-Bloch matrix $\HcV(\theta)$. Then we have (see e.g.,
\cite[Eq. (23)]{Hochstadt-LAA_1975}, \cite[Thm. 5.4.1,(iii)]{Simon2011}
or \cite[Lem.~II.2]{BaBeBiTh22}) that
\begin{equation}
\chi_{\HcV(\theta)}(E)=\tc(E,V)-2\cos(\theta),\qquad\theta\in\left[0,2\pi\right].\label{eq: Chatacteristic polynomial and traces}
\end{equation}
This leads to the following well-known result, see e.g. \cite[Thm.~7.2.7]{DaFi24-book_2}
and \cite[Sec.~5.4]{Simon2011}.
\begin{lem}
\label{lem: ConnecSpectrTrace} For all $\co,\widetilde{\co}\in\Co$
with $\varphi(\widetilde{\co})=\varphi(\co)$, we have $t_{\widetilde{\co}}(E,V)=\tc(E,V)$
for all $E,V\in\R$. Furthermore,
\[
\sigc(V)=\set{E\in\R}{\left|\tc(E,V)\right|\leq2},\qquad\co\in\Co,\,V\in\R.
\]
\end{lem}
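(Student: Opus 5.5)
We must prove Lemma~\ref{lem: ConnecSpectrTrace}: first, that $t_{\widetilde{\co}}=\tc$ whenever $\varphi(\widetilde{\co})=\varphi(\co)$; second, that $\sigc(V)$ is the set where $|\tc(E,V)|\le 2$. Both will come from trace identities for the recursion $M_{\co}=M_{[\ldots,c_{k-2}]}M_{[\ldots,c_{k-1}]}^{c_k}$, plus the characteristic polynomial identity (\ref{eq: Chatacteristic polynomial and traces}) and the Floquet--Bloch decomposition (\ref{eq: Spectrum union thetas}).

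\textbf{Step 1: the two reduction rules.} Since $\varphi$ is non-injective only through the rules $[\ldots,c_{k-1},0]\sim[\ldots,c_{k-2}]$ and $[\ldots,c_k,-1]\sim[\ldots,c_k-1]$ (together with $[\ldots,m]\sim[\ldots,m-1,1]$), it suffices to check trace invariance under each rule.

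For the rule $c_k=0$: $M_{[\ldots,c_{k-1},0]}=M_{[\ldots,c_{k-2}]}M^0=M_{[\ldots,c_{k-2}]}$, so even the matrices agree.

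For the rule $c_{k+1}=-1$: write $A=M_{[\ldots,c_{k-1}]}$ and $B=M_{[\ldots,c_k]}=A_{-}A^{c_k}$, where $A_{-}=M_{[\ldots,c_{k-2}]}$. Then
\[
M_{[\ldots,c_k,-1]}=A\,B^{-1}=A\,A^{-c_k}A_{-}^{-1}.
\]
Its trace equals $\tr(A^{1-c_k}A_{-}^{-1})$. Every $M$ has determinant $1$, so $\tr(X^{-1})=\tr X$ for these $2\times 2$ matrices. Hence $\tr(A^{1-c_k}A_{-}^{-1})=\tr(A_{-}A^{c_k-1})=t_{[\ldots,c_k-1]}$.

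For the rule $[\ldots,m]\sim[\ldots,m-1,1]$: set $P=M_{\co}$ and $Q=M_{[\co,m-1]}=M_{\co^-}P^{m-1}$, where $\co^-$ denotes $\co$ with its last digit removed. Then $M_{[\co,m-1,1]}=PQ=PM_{\co^-}P^{m-1}$. By cyclicity its trace is $\tr(M_{\co^-}P^m)=t_{[\co,m]}$.

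Since determinants equal $1$ by induction, these three rules together generate every identification made by $\varphi$, and the equality $t_{\widetilde{\co}}=\tc$ follows. The special values $\varphi\in\{-1,\infty\}$ are handled by direct computation from $M_{[0]}$ and $M_{[0,0]}$; for instance $t_{[0,0,-1]}=\tr(M_{[0]}M_{[0,0]}^{-1})=E+V$.

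\textbf{Step 2: the spectral characterization.} For $\varphi(\co)\in[0,1]$, I need that $\tc$ equals the trace of the one-period transfer matrix of $H_{\varphi(\co),V}$. This is the standard Sturmian fact: the word $M_{\co}$ concatenates single-site matrices $\begin{pmatrix}E-V\omega(n)&-1\\1&0\end{pmatrix}$ in the order of one period of $\omega_{\varphi(\co)}$, up to cyclic rotation. The paper assumes this via (\ref{eq: Chatacteristic polynomial and traces}).

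Given that, $E\in\sigma(\HcV(\theta))$ if and only if $\tc(E,V)=2\cos\theta$. Taking the union over $\theta\in[0,\pi]$, where $2\cos\theta$ ranges over $[-2,2]$, (\ref{eq: Spectrum union thetas}) gives $\sigc(V)=\{|\tc|\le2\}$.

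\textbf{Step 3: the boundary cases.} For $\varphi(\co)=-1$, we have $t=E+V$, and $\{|E+V|\le 2\}=[-2-V,2-V]$, matching Definition~\ref{def: sigma_c}. For $\varphi(\co)=\infty$, e.g. $\co=[0]$, the trace is $\tr M_{[0]}=2$ identically, so the set is $\R$. By Step 1, the other representatives of $\infty$ give the same trace.

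The main obstacle is the Sturmian step in Step 2: identifying the product $M_{\co}$ with a cyclic rotation of the genuine period word. This needs the combinatorics of standard words $s_k=s_{k-2}s_{k-1}^{c_k}$ (Appendix~\ref{App: Sturmian dynamical systems}). Invariance of the trace under rotation, together with the fact that $M_{[0]}M_{[0,0]}$ is the single-site matrix at a site carrying $V$, settles it.
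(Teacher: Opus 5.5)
Your proposal is correct. For the spectral characterization you argue exactly as the paper does: the Floquet--Bloch union (\ref{eq: Spectrum union thetas}) combined with (\ref{eq: Chatacteristic polynomial and traces}) gives $E\in\sigma(H_{\co,V}(\theta))$ if and only if $t_{\co}(E,V)=2\cos\theta$. Taking the union over $\theta\in[0,\pi]$ finishes it.

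The difference is in the first claim, $t_{\widetilde{\co}}=t_{\co}$. The paper reads this off the spectral side. By (\ref{eq: Chatacteristic polynomial and traces}), $t_{\co}(E,V)-2\cos\theta$ is the characteristic polynomial of $H_{\co,V}(\theta)$, and that matrix depends only on $\varphi(\co)$.

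You instead prove it algebraically. Your three end-rules are exactly the trace-map identities $t_{[\co,m,0]}=t_{\co}$, $t_{[\co,m,-1]}=t_{[\co,m-1]}$ and $t_{[\co,m,1]}=t_{[\co,m+1]}$ recorded in Proposition~\ref{Prop-traceMaps}(a). You verify them directly from $\det=1$, $\tr(X^{-1})=\tr(X)$ and cyclicity. You then use that these rules connect every fiber of $\varphi$.

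What your route buys:
\begin{itemize}
\item It holds for all $E,V\in\R$ without any spectral input.
\item It treats the fiber $\varphi=\infty$ explicitly; there is no matrix $H_{\co,V}$ there, so the paper's one-liner needs a separate check.
\item It shows that (\ref{eq: Chatacteristic polynomial and traces}) only has to be known for one representative of each rational, for instance the standard expansion found in the literature. The paper's argument needs it for both $\co$ and $\widetilde{\co}$, including non-standard expansions ending in $0$ or $-1$.
\end{itemize}

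The cost is the connectivity claim, which you assert rather than check. It does not follow from the determinants, as your phrasing suggests. It follows from three facts:
\begin{itemize}
\item The end-reductions terminate in canonical forms (last digit in $\N$, or $[0,0]$, $[0]$, $[0,0,-1]$).
\item Each rational in $(0,1)$ has exactly two expansions with positive partial quotients.
\item $0$ and $1$ have the single canonical representatives $[0,0]$ and $[0,0,1]$, since $[0,0,0,1]\notin\Co$.
\end{itemize}
This is routine, but it is the one external input.

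Your closing remark about the Sturmian word combinatorics is not needed, since both you and the paper take (\ref{eq: Chatacteristic polynomial and traces}) as given.
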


\begin{proof}
This is an immediate consequence of (\ref{eq: Spectrum union thetas})
and (\ref{eq: Chatacteristic polynomial and traces}). 
\end{proof}

\begin{lem}
\label{lem: Traces and spectral edges} Let $V\in\mathbb{R}\setminus\{0\}$,
$\co\in\Co$ with $\varphi(\co)\not\in\left\{ -1,\infty\right\} $.
Then the following statements hold.
\begin{enumerate}
\item \label{enu: prop-Traces and spectral edges - 1}For $E\in\R$, we
have $|t_{\co}(E,V)|=2$, if and only if $E\in\{L(\Ic(V)),R(\Ic(V))\}$\textup{
for some spectral band} $\Ic(V)$ in $\sigc(V)$.
\item \label{enu: prop-Traces and spectral edges - 2}If a spectral band
$\Ic$ in $\sigc$ is
\begin{itemize}
\item of backward type $A$, then $|t_{[\co,0]}(E,V)|\leq2$ for all $E\in\Ic$.
The estimate is strict if $\varphi(\co)\in(0,1)$.
\item of backward type $B$, then $|t_{[\co,-1]}(E,V)|\leq2$ for all $E\in\Ic$.
The estimate is strict if $\varphi(\co)\in(0,1)$.
\end{itemize}
\item \label{enu: prop-Traces and spectral edges - 3}For $m\geq0$, we
have $t_{[\co,m+1]}=t_{\co}t_{[\co,m]}-t_{[\co,m-1]}$.
\end{enumerate}
\end{lem}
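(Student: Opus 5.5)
I would prove the three items in the order (3), (1), (2), since (3) is purely algebraic and (2) uses (1) together with monotonicity of the traces on bands.

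\emph{Item (3).} The plan is to deduce the recursion from Cayley--Hamilton. By definition,
\[
M_{[\co,m]}=M_{\co^{-}}M_{\co}^{m},
\]
where $\co^{-}$ denotes $\co$ with its last digit removed. Every $M_{\co}$ has determinant $1$, being a product of the determinant-one matrices $M_{[0]}$ and $M_{[0,0]}$. Hence $M_{\co}^{2}=\tc M_{\co}-\Id$. Multiplying on the left by $M_{\co^{-}}M_{\co}^{m-1}$ gives
\[
M_{[\co,m+1]}=\tc M_{[\co,m]}-M_{[\co,m-1]},
\]
and taking traces yields the claim. The edge cases $m=0$ and $m-1=-1$ are covered if we read $M_{[\co,-1]}=M_{\co^{-}}M_{\co}^{-1}$, which is consistent with $\varphi([\co,-1])$. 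The small initial words $[0]$ and $[0,0]$, where $\co^{-}$ is not defined by the recursion, should be checked by hand, e.g.\ $t_{[0]}=2$ and $t_{[0,0,-1]}=E+V$.

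\emph{Item (1).} By Lemma~\ref{lem: ConnecSpectrTrace} the spectrum is $\{|\tc|\le 2\}$. By Proposition~\ref{prop: Basic spectral prop periodic} it has exactly $q$ components, while $\tc$ is a polynomial of degree $q$ in $E$, which is immediate from the recursion. If every solution of $|\tc|=2$ lies on the boundary of $\{|\tc|\le2\}$, we are done. The standard argument runs as follows. A degree-$q$ polynomial $\tc-2\cos\theta$ has $q$ real roots for each $\theta$, because these are eigenvalues of a Hermitian matrix by (\ref{eq: Chatacteristic polynomial and traces}). Hence $\tc$ is strictly monotone between consecutive extrema, and each solution of $\tc=\pm2$ is either a band edge or a double root inside a band, where two bands touch. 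Since $V\neq0$ there are exactly $q$ disjoint components, so no touching occurs and interior points of bands have $|\tc|<2$.

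\emph{Item (2).} Suppose $\Ic$ is of backward type $A$. Then $\Ic\subseteq J$ for a band $J$ of $\sigma_{[\co,0]}=\sigma_{\co^{-}}$, and Lemma~\ref{lem: ConnecSpectrTrace} gives $|t_{[\co,0]}|\le2$ on $J\supseteq\Ic$. For strictness, the inclusion $\Ic\strict J$ means $\Ic$ avoids the endpoints of $J$, so $\Ic$ lies in the interior of $J$. There $|t_{[\co,0]}|<2$ by item (1) applied to $[\co,0]$. This requires $\varphi([\co,0])\notin\{-1,\infty\}$, which holds precisely when $\varphi(\co)\in(0,1)$. In the boundary cases $\varphi(\co)\in\{0,1\}$ the relevant set is $\R$ or $[-2-V,2-V]$, and only the weak inequality is claimed. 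Backward type $B$ is identical, using $[\co,-1]$.

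The main obstacle is item (1), specifically excluding interior tangencies: $|\tc|=2$ at a point inside a band. I would handle this by counting, since the $2q$ band endpoints exhaust the $2q$ roots of $(\tc-2)(\tc+2)$ counted with multiplicity, and I would rely on the cited results \cite{DaFi24-book_2,Simon2011}.
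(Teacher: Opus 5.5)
Your proposal is correct and follows the paper's proof. Item (2) is derived exactly as there, from Definition~\ref{def: backward type}, Lemma~\ref{lem: ConnecSpectrTrace} and item (1). Item (1) uses the same ingredients: the discriminant structure from \cite{Simon2011} plus the fact that $\sigc(V)$ has exactly $q$ components. For item (3) you give the standard Cayley--Hamilton argument $M_{\co}^{2}=t_{\co}M_{\co}-\Id$, which is the usual proof of the identity the paper only cites from \cite{Raym95}.

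Two small remarks. Your degree count on its own makes (1) self-contained, provided you note that the $2q$ endpoints are distinct; this holds because the components are nondegenerate, the spectrum being absolutely continuous. In (2) you only need the implication $\varphi(\co)\in(0,1)\Rightarrow\varphi([\co,\ell])\notin\{-1,\infty\}$, not an equivalence, and the equivalence is in fact false: for $\co=[0,0,1]$ one has $\varphi(\co)=1$ but $\varphi([\co,0])=0$.
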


\begin{proof}
Let $\varphi(\co)=\frac{p}{q}$ be such that $p,q$ are coprime.\\
(\ref{enu: prop-Traces and spectral edges - 1}) This is an immediate
consequence of \cite[Thm.~5.4.2]{Simon2011} and that $\sigc(V)$
consists of exactly $q$ spectral bands, see Proposition~\ref{prop: Basic spectral prop periodic}.
\\
(\ref{enu: prop-Traces and spectral edges - 2}) This follows from
Definition~\ref{def: backward type}, Lemma~\ref{lem: ConnecSpectrTrace}
and (\ref{enu: prop-Traces and spectral edges - 1}).\\
(\ref{enu: prop-Traces and spectral edges - 3}) This well-known identity
is proven in \cite{Raym95}. The reader is also referred to Appendix~\ref{Sec-TraceMaps}
for related results and more references, see also \cite[Lem.~3.8]{BaBeBiTh22}.
\end{proof}

\begin{rem*}
The first statement (\ref{enu: prop-Traces and spectral edges - 1})
of the lemma says that the traces attain the values $\pm2$ exactly
at the spectral band edges. This does not hold for $\co=[0]$ where
$\tc(E,V)=2$ and $\sigc(V)=\R$.
\end{rem*}
The next statement is based on standard techniques of transfer matrix
traces and its proof is included in the Appendix~\ref{Sec-TraceMaps}.
\begin{lem}
\label{lem: Trace estimates =00005Bc.m.n=00005D} Let $V\in\R$, $m\in\N$,
$\co\in\Co$ be such that $\varphi(\co)\not\in\left\{ -1,\infty\right\} $
and $[\co,m]\in\Co$. Let $I(V)$ be a spectral band in $\sigma_{\co}(V)$
of backward type $A$ or backward type $B$. Then for $E\in\{L(I(V)),R(I(V))\}$
and $n\in\mathbb{N}$, the following holds.
\begin{enumerate}
\item \label{enu: prop-Trace estimates =00005Bc.m.n=00005D - 1} $|t_{[\co,m]}(E,V)|\geq2\hspace{3cm}\quad\Rightarrow\quad|t_{[\co,m,n]}(E,V)|\geq2$.
\item \label{enu: prop-Trace estimates =00005Bc.m.n=00005D - 2} $|t_{[\co,m]}(E,V)|>2\hspace{3cm}\quad\Rightarrow\quad|t_{[\co,m,n]}(E,V)|>2$.
\item \label{enu: prop-Trace estimates =00005Bc.m.n=00005D - 3} $\varphi(\co)\in(0,1)\text{ and }|t_{[\co,m]}(E,V)|\geq2\quad\Rightarrow\quad|t_{[\co,m,n]}(E,V)|>2$.
\end{enumerate}
\end{lem}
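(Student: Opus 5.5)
The plan is to reduce everything to two explicit linear recursions along which the traces can be written in closed form. For this I use Lemma~\ref{lem: Traces and spectral edges}(\ref{enu: prop-Traces and spectral edges - 3}) together with the Fricke--Vogt invariant of the trace map (Appendix~\ref{App: Trace Maps}), which I take in the form
\[
t_a^2+t_b^2+t_{ab}^2-t_at_bt_{ab}-4=V^2 .
\]
Fix an edge $E$ of $I(V)$ and write $s:=\tfrac12 t_{\co}(E,V)\in\{\pm1\}$; this is possible by Lemma~\ref{lem: Traces and spectral edges}(\ref{enu: prop-Traces and spectral edges - 1}).

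\emph{Step 1 (the $m$-direction).} Put $b_k:=t_{[\co,k]}(E,V)$ for $k\ge -1$, with $b_0=t_{[\co,0]}$ and $b_{-1}=t_{[\co,-1]}$. The recursion gives $b_{k+1}=2s\,b_k-b_{k-1}$, so $s^kb_k=b_0+kd$ for a constant $d$. Applying the invariant to the triple $(t_{\co},t_{[\co,k-1]},t_{[\co,k]})$ with $t_{\co}=2s$ gives
\[
(b_k-s\,b_{k-1})^2=V^2 ,
\]
so $|d|=|V|$. Thus $f(k):=s^kb_k$ is an arithmetic progression with nonzero step, and $|f|$ is convex and piecewise linear with slopes $\pm|V|$. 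The backward type hypothesis supplies an index $j\in\{-1,0\}$ with $|f(j)|\le 2$, namely $j=0$ for backward type $A$ and $j=-1$ for backward type $B$, via Lemma~\ref{lem: Traces and spectral edges}(\ref{enu: prop-Traces and spectral edges - 2}). Combining this with $|f(m)|\ge2$, $m\ge1>j$, and convexity, I conclude that $|f|$ is nondecreasing on $[m-1,\infty)$. Hence $|b_{m+1}|\ge|b_{m-1}|$, with strict inequality once $|b_m|>2$, or once $|b_{j}|<2$ (the case $\varphi(\co)\in(0,1)$).

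\emph{Step 2 (the $n$-direction).} Put $a_n:=t_{[\co,m,n]}(E,V)$ and $x:=b_m$. Lemma~\ref{lem: Traces and spectral edges}(\ref{enu: prop-Traces and spectral edges - 3}) applied to $[\co,m]$ gives $a_{n+1}=x\,a_n-a_{n-1}$ with $a_0=t_{\co}=2s$. Moreover $\varphi([\co,m,1])=\varphi([\co,m+1])$, so $a_1=b_{m+1}$, and likewise $a_{-1}=b_{m-1}$. Solving with Chebyshev polynomials gives
\[
a_n=2s\,T_n(x/2)+(a_1-s x)\,U_{n-1}(x/2).
\]
By Step 1 (the invariant with $t_{\co}=2s$) we have $a_1-sx=\pm V$ and $a_{-1}-sx=\mp V$. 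The inequality $|a_1|\ge|a_{-1}|$ from Step 1 says exactly that the sign of $a_1-sx$ agrees with that of $sx$. For $|x|\ge2$ one checks that $T_n(x/2)$ and $\operatorname{sign}(x)\,U_{n-1}(x/2)$ have the same sign, the relevant sign pattern being $(-1)^n$ versus $(-1)^{n-1}$ when $x\le-2$. Therefore the two terms in $a_n$ add without cancellation, and
\[
|a_n|\ge 2|T_n(x/2)|\ge2 .
\]
This proves (\ref{enu: prop-Trace estimates =00005Bc.m.n=00005D - 1}).

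\emph{Step 3 (strictness).} For (\ref{enu: prop-Trace estimates =00005Bc.m.n=00005D - 2}), $|x|>2$ gives $2|T_n(x/2)|>2$ directly. For (\ref{enu: prop-Trace estimates =00005Bc.m.n=00005D - 3}), I need $V\neq0$ and $U_{n-1}(x/2)\neq0$, which holds for $|x|\ge2$ and $n\ge1$. Then the second term is nonzero, and since it has the same sign as the first, the inequality becomes strict.

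The main obstacle is the sign bookkeeping in Step 1, that is, showing $|t_{[\co,m+1]}|\ge|t_{[\co,m-1]}|$ at the edge $E$. This is the point where the backward type hypothesis enters, and one must check the edge cases $m=1$ and $j=-1$, where the kink of $|f|$ could lie near $m-1$. My first attempt would be to locate the zero of $f$: since $|f(j)|\le2\le|f(m)|$ with $j<m$, the zero of $f$ lies to the left of the midpoint between $j$ and $m$ shifted appropriately, and I would verify directly that it cannot lie in the open interval $(m-1,m)$ in a way that makes $|f(m+1)|<|f(m-1)|$. If that fails, I would argue instead with the convexity of $|f|$ around the midpoint $m$, and treat the degenerate values $\varphi(\co)\in\{0,1\}$ separately using Example~\ref{exa:A-B types}.
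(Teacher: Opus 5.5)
Your proof is correct and takes a genuinely different route from the paper's. One intermediate claim in Step~1 is wrong as stated, but the repair is one line.

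\textbf{The fix in Step~1.} The assertion that $|f|$ is nondecreasing on $[m-1,\infty)$ can fail for $m=1$, $j=0$. Write $|f(k)|=|V|\,|k-k_0|$ for the affine extension of $f$. The hypothesis $|f(j)|\le 2\le |f(m)|$ only gives $|j-k_0|\le |m-k_0|$, i.e.\ $k_0\le \tfrac{j+m}{2}$, and this permits $k_0\in(0,\tfrac12]$. You never need monotonicity, however. The inequality $|f(m+1)|\ge |f(m-1)|$ is equivalent to $k_0\le m$, and $k_0\le\tfrac{j+m}{2}<m$ gives it, in fact strictly when $V\neq0$. So the extra strictness conditions, the search for the zero in $(m-1,m)$, and the separate treatment of $\varphi(\co)\in\{0,1\}$ are all unnecessary.

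\textbf{How the paper argues.} The paper does not use the Fricke--Vogt invariant here. Lemma~\ref{Lem-TraceMap_Identity} writes $t_{[\co,\ell]}S_n(t_{[\co,m]})$ as an explicit combination of $S_{n+1}(t_{[\co,m]})$, $t_{[\co,m]}S_n(t_{[\co,m]})$ and $t_{[\co,m,n]}$. It uses only the Chebyshev form of the trace recursion and $S_{k+1}S_{k-1}-S_k^2=-1$. Lemma~\ref{Lem-TraceMap_Est} then combines the triangle inequality with $\sgn(x)^n(S_n-\tfrac{x}{2}S_{n-1})\ge 1$ (your $|T_n(x/2)|\ge1$). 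This gives $|t_{[\co,\ell]}S_n|\ge (m-\ell)(2-|t_{[\co,m,n]}|)+2|S_n|$. The backward-type bound $|t_{[\co,\ell]}|\le 2$ enters only as a numerical estimate plugged into this. Strictness in (c) comes solely from $|t_{[\co,\ell]}|<2$, which is why $\varphi(\co)\in(0,1)$ is assumed.

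\textbf{What your route buys.} You use the invariant to fix the size $|V|$ of the increment $t_{[\co,m+1]}-s\,t_{[\co,m]}$. You use the backward type only qualitatively, to fix its sign. This yields the exact value $|t_{[\co,m,n]}(E,V)|=2|T_n(x/2)|+|V|\,|U_{n-1}(x/2)|\ge 2+n|V|$. For $V\neq0$, which both arguments tacitly assume through Lemma~\ref{lem: Traces and spectral edges}, this makes all three implications strict. It also shows that the hypothesis $\varphi(\co)\in(0,1)$ in (c) is superfluous. In exchange, the paper's argument stays entirely within the trace recursion and Chebyshev identities: it needs no invariant and no locating of the zero of an arithmetic progression.
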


The concept of admissible eigenvalues (Definition~\ref{def: admissibility})
can also be characterized in terms of the traces of these eigenvalues,
which is a central tool towards the solution of the dry ten Martini
problem \cite{BaBeLo_DTMP26}.
\begin{prop}
\label{prop: Chatacterization Admissible via traces}Let $m\in\N$
and $\co\in\Co$ be such that $\varphi(\co)\not\in\left\{ -1,\infty\right\} $
and $[\co,m]\in\Co$. For each $\tilde{\co}\in\left\{ \co,[\co,m],[\co,m,1]\right\} $,
let $I_{\tilde{\co}}:V\mapsto I_{\tilde{\co}}(V),V>0,$ be a spectral
band in $\sigma_{\tilde{\co}}$ and $\lambda_{\tilde{\co}}\in\left\{ L\left(I_{\tilde{\co}}\right),R\left(I_{\tilde{\co}}\right)\right\} $.
Then $\lc,\lcm,\lambda_{[\co,m,1]}$ are admissible if and only if
\[
\sgn\left(\tc\left(\lc(V)\right)\cdot\tcm\left(\lcm(V)\right)\cdot t_{[\co,m,1]}\left(\lambda_{[\co,m,1]}(V)\right)\right)=+1\quad\text{for all }V>0.
\]
\end{prop}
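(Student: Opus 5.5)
The characterization follows from combining (\ref{eq: Chatacteristic polynomial and traces}) with Lemma~\ref{lem: Traces and spectral edges}(\ref{enu: prop-Traces and spectral edges - 1}), together with the fact that $[\co,m,1]$ corresponds to $n=1$, so that $H_{[\co,m,1],V}$ is the full Floquet--Bloch matrix (the $\times 1$ case). For each $\tilde{\co}\in\{\co,[\co,m],[\co,m,1]\}$, $\lambda_{\tilde{\co}}(V)$ is an endpoint of a band of $\sigma_{\tilde{\co}}(V)$. Hence $|t_{\tilde{\co}}(\lambda_{\tilde{\co}}(V),V)|=2$, and (\ref{eq: Chatacteristic polynomial and traces}) shows
\[
\lambda_{\tilde{\co}}(V)\in\sigma\bigl(H_{\tilde{\co},V}(\theta)\bigr)\quad\Longleftrightarrow\quad t_{\tilde{\co}}(\lambda_{\tilde{\co}}(V),V)=2\cos\theta .
\]
So $\theta_{\tilde{\co}}=0$ exactly when the trace equals $+2$, and $\theta_{\tilde{\co}}=\pi$ exactly when it equals $-2$. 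By Remark~\ref{rem:Admissible=00003DEvenNumberPis}, this $\theta_{\tilde{\co}}$ is unique and independent of $V$. This settles the two spectra $\sigc$ and $\sigma_{[\co,m,1]}$, whose Floquet--Bloch matrices are the ones used in Definition~\ref{def: admissibility}.

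The one genuine subtlety is the middle entry. Admissibility asks whether $\lcm(V)\in\sigma(\nHcmV(\thcm))$, with the matrix $H^{\times n}_{[\co,m],V}$ rather than $H_{[\co,m],V}$; here I read $n=1$, as the statement concerns $[\co,m,1]$. I would prove the $n$-fold analogue of (\ref{eq: Chatacteristic polynomial and traces}): the characteristic polynomial of $H^{\times n}_{[\co,m],V}(\theta)$ equals $\tr(M_{[\co,m]}^{n})-2\cos\theta$. This holds because the monodromy over $nq$ sites is the $n$-th power of the monodromy over $q$ sites. By Chebyshev, $\tr(M^{n})=2T_n(t/2)$, so at $t_{[\co,m]}=\pm2$ the $n$-fold trace equals $2(\pm1)^n$. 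For $n=1$ this reduces to the previous paragraph, and in general the middle condition becomes $\cos\thcm=\bigl(\sgn\, t_{[\co,m]}(\lcm)\bigr)^{n}$.

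It remains to translate. Admissibility means the triple contains an even number of $\pi$'s, that is, $\cos\thc\cos\thcm\cos\thcmn=+1$. By the identification above this product equals $\sgn(\tc(\lc)\,\tcm(\lcm)\,t_{[\co,m,1]}(\lambda_{[\co,m,1]}))$, so the two conditions coincide. Since each $\theta_{\tilde{\co}}$ is constant in $V$, the sign condition holds for one $V>0$ if and only if it holds for all $V>0$. I expect the only obstacle to be checking carefully that the $n$-fold block matrix gives the trace of the $n$-th power of the transfer matrix. With $n=1$ even this is immediate, so the proof is essentially bookkeeping.
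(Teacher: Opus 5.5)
Your proposal is correct and follows the paper's proof. At a band edge, the identity $\chi_{H_{\tilde{\co},V}(\theta)}(E)=t_{\tilde{\co}}(E,V)-2\cos\theta$ forces $t_{\tilde{\co}}(\lambda_{\tilde{\co}}(V),V)=2\cos\theta_{\tilde{\co}}\in\{\pm2\}$, so an even number of $\pi$'s is the same as an even number of trace values $-2$, i.e.\ a positive product. Your $n$-fold identity via $\mathrm{tr}(M^{n})=2T_{n}(t/2)$ is correct but unnecessary, since the statement concerns $[\co,m,1]$, i.e.\ $n=1$, where $H^{\times1}_{[\co,m],V}=H_{[\co,m],V}$.
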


\begin{proof}
Let $\tilde{\co}\in\left\{ \co,[\co,m],[\co,m,1]\right\} $. By Lemma~\ref{lem: Traces and spectral edges}~(\ref{enu: prop-Traces and spectral edges - 1})
and (\ref{eq: Chatacteristic polynomial and traces}), $\lambda_{\tilde{\co}}\in\left\{ L\left(I_{\tilde{\co}}\right),R\left(I_{\tilde{\co}}\right)\right\} $
if and only if there exists $\theta_{\tilde{\co}}\in\left\{ 0,\pi\right\} $
such that $\lambda_{\tilde{\co}}(V)\in\sigma\left(H_{\tilde{\co},V}(\theta_{\tilde{\co}})\right)$
for all $V>0$. Moreover, (\ref{eq: Chatacteristic polynomial and traces})
gives $t_{\tilde{\co}}\left(\lambda_{\tilde{\co}}(V)\right)=2\cos(\theta_{\tilde{\co}})$
for all $V\in\R$. Hence, this value equals $2$ if $\theta_{\tilde{\co}}=0$
and $-2$ if $\theta_{\tilde{\co}}=\pi$.

By Definition~\ref{def: admissibility}, the triple $\lc,\lcm,\lambda_{[\co,m,1]}$
is admissible if and only if the triple $\theta_{\co}$, $\theta_{[\co,m]}$,
$\theta_{[\co,m,1]}$ contains an even number of $\pi$'s. Equivalently,
the triple of trace values $\tc\left(\lc(V)\right)$, $\tcm\left(\lcm(V)\right)$,
$t_{[\co,m,1]}\left(\lambda_{[\co,m,1]}(V)\right)$ contains an even
number of entries equal to $-2$ for all $V>0$, namely their product
has a positive sign.
\end{proof}

\section{Tools towards proving forward type\label{sec: Admissibility. index relations}}

This section is devoted to various technical tools needed for the
induction base of the proof of Theorem~\ref{thm: Every band is A or B}
(Section~\ref{sec: Pf_InductionBase}) and for proving Proposition~\ref{prop: Backward implies forward}
- that backward type implies forward type (Section~\ref{subsec:Backward-implies-forward}).

Let us provide a short overview of this section. In Subsection~\ref{subsec: Counting-indices-and-eigenvalues},
we introduce an eigenvalue counting function, which later plays a
crucial role in application of the interlacing theorem (Theorem~\ref{thm: perturbation thm for our matrices}).
Since eigenvalue admissibility is a necessary condition in the interlacing
theorem, we give a useful characterization of it in Subsection~\ref{subsec: Characterization Admissible}.
With this at hand, in Subsection~\ref{subsec: counting-function and eigenvalue estimates}
we provide Lemma~\ref{lem: Basics for eigenavlue inequalities} which
is a manifestation of the interlacing theorem (Theorem~\ref{thm: perturbation thm for our matrices}).
In effect, it is this lemma which is going to be directly applied,
rather than Theorem~\ref{thm: perturbation thm for our matrices}.
In Subsection~\ref{subsec: Index-identities}, we develop index relations
which are needed whenever we apply Lemma~\ref{lem: Basics for eigenavlue inequalities}.
Then, Subsection~\ref{subsec: Forward properties} applies the various
index relations, eigenvalue estimates and trace estimates to prove
that the spectral bands $\Icmi$ and $\Icmnj$ maintain certain properties
from Definition~\ref{def: forward type} when $V$ decreases to zero.

Throughout this section we use the notational conventions of Definition~\ref{def: Notion Icmi. Icmnj and M}
without pointing them out all the time.

\subsection{Counting spectral bands and eigenvalues\label{subsec: Counting-indices-and-eigenvalues}}

In this subsection we consider two types of counting functions: for
the spectral bands in $\sigc$ and for the eigenvalues of the matrices
$\HcV(\theta)$, $\nHcV(\theta)$ and relate both types of functions.

First, we recall that $\sigc(V)$ consists of exactly $q$ intervals
for $\varphi(\co)=\frac{p}{q}$ (see Proposition~\ref{prop: Basic spectral prop periodic}
and Lemma~\ref{lem: ConnecSpectrTrace}) and that we consider each
spectral band as a Lipschitz continuous map, $V\mapsto\Ic(V)$, for
$V>0$ (Definition~\ref{def: A spectral band is continuous} and
Proposition~\ref{prop: Lipschitz spectral edges}). This justifies
the following.
\begin{defn}
\label{def: index of spectral band}[Index of a spectral band] Let
$\Ic$ be a spectral band of $\sigc$. The \emph{index} of $\Ic$
(in $\sigc$) is defined by 
\[
\ind(\Ic):=\left|\set{I\textrm{ is a spectral band of }\sigc}{I\prec\Ic}\right|.
\]
\end{defn}

\begin{rem}
\label{rem: index spectral band}Note that the index counting starts
from zero, namely $0\leq\ind(\Ic)\leq q-1$ where $\varphi(\co)=\frac{p}{q}$
with $p,q$ coprime. Moreover, we emphasize that $\ind(\Ic)$ is independent
of $V>0$, allowing us to assume $V>4$ in some instances and use
Theorem~\ref{thm: V>4 Type A =000026 B}.
\end{rem}

In order to apply the interlacing theorem (Theorem~\ref{thm: perturbation thm for our matrices}),
we need to count eigenvalues. Let $\left\{ \lambda_{i}(H)\right\} ^{n-1}_{i=0}$
be the eigenvalues (increasingly arranged and counted with multiplicity)
of an $n\times n$ matrix $H$, as in (\ref{eq: Eigenvalues matrix - increasing order}).
\begin{defn}
\label{def: Counting function}[Counting function] For an $n\times n$
hermitian matrix $H$, the \emph{eigenvalue counting function} is
defined by
\[
N(\lambda;H):=\left|\set{0\leq i\leq n-1}{\lambda_{i}(H)<\lambda}\right|.
\]
\end{defn}

\begin{rem*}
Note that $N(\lambda;~H)$ may attain the value zero and also that
$N(\lambda_{i}(H);~H)=i$ for each $0\leq i\leq n-1$ where $\lambda_{i}(H)$
is simple.
\end{rem*}
We will be in particular interested in evaluating the counting function
for an eigenvalue which is also an edge of a certain spectral band.
The index of that spectral band is then related to the counting of
its edge point, as follows.
\begin{lem}
\label{lem: Floquet-Bloch matrices} Let $V>0$, $\co\in\Co$ and
$\left\{ -1,\infty\right\} \not\ni\varphi(\co)=\frac{p}{q}$ with
$p,q$ coprime. Let $\Ic$ be a spectral band of $\sigc$ and $\theta\in\{0,\pi\}$.
\begin{enumerate}
\item \label{enu: lem-Floquet-Bloch matrices - 1} We have
\[
\ind(\Ic)-q\equiv\frac{1}{\pi}\theta\mod 2\quad\Leftrightarrow\quad L(\Ic(V))\in\sigma\left(\HcV(\theta)\right)
\]
and
\[
\ind(\Ic)+1-q\equiv\frac{1}{\pi}\theta\mod 2\quad\Leftrightarrow\quad R(\Ic(V))\in\sigma\left(\HcV(\theta)\right).
\]
\item \label{enu: lem-Floquet-Bloch matrices - 2} If $L(\Ic(V))\in\sigma\left(\HcV(\theta)\right)$,
then 
\[
\ind(\Ic)=N\left(L(\Ic(V));\HcV(\theta)\right).
\]
\item \label{enu: lem-Floquet-Bloch matrices - 3} If $R(\Ic(V))\in\sigma\left(\HcV(\theta)\right)$,
then 
\[
\ind(\Ic)=N\left(R(\Ic(V));\HcV(\theta)\right).
\]
\end{enumerate}
\end{lem}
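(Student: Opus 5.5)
We plan to work entirely with the discriminant $t_{\co}(\cdot,V)$, a real polynomial of degree $q$ in $E$ with positive leading coefficient. By (\ref{eq: Chatacteristic polynomial and traces}), $\sigma(\HcV(\theta))$ is exactly the zero set of $t_{\co}-2\cos\theta$, counted with multiplicity. The first step is to recall the standard oscillation picture for periodic Jacobi matrices. Since $\sigc(V)$ has exactly $q$ bands (Proposition~\ref{prop: Basic spectral prop periodic}), all $2q$ roots of $t_{\co}^2-4$ are real. On each band $t_{\co}$ is strictly monotone from $\pm2$ to $\mp2$. Consecutive bands either are separated by a gap, where $|t_{\co}|>2$, or touch at a point which is a double root of $t_{\co}\mp2$. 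By Lemma~\ref{lem: Traces and spectral edges}~(\ref{enu: prop-Traces and spectral edges - 1}), the band edges are precisely the points with $|t_{\co}|=2$.

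Next we determine the sign of $t_{\co}$ at the edges. The sign of $t_{\co}$ at $+\infty$ is $+$, so on the last band (index $q-1$) the trace increases, giving $t_{\co}(R)=2$ and $t_{\co}(L)=-2$. Moving one band to the left reverses the monotonicity, since between bands the trace leaves $[-2,2]$ on one side and returns on the same side. By induction downward in the index, a band of index $j$ has $t_{\co}(R(\Ic))=2(-1)^{q-1-j}$ and $t_{\co}(L(\Ic))=-2(-1)^{q-1-j}=2(-1)^{q-j}$. Since $t_{\co}(E)=2\cos\theta$ with $\theta\in\{0,\pi\}$ means $(-1)^{\theta/\pi}=t_{\co}(E)/2$, this gives part (\ref{enu: lem-Floquet-Bloch matrices - 1}). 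The uniqueness of $\theta$ is automatic because $2\ne-2$. Independence of $V$ follows because $\ind(\Ic)$ is independent of $V$ (Remark~\ref{rem: index spectral band}).

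For parts (\ref{enu: lem-Floquet-Bloch matrices - 2}) and (\ref{enu: lem-Floquet-Bloch matrices - 3}) we count the roots of $t_{\co}-2\cos\theta$ lying strictly below the edge $\lambda$. Each band contains exactly one root of $t_{\co}-2$ and exactly one root of $t_{\co}+2$, namely its two endpoints, since $t_{\co}$ is strictly monotone across the band. We count a root at a touching point with multiplicity two, once for each of the two adjacent bands. Hence the bands strictly to the left of $\Ic$, of which there are $\ind(\Ic)$, each contribute exactly one eigenvalue of $\HcV(\theta)$, and these eigenvalues are $\le L(\Ic)\le\lambda$.

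The main delicate point is the case where some of these eigenvalues equal $\lambda$, which can only happen when bands touch. If $\lambda=R(\Ic)$, the left neighbour's edge equal to $R(\Ic)$ is impossible, since $\Ic$ has positive length. Also, $\Ic$'s own other endpoint has the opposite sign of trace, so it is not a root of the same polynomial. The root at $\lambda$ itself is not counted, by strict inequality. If $\lambda=L(\Ic)$ and $\Ic$ touches its left neighbour at $\lambda$, then $\lambda$ is a double eigenvalue. Then $N$, which counts eigenvalues strictly less than $\lambda$, excludes both copies, and the count would be $\ind(\Ic)-1$. This must be reconciled. We expect to use that for $V\neq0$ all gaps are open, since the spectrum has exactly $q$ components, each being a closed interval, and these components are disjoint. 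So touching cannot occur, every root is simple, and the count is exactly $\ind(\Ic)$. Confirming this gap-openness from Proposition~\ref{prop: Basic spectral prop periodic} is the key step. With it, $N(\lambda;\HcV(\theta))=\ind(\Ic)$ in both cases.
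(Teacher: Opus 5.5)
Your proof is correct, but it reaches the parity rule by a different route than the paper.

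\textbf{The paper's route.} The paper never argues with the discriminant here. It obtains the lemma as the case $n=1$ of Lemma~\ref{lem: Floquet-Bloch matrices - n times fundamental domain}. That proof orders the eigenvalues of $H^{\times n}_{\co,V}(0)$ and $H^{\times n}_{\co,V}(\pi)$ using Hochstadt's strict interlacing between different $\theta$ values. It pairs the $l$-th eigenvalues into intervals $J_l=[\lambda^{(\theta_l)}_l,\lambda^{(\pi-\theta_l)}_l]$ and identifies each band with $n$ consecutive $J_l$'s; for $n=1$ this is a single $J_l$, which uses the $q$-component count. The parity rule is then read off from the alternation of $\theta$ along the $J_l$'s together with $R(I_{q-1})=\lambda^{(0)}_{q-1}$. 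The counting statements follow from each band containing exactly $n$ eigenvalues of each $H^{\times n}_{\co,V}(\theta)$, which for $n=1$ is your ``one root of $t_{\co}-2\cos\theta$ per band''.

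\textbf{Your route.} You use that $t_{\co}$ is monic of degree $q$, fix the sign $+2$ at the top edge from $t_{\co}\to+\infty$, and propagate the signs leftward through constant-sign gaps and monotone bands.

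\textbf{Comparison.} Your argument is more elementary and self-contained for $n=1$. It also makes the parity rule visibly a statement about the sign of $t_{\co}$ at band edges, which is exactly what Proposition~\ref{prop: Chatacterization Admissible via traces} later exploits. The paper's eigenvalue-labelling route is chosen because it carries over verbatim to the $nq\times nq$ matrices. There each band carries $n$ eigenvalues of each $H^{\times n}_{\co,V}(\theta)$ and double eigenvalues appear in band interiors. That situation is needed for parts (c) and (d) of the generalized lemma and for Lemma~\ref{lem: Admissibility criterion}.

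\textbf{The step you flag as key.} It is immediate from what you already have. The $q$ Floquet bands are nondegenerate closed intervals whose union is $\sigma_{\co}(V)$, so any touching would leave fewer than $q$ components, contradicting Proposition~\ref{prop: Basic spectral prop periodic}. The resulting $2q$ distinct edges then exhaust the $2q$ roots of $t_{\co}^2-4$ counted with multiplicity, so every root is simple.

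\textbf{A small correction.} The reality of those $2q$ roots comes from $H_{\co,V}(0)$ and $H_{\co,V}(\pi)$ being Hermitian, not from the band count.
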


\begin{proof}
This follows from the next lemma and the fact that $H_{\co,V}(\theta)=\nHcV(\theta)$
if $n=1$.
\end{proof}

Lemma~\ref{lem: Floquet-Bloch matrices} can be generalized as follows.
\begin{lem}
\label{lem: Floquet-Bloch matrices - n times fundamental domain}
Let $V>0$, $\co\in\Co$ and $\left\{ -1,\infty\right\} \not\ni\varphi(\co)=\frac{p}{q}$
with $p,q$ coprime. Let $\Ic$ be a spectral band of $\sigc$ and
$\theta\in\{0,\pi\}$. Then the following holds for $n\in\N$.
\begin{enumerate}
\item \label{enu: lem-Floquet-Bloch matrices - n times fundamental domain - 1}
If $n\in\N$ is even, then $L(\Ic(V))\in\sigma\left(\nHcV(0)\right)$
and $R(\Ic(V))\in\sigma\left(\nHcV(0)\right)$.
\item \label{enu: lem-Floquet-Bloch matrices - n times fundamental domain - 2}
If $n\in\N$ is odd, then
\[
\ind(\Ic)-q\equiv\frac{1}{\pi}\theta\mod 2\quad\Leftrightarrow\quad L(\Ic(V))\in\sigma\left(\nHcV(\theta)\right)
\]
and
\[
\ind(\Ic)+1-q\equiv\frac{1}{\pi}\theta\mod 2\quad\Leftrightarrow\quad R(\Ic(V))\in\sigma\left(\nHcV(\theta)\right).
\]
\item \label{enu: lem-Floquet-Bloch matrices - n times fundamental domain - 3}
If $L(\Ic(V))\in\sigma\left(\nHcV(\theta)\right)$, then 
\[
n\cdot\ind(\Ic)=N\left(L(\Ic(V));\nHcV(\theta)\right)
\]
and there exists $\lambda\in\sigma\left(\nHcV(\pi-\theta)\right)$
such that 
\[
L(\Ic(V))<\lambda\leq R(\Ic(V))\quad\textrm{and}\quad N\left(\lambda;\nHcV(\pi-\theta)\right)=N\left(L(\Ic(V));\nHcV(\theta)\right).
\]
\item \label{enu: lem-Floquet-Bloch matrices - n times fundamental domain - 4}
If $R(\Ic(V))\in\sigma\left(\nHcV(\theta)\right)$, then 
\[
n\cdot\left(\ind(\Ic)+1\right)-1=N\left(R(\Ic(V));\nHcV(\theta)\right)
\]
and for $n\geq2$, there exists $\lambda\in\sigma\left(\nHcV(\pi-\theta)\right)$
such that 
\[
L(\Ic(V))\leq\lambda<R(\Ic(V))\quad\textrm{and}\quad N\left(\lambda;\nHcV(\pi-\theta)\right)=N\left(R(\Ic(V));\nHcV(\theta)\right)-1.
\]
\item \label{enu: lem-Floquet-Bloch matrices - n times fundamental domain - 5}We
have $\left|\left\{ \lambda\in\sigma\left(\nHcV(\theta)\right)\cap\Ic(V)\right\} \right|=n$. If $\lambda\in\sigma\left(\nHcV(\theta)\right)\cap\left\{ L(\Ic(V)),R(\Ic(V))\right\} $,
then $\lambda$ is a simple eigenvalue of $\nHcV(\theta)$.
\end{enumerate}
\end{lem}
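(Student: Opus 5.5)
The natural route is Floquet--Bloch theory for the $nq$-periodic restriction, written through the discriminant. By (\ref{eq: Chatacteristic polynomial and traces}) applied to the operator viewed as $nq$-periodic, the characteristic polynomial of $\nHcV(\theta)$ is $\mathrm{tr}\big(M_{\co}(E,V)^{n}\big)-2\cos\theta$. If $\mu(E)$ denotes the Floquet angle on a band, with $t_{\co}(E,V)=2\cos\mu(E)$, then this trace equals $2\cos(n\mu(E))=2T_n(t_{\co}(E,V)/2)$, with $T_n$ the Chebyshev polynomial. On the band $\Ic(V)$ the map $E\mapsto t_{\co}(E,V)$ is strictly monotone from $\pm2$ to $\mp2$: it has degree $q$ and exactly $q$ bands, so each band carries one monotone branch. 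Hence $\mu$ increases strictly and continuously from $0$ to $\pi$, or decreases from $\pi$ to $0$, across $\Ic(V)$.

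The eigenvalues of $\nHcV(\theta)$ inside $\Ic(V)$ are then the solutions of $\cos(n\mu)=\cos\theta$ with $\mu\in[0,\pi]$. For $\theta=0$ these are $\mu=2\pi k/n$, and for $\theta=\pi$ they are $\mu=(2k+1)\pi/n$. Each family has exactly $n$ solutions on a band, provided the double roots of $T_n\mp1$ are counted twice. Since $\mu$ is a bijection, this gives (\ref{enu: lem-Floquet-Bloch matrices - n times fundamental domain - 5}).

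The total count over the $q$ bands is $nq$, the size of the matrix. So the interior critical solutions $\mu\in(0,\pi)$ must be genuine double eigenvalues, while the endpoint solutions $\mu\in\{0,\pi\}$ are simple. This simplicity holds because $T_n'(\pm1)\neq0$ and $t_{\co}'\neq0$ at band edges: the bands are separated, or at worst touch, and by Proposition~\ref{prop: Basic spectral prop periodic} they are exactly $q$ closed intervals, which rules out closed gaps as long as $V\neq0$.

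Parts (\ref{enu: lem-Floquet-Bloch matrices - n times fundamental domain - 1}) and (\ref{enu: lem-Floquet-Bloch matrices - n times fundamental domain - 2}) follow next. At an edge, $\mu\in\{0,\pi\}$ gives $\cos(n\mu)=(\pm1)^n$. For even $n$ this always equals $1$, so the edge lies in $\sigma(\nHcV(0))$. For odd $n$ the edge type matches the $n=1$ case. The $n=1$ parity rule (Lemma~\ref{lem: Floquet-Bloch matrices}\,(\ref{enu: lem-Floquet-Bloch matrices - 1})) comes from the sign pattern of $t_{\co}$: it tends to $+\infty$ as $E\to\infty$ because its leading coefficient is positive, and it alternates at edges. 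So the right edge of the top band (index $q-1$) has $t_{\co}=2$, and moving down one band flips the sign at each edge pair. This yields $\ind(\Ic)+1-q\equiv\theta/\pi$ for $R$ and $\ind(\Ic)-q\equiv\theta/\pi$ for $L$.

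For the counting claims (\ref{enu: lem-Floquet-Bloch matrices - n times fundamental domain - 3}) and (\ref{enu: lem-Floquet-Bloch matrices - n times fundamental domain - 4}), observe that every band below $\Ic$ (there are $\ind(\Ic)$ of them) carries exactly $n$ eigenvalues of $\nHcV(\theta)$ with multiplicity. Bands are disjoint intervals, except possibly touching; touching would need handling, and I expect the paper's setting with $V\neq0$ to exclude it. Hence $N(L;\nHcV(\theta))=n\,\ind(\Ic)$, since $L$ is the smallest solution in $\Ic$. Likewise $R$ is the largest solution and is simple, so $N(R;\nHcV(\theta))=n(\ind(\Ic)+1)-1$.

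The companion eigenvalue $\lambda$ for $\pi-\theta$ is the smallest solution in $\Ic$ of the other family: $\mu=\pi/n$ when $\theta=0$, or $\mu=2\pi/n$ (or $\pi$ when $n=1$) in the other case. It lies in $(L,R]$, and by the same count it has $N=n\,\ind(\Ic)$. Symmetrically, for $R$ with $n\geq2$ one takes the largest solution of the other family, which lies in $[L,R)$; its counting value is $n(\ind(\Ic)+1)-1$ when it is simple. If it is an interior double eigenvalue, $N$ counts the eigenvalues strictly below it, which gives $n(\ind+1)-2$ in both cases. This must be checked carefully.

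The main obstacle I expect is this bookkeeping of multiplicities: confirming that the interior double roots are exactly double eigenvalues, and that $N$ at such points gives the stated off-by-one values. I would settle it by the dimension count above, with total multiplicity $nq$, rather than by direct computation.
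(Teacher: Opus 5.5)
Your argument is correct, but it takes a genuinely different route from the paper. The paper does not use the discriminant in this proof. It starts from Hochstadt's strict interlacing of the eigenvalues of $\nHcV(0)$ and $\nHcV(\pi)$, and forms the intervals $J_l$ between consecutive eigenvalues with different $\theta$. It then cites Hochstadt's theorem that every block of $n$ consecutive $J_l$ touches, so each band is the union of exactly $n$ consecutive $J_l$. Parts (a)--(e) are read off from this ordering, together with the fact that the largest eigenvalue $\lambda^{(0)}_{nq-1}$ is the right edge of the top band.

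You instead write the characteristic polynomial as $2T_n(t_{\co}/2)-2\cos\theta$ and parametrize each band by the Floquet angle. This places the eigenvalues of $\nHcV(\theta)$ explicitly inside each band, at the angles $\mu$ with $\cos(n\mu)=\cos\theta$. Your degree count then gives several facts at once. Interior solutions have multiplicity at least $2$ because $T_n'$ vanishes there, these lower bounds add up to $n$ per band, and the total is $nq$. Hence edge eigenvalues are simple, interior ones are exactly double, and no eigenvalues lie in gaps. In effect you reprove Hochstadt's touching result instead of citing it. The cost is that you rely on the monotonicity of $t_{\co}$ on each band and on the identity $\mathrm{tr}(M^n)=2T_n(\mathrm{tr}(M)/2)$, both standard.

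Your sign argument for the parity rule is independent of Lemma~\ref{lem: Floquet-Bloch matrices}, which the paper derives from the present lemma, so there is no circularity. The touching of bands you worry about is excluded outright by Proposition~\ref{prop: Basic spectral prop periodic}.

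Two small repairs are needed.

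\textbf{Angles in (c).} Your explicit angles in the second case are off. Whether $\mu(L)=0$ or $\mu(L)=\pi$ depends on the band. In every case the companion $\lambda$ sits at Floquet distance exactly $\pi/n$ from $L$, the nearest point where $n\mu$ has moved by $\pi$, not at $2\pi/n$. This is harmless, because you only use that $\lambda$ is the smallest solution of the other family in $\Ic(V)$. Moreover $\lambda>L$, since $2T_n(t_{\co}(L)/2)=2\cos\theta\neq2\cos(\pi-\theta)$.

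\textbf{Multiplicity in (d).} You leave open whether the companion is simple. State explicitly that for $n\ge2$ it lies at Floquet distance $\pi/n<\pi$ from $R$, hence in the interior, hence is exactly double. That is what gives $N=n(\ind(\Ic)+1)-2$. It also explains the hypothesis $n\ge2$: for $n=1$ the companion would be $L$, which is simple.
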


\begin{proof}
Recall from (\ref{eq: Spectrum union thetas - n times potential})
that the spectrum $\sigc(V)$ is given as the union of the eigenvalues
of $\nHcV(\theta)$ over all $\theta\in[0,\pi]$. Denote by $\lambda^{(\theta)}_{j}:=\lambda_{j}\left(\nHcV(\theta)\right)$
for $0\leq j\leq nq-1$ the eigenvalues of $\nHcV(\theta)$ in increasing
order counting multiplicities, see (\ref{eq: Eigenvalues matrix - increasing order}).
These eigenvalues for $\theta\in\{0,\pi\}$ are arranged as follows,
\begin{equation}
\ldots\leq\lambda^{(\pi)}_{nq-4}\leq\lambda^{(\pi)}_{nq-3}<\lambda^{(0)}_{nq-3}\leq\lambda^{(0)}_{nq-2}<\lambda^{(\pi)}_{nq-2}\leq\lambda^{(\pi)}_{nq-1}<\lambda^{(0)}_{nq-1},\label{eq: lem-Floquet-Bloch matrices - n times fundamental domain - eigenvalues}
\end{equation}
noting that the strict inequalities above appear whenever we compare
eigenvalues with different $\theta$ values (see e.g. \cite[Eq. (25)]{Hochstadt-LAA_1975}).
We use these eigenvalues to recursively define the following intervals
\[
\ldots,~J_{l}:=[\lambda^{(\theta_{l})}_{l},\lambda^{(\pi-\theta_{l})}_{l}],~\ldots,~J_{nq-2}:=[\lambda^{(0)}_{nq-2},\lambda^{(\pi)}_{nq-2}],~J_{nq-1}:=[\lambda^{(\pi)}_{nq-1},\lambda^{(0)}_{nq-1}],
\]
for appropriately chosen $\theta_{l}\in\{0,\pi\}$. We note that these
intervals are ordered, i.e. $J_{l}\prec J_{l+1}$ for all $0\leq l\leq nq-2$.

We now make a connection between these intervals, and the spectral
bands $\Ic$ of $\sigc$. By Proposition~\ref{prop: Basic spectral prop periodic}
and Lemma~\ref{lem: ConnecSpectrTrace}, $\sigc(V)$ consists of
exactly $q$ disjoint intervals - called spectral bands. For each
such spectral band $\Ic$ of $\sigc$, set $j=\ind(\Ic)$ and $I_{j}:=\Ic(V)$
for the given $V>0$.

We show a few \uline{auxiliary claims}, and then use them to prove
the statements in the lemma.
\begin{itemize}
\item[(1)]  For all $0\leq l\leq nq-1$, the endpoints $L(J_{l})$ and $R(J_{l})$
correspond to eigenvalues with different $\theta$ values. Moreover,
$R(J_{l})$ and $L(J_{l+1})$ correspond to the same value of $\theta\in\{0,\pi\}$
for all $0\leq l\leq nq-2$.
\item[(2)]  The equalities 
\[
\sigc(V)=\bigcup^{q-1}_{j=0}I_{j}=\bigcup^{nq-1}_{l=0}J_{l}\text{\ensuremath{\quad\textrm{and}\quad I_{q-1-j}=\bigcup^{n-1}_{l=0}J_{nq-1-nj-l}}}~\textrm{ for all }0\leq j\leq q-1
\]
hold.
\item[(3)]  For $\theta\in\{0,\pi\}$, each $I_{j}$ contains exactly $n$ eigenvalues
of $\sigma\left(\nHcV(\theta)\right).$
\item[(4)]  We have $R(I_{q-1})=\lambda^{(0)}_{nq-1}$.
\end{itemize}
Claim (1) is immediate from the definition of the intervals $J_{l}$.
The first equality in (2) follows from (\ref{eq: Spectrum union thetas})
and (\ref{eq: Spectrum union thetas - n times potential}). Thus,
each $I_{j}$ is the union of some of the consecutive intervals $J_{l}$.
By \cite[Theorem 1]{Hochstadt-LAA_1984} each $n$ consecutive $J_{l}$
bands touch (so that their union is a single connected component)
and this inductively implies the second equality in (2). This also
implies (3). To deduce (4) we combine $I_{q-1}=\bigcup^{n-1}_{l=0}J_{nq-1-l}$
(which follows from (2)) with $J_{nq-1}:=[\lambda^{(\pi)}_{nq-1},\lambda^{(0)}_{nq-1}]$.

We now use the claims above to prove the different statements of the
lemma.

(\ref{enu: lem-Floquet-Bloch matrices - n times fundamental domain - 1}):
The claims (1) and (2) for even $n\in\N$ imply that the left and
right spectral edges of $\Ic(V)$ correspond to the same value $\theta\in\{0,\pi\}$.
Combining this with claim (4) implies that all spectral edges of $\Ic$
correspond to the value $\theta=0$.

(\ref{enu: lem-Floquet-Bloch matrices - n times fundamental domain - 2}):
The claims (1) and (2) for odd $n\in\N$ imply that the left and right
spectral edge of $\Ic(V)$ correspond to a different value of $\theta\in\{0,\pi\}$.
Hence, the value of $\theta\in\{0,\text{\ensuremath{\pi\}}}$ which
corresponds to $L(I_{j})$ alternates with $j$ (and it also alternates
for $R(I_{j})$). Combining this with claim (4) yields the statement
in (\ref{enu: lem-Floquet-Bloch matrices - n times fundamental domain - 2}).

(\ref{enu: lem-Floquet-Bloch matrices - n times fundamental domain - 3})
and (\ref{enu: lem-Floquet-Bloch matrices - n times fundamental domain - 4}):
 The first equality in (\ref{enu: lem-Floquet-Bloch matrices - n times fundamental domain - 3})
and (\ref{enu: lem-Floquet-Bloch matrices - n times fundamental domain - 4})
follows from claim (3). Note that for (\ref{enu: lem-Floquet-Bloch matrices - n times fundamental domain - 4})
the quantity $N\left(R(\Ic(V));\nHcV(\theta)\right)$ counts $n-1$
eigenvalues in the spectral band $\Ic$(V) and $n$ eigenvalues for
each spectral band $I(V)\prec\Ic(V)$ (which are $\ind(\Ic)$ many).

\begin{figure}
\includegraphics[scale=0.87]{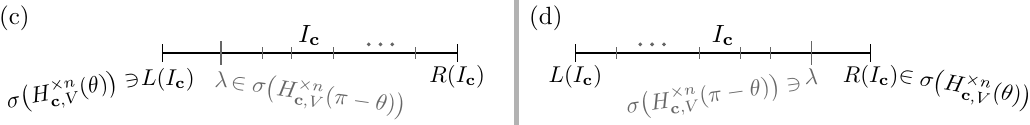}
\caption{A sketch for the proof of (c) and (d) in Lemma~\ref{lem: Floquet-Bloch matrices - n times fundamental domain}.}
\label{fig: Floquet-Bloch matrices - n times fundamental domain}
\end{figure}

We turn to prove the second claim in (\ref{enu: lem-Floquet-Bloch matrices - n times fundamental domain - 3}).
It follows from claim (2) that there exists $\theta\in\left\{ 0,\pi\right\} $
such that $L(\Ic(V))\in\sigma\left(\nHcV(\theta)\right).$ Using the
notation for the eigenvalues of $\nHcV(\theta)$ introduced in the
beginning of the proof, we can write $\lambda^{(\theta)}_{l}:=L(\Ic(V))$,
for some $0\leq l\leq nq-1$. Now, we define 
\[
\lambda:=\lambda^{(\pi-\theta)}_{l}=\min\set{\tilde{\lambda}}{\tilde{\lambda}\in\sigma\left(\nHcV(\pi-\theta)\right)\quad\textrm{and}\quad\tilde{\lambda}>\lambda^{(\theta)}_{l}:=L(\Ic(V))}
\]
 (as sketched in Figure~\ref{fig: Floquet-Bloch matrices - n times fundamental domain},(c))
and show that this is the desired $\lambda\in\sigma\left(\nHcV(\pi-\theta)\right)$
in the statement of (\ref{enu: lem-Floquet-Bloch matrices - n times fundamental domain - 3}).
By the construction in the beginning of the proof we get that $L(\Ic(V))<\lambda$
and $J_{l}=\left[L(\Ic(V)),\lambda\right]$. Furthermore, $J_{l}$
is the left-most sub-interval within $\Ic(V)$, as in the decomposition
of claim (2). Hence, $L(\Ic(V))<\lambda\leq R(\Ic(V))$, as stated
in (\ref{enu: lem-Floquet-Bloch matrices - n times fundamental domain - 3}).
To complete the proof of (\ref{enu: lem-Floquet-Bloch matrices - n times fundamental domain - 3})
we just note that $N\left(L(\Ic(V));\nHcV(\theta)\right)=l$, just
by the choice of $0\leq l\leq nq-1$ and similarly $N\left(\lambda;\nHcV(\pi-\theta)\right)=l$.
Hence, $N\left(\lambda;\nHcV(\pi-\theta)\right)=N\left(L(\Ic(V));\nHcV(\theta)\right)$.

It is left to prove the second claim in (\ref{enu: lem-Floquet-Bloch matrices - n times fundamental domain - 4}).
This follows similarly as in (\ref{enu: lem-Floquet-Bloch matrices - n times fundamental domain - 3}).
First, there exists $\theta\in\left\{ 0,\pi\right\} $ such that $R(\Ic(V))\in\sigma\left(\nHcV(\theta)\right)$;
we write $\lambda^{(\theta)}_{l}:=R(\Ic(V))$, for some $0\leq l\leq nq-1$;
we define
\[
\lambda:=\lambda^{(\pi-\theta)}_{l}=\max\set{\tilde{\lambda}}{\tilde{\lambda}\in\sigma\left(\nHcV(\pi-\theta)\right)\quad\textrm{and}\quad\tilde{\lambda}<\lambda^{(\theta)}_{l}=R(\Ic(V))}
\]
 (as sketched in Figure~\ref{fig: Floquet-Bloch matrices - n times fundamental domain}).
Then $L(\Ic(V))\leq\lambda<R(\Ic(V))$ holds. If $n\geq2$, then $\lambda$
is in the interior of $\Ic(V)$ and so the eigenvalue $\lambda\in\sigma\left(\nHcV(\pi-\theta)\right)$
has multiplicity two by (\ref{eq: lem-Floquet-Bloch matrices - n times fundamental domain - eigenvalues})
and claim (2). Thus, $N\left(\lambda;\nHcV(\theta)\right)$ counts
$n-2$ eigenvalues in $\Ic(V)$ while $N\left(R(\Ic(V));\nHcV(\theta)\right)$
counts $n-1$ eigenvalues in $\Ic(V)$. Hence, $N\left(\lambda;\nHcV(\pi-\theta)\right)=N\left(R(\Ic(V));\nHcV(\theta)\right)-1$
follows proving (\ref{enu: lem-Floquet-Bloch matrices - n times fundamental domain - 4}).

(\ref{enu: lem-Floquet-Bloch matrices - n times fundamental domain - 5})
This is an immediate consequence of claim (2) and (\ref{eq: lem-Floquet-Bloch matrices - n times fundamental domain - eigenvalues}).
\end{proof}

\subsection{A characterization of admissibility\label{subsec: Characterization Admissible}}

We recall the definition of admissibility (Definition~\ref{def: admissibility})
for a triple of eigenvalues. We now use the lemmata of the previous
subsection in order to provide an equivalent condition for admissibility.
Since the definition of admissibility is independent of $V>0$ (as
is also mentioned within Definition~\ref{def: admissibility}), we
omit the $V$-dependence from the notation in this subsection. For
example, we write $\Ic,\lambda_{\co}$ and $\nHc$ instead of writing
$\Ic(V),\lambda_{\co}(V)$ and $\nHcV$.
\begin{lem}
\label{lem: Admissibility criterion} Let $m,n\in\N$, and $\co\in\Co$
be such that $\varphi(\co)\not\in\left\{ -1,\infty\right\} $ and
$[\co,m]\in\Co$. For each $\cop\in\left\{ \co,~[\co,m],~[\co,m,n]\right\} $,
let $I_{\cop}$ be a spectral band in $\sigma_{\cop}$ and $\lambda_{\cop}\in\left\{ L(I_{\cop}),R(I_{\cop})\right\} $
and denote 
\[
\dR(\lambda_{\cop}):=\begin{cases}
0,\qquad & \lambda_{\cop}=L\big(I_{\cop}\big),\\
1,\qquad & \lambda_{\cop}=R\big(I_{\cop}\big).
\end{cases}
\]
Then $\lc,\lcm,\lcmn$ are admissible if and only if 
\[
\ind\big(I_{\co}\big)+n\cdot\ind\big(I_{[\co,m]}\big)+\ind\big(I_{[\co,m,n]}\big)\equiv\dR(\lc)+n\cdot\dR(\lcm)+\dR(\lcmn)\mod 2.
\]
\end{lem}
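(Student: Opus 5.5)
The plan is to translate admissibility, which is a parity condition on the three angles $\thc,\thcm,\thcmn\in\{0,\pi\}$, into a parity condition on indices. Lemma~\ref{lem: Floquet-Bloch matrices} handles $\sigc$ and $\sigcmn$, and Lemma~\ref{lem: Floquet-Bloch matrices - n times fundamental domain} handles the $n$-fold matrix $\nHcmV$. By Remark~\ref{rem:Admissible=00003DEvenNumberPis}, each endpoint $\lambda_{\cop}$ determines a unique angle $\theta_{\cop}$ for which $\lambda_{\cop}\in\sigma(H_{\cop,V}(\theta_{\cop}))$. For $[\co,m]$ the relevant matrix is $\nHcmV$ rather than $H_{[\co,m],V}$. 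Admissibility means $\frac1\pi(\thc+\thcm+\thcmn)\equiv 0 \mod 2$.

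\textbf{Step 1.} Write $q_{\cop}$ for the period of $\cop$. Lemma~\ref{lem: Floquet-Bloch matrices}(\ref{enu: lem-Floquet-Bloch matrices - 1}) gives, for $\cop\in\{\co,[\co,m,n]\}$,
\[
\tfrac1\pi\theta_{\cop}\equiv \ind(I_{\cop})+\dR(\lambda_{\cop})-q_{\cop}\mod 2.
\]
The left and right endpoint cases of that lemma combine into this one formula.

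\textbf{Step 2.} For $[\co,m]$ with the $n$-fold matrix, Lemma~\ref{lem: Floquet-Bloch matrices - n times fundamental domain} splits into two cases.
\begin{itemize}
\item If $n$ is odd, part (\ref{enu: lem-Floquet-Bloch matrices - n times fundamental domain - 2}) gives $\frac1\pi\thcm\equiv\ind(I_{[\co,m]})+\dR(\lcm)-q_{[\co,m]}\mod 2$. This equals $n\big(\ind(I_{[\co,m]})+\dR(\lcm)\big)-q_{[\co,m]}$ modulo $2$.
\item If $n$ is even, part (\ref{enu: lem-Floquet-Bloch matrices - n times fundamental domain - 1}) forces $\thcm=0$. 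This equals $n(\ind+\dR)$ modulo $2$, and we also need $nq_{[\co,m]}\equiv0$.
\end{itemize}
Both cases are captured by $\frac1\pi\thcm\equiv n\big(\ind(I_{[\co,m]})+\dR(\lcm)-q_{[\co,m]}\big)\mod 2$.

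\textbf{Step 3.} Summing the three congruences, admissibility becomes
\[
\ind(I_{\co})+n\,\ind(I_{[\co,m]})+\ind(I_{[\co,m,n]})+\dR(\lc)+n\,\dR(\lcm)+\dR(\lcmn)\equiv q_{\co}+n\,q_{[\co,m]}+q_{[\co,m,n]}\mod 2.
\]
It remains to check that the right-hand side is even; after that, moving the $\dR$ terms across (signs do not matter modulo $2$) gives exactly the stated criterion.

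\textbf{Step 4 (main obstacle).} We must show $q_{\co}+n\,q_{[\co,m]}+q_{[\co,m,n]}\equiv 0\mod 2$. The continued-fraction recursion $q_{[\co,m,n]}=n\,q_{[\co,m]}+q_{\co}$, as in (\ref{eq: Proof vertices are spectral bands - recursion q_k}), makes the sum equal to $2(q_{\co}+n\,q_{[\co,m]})$, which is even. Equivalently, Lemma~\ref{lem: periods of three approximants} says the period of $[\co,m,n]$ is the concatenation of one $\co$-period and $n$ copies of the $[\co,m]$-period.

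The care needed is in the degenerate entries of $\Co$: trailing digits $0$ or $-1$, and $\co=[0,0]$. There the denominators are those of $\varphi$ after reduction, so I would check the recursion in those edge cases, using $q_{-1}=0$ and the reduction rules for $\varphi$, or rely on the block decomposition $H_{[\co,m,n],V}=\nHcmV\oplus H_{\co,V}$ from Section~\ref{sec: Basic spectral analysis}, which already encodes $q_{[\co,m,n]}=nq_{[\co,m]}+q_{\co}$.
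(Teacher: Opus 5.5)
Your proposal is correct and follows the paper's proof essentially step for step. The paper also obtains $\ind(I_{\cop})+\dR(\lambda_{\cop})-q_{\cop}\equiv\frac1\pi\theta_{\cop}\mod 2$ for $\cop\in\{\co,[\co,m,n]\}$ from Lemma~\ref{lem: Floquet-Bloch matrices}, and the $n$-scaled congruence for $[\co,m]$ from parts (a) and (b) of Lemma~\ref{lem: Floquet-Bloch matrices - n times fundamental domain}. It then sums the three and removes the denominators using $q_{\co}+n\,q_{[\co,m]}=q_{[\co,m,n]}$. Your concern about degenerate entries is unnecessary: $[\co,m]\in\Co$ forces every digit of $\co$ after the two leading zeros to lie in $\N$, so the standard recursion applies directly.
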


\begin{proof}
Let $\cop\in\Co$ be such that $\left\{ -1,\infty\right\} \not\ni\varphi(\cop)=\frac{p_{\cop}}{q_{\cop}}$
with $p_{\cop},q_{\cop}$ coprime. Let $I_{\cop}$ be a spectral band
of $\sigma_{\cop}$ and $\lambda_{\cop}$ an edge (left or right)
of $I_{\cop}$. In particular, by Lemma~\ref{lem: Floquet-Bloch matrices}
(\ref{enu: lem-Floquet-Bloch matrices - 1}) $\lambda_{\cop}$ is
an eigenvalue in $H_{\cop}(\theta_{\cop})$ for some $\theta_{\cop}\in\left\{ 0,\pi\right\} $
and 
\begin{equation}
\ind(I_{\cop})+\dR(\lambda_{\cop})-q_{\cop}\equiv\frac{1}{\pi}\theta_{\cop}\mod 2.\label{eq: 0 or pi for Ic endpoint}
\end{equation}

We will apply (\ref{eq: 0 or pi for Ic endpoint}) in the following
for both $\cop=\co$ and $\cop=[\co,m,n]$. However, recall from the
admissibility definition (Definition~\ref{def: admissibility}) that
we need to consider $\lcm$ as an eigenvalue of the matrix $\nHcm(\thcm)$
(rather than the matrix $\Hcm(\thcm)$). Therefore, we need to develop
an alternative identity to (\ref{eq: 0 or pi for Ic endpoint}). This
is done with the aid of Lemma~\ref{lem: Floquet-Bloch matrices - n times fundamental domain}
(\ref{enu: lem-Floquet-Bloch matrices - n times fundamental domain - 1})
and (\ref{enu: lem-Floquet-Bloch matrices - n times fundamental domain - 2})
from which we conclude that 
\begin{equation}
n\cdot\left(\ind(\Icm)+\dR(\lcm)-\qcm\right)\equiv\frac{1}{\pi}\thcm\mod 2,\label{eq: 0 or pi for Ic.m endpoint}
\end{equation}
for both even and odd values of $n\in\N$.

To conclude the proof we sum Equation (\ref{eq: 0 or pi for Ic endpoint})
for $\cop=\co$ and for $\cop=\cmn$ and we add to it Equation (\ref{eq: 0 or pi for Ic.m endpoint}).
This yields 
\begin{align*}
\left(\ind(I_{\co})+\dR(\lambda_{\co})\right)+ & \left(\ind(\Icmn)+\dR(\lcmn)\right)\\
+n \left(\ind(\Icm)+\dR(\lcm)\right) & -\left(\qc+n\ \qcm+\qcmn\right)\equiv\frac{1}{\pi}\left(\thc+\thcm+\thcmn\right)\mod 2.
\end{align*}

By definition, admissibility of $\lc,\lcm,\lcmn$ is equivalent to
admissibility of the values $\thc,\thcm,\thcmn\in\{0,\pi\}$, which
is equivalent to $\frac{1}{\pi}\left(\thc+\thcm+\thcmn\right)\equiv0\mod 2$.
To end the proof, we just substitute above the equality $\mbox{\ensuremath{\qc+n\cdot\qcm=\qcmn}}$,
which is standard in the theory of finite continued fraction expansions
(see Lemma~\ref{lem: periods of three approximants}, (\ref{enu: lem-periods of three approximants-2})).
\end{proof}

\subsection{Eigenvalue inequalities resulting from interlacing theorem\label{subsec: counting-function and eigenvalue estimates}}

Combining the interlacing theorem (Theorem \ref{thm: perturbation thm for our matrices})
with Lemma \ref{lem: Floquet-Bloch matrices - n times fundamental domain}
gives the following useful lemma, which is applied many times in the
following subsections.
\begin{lem}
\label{lem: Basics for eigenavlue inequalities} Let $V>0$, $m,n\in\N$,
$\co\in\Co$ be such that $\varphi(\co)\not\in\left\{ -1,\infty\right\} $
and $[\co,m]\in\Co$. Let $\thc,\thcm,\thcmn\in\left\{ 0,\pi\right\} $
and
\[
\lo\in\sigma\left(\nHcmV(\thcm)\oplus\HcV(\thc)\right)\quad\textrm{and\ensuremath{\quad\mo}\ensuremath{\in\sigma}}\left(\HcmnV(\thcmn)\right).
\]
Define
\[
N_{\co}:=N\left(\lo;~\HcV(\thc)\right),\quad N_{[\co,m]}:=N\left(\lo;~\nHcmV(\thcm)\right)
\]
and
\[
N_{[\co,m,n]}:=N\left(\mo;~\HcmnV(\thcmn)\right).
\]
\begin{enumerate}
\item \label{enu: lem-Basics for eigenavlue inequalities - 1} Let $\Mult_{\lo}$
be the multiplicity of the eigenvalue $\lo$ of $\nHcmV(\thcm)\oplus\HcV(\thc)$.
If $\thc,\thcm,\thcmn$ are admissible, then the following implications
hold:
\begin{equation}
N_{\co}+N_{[\co,m]}<N_{[\co,m,n]}\quad\Rightarrow\quad\lo\leq\mo,\label{eq: lem-Basics for eigenavlue inequalities - 1}
\end{equation}
\begin{equation}
N_{\co}+N_{[\co,m]}+\Mult_{\lo}-1>N_{[\co,m,n]}\quad\Rightarrow\quad\lo\geq\mo.\label{eq: lem-Basics for eigenavlue inequalities - 2}
\end{equation}
If, additionally, $\lo$ is a simple eigenvalue of $\nHcmV(\thcm)\oplus\HcV(\thc)$
(i.e., $\Mult_{\lo}=1)$, then the two inequalities on the right hand
sides of (\ref{eq: lem-Basics for eigenavlue inequalities - 1}) and
(\ref{eq: lem-Basics for eigenavlue inequalities - 2}) are strict.
\item \label{enu: lem-Basics for eigenavlue inequalities - 2} If
\begin{itemize}
\item $\thc,\thcm,\thcmn$ are not admissible and
\item $\Icm$ is a spectral band in $\sigcm$ satisfying $\sigma\left(\HcV(\thc)\right)\cap\Icm(V)=\emptyset$,
\end{itemize}
then the following implications hold:
\begin{equation}
\lo=L(\Icm(V)),\quad N_{\co}+N_{[\co,m]}<N_{[\co,m,n]}\quad\Rightarrow\quad\lo<\mo\label{eq: lem-Basics for eigenavlue inequalities - 3}
\end{equation}
and for $n\geq2$,
\begin{equation}
\lo=R(\Icm(V)),\quad N_{\co}+N_{[\co,m]}-1>N_{[\co,m,n]}\quad\Rightarrow\quad\lo>\mo.\label{eq: lem-Basics for eigenavlue inequalities - 4}
\end{equation}

\end{enumerate}
\end{lem}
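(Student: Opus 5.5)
The plan is to translate the counting-function hypotheses into eigenvalue-index statements and then read off the conclusions from the interlacing theorem (Theorem~\ref{thm: perturbation thm for our matrices}). Write $X=\nHcmV(\thcm)\oplus\HcV(\thc)$ and $Y=\HcmnV(\thcmn)$. Both matrices have size $q_{[\co,m,n]}=\qc+n\qcm$ by Lemma~\ref{lem: periods of three approximants}. Since $N(\lambda;X)$ counts eigenvalues strictly below $\lambda$ and the spectrum of a direct sum is the union of the spectra, we get $N(\lo;X)=N_{\co}+N_{[\co,m]}=:k$. Hence $\lo=\lambda_{k}(X)=\dots=\lambda_{k+\Mult_{\lo}-1}(X)$. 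Likewise $\mo=\lambda_{l}(Y)$ with $l=N_{[\co,m,n]}$, and in fact $\mo=\lambda_{l'}(Y)$ for every $l'$ in the block of indices belonging to $\mo$, the smallest of which is $l$.

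For part (a), assume admissibility. If $k<l$, then $k+1\le l$, and interlacing gives $\lo=\lambda_{k}(X)\le\lambda_{k+1}(Y)\le\lambda_{l}(Y)=\mo$. If $k+\Mult_{\lo}-1>l$, put $j=k+\Mult_{\lo}-1\ge l+1$; then $\lo=\lambda_{j}(X)\ge\lambda_{j-1}(Y)\ge\lambda_{l}(Y)=\mo$. When $\Mult_{\lo}=1$, the strict form of the interlacing theorem upgrades the first inequality in each chain to a strict one, and the conclusions become strict.

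For part (b), the angles are not admissible. By Remark~\ref{rem:Admissible=00003DEvenNumberPis}, the triple $\thc,\pi-\thcm,\thcmn$ is admissible. The idea is to replace $\lo$, an edge of $\Icm$ and an eigenvalue of $\nHcmV(\thcm)$, by the companion eigenvalue $\lambda$ of $\nHcmV(\pi-\thcm)$ supplied by Lemma~\ref{lem: Floquet-Bloch matrices - n times fundamental domain}(c),(d).

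In case (\ref{eq: lem-Basics for eigenavlue inequalities - 3}), $\lo=L(\Icm)$. Part (c) gives $\lambda\in(\lo,R(\Icm)]$ with $N(\lambda;\nHcmV(\pi-\thcm))=N_{[\co,m]}$. The hypothesis $\sigma(\HcV(\thc))\cap\Icm=\emptyset$ means no eigenvalue of $\HcV(\thc)$ lies in $[\lo,\lambda]$, so $N(\lambda;\HcV(\thc))=N_{\co}$. Applying part (a) to $\lambda$ with the admissible triple gives $\lambda\le\mo$ whenever $N_{\co}+N_{[\co,m]}<N_{[\co,m,n]}$. Therefore $\lo<\lambda\le\mo$.

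In case (\ref{eq: lem-Basics for eigenavlue inequalities - 4}), $n\ge2$ and $\lo=R(\Icm)$. Part (d) gives $\lambda\in[L(\Icm),\lo)$ with $N(\lambda;\nHcmV(\pi-\thcm))=N_{[\co,m]}-1$, and again $N(\lambda;\HcV(\thc))=N_{\co}$. Here $\lambda$ is an interior double eigenvalue, so $\Mult_{\lambda}\ge2$ in the direct sum. Then $N_{\co}+N_{[\co,m]}-1+\Mult_{\lambda}-1\ge N_{\co}+N_{[\co,m]}>N_{[\co,m,n]}$, using the hypothesis $N_{\co}+N_{[\co,m]}-1>N_{[\co,m,n]}$. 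Part (a) then yields $\mo\le\lambda<\lo$.

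The main obstacle is the bookkeeping in (b). One must verify that the counting for $\lambda$ in the direct sum is exactly as claimed, and this rests on the emptiness hypothesis covering the whole interval between $\lo$ and $\lambda$, which holds since that interval lies in $\Icm$. One must also track the multiplicity of $\lambda$ carefully, particularly for $n\ge2$ in the right-edge case, where part (d) is only available for $n\ge2$.
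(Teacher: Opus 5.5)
Your proposal is correct and follows the paper's proof essentially step by step. Part (a) uses the counting identity $N(\lo;X)=N_{\co}+N_{[\co,m]}$ together with the interlacing theorem at the first (respectively last) index of the eigenvalue block, and part (b) switches to the admissible triple with $\pi-\thcm$, replaces $\lo$ by the companion eigenvalue from Lemma~\ref{lem: Floquet-Bloch matrices - n times fundamental domain}(c),(d), and uses the emptiness hypothesis to keep $N_{\co}$ unchanged. Your explicit treatment of the right-edge case, which the paper only sketches, is also fine; there the bound $\Mult_{\lambda}\ge 2$ is not even needed, since $\Mult_{\lambda}\ge1$ together with $N_{\co}+N_{[\co,m]}-1>N_{[\co,m,n]}$ already suffices.
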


\begin{rem*}
We emphasize that $\lo$ and $\mo$ do depend on $V$, but the implications
of the lemma do not.
\end{rem*}
\begin{proof}
We start by noting the following rather trivial counting relation
\begin{equation}
N\left(\lo;~\nHcmV(\thcm)\oplus\HcV(\thc)\right)=N_{\co}+N_{[\co,m]}.\label{eq: lem-Basics for eigenavlue inequalities - counting relation}
\end{equation}
(\ref{enu: lem-Basics for eigenavlue inequalities - 1}) Suppose that
$\thc,\thcm,\thcmn$ are admissible. Both of the required implications
(\ref{eq: lem-Basics for eigenavlue inequalities - 1}) and (\ref{eq: lem-Basics for eigenavlue inequalities - 2})
follow from Theorem~\ref{thm: perturbation thm for our matrices},
when keeping in mind the counting relation (\ref{eq: lem-Basics for eigenavlue inequalities - counting relation}).
Explicitly, denoting the eigenvalues of $\nHcmV(\thcm)\oplus\HcV(\thc)$
by $\lambda_{0}\leq\lambda_{1}\leq\lambda_{2}\leq\ldots$ in increasing
order, we get that $\lo=\lambda_{\Nc+\Ncm}=\ldots=\lambda_{\Nc+\Ncm+M_{\lo}-1}$.
Similarly $\mo=\mu_{\Ncmn}$, if the eigenvalues of $\HcmnV(\thcmn)$
are denoted by $\mu_{0}\leq\mu_{1}\leq\mu_{2}\leq\ldots$ in increasing
order. Hence,
\begin{itemize}
\item (\ref{eq: lem-Basics for eigenavlue inequalities - 1}) follows by
applying Theorem~\ref{thm: perturbation thm for our matrices} for
$\lo=\lambda_{\Nc+\Ncm}$, $\mo=\mu_{\Ncmn}$, and
\item (\ref{eq: lem-Basics for eigenavlue inequalities - 2}) follows by
applying Theorem~\ref{thm: perturbation thm for our matrices} for
$\lo=\lambda_{\Nc+\Ncm+\Mult_{\lo}-1}$, $\mo=\mu_{\Ncmn}$.
\end{itemize}
If, additionally, $\lo$ is a simple eigenvalue of $\nHcmV(\thcm)\oplus\HcV(\thc)$,
then $\Mult_{\lo}=1$ and the relevant statement within Theorem~\ref{thm: perturbation thm for our matrices}
yields the corresponding strict inequalities.

(\ref{enu: lem-Basics for eigenavlue inequalities - 2}) Suppose that
$\thc,\thcm,\thcmn$ are not admissible and let $\Icm$ be a spectral
band in $\sigcm$ satisfying $\sigma\left(\HcV(\thc)\right)\cap\Icm(V)=\emptyset$.

In the first case (Equation~(\ref{eq: lem-Basics for eigenavlue inequalities - 3})),
we assume $\lo=L(\Icm(V))$ and $N_{\co}+N_{[\co,m]}<N_{[\co,m,n]}$.
We aim to apply Theorem~\ref{thm: perturbation thm for our matrices}
directly but $\thc,\thcm,\thcmn$ are not admissible. Thus, we change
one of these values to attain an admissible triple. More precisely,
$\thc,\pi-\thcm,\thcmn$ are admissible, see Remark~\ref{rem:Admissible=00003DEvenNumberPis}.
By Lemma~\ref{lem: Floquet-Bloch matrices - n times fundamental domain}~(\ref{enu: lem-Floquet-Bloch matrices - n times fundamental domain - 3}),
there exists a $\lambda\in\nHcmV(\pi-\thcm)$ such that
\begin{align*}
 & \lo=L(\Icm(V))<\lambda\leq R(\Icm(V))\\
\textrm{and}\quad & N\left(\lambda;\nHcmV(\pi-\thcm)\right)=N\left(\lo;\nHcmV(\thcm)\right).
\end{align*}

 Thus, $\lambda\in\Icm(V)$ and $\sigma\left(\HcV(\thc)\right)\cap\Icm(V)=\emptyset$
lead to the equation $N\left(\lambda;~\HcV(\thc)\right)=N\left(\lo;~\HcV(\thc)\right)$.
Therefore, (\ref{eq: lem-Basics for eigenavlue inequalities - counting relation})
implies
\begin{align*}
N\left(\lambda;~\nHcmV(\pi-\thcm)\oplus\HcV(\thc)\right) & =N_{\co}+N_{[\co,m]}.
\end{align*}
Since $N_{\co}+N_{[\co,m]}<N_{[\co,m,n]}$ and $\thc,\pi-\thcm,\thcmn$
are admissible, Theorem~\ref{thm: perturbation thm for our matrices}
yields $\lambda\leq\mo$. Using $\lo<\lambda$, we conclude $\lo<\mo$,
as claimed.

The second case (Equation~(\ref{eq: lem-Basics for eigenavlue inequalities - 4}))
follows similar arguments, using Lemma~\ref{lem: Floquet-Bloch matrices - n times fundamental domain}~(\ref{enu: lem-Floquet-Bloch matrices - n times fundamental domain - 4}).
\end{proof}

\subsection{Index identities of the spectral bands\label{subsec: Index-identities}}

In order to apply Lemma~\ref{lem: Basics for eigenavlue inequalities},
we need to be able to compare the spectral positions of $\lo$ and
$\mo$ ($\Nc$, $\Ncm$ and $\Ncmn$) which appear in Lemma~\ref{lem: Basics for eigenavlue inequalities}.
We have already seen (Lemma~\ref{lem: Floquet-Bloch matrices - n times fundamental domain})
that such spectral positions are connected to spectral band indices.
Hence, towards applying Lemma~\ref{lem: Basics for eigenavlue inequalities},
we develop in this subsection some connections between indices of
spectral bands (Lemma~\ref{lem: index identities for spectral bands}).
For the upcoming statements and proofs, we introduce the following
notations (see Figure~\ref{fig: Jcm and Kcm} for a sketch).
\begin{defn}
\label{def: left and right proceeded band} Let $m\in\N_{0}$ and
$\co,\cm\in\Co$ be such that $\varphi(\co)\not\in\left\{ -1,\infty\right\} $
and $\varphi([\co,m])\not\in\left\{ -1,\infty\right\} $. For a spectral
band $\Ic$ in $\sigc$, define the \emph{associated spectral bands}
$\Jcm,~\Kcm$ in $\sigcm$ to be the unique spectral bands (if they
exist) such that for $V>4$,
\begin{itemize}
\item $\Jcm(V)$ is the right-most band of $\sigcm(V)$ for which $\Jcm(V)\prec\Ic(V)$,
and
\item $\Kcm(V)$ is the left-most band of $\sigcm(V)$ for which $\Ic(V)\prec\Kcm(V)$.
\end{itemize}
\end{defn}

\begin{rem*}
Note that it might be that some of the bands $\Jcm$ and $\Kcm$ do
not exist. In such a case, this in an empty convention. Further note
that $\varphi([\co,m])\in\left\{ -1,\infty\right\} $ for $m\in\N_{0}$
can only happen if $\co=[0,0]$ and $m=0$ in which case such spectral
bands $\Jcm$ and $\Kcm$ do not exist.
\end{rem*}
The reason for including $V>4$ in the definition above is explained
in the beginning of the proof of Lemma~\ref{lem: index identities for spectral bands}.

\begin{figure}[h]
\includegraphics[scale=0.9]{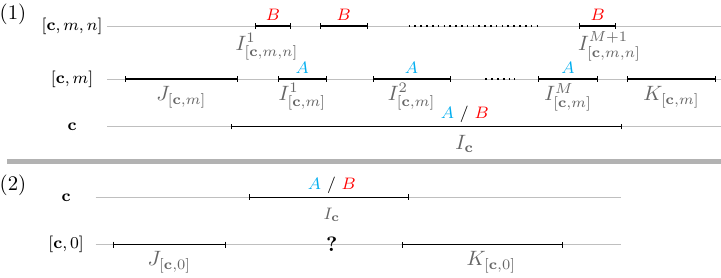} \caption{A sketch for Definition~\ref{def: left and right proceeded band}
and Lemma~\ref{lem: index identities for spectral bands}. (1) For
a fixed spectral band $\protect\Ic$ and $m\protect\geq1$, the associated
bands $\protect\Jcm$ and $\protect\Kcm$ (from Definition~\ref{def: left and right proceeded band})
and the associated bands $\{\protect\Icmi\}^{M}_{i=1}$ and $\{\protect\Icmnj\}^{M+1}_{j=1}$
(from Definition~\ref{def: Notion Icmi. Icmnj and M}) are drawn.
(2) For a fixed spectral band $\protect\Ic$ and $m=0$, the associated
bands $\protect\Jcz$ and $\protect\Kcz$ (from Definition~\ref{def: left and right proceeded band})
are drawn. If $\protect\Ic$ is of backward type $\protect\tA$ for
$V>4$, there is a spectral band between $J_{[\protect\co,0]}$ and
$K_{[\protect\co,0]}$. Otherwise, there is no spectral band between
them, namely $\protect\ind(K_{[\protect\co,0]})=\protect\ind(J_{[\protect\co,0]})+1$
(indicated by the question mark).}
\label{fig: Jcm and Kcm}
\end{figure}

\begin{lem}
\label{lem: index identities for spectral bands} Let $m,n\in\N$,
and $\co\in\Co$ be such that $\varphi(\co)\not\in\left\{ -1,\infty\right\} $
and $[\co,m]\in\Co$. Consider a spectral band $I_{\co}$ in $\sigma_{\co}$
with associated spectral bands $\{\Icmi\}^{M}_{i=1}$ and $\{\Icmnj\}^{M+1}_{j=1}$
introduced in Definition~\ref{def: Notion Icmi. Icmnj and M}. If
$M\geq1$, then for all $1\leq i\leq M$
\begin{equation}
\ind(\Icmni)=n\cdot\ind(\Icmi)+\ind(\Ic),\label{eq: Cor-unifyrecform - index relation for I_=00007Bc.m.n=00007D^=00007Bi=00007D}
\end{equation}
and
\begin{equation}
\ind(\Icmnii)=n\cdot(\ind(\Icmi)+1)+\ind(\Ic).\label{eq: Cor-unifyrecform - index relation for I_=00007Bc.m.n=00007D^=00007Bi+1=00007D}
\end{equation}
Whenever the spectral bands $\Jcm$ or $\Kcm$ associated with $\Ic$
exist, then the following hold. If $M\geq0$\textup{, }then 
\begin{align}
\ind(\Icmn^{1}) & =n\cdot(\ind(\Jcm)+1)+\ind(\Ic)\label{eq: Cor-unifyrecform - index relation for I_=00007Bc.m.n=00007D^1}
\end{align}
and
\begin{equation}
\ind\left(\Icmn^{M+1}\right)=n\cdot\ind(\Kcm)+\ind(\Ic)\label{eq: Cor-unifyrecform - index relation for I_=00007Bc.m.n=00007D^M+1}
\end{equation}
If $\Ic(V)$ is of type $B$ for $V>4$, then
\begin{equation}
\ind(\Ico)=\ind(\Ic)+\ind(\Jcz)+1=\ind(\Ic)+\ind(K_{[\co,0]}).\label{eq: Cor-unifyrecform - index relation for I_=00007Bc.1=00007D^=00007B1=00007D}
\end{equation}
\end{lem}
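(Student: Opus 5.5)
Since indices are independent of $V>0$ (Remark~\ref{rem: index spectral band}), I will work at a fixed $V>4$ and use Theorem~\ref{thm: V>4 Type A =000026 B}. There every band is of type $A$ or $B$, and the bands $\Icmi$, $\Icmnj$ are the unique bands of $\sigcm$, $\sigcmn$ contained in $\Ic$. This is also why $\Jcm,\Kcm$ are defined at $V>4$: there the spectra are well separated and $\prec$ becomes $\precd$. The idea is to count, for a given band of $\sigcmn$, all bands of $\sigcmn$ lying to its left by grouping them according to the bands of $\sigc$ and $\sigcm$ that contain them. This reproduces the counting identity $q_{[\co,m,n]}=n\,q_{[\co,m]}+q_{\co}$ at the level of a position.

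The key intermediate claim concerns the tower structure at $V>4$: every band of $\sigcmn$ is contained either in a band of $\sigc$ or in a band of $\sigcm$. Moreover, a band of $\sigcm$ contains exactly $n$ bands of $\sigcmn$, and a band of $\sigc$ contains exactly one band of $\sigcmn$ strictly between its consecutive $\sigcm$-subbands (plus one at each extremity). I would get this by applying the forward properties (A), (B), (I) of Theorem~\ref{thm: V>4 Type A =000026 B} simultaneously to all bands of $\sigc$ (with $m$) and of $\sigcm$ (with digit $n$, playing the role of $\co$ with $[\co,m]$). The counts from the $\sigc$-bands are $M+1$ each, and those from the $\sigcm$-bands follow from the forward property of $[\co,m]$ with last digit $n$ and backward type data. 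A cleaner route avoids explicit geometry. By uniqueness and Corollary~\ref{cor: number of A-B bands}-type counting (using $q_{[\co,m,n]}=nq_{[\co,m]}+q_{\co}$ and the type counts), the bands of $\sigcmn$ are partitioned into $\{\Icmnj\}$ attached to bands of $\sigc$ and $n$ bands inside each band of $\sigcm$. Strict disjointness of $\sigcm$ and $\sigc$ bands for $V>4$ then fixes the linear order.

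Given the claim, I prove (\ref{eq: Cor-unifyrecform - index relation for I_=00007Bc.m.n=00007D^=00007Bi=00007D}) by counting the bands of $\sigcmn$ strictly left of $\Icmni$. Each band $I\prec\Ic$ of $\sigc$ contributes its own tower bands, and each band of $\sigcm$ left of $\Icmi$ contributes $n$. Property (I) says $\Icmni$ sits immediately left of $\Icmi$, so all $\sigcm$-bands strictly left of $\Icmi$ count, giving $n\cdot\ind(\Icmi)$. The $\sigc$ contributions, together with the bands $I^{j}_{[\co,m,n]}$, $j<i$, inside $\Ic$, combine to exactly $\ind(\Ic)$. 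This is because each $\sigc$-band $I$ with its $\sigcm$-subbands contributes one extra $\sigcmn$-band per $\sigc$-band beyond the $n$-blocks, namely the $M+1$ versus $M$ bookkeeping. Equation (\ref{eq: Cor-unifyrecform - index relation for I_=00007Bc.m.n=00007D^=00007Bi+1=00007D}) follows by shifting past $\Icmi$, adding $n$. Equations (\ref{eq: Cor-unifyrecform - index relation for I_=00007Bc.m.n=00007D^1}) and (\ref{eq: Cor-unifyrecform - index relation for I_=00007Bc.m.n=00007D^M+1}) are the same count with $\Jcm$ or $\Kcm$ as the nearest $\sigcm$-band to the left or right of $\Ic$. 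For (\ref{eq: Cor-unifyrecform - index relation for I_=00007Bc.1=00007D^=00007B1=00007D}), I take $m=0$, $n=1$, using $\sigma_{[\co,0,1]}$ and the type-$B$ hypothesis. In that case $\Ic$ contains no band of $\sigma_{[\co,0]}$, so $\ind(\Kcz)=\ind(\Jcz)+1$ (cf.\ Figure~\ref{fig: Jcm and Kcm}), and the count gives both expressions.

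The main obstacle is the bookkeeping step, which must show that the $\sigc$-bands to the left of $\Ic$ contribute exactly $\ind(\Ic)$ after subtracting the $n$-blocks. It depends on the precise interleaving (I) together with the exact numbers $M+1$ versus $n$ per band. I would first verify this by induction over bands from left to right, using the index identity at the previous band as hypothesis.
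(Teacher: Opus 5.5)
Your overall plan is the paper's plan: fix some $V>4$, invoke Theorem~\ref{thm: V>4 Type A =000026 B}, and count the bands of $\sigcmn$ to the left of a given one by grouping them according to the bands of $\sigc$ and $\sigcm$ that contain them. However, your key intermediate claim is false. It is not true that every band of $\sigcm$ contains exactly $n$ bands of $\sigcmn$. Applying Theorem~\ref{thm: V>4 Type A =000026 B} to $[\co,m]$ with digit $n$:
\begin{itemize}
\item a type-$A$ band of $\sigcm$ contains exactly $n-1$ bands of $\sigcmn$;
\item only a type-$B$ band of $\sigcm$ contains $n$;
\item the tower bands lie in no band of $\sigcm$, by (B2).
\end{itemize}
In particular each $\Icmi$ contains only $n-1$. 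For instance, with $\co=[0,0]$, $m=2$, $n=1$, the band $[E_{0}(V),0]$ of $\sigma_{[0,0,2]}(V)$ contains none of the three bands of $\sigma_{[0,0,2,1]}(V)$.

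Since the tower bands alone number $\sum_{I}(M_{I}+1)=q_{[\co,m]}$, your partition would have $(n+1)\,q_{[\co,m]}$ elements instead of $q_{[\co,m,n]}=n\,q_{[\co,m]}+q_{\co}$. So the cardinality check you propose as the ``cleaner route'' refutes the partition rather than confirming it. The later steps inherit the error:
\begin{itemize}
\item the $\sigcm$-bands left of $\Icmi$ contribute $n\,\ind(\Icmi)-\indA(\Icmi)$, not $n\,\ind(\Icmi)$;
\item the tower bands left of $\Icmni$ number $m\,\indA(\Ic)+(m+1)\indB(\Ic)+i-1$, which does not ``combine to $\ind(\Ic)$'';
\item ``shifting past $\Icmi$ adds $n$'' is right only because $\Icmi$ contains $n-1$ bands and $\Icmni$ itself supplies the last one. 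With your count the shift would be $n+1$.
\end{itemize}

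So the real content of the lemma is the type bookkeeping, which you leave to an unspecified induction. The paper does it by splitting $\ind=\indA+\indB$ and using three counts:
\begin{itemize}
\item $\indA(\Icmni)=(n-1)\indA(\Icmi)+n\,\indB(\Icmi)$;
\item $\indB(\Icmni)=m\,\indA(\Ic)+(m+1)\indB(\Ic)+i-1$;
\item $\indA(\Icmi)=(m-1)\indA(\Ic)+m\,\indB(\Ic)+i-1$.
\end{itemize}
With these, everything cancels to $n\,\ind(\Icmi)+\ind(\Ic)$. An equivalent route: pair each type-$A$ band of $\sigcm$ with the tower band immediately to its left, giving blocks of $n$. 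Each $\sigc$-band then leaves exactly one unpaired tower band. This is presumably what your ``one extra per $\sigc$-band'' means, but it only works with $n-1$.

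Finally, the last identity cannot be obtained by putting $m=0$. The element $[\co,0,1]$ is not in $\Co$, and Theorem~\ref{thm: V>4 Type A =000026 B} gives no forward structure for the digit $0$. The paper instead writes $\co=[\co',c_{k}]$ with $\co'=[0,0,c_{1},\ldots,c_{k-1}]$, so that $\sigma_{[\co,0]}=\sigma_{\co'}$ and $\sigma_{[\co,1]}=\sigma_{[\co',c_{k},1]}$. In that frame $\Ico$ is an inner type-$A$ band of the type-$B$ band $\Ic$, not a tower band. One then counts $\indA(\Ico)=\indB(\Ic)$, and $\indB(\Ico)$ via the $\sigma_{[\co,0]}$-bands up to $\Jcz$ with multiplicities $c_{k}$ and $c_{k}+1$. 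Your formal substitution happens to yield the right numbers, but it needs this separate argument.
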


\begin{proof}
We start by noting that the index of a spectral band is independent
of $V>0$ (Remark~\ref{rem: index spectral band}) allowing us to
restrict to the case $V>4$ where all spectral bands are either of
type $A$ or of type \textbf{$B$ }by Theorem~\ref{thm: V>4 Type A =000026 B}.
Therefore, within this proof we allow ourselves to assume $V>4$ and
abuse notation, writing just $I$ (meaning interval and not a map)
instead of writing $I(V)$ for some $V>4$. Namely, when writing within
this proof sentences such as ``$I$ is a spectral band of type $A$
(or $B$) and belongs to $\sigc$'', we actually mean that for some
value of $V>4$, $I(V)$ is of type $A$ (or $B$) and belongs to
$\sigc(V)$.

We introduce the following extra notations for the band indices: 
\begin{align*}
\indA(\Ic) & :=\left|\set{I\textrm{ is of type \ensuremath{A} and it belongs to }\sigc}{I\prec\Ic}\right|,\\
\indB(\Ic) & :=\left|\set{I\textrm{ is of type \ensuremath{B} and it belongs to }\sigc}{I\prec\Ic}\right|.
\end{align*}
Clearly, $\ind(\Ic)=\indA(\Ic)+\indB(\Ic)$ for all $\Ic$, see Definition~\ref{def: index of spectral band}.
\begin{figure}[h]
\includegraphics[scale=0.7]{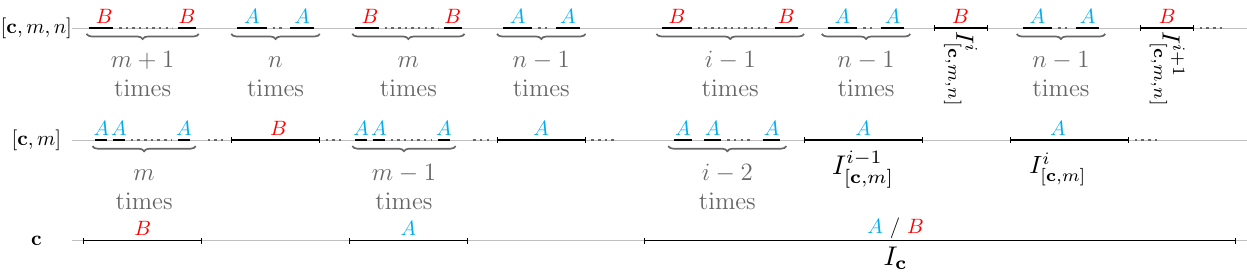} \caption{A sketch for the proof of (\ref{eq: Cor-unifyrecform - index relation for I_=00007Bc.m.n=00007D^=00007Bi=00007D})
and (\ref{eq: Cor-unifyrecform - index relation for I_=00007Bc.m.n=00007D^=00007Bi+1=00007D})
in Lemma~\ref{lem: index identities for spectral bands}.}
\label{fig: Index identities proof: 1st}
\end{figure}

We first assume that $M\geq1$. Start by examining $\Icmni$ and evaluating
$\indB(\Icmni)$ and $\indA(\Icmni)$. The spectral band $\Icmni$
is of type $B$ and belongs to $\sigcmn$. We know that $\Icmni$
is included in $\Ic$ of $\sigc$. There are additional $i-1$ spectral
bands of type $B$ in $\sigcmn$, which are to the left of $\Icmni$
and included in $\Ic$. All other spectral bands of type $B$ to the
left of $\Icmni$ come in groups of either $m$ or $m+1$ and each
such group is included in some spectral band $I$ in $\sigc$ that
is to the left of $\Ic$, see Figure~\ref{fig: Index identities proof: 1st}.
The group is of size $m$ if $I$ is of type $A$ and it is of size
$m+1$ if $I$ is of type $B$. This discussion may be summarized
in the following identity
\begin{equation}
\indB(\Icmni)=m\cdot\indA(\Ic)+(m+1)\cdot\indB(\Ic)+i-1.\label{eq: indB(I_=00007Bcmn=00007D^=00007Bi=00007D}
\end{equation}
We now evaluate $\indA(\Icmni)$. We note that all the spectral bands
of type $A$ to the left of $\Icmni$ come in groups of either $n-1$
or $n$ and each such group is included in some spectral band $I$
in $\sigcm$ that is to the left of $\Icmi$, see Figure~\ref{fig: Index identities proof: 1st}.
The group is of size $n-1$ if $I$ is of type $A$ and it is of size
$n$ if $I$ is of type $B$. This discussion may be summarized in
the following identity
\begin{equation}
\indA(\Icmni)=(n-1)\cdot\indA(\Icmi)+n\cdot\indB(\Icmi).\label{eq: indA(I_=00007Bcmn=00007D^=00007Bi=00007D}
\end{equation}
We now evaluate $\indA(\Icmi)$. We note that there are $i-1$ spectral
bands of type $A$ to the left of $\Icmi$ which are included in $\Ic$.
Every other spectral band in $\sigcm$ of type $A$ to the left of
$\Icmi$ is included in a spectral band of $\sigc$ to the left of
$\Ic$. Specifically, they come in groups of either $m-1$ or $m$
and each group is included in a spectral band $I$ in $\sigc$ to
the left of $\Ic$. The group is of size $m-1$ if $I$ is of type
$A$ and it is of size $m$ if $I$ is of type $B$, see Figure~\ref{fig: Index identities proof: 1st}.
This discussion may be summarized in the following identity
\begin{equation}
\indA(\Icmi)=(m-1)\cdot\indA(\Ic)+m\cdot\indB(\Ic)+i-1.\label{eq: indA(I_=00007Bcm=00007D^=00007Bi=00007D)}
\end{equation}
Combining the Equations~(\ref{eq: indB(I_=00007Bcmn=00007D^=00007Bi=00007D})
and (\ref{eq: indA(I_=00007Bcmn=00007D^=00007Bi=00007D}) together
with the identity $\ind(I)=\indA(I)+\indB(I)$, which holds for all
$I$, gives
\begin{align*}
\ind(\Icmni) & =\indA(\Icmni)+\indB(\Icmni)\\
 & =\left(n\cdot\ind(\Icmi)-\indA(\Icmi)\right)+\left(m\cdot\indA(\Ic)+(m+1)\cdot\indB(\Ic)+i-1\right)\\
 & =n\cdot\ind(\Icmi)+ind(\Ic),
\end{align*}
where in the last line we used (\ref{eq: indA(I_=00007Bcm=00007D^=00007Bi=00007D)}).
This proves Equation~(\ref{eq: Cor-unifyrecform - index relation for I_=00007Bc.m.n=00007D^=00007Bi=00007D}).

To prove (\ref{eq: Cor-unifyrecform - index relation for I_=00007Bc.m.n=00007D^=00007Bi+1=00007D}),
we observe that between $\Icmni$ and $\Icmnii$ there are $n-1$
spectral bands of type $A$ (the bands which are contained in $\Icmi$)
and no spectral bands of type $B$, see Figure~\ref{fig: Index identities proof: 1st}.
We therefore get
\begin{align*}
\ind(\Icmnii) & =\left(\ind(\Icmni)+1\right)+(n-1)=n\cdot(\ind(\Icmi)+1)+ind(\Ic),
\end{align*}
which proves Equation (\ref{eq: Cor-unifyrecform - index relation for I_=00007Bc.m.n=00007D^=00007Bi+1=00007D}).

\begin{figure}[h]
\includegraphics[scale=0.8]{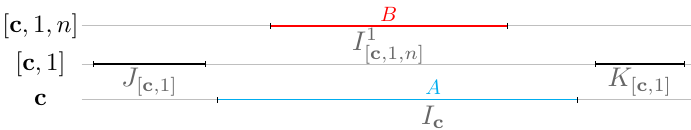} \caption{A sketch for the proof of (\ref{eq: Cor-unifyrecform - index relation for I_=00007Bc.m.n=00007D^1})
in Lemma~\ref{lem: index identities for spectral bands}. We have
$\protect\ind(\protect\Kcm)=\protect\ind(\protect\Jcm)+1$.}
\label{fig: Index identities proof: 2nd}
\end{figure}
For $M\geq1$, Equation~(\ref{eq: Cor-unifyrecform - index relation for I_=00007Bc.m.n=00007D^1})
follows from Equation~(\ref{eq: Cor-unifyrecform - index relation for I_=00007Bc.m.n=00007D^=00007Bi=00007D})
for $i=1$ and $\ind(\Icm^{1})=\ind(\Jcm)+1$. Similarly, Equation~(\ref{eq: Cor-unifyrecform - index relation for I_=00007Bc.m.n=00007D^M+1})
follows for $M\geq1$ from Equation~(\ref{eq: Cor-unifyrecform - index relation for I_=00007Bc.m.n=00007D^=00007Bi+1=00007D})
for $i=M$ and $\ind(\Kcm)=\ind(\Icm^{M})+1$.

For $M=0$, (\ref{eq: Cor-unifyrecform - index relation for I_=00007Bc.m.n=00007D^1})
and (\ref{eq: Cor-unifyrecform - index relation for I_=00007Bc.m.n=00007D^M+1})
follow similar arguments as (\ref{eq: Cor-unifyrecform - index relation for I_=00007Bc.m.n=00007D^=00007Bi=00007D})
and (\ref{eq: Cor-unifyrecform - index relation for I_=00007Bc.m.n=00007D^=00007Bi+1=00007D})
using $\ind(\Kcm)=\ind(\Jcm)+1$ if $M=0$. 

To prove (\ref{eq: Cor-unifyrecform - index relation for I_=00007Bc.1=00007D^=00007B1=00007D})
for the index of $\Ico$ we note the following. There is a bijection
between bands of type $A$ in $\sigma_{[\co,1]}$ and bands of type
$B$ in $\sigc$: a spectral band $\Ic$ in $\sigc$ of type $A$
does not contain any spectral band in $\sigma_{[\co,1]}$ but if $\Ic$
in $\sigc$ is of type $B$, then it contains exactly (using uniqueness
of these bands for $V>4$, see Theorem~\ref{thm: V>4 Type A =000026 B})
one band in $\sigma_{[\co,1]}$ of type $A$. Thus,
\[
\indA(\Ico)=\indB(\Ic)
\]
follows. We denote by $c_{k}$ the last digit in $\co$, namely, $\co:=[0,0,c_{1},\ldots,c_{k}]$.
Similar counting arguments as for (\ref{eq: indB(I_=00007Bcmn=00007D^=00007Bi=00007D})
lead to
\begin{align}
\indB(\Ico)= & c_{k}\cdot\indA(\Jcz)+(c_{k}+1)\cdot\indB(\Jcz)+c_{k}+\begin{cases}
0 & \Jcz\textrm{ is of type }A,\\
1 & \Jcz\textrm{ is of type }B,
\end{cases}\label{eq: indB(I_=00007Bcm=00007D^=00007Bi=00007D}
\end{align}
Moreover, similar counting arguments as in (\ref{eq: indA(I_=00007Bcm=00007D^=00007Bi=00007D)})
imply
\begin{align*}
\indA(\Ic) & =(c_{k}-1)\cdot\indA(\Jcz)+c_{k}\cdot\indB(\Jcz)+c_{k}-1+\begin{cases}
0 & \Jcz\textrm{ is of type }A,\\
1 & \Jcz\textrm{ is of type }B,
\end{cases}\\
 & =\indB(\Ico)-\ind(\Jcz)-1,
\end{align*}
where in the last line we used (\ref{eq: indB(I_=00007Bcm=00007D^=00007Bi=00007D}).
Hence, we arrive at 
\begin{align*}
\ind(\Ico)=\indB(\Ico)+\indA(\Ico) & =\left(\indA(\Ic)+\ind(\Jcz)+1\right)+\indB(\Ic)\\
 & =\ind(\Ic)+\ind(\Jcz)+1\\
 & =\ind(\Ic)+\ind(K_{[\co,0]})
\end{align*}
using $\ind(K_{[\co,0]})=\ind(\Jcz)+1$, which holds by definition.
Thus, (\ref{eq: Cor-unifyrecform - index relation for I_=00007Bc.1=00007D^=00007B1=00007D})
is proven.
\end{proof}

\subsection{Sufficient conditions for the forward type properties\label{subsec: Forward properties}}

In this subsection, we provide various lemmata and corollaries which
allow us to prove the various properties \ref{enu: A1 property}, \ref{enu: A2 property}, \ref{enu: B1 property}, \ref{enu: B2 property}, and \ref{enu: I property}
of the forward type as in Definition~\ref{def: forward type}. These
will be used in the next section, where we prove that backward type
implies forward type (Proposition~\ref{prop: Backward implies forward}).

We start with proving that the interlacing property \ref{enu: I property}
holds under some conditions.
\begin{lem}
\label{lem: Task (I)} Let $V_{1}>0$, $m,n\in\N$, $\co\in\Co$ be
such that $\varphi(\co)\not\in\left\{ -1,\infty\right\} $ and $[\co,m]\in\Co$.
Consider a spectral band $I_{\co}$ in $\sigma_{\co}$ with the associated
spectral bands $\{\Icmi\}^{M}_{i=1}$ and $\{\Icmnj\}^{M+1}_{j=1}$
introduced in Definition~\ref{def: Notion Icmi. Icmnj and M}. If
$1\leq i\leq M$ and $I^{i}_{[\co,m]}(V_{1})\strict I_{\co}(V_{1})$,
 then 
\[
I^{i}_{[\co,m,n]}(V_{1})\prec I^{i}_{[\co,m]}(V_{1})\prec I^{i+1}_{[\co,m,n]}(V_{1}).
\]
\end{lem}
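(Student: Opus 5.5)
We have to prove four strict inequalities between band edges at the fixed coupling $V_{1}$: $L(I^{i}_{[\co,m,n]})<L(\Icmi)$, $R(I^{i}_{[\co,m,n]})<R(\Icmi)$, $L(\Icmi)<L(I^{i+1}_{[\co,m,n]})$ and $R(\Icmi)<R(I^{i+1}_{[\co,m,n]})$. Each one will come from Lemma~\ref{lem: Basics for eigenavlue inequalities}, with $\lo$ an edge of $\Icmi(V_1)$, viewed as an eigenvalue of $\nHcmV(\thcm)\oplus\HcV(\thc)$, and $\mo$ the corresponding edge of $I^{i}_{[\co,m,n]}$ or $I^{i+1}_{[\co,m,n]}$.

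\textbf{Setting up the counting and admissibility.} Fix an edge $\lo\in\{L(\Icmi(V_1)),R(\Icmi(V_1))\}$ and let $\thcm$ be the unique value with $\lo\in\sigma(\nHcmV(\thcm))$ (Lemma~\ref{lem: Floquet-Bloch matrices - n times fundamental domain}). By Lemma~\ref{lem: Floquet-Bloch matrices - n times fundamental domain}(c),(d),
\[
\Ncm=n\,\ind(\Icmi)\quad\text{or}\quad\Ncm=n(\ind(\Icmi)+1)-1 .
\]
The hypothesis $\Icmi(V_1)\strict\Ic(V_1)$ gives that $\lo$ lies in the interior of $\Ic(V_1)$. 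Therefore $\lo\notin\sigma(\HcV(\thc))$ for either $\thc$, since the eigenvalues of $\HcV(\theta)$ are exactly the band edges when $n=1$. It also gives
\[
\Nc=N(\lo;\HcV(\thc))=\ind(\Ic)+\delta,
\]
where $\delta\in\{0,1\}$ according to whether the edge of $\Ic$ lying in $\sigma(\HcV(\thc))$ is $L(\Ic)$ or $R(\Ic)$. Since $\lo\notin\sigma(\HcV(\thc))$, $\lo$ is a simple eigenvalue of the direct sum by Lemma~\ref{lem: Floquet-Bloch matrices - n times fundamental domain}(e), so we get strict inequalities. For $\mo$ an edge of $I^{j}_{[\co,m,n]}$ we have $\Ncmn=\ind(I^j_{[\co,m,n]})$ or $\ind(I^j_{[\co,m,n]})$ (Lemma~\ref{lem: Floquet-Bloch matrices}(b),(c)). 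Lemma~\ref{lem: index identities for spectral bands} then gives $\ind(I^i_{[\co,m,n]})=n\,\ind(\Icmi)+\ind(\Ic)$ and $\ind(I^{i+1}_{[\co,m,n]})=n(\ind(\Icmi)+1)+\ind(\Ic)$.

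\textbf{Choosing $\thc$ and comparing counts.} For each of the four comparisons, $\thc$ is chosen so that the triple $(\thc,\thcm,\thcmn)$ is admissible. Admissibility is determined by the parity criterion Lemma~\ref{lem: Admissibility criterion}, i.e.\ we fix $\delta$ as the parity forces. For the comparison $R(I^i_{[\co,m,n]})<R(\Icmi)$ we take $\lo=R(\Icmi)$ and $\mo=R(I^i_{[\co,m,n]})$. This gives
\[
\Nc+\Ncm=\ind(\Ic)+\delta+n(\ind(\Icmi)+1)-1\quad\text{versus}\quad \Ncmn=n\,\ind(\Icmi)+\ind(\Ic).
\]
The left-hand side exceeds $\Ncmn$ as soon as $n+\delta-1>0$; this always holds, except possibly when $n=1,\delta=0$. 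Then (\ref{eq: lem-Basics for eigenavlue inequalities - 2}) yields $\lo>\mo$. Analogously, for $L(\Icmi)<L(I^{i+1}_{[\co,m,n]})$ we take $\lo=L(\Icmi)$, so $\Nc+\Ncm=\ind(\Ic)+\delta+n\,\ind(\Icmi)<n(\ind(\Icmi)+1)+\ind(\Ic)$ whenever $\delta<n$, and (\ref{eq: lem-Basics for eigenavlue inequalities - 1}) gives $\lo<\mo$. The remaining two inequalities, $L(I^i_{[\co,m,n]})<L(\Icmi)$ and $R(\Icmi)<R(I^{i+1}_{[\co,m,n]})$, compare a left edge with a left edge, resp.\ a right with a right. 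Here the counts differ by roughly $n$, so they are the easy ones.

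\textbf{Main obstacle.} The difficulty is the boundary cases where the count comparison is only weak: $n=1$ combined with the value of $\delta$ forced by admissibility. If the forced choice of $\thc$ is bad, I will instead use the non-admissible triple together with part (\ref{enu: lem-Basics for eigenavlue inequalities - 2}) of Lemma~\ref{lem: Basics for eigenavlue inequalities}. Its hypothesis $\sigma(\HcV(\thc))\cap\Icmi(V_1)=\emptyset$ holds precisely because $\Icmi(V_1)\strict\Ic(V_1)$. That part tolerates a deficit of one in the count, via the shifted eigenvalue of $\nHcmV(\pi-\thcm)$. Note, however, that (\ref{eq: lem-Basics for eigenavlue inequalities - 4}) requires $n\geq2$. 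So for $n=1$ I expect to have to check the parities by hand, using Lemma~\ref{lem: Admissibility criterion}, so that for each comparison at least one of the two routes applies. This case bookkeeping is the step I expect to be hardest.
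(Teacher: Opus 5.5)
Your framework is the paper's: edges of $\Icmi(V_1)$ are simple eigenvalues of $\nHcmV(\thcm)\oplus\HcV(\thc)$ because they lie in the interior of $\Ic(V_1)$; the counts come from Lemmas~\ref{lem: Floquet-Bloch matrices} and~\ref{lem: Floquet-Bloch matrices - n times fundamental domain} together with the index identities; and Lemma~\ref{lem: Basics for eigenavlue inequalities} turns counts into edge inequalities. However, the step that actually carries the argument is not done, and your triage of the four comparisons is backwards. Put $a=\ind(\Ic)$ and $b=\ind(\Icmi)$. Note that $\delta=1$ must mean $L(\Ic)\in\sigma(\HcV(\thc))$, since that eigenvalue lies below $\lo$ and is counted; so $\dR(\lc)=1-\delta$. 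For the two comparisons you call easy, the margins are not of order $n$. With $\lo=L(\Icmi)$, $\mo=L(\Icmni)$ one gets $\Nc+\Ncm-\Ncmn=\delta$. With $\lo=R(\Icmi)$, $\mo=R(\Icmnii)$ one gets $\Ncmn-\Nc-\Ncm=1-\delta$. For the wrong value of $\delta$ both margins are $0$ and Lemma~\ref{lem: Basics for eigenavlue inequalities} gives nothing. So these are precisely the tight cases, while the two you treated have margins $n+\delta-1$ and $n-\delta$. You also leave open whether $n=1$, $\delta=0$ can occur for $R(\Icmni)<R(\Icmi)$. Your proposed fallback could not handle it, since (\ref{eq: lem-Basics for eigenavlue inequalities - 4}) requires $n\geq 2$.

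What is missing is the observation that admissibility fixes $\delta$ to exactly the good value in each case. Insert the index identities into Lemma~\ref{lem: Admissibility criterion}; all congruences below are mod $2$.
For $L(\Icmni)<L(\Icmi)$ the criterion reads $0\equiv 1-\delta$, so $\delta=1$ and the margin is $1$.
For $R(\Icmi)<R(\Icmnii)$ it reads $n\equiv n-\delta$, so $\delta=0$ and the margin is $1$.
For $R(\Icmni)<R(\Icmi)$ it reads $0\equiv n-\delta$, so $\delta\equiv n$ and $n+\delta-1\geq 1$; in particular $n=1$ forces $\delta=1$.
For $L(\Icmi)<L(\Icmnii)$ it reads $n\equiv 1-\delta$, so $\delta\equiv n+1$ and $n-\delta\geq 1$.
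With this written out your argument is complete. It is also slightly more uniform than the paper's. The paper ties $\thc$ to $L(\Ic)$ in the comparison $R(\Icmni)<R(\Icmi)$ and to $R(\Ic)$ in $L(\Icmi)<L(\Icmnii)$. For even $n$ its triples are therefore non-admissible, and it has to invoke part~(\ref{enu: lem-Basics for eigenavlue inequalities - 2}) of Lemma~\ref{lem: Basics for eigenavlue inequalities}. Flipping $\thc$ to the admissible value instead keeps all four cases inside part~(\ref{enu: lem-Basics for eigenavlue inequalities - 1}) and never needs the fallback.
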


\begin{proof}
Let $1\leq i\leq M$ and $V_{1}>0$. We need to show the following
inequalities
\begin{enumerate}
\item $L\left(\Icmni(V_{1})\right)<L\left(\Icmi(V_{1})\right)$,
\item $R\left(\Icmi(V_{1})\right)<R\left(\Icmnii(V_{1})\right)$,
\item $R\left(\Icmni(V_{1})\right)<R\left(\Icmi(V_{1})\right)$,
\item $L\left(\Icmi(V_{1})\right)<L\left(\Icmnii(V_{1})\right)$.
\end{enumerate}
We proceed proving the inequalities above one at a time via an appropriate
application of Lemma~\ref{lem: Basics for eigenavlue inequalities}.
Although the inequalities above and the assumption $I^{i}_{[\co,m]}(V_{1})\strict I_{\co}(V_{1})$
depend on the fixed $V_{1}>0$, we will abbreviate notation, for the
sake of easier reading, and omit the $V_{1}$ dependence in most parts
of this proof.

(a) We aim to apply Lemma~\ref{lem: Basics for eigenavlue inequalities}
for $\lo=L(\Icmi)$, $\mo=L(\Icmni)$. Let $\thc,\thcm,\thcmn\in\{0,\pi\}$
be such that
\[
L(\Ic)\in\sigma\left(\Hc(\thc)\right),\quad L(\Icmi)\in\text{\ensuremath{\sigma\left(\nHcm(\thcm)\right)}}\quad\textrm{and}\quad L(\Icmni)\in\ensuremath{\sigma\left(\Hcmn(\thcmn)\right)}.
\]
These spectral edges, respectively $\thc,\thcm,\thcmn$, are admissible,
as can be verified by using the index relation (\ref{eq: Cor-unifyrecform - index relation for I_=00007Bc.m.n=00007D^=00007Bi=00007D})
of Lemma~\ref{lem: index identities for spectral bands} in the characterization
of admissibility from Lemma~\ref{lem: Admissibility criterion}.
Furthermore, Lemma~\ref{lem: Floquet-Bloch matrices - n times fundamental domain}~(\ref{enu: lem-Floquet-Bloch matrices - n times fundamental domain - 3})
applied to $[\co,m]\in\Co$ implies 
\[
\Ncm:=N\left(L(\Icmi);~\nHcm(\thcm)\right)=n\cdot\ind(\Icmi).
\]
Apply Lemma~\ref{lem: Floquet-Bloch matrices}~(\ref{enu: lem-Floquet-Bloch matrices - 2})
to $[\co,m,n]\in\Co$ and the index relation (\ref{eq: Cor-unifyrecform - index relation for I_=00007Bc.m.n=00007D^=00007Bi=00007D})
of Lemma~\ref{lem: index identities for spectral bands} to conclude
\begin{align*}
\Ncmn:=N\left(L(\Icmni);~\Hcmn(\thcmn)\right) & =\ind(\Icmni)=\ind(\Ic)+n\cdot\ind(\Icmi).
\end{align*}
Since $I^{i}_{[\co,m]}(V_{1})\strict I_{\co}(V_{1})$, we infer $L(\Ic(V_{1}))<L(\Icmi(V_{1}))$
and $\sigma\left(H_{\co,V_{1}}(\thc)\right)\cap\Icmi(V_{1})=\emptyset$.
Hence, $L(\Ic)\in\sigma\left(\Hc(\thc)\right)$ and Lemma~\ref{lem: Floquet-Bloch matrices}~(\ref{enu: lem-Floquet-Bloch matrices - 2})
applied to $\co\in\Co$ lead to
\begin{align*}
\Nc:=N\left(L(\Icmi);~\Hc(\thc)\right) & =N(L(\Ic);~\Hc(\thc))+1=\ind(\Ic)+1.
\end{align*}
Summing up, we obtained $\Nc+\Ncm>\Ncmn$. Moreover, $\sigma\left(\Hc(\thc)\right)\cap\Icmi=\emptyset$
and Lemma~\ref{lem: Floquet-Bloch matrices - n times fundamental domain}~(\ref{enu: lem-Floquet-Bloch matrices - n times fundamental domain - 5})
that $\lo=L(\Icmi)$ is a simple eigenvalue of $\nHcm(\thcm)\oplus\Hc(\thc)$.
Using admissibility, Lemma~\ref{lem: Basics for eigenavlue inequalities}~(\ref{enu: lem-Basics for eigenavlue inequalities - 1})
yields the required inequality $\lo=L(\Icmi)>L(\Icmni)=\mo$.

(b) We aim to apply Lemma~\ref{lem: Basics for eigenavlue inequalities}
for $\lo=R(\Icmi)$, $\mo=R(\Icmnii)$. Let $\thc,\thcm,\thcmn\in\{0,\pi\}$
be such that
\[
R(\Ic)\in\sigma\left(\Hc(\thc)\right),~~R(\Icmi)\in\text{\ensuremath{\sigma\left(\nHcm(\thcm)\right)}}~~\textrm{and}~~R(\Icmnii)\in\ensuremath{\sigma\left(\Hcmn(\thcmn)\right)}.
\]
Then these spectral edges, respectively $\thc,\thcm,\thcmn$, are
admissible by inserting the index relation (\ref{eq: Cor-unifyrecform - index relation for I_=00007Bc.m.n=00007D^=00007Bi+1=00007D})
of Lemma~\ref{lem: index identities for spectral bands} into the
characterization of admissibility from Lemma~\ref{lem: Admissibility criterion}.
Furthermore, Lemma~\ref{lem: Floquet-Bloch matrices - n times fundamental domain}~(\ref{enu: lem-Floquet-Bloch matrices - n times fundamental domain - 4})
applied to $[\co,m]\in\Co$ implies 
\[
\Ncm:=N\left(R(\Icmi);~\nHcm(\thcm)\right)=n\cdot\left(\ind(\Icmi)+1\right)-1.
\]
Apply Lemma~\ref{lem: Floquet-Bloch matrices}~(\ref{enu: lem-Floquet-Bloch matrices - 3})
to $[\co,m,n]\in\Co$ and the index relation (\ref{eq: Cor-unifyrecform - index relation for I_=00007Bc.m.n=00007D^=00007Bi+1=00007D})
of Lemma~\ref{lem: index identities for spectral bands} to conclude
\begin{align*}
\Ncmn:=N\left(R(\Icmnii);~\Hcmn(\thcmn)\right) & =\ind(\Icmnii)\\
 & =n\cdot\left(\ind(\Icmi)+1\right)+\ind(\Ic).
\end{align*}
Since $I^{i}_{[\co,m]}(V_{1})\strict I_{\co}(V_{1})$, we infer $R(\Icmi(V_{1}))<R(\Ic(V_{1}))$
and $\sigma\left(H_{\co,V_{1}}(\thc)\right)\cap\Icmi(V_{1})=\emptyset$.
Hence, Lemma~\ref{lem: Floquet-Bloch matrices}~(c) applied to $\co\in\Co$
leads to
\begin{align*}
\Nc:=N\left(R(\Icmi);~\Hc(\thc)\right) & =N(R(\Ic);~\Hc(\thc))=\ind(\Ic).
\end{align*}
Summing up, we obtained $\Nc+\Ncm<\Ncmn$. Moreover, $\sigma\left(\Hc(\thc)\right)\cap\Icmi=\emptyset$
and Lemma~\ref{lem: Floquet-Bloch matrices - n times fundamental domain}~(\ref{enu: lem-Floquet-Bloch matrices - n times fundamental domain - 5})
that $\lo=R(\Icmi)$ is a simple eigenvalue of $\nHcm(\thcm)\oplus\Hc(\thc)$.
Using admissibility, Lemma~\ref{lem: Basics for eigenavlue inequalities}~(\ref{enu: lem-Basics for eigenavlue inequalities - 1})
yields the required inequality $\lo=R(\Icmi)<R(\Icmnii)=\mo$.

(c) We aim to apply Lemma~\ref{lem: Basics for eigenavlue inequalities}
for $\lo=R(\Icmi)$, $\mo=R(\Icmni)$. Let $\thc,\thcm,\thcmn\in\{0,\pi\}$
be such that
\[
L(\Ic)\in\sigma\left(\Hc(\thc)\right),~~R(\Icmi)\in\text{\ensuremath{\sigma\left(\nHcm(\thcm)\right)}}~~\textrm{and}~~R(\Icmni)\in\ensuremath{\sigma\left(\Hcmn(\thcmn)\right)}.
\]
Lemma~\ref{lem: Floquet-Bloch matrices - n times fundamental domain}~(\ref{enu: lem-Floquet-Bloch matrices - n times fundamental domain - 4})
applied to $[\co,m]\in\Co$ implies 
\[
\Ncm:=N\left(R(\Icmi);~\nHcm(\thcm)\right)=n\cdot\left(\ind(\Icmi)+1\right)-1.
\]
Apply Lemma~\ref{lem: Floquet-Bloch matrices}~(\ref{enu: lem-Floquet-Bloch matrices - 3})
to $[\co,m,n]\in\Co$ and the index relation (\ref{eq: Cor-unifyrecform - index relation for I_=00007Bc.m.n=00007D^=00007Bi=00007D})
of Lemma~\ref{lem: index identities for spectral bands} to conclude
\begin{align*}
\Ncmn:=N\left(R(\Icmni);~\Hcmn(\thcmn)\right) & =\ind(\Icmni)\\
 & =n\cdot\ind(I^{i}_{[c,m]})+\ind(I_{c}).
\end{align*}
Since $I^{i}_{[\co,m]}(V_{1})\strict I_{\co}(V_{1})$, we infer $L(\Ic(V_{1}))<R(\Icmi(V_{1}))<R(\Ic(V_{1}))$
and $\sigma\left(H_{\co,V_{1}}(\thc)\right)\cap\Icmi(V_{1})=\emptyset$.
Hence, $L(\Ic)\in\sigma\left(\Hc(\thc)\right)$ and Lemma~\ref{lem: Floquet-Bloch matrices}~(\ref{enu: lem-Floquet-Bloch matrices - 2})
applied to $\co\in\Co$ lead to 
\[
\Nc:=N\left(R(\Icmi);~\Hc(\thc)\right)=N\left(L(\Ic);~\Hc(\thc)\right)+1=\ind(\Ic)+1.
\]
Thus, $\Nc+\Ncm=\Ncmn+n>\Ncmn$ follows. Moreover, $\sigma\left(\Hc(\thc)\right)\cap\Icmi=\emptyset$
and Lemma~\ref{lem: Floquet-Bloch matrices - n times fundamental domain}~(\ref{enu: lem-Floquet-Bloch matrices - n times fundamental domain - 5})
that $\lo=R(\Icmi)$ is a simple eigenvalue of $\nHcm(\thcm)\oplus\Hc(\thc)$.
Observe that $\thc,\thcm,\thcmn$ are admissible, if and only if $n$
is odd by inserting the index relation (\ref{eq: Cor-unifyrecform - index relation for I_=00007Bc.m.n=00007D^=00007Bi=00007D})
of Lemma~\ref{lem: index identities for spectral bands} into the
characterization of admissibility from Lemma~\ref{lem: Admissibility criterion}.
Thus, if $n$ is odd, the previous considerations with Lemma~\ref{lem: Basics for eigenavlue inequalities}~(\ref{enu: lem-Basics for eigenavlue inequalities - 1})
yield $\lo=R(\Icmi(V))>R(\Icmni)=\mo$.

If $n$ is even, then $\thc,\thcm,\thcmn$ are not admissible. Moreover,
$\Nc+\Ncm-1>\Ncmn$ follows since $n\geq2$ if $n$ is even. Thus,
Lemma~\ref{lem: Basics for eigenavlue inequalities}~(\ref{enu: lem-Basics for eigenavlue inequalities - 2})
with $\sigma\left(\Hc(\thc)\right)\cap\Icmi=\emptyset$implies $\lo=R(\Icmi)>R(\Icmni)=\mo$.

(d) We aim to apply Lemma~\ref{lem: Basics for eigenavlue inequalities}
for $\lo=L(\Icmi)$, $\mo=L(\Icmnii)$. Let $\thc,\thcm,\thcmn\in\{0,\pi\}$
be such that
\[
R(\Ic)\in\sigma\left(\Hc(\thc)\right),\quad L(\Icmi)\in\text{\ensuremath{\sigma\left(\nHcm(\thcm)\right)}}\quad\textrm{and}\quad L(\Icmnii)\in\ensuremath{\sigma\left(\Hcmn(\thcmn)\right)}.
\]
Lemma~\ref{lem: Floquet-Bloch matrices - n times fundamental domain}~(\ref{enu: lem-Floquet-Bloch matrices - n times fundamental domain - 3})
applied to $[\co,m]\in\Co$ implies 
\[
\Ncm:=N\left(L(\Icmi);~\nHcm(\thcm)\right)=n\cdot\ind(\Icmi).
\]
Apply Lemma~\ref{lem: Floquet-Bloch matrices}~(\ref{enu: lem-Floquet-Bloch matrices - 2})
to $[\co,m,n]\in\Co$ and the index relation (\ref{eq: Cor-unifyrecform - index relation for I_=00007Bc.m.n=00007D^=00007Bi+1=00007D})
of Lemma~\ref{lem: index identities for spectral bands} to conclude
\begin{align*}
\Ncmn:=N\left(L(\Icmnii);~\Hcmn(\thcmn)\right) & =\ind(\Icmnii)=n\cdot\left(\ind(I^{i}_{[c,m]})+1\right)+\ind(I_{c}).
\end{align*}
Since $I^{i}_{[\co,m]}(V_{1})\strict I_{\co}(V_{1})$, we infer $L(\Ic(V_{1}))<L(\Icmi(V_{1}))<R(\Ic(V_{1}))$
and $\sigma\left(H_{\co,V_{1}}(\thc)\right)\cap\Icmi(V_{1})=\emptyset$.
Hence, Lemma~\ref{lem: Floquet-Bloch matrices}~(\ref{enu: lem-Floquet-Bloch matrices - 3})
applied to $\co\in\Co$ leads to 
\[
\Nc:=N\left(L(\Icmi);~\Hc(\thc)\right)=N(R(\Ic);~\Hc(\thc))=\ind(\Ic).
\]
Thus, $\Nc+\Ncm<\Ncmn$ follows. Moreover, $\sigma\left(\Hc(\thc)\right)\cap\Icmi=\emptyset$
and Lemma~\ref{lem: Floquet-Bloch matrices - n times fundamental domain}~(\ref{enu: lem-Floquet-Bloch matrices - n times fundamental domain - 5})
that $\lo=L(\Icmi)$ is a simple eigenvalue of $\nHcm(\thcm)\oplus\Hc(\thc)$.
Observe that $\thc,\thcm,\thcmn$ are admissible, if and only if $n\in\N$
is odd by inserting the index relation (\ref{eq: Cor-unifyrecform - index relation for I_=00007Bc.m.n=00007D^=00007Bi+1=00007D})
of Lemma~\ref{lem: index identities for spectral bands} into the
characterization of admissibility from Lemma~\ref{lem: Admissibility criterion}.
Thus, if $n\in\N$ is odd, the previous considerations with Lemma~\ref{lem: Basics for eigenavlue inequalities}~(\ref{enu: lem-Basics for eigenavlue inequalities - 2})
yield $\lo=L(\Icmi)<L(\Icmnii)=\mo$ as claimed.

If $n\in\N$ is even, then $\thc,\thcm,\thcmn$ are not admissible.
Thus, Lemma~\ref{lem: Basics for eigenavlue inequalities}~(\ref{enu: lem-Basics for eigenavlue inequalities - 2})
with $\sigma\left(\Hc(\thc)\right)\cap\Icmi=\emptyset$ and $\Nc+\Ncm<\Ncmn$
imply $\lo=L(\Icmi)<L(\Icmnii)=\mo$.
\end{proof}

The next lemma is tailored towards proving property \ref{enu: B2 property}.

\begin{lem}
\label{lem: Task (B2)} Let $V_{1}>0$, $m,n\in\N$ and $\co\in\Co$
be such that $\varphi(\co)\not\in\left\{ -1,\infty\right\} $ and
$[\co,m]\in\Co$. Consider a spectral band $I_{\co}$ in $\sigma_{\co}$
with the associated spectral bands $\{\Icmi\}^{M}_{i=1}$ and $\{\Icmnj\}^{M+1}_{j=1}$
introduced in Definition~\ref{def: Notion Icmi. Icmnj and M}. Let
$\Jcm$ and $\Kcm$ be the spectral bands associated with $\Ic$ as
defined in Definition~\ref{def: left and right proceeded band}.
If 
\[
\Icmno(V_{1})\strict\Ic(V_{1})\quad\textrm{and}\quad\IcmnN(V_{1})\strict\Ic(V_{1}),
\]
then 
\[
R(\Jcm(V_{1}))<R(\Icmno(V_{1}))\quad\textrm{and}\quad L(\IcmnN(V_{1}))<L(\Kcm(V_{1})).
\]
\end{lem}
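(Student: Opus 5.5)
The plan is to prove each inequality by a single application of Lemma~\ref{lem: Basics for eigenavlue inequalities}, in the same way as the proof of Lemma~\ref{lem: Task (I)}. By symmetry, I only describe the first inequality, $R(\Jcm(V_1))<R(\Icmno(V_1))$. The second, $L(\IcmnN(V_1))<L(\Kcm(V_1))$, is handled by the mirror argument: replace (\ref{eq: Cor-unifyrecform - index relation for I_=00007Bc.m.n=00007D^1}) by (\ref{eq: Cor-unifyrecform - index relation for I_=00007Bc.m.n=00007D^M+1}), and use Lemma~\ref{lem: Floquet-Bloch matrices - n times fundamental domain}~(\ref{enu: lem-Floquet-Bloch matrices - n times fundamental domain - 3}) instead of (\ref{enu: lem-Floquet-Bloch matrices - n times fundamental domain - 4}). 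Throughout I suppress $V_1$.

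First dispose of the trivial case. If $R(\Jcm)\le L(\Ic)$, then the hypothesis $\Icmno\strict\Ic$ gives $R(\Jcm)\le L(\Ic)<L(\Icmno)<R(\Icmno)$, and we are done. So assume $R(\Jcm)>L(\Ic)$.

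Set $\lo=R(\Jcm)\in\sigma(\nHcm(\thcm))$ and $\mo=R(\Icmno)\in\sigma(\Hcmn(\thcmn))$, and choose $\thc$ so that $R(\Ic)\in\sigma(\Hc(\thc))$. Lemma~\ref{lem: Admissibility criterion} with (\ref{eq: Cor-unifyrecform - index relation for I_=00007Bc.m.n=00007D^1}) gives admissibility. Indeed, the index sum is $\ind(\Ic)+n\ind(\Jcm)+n(\ind(\Jcm)+1)+\ind(\Ic)\equiv n$, and the $\dR$-sum is $1+n+1\equiv n \pmod 2$. The counts are as follows.
\begin{itemize}
\item Lemma~\ref{lem: Floquet-Bloch matrices - n times fundamental domain}~(\ref{enu: lem-Floquet-Bloch matrices - n times fundamental domain - 4}) gives $\Ncm=n(\ind(\Jcm)+1)-1$.
\item Lemma~\ref{lem: Floquet-Bloch matrices}~(\ref{enu: lem-Floquet-Bloch matrices - 3}) together with (\ref{eq: Cor-unifyrecform - index relation for I_=00007Bc.m.n=00007D^1}) gives $\Ncmn=n(\ind(\Jcm)+1)+\ind(\Ic)$.
\item If $L(\Ic)<R(\Jcm)\le R(\Ic)$, then the eigenvalues of $\Hc(\thc)$ below $R(\Jcm)$ are exactly one edge of each band to the left of $\Ic$. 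Hence $\Nc=\ind(\Ic)$.
\end{itemize}
Thus $\Nc+\Ncm=\Ncmn-1<\Ncmn$, and Lemma~\ref{lem: Basics for eigenavlue inequalities}~(\ref{enu: lem-Basics for eigenavlue inequalities - 1}) yields $R(\Jcm)\le R(\Icmno)$.

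For strictness, observe that the inequality just obtained gives $R(\Jcm)\le R(\Icmno)<R(\Ic)$. So $\lo\notin\sigma(\Hc(\thc))\cap\Ic$, and it is a simple eigenvalue of $\nHcm(\thcm)$ by Lemma~\ref{lem: Floquet-Bloch matrices - n times fundamental domain}~(\ref{enu: lem-Floquet-Bloch matrices - n times fundamental domain - 5}). Hence $\lo$ is simple for the direct sum, and the inequality is strict.

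The main obstacle is the remaining case $R(\Jcm)>R(\Ic)$. This is not excluded a priori at $V_1$, since $\Jcm\prec\Ic$ is only known for $V>4$. Here $\Nc\ge\ind(\Ic)+1$, so the count above no longer gives the needed strict inequality $\Nc+\Ncm<\Ncmn$. My first attempt is to replace $\lo$ by the auxiliary eigenvalue $\lambda\in\sigma(\nHcm(\pi-\thcm))$ from Lemma~\ref{lem: Floquet-Bloch matrices - n times fundamental domain}~(\ref{enu: lem-Floquet-Bloch matrices - n times fundamental domain - 4}). It lies in $\Jcm$ with count $\Ncm-1$, and flipping $\thcm$ to $\pi-\thcm$ also flips admissibility, which is compensated by taking $\thc$ at $L(\Ic)$ instead. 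Then $\Nc=\ind(\Ic)+1$ (for $\lambda\in(L(\Ic),R(\Ic))$) or larger. I would compare $\lambda$ and the edges of $\Jcm$ with $R(\Icmno)$, and show that $\Jcm$ covering $R(\Ic)\ge R(\Icmno)$ forces $\Jcm$ to reach $R(\Icmno)$ in a way that contradicts the first case via the interlacing count.

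If this fails, I would fall back on a continuity argument in $V$. For $V>4$ we have $R(\Jcm)<L(\Ic)$. Proposition~\ref{prop: Lipschitz spectral edges} lets us move from $V>4$ down towards $V_1$. The first crossing of $R(\Jcm)$ past $R(\Icmno)$ would have to occur while $R(\Jcm)<R(\Ic)$, that is, inside the case already treated, where strict inequality is forced. The difficulty is that the hypothesis $\Icmno\strict\Ic$ is only given at $V_1$, so the continuity argument needs the band containment along the whole path.
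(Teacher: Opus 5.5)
You handle the trivial case $R(\Jcm)\le L(\Ic)$ and the case $L(\Ic)<R(\Jcm)\le R(\Ic)$ correctly. There, your admissibility check, the three counts, and the simplicity argument for strictness agree with the second half of the paper's proof. The gap is the case $R(\Jcm)>R(\Ic)$, which you rightly flag as not excluded at $V_1$ but then leave open. The auxiliary-eigenvalue idea is not carried out, and it is unclear where that $\lambda$ lies relative to $\Ic$, so no contradiction is in sight. The continuity fallback, as you note yourself, needs $\Icmno\strict\Ic$ along a whole range of $V$. At best it proves a weaker lemma with the inclusions assumed for all $V\ge V_1$, not the statement as given.

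The missing idea is short. Keep the same admissible triple $\thc,\thcm,\thcmn$, but apply Lemma~\ref{lem: Basics for eigenavlue inequalities} to the eigenvalue coming from the other summand, $\lo=R(\Ic)\in\sigma(\Hc(\thc))$, with $\mo=R(\Icmno)$.
\begin{itemize}
\item $\Nc=N(R(\Ic);\Hc(\thc))=\ind(\Ic)$ exactly.
\item Monotonicity of the counting function and $R(\Ic)\le R(\Jcm)$ give $\Ncm=N(R(\Ic);\nHcm(\thcm))\le N(R(\Jcm);\nHcm(\thcm))=n(\ind(\Jcm)+1)-1$.
\item Hence $\Nc+\Ncm\le\Ncmn-1$. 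Implication (\ref{eq: lem-Basics for eigenavlue inequalities - 1}), which needs no simplicity, then gives $R(\Ic)\le R(\Icmno)$, contradicting $\Icmno\strict\Ic$.
\end{itemize}
This excludes $R(\Jcm)\ge R(\Ic)$ altogether, after which your argument finishes the proof; this is exactly the paper's route.

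Your ``mirror'' argument for the second inequality has the same hole. There the case $L(\Kcm)\le L(\Ic)$ must be excluded with $\lo=L(\Ic)$ and implication (\ref{eq: lem-Basics for eigenavlue inequalities - 2}). When $L(\Kcm)=L(\Ic)$, the eigenvalue $\lo$ has multiplicity $\Mult_{\lo}=2$, and it is this term that makes the count work. So the mirror is not literally symmetric.
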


\begin{rem*}
It might be that either $\Jcm$ or $\Kcm$ as defined in Definition
\ref{def: left and right proceeded band} do not exist. In such a
case, part of the statement is empty.

Combining Lemma~\ref{lem: Task (I)} and Lemma~\ref{lem: Task (B2)},
we get the following corollary which shows that properties \ref{enu: A1 property},
\ref{enu: B2 property} and \ref{enu: I property} hold under some
conditions.
\end{rem*}
\begin{cor}
\label{cor: Properties A1+B2+I}Let $V_{1}>0$, $m,n\in\N$ and $\co\in\Co$
be such that $\varphi(\co)\not\in\left\{ -1,\infty\right\} $ and
$[\co,m]\in\Co$. Consider a spectral band $I_{\co}$ in $\sigma_{\co}$
with associated spectral bands $\{\Icmi\}^{M}_{i=1}$ and $\{\Icmnj\}^{M+1}_{j=1}$
introduced in Definition~\ref{def: Notion Icmi. Icmnj and M}. If
\[
\Icm^{1}(V_{1}),~\Icm^{M}(V_{1})\strict I_{\co}(V_{1})\qquad\textrm{and}\qquad\Icmn^{1}(V_{1}),~\IcmnN(V_{1})\strict\Ic(V_{1}),
\]
then $\Ic(V_{1})$ satisfies the properties \ref{enu: A1 property},
\ref{enu: B2 property} and \ref{enu: I property}.
\end{cor}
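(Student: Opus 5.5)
The plan is to combine Lemma~\ref{lem: Task (I)} and Lemma~\ref{lem: Task (B2)} with a simple ordering argument that transfers strict containment from the extreme bands $\Icm^{1},\Icm^{M}$ to all intermediate bands. First I would establish the interlacing property \ref{enu: I property}. Since the bands of $\sigcm(V_{1})$ are pairwise disjoint and ordered by index, every $\Icmi(V_{1})$ with $1\leq i\leq M$ lies (weakly) between $\Icm^{1}(V_{1})$ and $\Icm^{M}(V_{1})$. Hence $L(\Ic(V_{1}))<L(\Icm^{1}(V_{1}))\leq L(\Icmi(V_{1}))$ and $R(\Icmi(V_{1}))\leq R(\Icm^{M}(V_{1}))<R(\Ic(V_{1}))$, so $\Icmi(V_{1})\strict\Ic(V_{1})$ for every $1\leq i\leq M$. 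This is exactly \ref{enu: A1 property} at $V_{1}$. Applying Lemma~\ref{lem: Task (I)} for each $i$ then gives $\Icmni\prec\Icmi\prec\Icmnii$ at $V_{1}$, and chaining these over $i=1,\dots,M$ yields \ref{enu: I property}. If $M=0$, both \ref{enu: A1 property} and \ref{enu: I property} are vacuous.

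Next comes \ref{enu: B2 property}: I must show that no $\Icmnj(V_{1})$ is contained in a band of $\sigma_{[\co,m,n,0]}(V_{1})=\sigcm(V_{1})$. Here I would use the interlacing just proved together with Lemma~\ref{lem: Task (B2)}, which applies because $\Icmno(V_{1}),\IcmnN(V_{1})\strict\Ic(V_{1})$. For $2\leq j\leq M$, the chain $\Icm^{j-1}\prec\Icmnj\prec\Icm^{j}$ shows that $\Icmnj$ sticks out to the left of $\Icm^{j}$ and to the right of $\Icm^{j-1}$, so it is contained in neither. Moreover, it cannot be contained in any other band of $\sigcm$, since any such band lies entirely left of $\Icm^{j-1}$ or right of $\Icm^{j}$ and $\Icmnj$ overlaps the region between them. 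For $j=1$, $\Icmno$ satisfies $\Icmno\prec\Icm^{1}$, and Lemma~\ref{lem: Task (B2)} gives $R(\Jcm)<R(\Icmno)$, so it is contained neither in $\Icm^{1}$ nor in $\Jcm$. Bands further to the left lie strictly left of $\Jcm$ and so cannot contain it either. The case $j=M+1$ is symmetric, using $L(\IcmnN)<L(\Kcm)$. When $M=0$, the single band $\Icmno=\IcmnN$ is handled directly by both inequalities of Lemma~\ref{lem: Task (B2)}, since by Definition~\ref{def: left and right proceeded band} the bands $\Jcm$ and $\Kcm$ are then adjacent in index.

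The main obstacle is justifying that the band orderings among $\Jcm$, $\Icmi$ and $\Kcm$, which are defined via $V>4$, persist at $V_{1}$. I would argue this through the index, which is independent of $V$ (Remark~\ref{rem: index spectral band}), and the fact that distinct bands of a fixed $\sigcm(V_{1})$ are disjoint closed intervals ordered by index. The remaining boundary cases, where $\Jcm$ or $\Kcm$ does not exist, are treated as empty statements.
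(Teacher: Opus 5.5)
Your proposal is correct and follows essentially the same route as the paper. You propagate strict inclusion from $\Icm^{1},\Icm^{M}$ to all $\Icmi$ via their ordering, which gives \ref{enu: A1 property}, and Lemma~\ref{lem: Task (I)} then yields \ref{enu: I property}. You then use the interlacing to reduce \ref{enu: B2 property} to showing that $\Icmno$ is not contained in $\Jcm$ and $\IcmnN$ is not contained in $\Kcm$, which is exactly Lemma~\ref{lem: Task (B2)}. Your explicit handling of the $V$-independence of the index ordering and of the case $M=0$ only spells out what the paper's short proof leaves implicit.
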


\begin{proof}
[Proof of Corollary \ref{cor: Properties A1+B2+I}] First, we note
that the condition in the corollary is equivalent to $\Icmn^{j}(V_{1})\strict\Ic(V_{1})$
for all $1\leq j\leq M+1$ and $\Icm^{i}(V_{1})\strict I_{\co}(V_{1})$
for all $1\leq i\leq M$, since
\[
\Icmi\prec I^{i+1}_{[\co,m]}\qquad\textrm{and}\qquad\Icmnj\prec I^{j+1}_{[\co,m,n]}.
\]
The assumption that $I^{i}_{[\co,m]}(V_{1})\strict I_{\co}(V_{1})$
for all $1\leq i\leq M$ is exactly property \ref{enu: A1 property}
of $\Ic(V_{1})$. Moreover, this assumption allows us to apply Lemma~\ref{lem: Task (I)}
and obtain
\[
\Icmn^{i}(V_{1})\prec\Icm^{i}(V_{1})\prec\Icmn^{i+1}(V_{1})\qquad\textrm{for all }1\leq i\leq M.
\]
Thus, $\Ic(V_{1})$ satisfies property \ref{enu: I property}. Furthermore,
these relations imply that each of the bands $\left\{ \Icmn^{j}(V_{1})\right\} ^{M+1}_{j=1}$
is not contained in any of the bands $\left\{ \Icm^{i}(V_{1})\right\} ^{M}_{i=1}$,
which is useful towards proving property \ref{enu: B2 property}.
Recall (Definition~\ref{def: left and right proceeded band}) the
notation of the spectral bands $\Jcm$ and $\Kcm$ associated with
$\Ic$. In order to prove \ref{enu: B2 property}, it is enough to
prove that $\Icmn^{1}(V_{1})$ is not contained $\Jcm(V_{1})$ and
$\IcmnN(V_{1})$ is not contained in $\Kcm(V_{1})$. This follows
from Lemma~\ref{lem: Task (B2)}.
\end{proof}

\begin{proof}
[Proof of Lemma \ref{lem: Task (B2)}] (a) We prove that $R(\Jcm(V_{1}))<R(\Icmno(V_{1}))$.
First we note that this inequality immediately holds if $R(\Jcm(V_{1}))\leq L(\Ic(V_{1}))$,
because $\Icmno(V_{1})\strict\Ic(V_{1})$ by assumption. Therefore,
we assume from now on that $R(\Jcm(V_{1}))>L(\Ic(V_{1}))$. Although
the assumptions and the conclusions of the lemma depend on the fixed
$V_{1}>0$, we will abbreviate notation, for the sake of easier reading,
and omit the $V_{1}$ dependence in most parts of this proof. Let
$\thc,\thcm,\thcmn\in\{0,\pi\}$ be such that 
\[
R(\Ic)\in\sigma\left(\Hc(\thc)\right),\enspace R(\Jcm)\in\text{\ensuremath{\sigma\left(\nHcm(\thcm)\right)}}\enspace\textrm{and}\enspace R(\Icmno)\in\ensuremath{\sigma\left(\Hcmn(\thcmn)\right)}.
\]
Then these spectral edges, respectively $\thc,\thcm,\thcmn$, are
admissible by inserting the index relation (\ref{eq: Cor-unifyrecform - index relation for I_=00007Bc.m.n=00007D^1})
of Lemma~\ref{lem: index identities for spectral bands} into the
characterization of admissibility from Lemma~\ref{lem: Admissibility criterion}.
Furthermore, Lemma~\ref{lem: Floquet-Bloch matrices}~(\ref{enu: lem-Floquet-Bloch matrices - 3})
for $[\co,m,n]\in\Co$ and $\co\in\text{\ensuremath{\Co}}$, the index
relation (\ref{eq: Cor-unifyrecform - index relation for I_=00007Bc.m.n=00007D^1})
of Lemma~\ref{lem: index identities for spectral bands} and Lemma~\ref{lem: Floquet-Bloch matrices - n times fundamental domain}~(\ref{enu: lem-Floquet-Bloch matrices - n times fundamental domain - 4})
for the spectral band $\Jcm$ imply 
\begin{align}
N\left(R(\Icmno);~\Hcmn(\thcmn)\right) & =\ind(\Icmno)\label{eq: proof_Lem_task_B2}\\
 & =\ind(\Ic)+n\cdot(\ind(\Jcm)+1)\nonumber \\
 & =N\left(R(\Ic);~\Hc(\thc)\right)+N\left(R(\Jcm);~\nHcm(\thcm)\right)+1.\nonumber 
\end{align}
 In order to proceed, we first show that $R(\Jcm)<R(\Ic)$. Assume
by contradiction this is not the case, namely $R(\Jcm)\geq R(I_{c})$.
We aim to apply Lemma~\ref{lem: Basics for eigenavlue inequalities}
for $\lo=R(\Ic)$ and $\mo=R(\Icmno)$. With (\ref{eq: proof_Lem_task_B2})
and $R(\Jcm)\geq R(I_{c})$ at hand, we conclude
\begin{align*}
N\left(R(\Icmno);~\Hcmn(\thcmn)\right) & =N\left(R(\Ic);~\Hc(\thc)\right)+N\left(R(\Jcm);~\nHcm(\thcm)\right)+1\\
 & \geq N\left(R(\Ic);~\Hc(\thc)\right)+N\left(R(\Ic);~\nHcm(\thcm)\right)+1.
\end{align*}
Using the notation from Lemma~\ref{lem: Basics for eigenavlue inequalities},
the latter reads $\Ncmn\geq\Nc+\Ncm+1$. Thus, Lemma~\ref{lem: Basics for eigenavlue inequalities}
yields $\mo\geq\lo$ as $\thc,\thcm,\thcmn$ are admissible. On the
other hand, $\Icmno\strict\Ic$ implies $R(\Icmno)=\mo<\lo=R(\Ic)$,
a contradiction. Hence, $R(\Jcm)<R(\Ic)$ follows as claimed.

With this at hand, we continue applying once again Lemma~\ref{lem: Basics for eigenavlue inequalities},
but this time for $\lo=R(\Jcm)$ and $\mo=R(\Icmn^{1})$. Using (\ref{eq: proof_Lem_task_B2})
and $R(\Jcm)<R(\Ic)$, we infer
\begin{align*}
N\left(R(\Icmno);~\!\Hcmn(\thcmn)\right) & =N\left(R(\Ic);~\!\Hc(\thc)\right)+N\left(R(\Jcm);~\!\nHcm(\thcm)\right)+1\\
 & \geq N\left(R(\Jcm);~\!\Hc(\thc)\right)+N\left(R(\Jcm);~\!\nHcm(\thcm)\right)+1.
\end{align*}
Using the notation from Lemma~\ref{lem: Basics for eigenavlue inequalities},
the latter reads $\Ncmn\geq\Nc+\Ncm+1$. Recall that we showed in
the beginning of the proof, $L(\Ic(V_{1}))<R(\Jcm(V_{1}))$. This,
together with $R(\Jcm(V_{1}))<R(\Ic(V_{1}))$, implies that $R(\Jcm(V_{1}))$
is not an eigenvalue of $H_{\co,V_{1}}(\thc)$. Hence, $\lo=R(\Jcm(V_{1}))$
is a simple eigenvalue of $H_{\co,V_{1}}(\thc)\oplus H^{\times n}_{[\co,m],V_{1}}(\thcm)$
using Lemma~\ref{lem: Floquet-Bloch matrices - n times fundamental domain}~(\ref{enu: lem-Floquet-Bloch matrices - n times fundamental domain - 5}).
Thus, Lemma~\ref{lem: Basics for eigenavlue inequalities}~(\ref{enu: lem-Basics for eigenavlue inequalities - 1})
applied with $\Ncmn\geq\Nc+\Ncm+1$ yields that $\mo>\lo$, i.e.,
$R(\Icmno)>R(\Jcm)$, as required.

(b) We prove that $L(\IcmnN(V_{1}))<L(\Kcm(V_{1}))$. First we note
that this inequality immediately holds if $R(\Ic(V_{1}))\leq L(\Kcm(V_{1}))$,
because $\IcmnN(V_{1})\strict\Ic(V_{1})$ by assumption. Therefore,
we assume from now on that $L(\Kcm(V_{1}))<R(\Ic(V_{1}))$. In order
to simplify the notation, we will omit the dependence on $V_{1}$
in the following unless we want to emphasize its dependence. Let $\thc,\thcm,\thcmn\in\{0,\pi\}$
be such that 
\[
L(\Ic)\in\sigma\left(\Hc(\thc)\right),~~L(\Kcm)\in\text{\ensuremath{\sigma\left(\nHcm(\thcm)\right)}}~~\textrm{and}~~L(\IcmnN)\in\ensuremath{\sigma\left(\Hcmn(\thcmn)\right)}.
\]
Then these spectral edges, respectively $\thc,\thcm,\thcmn$, are
admissible by inserting the index relation (\ref{eq: Cor-unifyrecform - index relation for I_=00007Bc.m.n=00007D^M+1})
of Lemma~\ref{lem: index identities for spectral bands} into the
characterization of admissibility from Lemma~\ref{lem: Admissibility criterion}.
By Definition~\ref{def: left and right proceeded band}, we have
$\ind(\Kcm)=\ind(\Icm^{M})+1$. With this at hand, Lemma~\ref{lem: Floquet-Bloch matrices}~(\ref{enu: lem-Floquet-Bloch matrices - 2})
for $[\co,m,n]\in\Co$ and $\co\in\text{\ensuremath{\Co}}$, the index
relation (\ref{eq: Cor-unifyrecform - index relation for I_=00007Bc.m.n=00007D^M+1})
of Lemma~\ref{lem: index identities for spectral bands} and Lemma~\ref{lem: Floquet-Bloch matrices - n times fundamental domain}
(\ref{enu: lem-Floquet-Bloch matrices - n times fundamental domain - 3})
for the spectral band $\Kcm$ imply 
\begin{align}
N\left(L(\IcmnN);~\Hcmn(\thcmn)\right) & =\ind(\IcmnN)\label{eq: proof_Lem_task_B2_(b)}\\
 & =\ind(\Ic)+n\cdot\ind(\Kcm)\nonumber \\
 & =N\left(L(\Ic);~\Hc(\thc)\right)+N\left(L(\Kcm);~\nHcm(\thcm)\right).\nonumber 
\end{align}
 In order to proceed, we first show that $L(\Kcm)>L(\Ic)$. Assume
by contradiction this is not the case, namely $L(\Kcm)\leq L(I_{c})$.
We aim to apply Lemma~\ref{lem: Basics for eigenavlue inequalities}
for $\lo=L(\Ic)$ and $\mo=L(\IcmnN)$. If $L(\Kcm)=L(I_{c})$, then
$\lo$ has multiplicity $\Mult_{\lo}=2$. Thus, the previous identity
(\ref{eq: proof_Lem_task_B2_(b)}) leads to 
\[
N\left(L(\IcmnN);~\Hcmn(\thcmn)\right)<N\left(L(\Ic);~\Hc(\thc)\right)+N\left(L(\Ic);~\nHcm(\thcm)\right)+\Mult_{\lo}-1.
\]
If $L(\Kcm)<L(I_{c})$, then $N\left(L(\Kcm);~\nHcm(\thcm)\right)\leq N\left(L(\Ic);~\nHcm(\thcm)\right)-1$
follows and the multiplicity of $\lo$ satisfies $\Mult_{\lo}\geq1$.
Combing these with the Equation (\ref{eq: proof_Lem_task_B2_(b)})
and $L(\Kcm)\in\sigma\left(\nHcm(\thcm)\right)$, we conclude 
\begin{align*}
N\left(L(\IcmnN);~\!\Hcmn(\thcmn)\right) & =N\left(L(\Ic);~\!\Hc(\thc)\right)+N\left(L(\Kcm);~\!\nHcm(\thcm)\right)\\
 & <N\left(L(\Ic);~\!\Hc(\thc)\right)+N\left(L(\Ic);~\!\nHcm(\thcm)\right)+\mathbf{\Mult_{\lo}}-1.
\end{align*}
Using the notation from Lemma~\ref{lem: Basics for eigenavlue inequalities},
the latter reads $\Ncmn<\Nc+\Ncm+\Mult_{\lo}-1$ whenever $L(\Kcm)\leq L(I_{c})$.
Thus, Lemma~\ref{lem: Basics for eigenavlue inequalities}~(\ref{enu: lem-Basics for eigenavlue inequalities - 1})
yields $\mo\leq\lo$ as $\thc,\thcm,\thcmn$ are admissible. On the
other hand, $\IcmnN\strict\Ic$ implies $L(\IcmnN)=\mo>\lo=L(\Ic)$,
a contradiction. Hence, $L(\Kcm)>L(\Ic)$ follows as claimed.

With this at hand, we continue applying once again Lemma~\ref{lem: Basics for eigenavlue inequalities},
but this time for $\lo=L(\Kcm)$ and $\mo=L(\IcmnN)$. Using (\ref{eq: proof_Lem_task_B2_(b)}),
the inequality $L(\Kcm)>L(\Ic)$ and $L(I_{\co})\in\sigma(\Hc(\thc)$,
we infer
\begin{align*}
N\left(L(\IcmnN);~\Hcmn(\thcmn)\right) & =N\left(L(\Ic);~\Hc(\thc)\right)+N\left(L(\Kcm);~\nHcm(\thcm)\right)\\
 & <N\left(L(\Kcm);~\Hc(\thc)\right)+N\left(L(\Kcm);~\nHcm(\thcm)\right).
\end{align*}
Using the notation from Lemma~\ref{lem: Basics for eigenavlue inequalities},
the latter reads $\Ncmn<\Nc+\Ncm$. Recall that we showed in the beginning
of the proof, $L(\Kcm(V_{1}))<R(\Ic(V_{1}))$. This, together with
$L(\Ic(V_{1}))<L(\Kcm(V_{1}))$, implies that $L(\Kcm(V_{1}))$ is
not an eigenvalue of $H_{\co,V_{1}}(\thc)$. Hence, $\lo=L(\Kcm(V_{1}))$
is a simple eigenvalue of $H_{\co,V_{1}}(\thc)\oplus H^{\times n}_{[\co,m],V_{1}}(\thcm)$
using Lemma~\ref{lem: Floquet-Bloch matrices - n times fundamental domain}~(\ref{enu: lem-Floquet-Bloch matrices - n times fundamental domain - 5}).
Thus, Lemma~\ref{lem: Basics for eigenavlue inequalities} applied
with $\Ncmn<\Nc+\Ncm$ leads to $\mo<\lo$, i.e., $L(\IcmnN)<L(\Kcm)$.
\end{proof}

Next we show that the assumptions in the previous Corollary~\ref{cor: Properties A1+B2+I}
are satisfied whenever the spectral band is of backward type $A$
or $B$. The proofs of the previous lemmata in this subsection were
mainly based on applications of Lemma~\ref{lem: Basics for eigenavlue inequalities}.
This lemma will keep being applied in the next proofs, but we will
also need to make use of some trace identities, as appear in Lemma~\ref{lem: Traces and spectral edges}
and Lemma~\ref{lem: Trace estimates =00005Bc.m.n=00005D}.
\begin{lem}
\label{lem: Backward Implies Bands strictly included} Let $m\in\N$
and $\co\in\Co$ be such that $\varphi(\co)\in(0,1)$. Let $V_{1}>0$
and $I_{\co}$ be a spectral band in $\sigc$ with associated spectral
bands $\{\Icmi\}^{M}_{i=1}$ and $\{\Icmo^{j}\}^{M+1}_{j=1}$ introduced
in Definition~\ref{def: Notion Icmi. Icmnj and M}. If either
\end{lem}

\begin{itemize}
\item $I_{\co}(V)$ is of backward type A for all $V\geq V_{1}$ and $M:=m-1$,
or
\item $I_{\co}(V)$ is of backward type B for all $V\geq V_{1}$ and $M:=m$,
\end{itemize}
then for all $V\geq V_{1}$,
\[
I^{1}_{[\co,m,1]}(V),~I^{M+1}_{[\co,m,1]}(V)\strict I_{\co}(V)\qquad\text{and}\qquad\Icm^{1}(V),~\Icm^{M}(V)\strict I_{\co}(V).
\]

\begin{rem*}
We have to exclude the cases $\varphi(\co)\in\{0,\pm1,\infty\}$ so
that we can apply Lemma~\ref{lem: Trace estimates =00005Bc.m.n=00005D}~(\ref{enu: prop-Trace estimates =00005Bc.m.n=00005D - 3}).
\end{rem*}
\begin{proof}
The claim follows once we show that for all $V\geq V_{1}$,
\begin{align}
L(\Ic(V)) & <\min\left\{ L(\Icmo^{1}(V)),~L(\Icm^{1}(V))\right\} ,\nonumber \\
\max\left\{ R(\Icm^{M}(V)),~R(\Icmo^{M+1}(V))\right\}  & <R(\Ic(V)).\label{eq: proof Backward Implies Bands strictly included}
\end{align}
Assume by contradiction that (\ref{eq: proof Backward Implies Bands strictly included})
does not hold for some $V\geq V_{1}$. Due to Theorem~\ref{thm: V>4 Type A =000026 B},
these strict inequalities in (\ref{eq: proof Backward Implies Bands strictly included})
hold for $V>4$. Thus, the continuity of the spectral band edges in
$V>0$ (Corollary~\ref{prop: Lipschitz spectral edges}) implies
that the maximum
\[
V_{2}:=\max\set{V\geq V_{1}}{(\text{\ref{eq: proof Backward Implies Bands strictly included}})\textrm{ does not hold}}
\]
exists and $V_{2}\in\left[V_{1},4\right]$. Due to Lemma \ref{lem: Task (I)},
the strict inclusions $\Icm^{1}(V)\strict I_{\co}(V)$ and $\Icm^{M}(V)\strict I_{\co}(V)$
for $V>V_{2}$ yield
\[
L(\Icmo^{1}(V))<L(\Icm^{1}(V))\qquad\textrm{and}\qquad R(\Icm^{M}(V))<R(\Icmo^{M+1}(V))\qquad\textrm{for}\;V>V_{2}.
\]
Let $\Jcm$ and $\Kcm$ be the spectral bands associated with $\Ic$
(Definition~\ref{def: left and right proceeded band}). Since the
strict inclusions $\Icmo^{1}(V)\strict I_{\co}(V)$ and $\Icmo^{M+1}(V)\strict I_{\co}(V)$
hold for $V>V_{2}$, Lemma~\ref{lem: Task (B2)} asserts for $V>V_{2}$,
\[
R(\Jcm(V))<R(\Icmo^{1}(V))\quad\textrm{and}\quad L(\Icmo^{M+1}(V))<L(\Kcm(V)).
\]
Note that we have $R(\Icmo^{1}(V))\leq R(\Icmo^{M+1}(V))$ and $L(\Icmo^{1}(V))\leq L(\Icmo^{M+1}(V))$
for $V>0$. Hence, the continuity of the spectral band edges in $V>0$
(Corollary~\ref{prop: Lipschitz spectral edges}) leads to
\begin{align}
L(\Icmo^{1}(V_{2})) & \leq\min\left\{ L(\Icm^{1}(V_{2})),L(\Kcm(V_{2}))\right\} ,\nonumber \\
\max\left\{ R(\Icm^{M}(V_{2})),R(\Jcm(V_{2}))\right\}  & \leq R(\Icmo^{M+1}(V_{2})).\label{eq: proof Backward Implies Bands strictly included -V2}
\end{align}
Note that the spectral bands $\Jcm$ and $\Kcm$ may not exist simplifying
our considerations below. This in particular implies
\[
V_{2}=\max\set{V\geq V_{1}}{L(\Ic(V))=L(\Icmo^{1}(V))~\textrm{or}~R(\Ic(V))=R(\Icmo^{M+1}(V))}.
\]
We continue proving that this leads to a contradiction.

\underline{Case 1:} We show that $L(\Ic(V_{2}))=L(\Icmo^{1}(V_{2}))$
yields a contradiction. Set $E:=L(\Ic(V_{2}))$. Thus, $\left|t_{[\co,m,1]}(E;~V_{2})\right|=2$
follows from Lemma~\ref{lem: Traces and spectral edges} (\ref{enu: prop-Traces and spectral edges - 1}).
Since $\Ic(V_{2})$ is of backward type $A$ or $B$ (using $V_{2}\geq V_{1}$)
and $\varphi(\co)\in(0,1)$, Lemma~\ref{lem: Trace estimates =00005Bc.m.n=00005D}~(\ref{enu: prop-Trace estimates =00005Bc.m.n=00005D - 3})
yields $\left|\tcm(E;~V_{2})\right|<2$. Hence, $E$ must lie in the
interior of a spectral band in $\sigcm(V_{2})$ by Lemma~\ref{lem: Traces and spectral edges}
(\ref{enu: prop-Traces and spectral edges - 1}). Thus, Equation~(\ref{eq: proof Backward Implies Bands strictly included -V2})
and $L(\Ic(V_{2}))=L(\Icmo^{1}(V_{2}))$ lead to
\[
E=L(\Ic(V_{2}))<R(\Jcm(V_{2})).
\]
Note that if $\Jcm$ does not exist, then there is no spectra to the
left of $L(\Icm^{1}(V_{2}))$ contradicting $\left|\tcm(E;~V_{2})\right|<2$
and (\ref{eq: proof Backward Implies Bands strictly included -V2}).
Hence, we may continue assuming that $\Jcm$ exists. Next we aim to
apply Lemma~\ref{lem: Basics for eigenavlue inequalities} for $\lo=L(\Ic(V_{2}))$
and $\mo=L(\Icmo^{1}(V_{2}))$. For the sake of simplification, we
drop the $V_{2}$ notation in the following. Let $\thc,\thcm,\theta_{[\co,m,1]}\in\{0,\pi\}$
be such that
\[
L(\Ic)\in\sigma\left(\Hc(\thc)\right),~R(\Jcm)\in\sigma\left(H^{\times1}_{[\co,m]}(\thcm)\right)~\textrm{and}~L(\Icmo^{1})\in\sigma\left(\Hcmo(\theta_{[\co,m,1]})\right).
\]
Then these spectral edges, respectively $\thc,\thcm,\theta_{[\co,m,1]}$,
are admissible by inserting the index relation (\ref{eq: Cor-unifyrecform - index relation for I_=00007Bc.m.n=00007D^1})
into the characterization of admissibility from Lemma~\ref{lem: Admissibility criterion}
for $n=1$. With this at hand, and Lemma~\ref{lem: Floquet-Bloch matrices}~(\ref{enu: lem-Floquet-Bloch matrices - 2})
applied to $\co,[\co,m,1]\in\Co$ leads to
\begin{align*}
\Nc & :=N\left(L(\Ic);~\Hc(\thc)\right)=\ind(\Ic)
\end{align*}
and using (\ref{eq: Cor-unifyrecform - index relation for I_=00007Bc.m.n=00007D^1})
\[
N_{[\co,m,1]}:=N\left(L(\Icmo^{1});~\Hcmo(\theta_{[\co,m,1]})\right)=\ind(\Icmo^{1})=\ind(\Jcm)+1+\ind(\Ic).
\]
Furthermore, $L(\Ic)<R(\Jcm)$ and Lemma \ref{lem: Floquet-Bloch matrices - n times fundamental domain}
(\ref{enu: lem-Floquet-Bloch matrices - n times fundamental domain - 4})
applied to $[\co,m]\in\Co$ for $n=1$ lead to
\begin{align*}
\Ncm & :=N\left(L(\Ic);~\Hcm^{\times1}(\thcm)\right)\leq N\left(R(\Jcm);~\Hcm^{\times1}(\thcm)\right)=\ind(\Jcm).
\end{align*}
Thus, $\Nc+\Ncm<N_{[\co,m,1]}$ follows. Since $\lo=E=L(\Ic(V_{2}))$
lies in the interior of a spectral band in $\sigcm(V_{2})$ and the
eigenvalues of $H^{\times1}_{[\co,m],V_{2}}(\thcm)$ are contained
in the spectral band edges of $\sigcm(V_{2})$ (by Lemma~\ref{lem: Floquet-Bloch matrices - n times fundamental domain}),
we conclude that $\lo$ is not an eigenvalue of $H^{\times1}_{[\co,m],V_{2}}(\thcm)$.
Thus, $\lo$ is a simple eigenvalue of $H^{\times1}_{[\co,m],V_{2}}(\thcm)\oplus H_{\co,V_{2}}(\thc)$
using Lemma~\ref{lem: Floquet-Bloch matrices - n times fundamental domain}~(\ref{enu: lem-Floquet-Bloch matrices - n times fundamental domain - 5}).
Hence, Lemma~\ref{lem: Basics for eigenavlue inequalities}~(\ref{enu: lem-Basics for eigenavlue inequalities - 1})
yields 
\[
\lo=L(\Ic(V_{2}))<L(\Icmo^{1}(V_{2}))=\mo,
\]
 contradicting that these two values are equal by the initial assumption
of the considered case.

\underline{Case 2:} Similarly as in Case~1, $R(\Icmo^{M+1}(V_{2}))=R(\Ic(V_{2}))$
yields a contradiction.
\end{proof}

We have seen that Corollary~\ref{cor: Properties A1+B2+I} is set
towards proving the forward properties \ref{enu: A1 property}, \ref{enu: B2 property},
\ref{enu: I property}. Next, we aim to prove the forward property
\ref{enu: B1 property}, (also called the \emph{tower} property)\emph{.}
\begin{cor}
\label{cor: Tower property} Let $V_{1}>0$, $m\in\N$ and $\co\in\Co$
be such that $\varphi(\co)\not\in\left\{ -1,\infty\right\} $ and
$[\co,m]\in\Co$. Consider a spectral band $I_{\co}$ in $\sigma_{\co}$
with associated spectral bands $\{\Icmi\}^{M}_{i=1}$ and $\{\Icmnj\}^{M+1}_{j=1}$
introduced in Definition~\ref{def: Notion Icmi. Icmnj and M}. If
$1\leq j\leq M+1$ and $\Icmn^{j}(V)\strict\Ic(V)$ holds for all
$V\geq V_{1}$ and all $n\in\N$, then
\[
\Icmn^{j}(V)\strict I^{j}_{[\co,m,n-1]}(V)
\]
holds for all $n\in\N$ and $V\geq V_{1}$ where $I^{j}_{[\co,m,0]}(V)=\Ic(V)$.
\end{cor}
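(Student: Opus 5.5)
The plan is a continuity argument in $V$, the same scheme as in the proof of Lemma~\ref{lem: Backward Implies Bands strictly included}. Fix $1\le j\le M+1$ and argue by induction on $n\in\N$. For $n=1$ we have $I^{j}_{[\co,m,0]}=\Ic$, so the claim is exactly the hypothesis $\Icmn^{j}(V)\strict\Ic(V)$. Assume now that $I^{j}_{[\co,m,n-1]}(V)\strict\Ic(V)$ for all $V\ge V_1$ (which is in any case part of the hypothesis). By Theorem~\ref{thm: V>4 Type A =000026 B}, the tower inclusion $\Icmn^{j}(V)\strict I^{j}_{[\co,m,n-1]}(V)$ holds for all $V>4$. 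Suppose it fails for some $V\ge V_1$. Proposition~\ref{prop: Lipschitz spectral edges} gives continuity of the band edges, so
\[
V_2:=\max\{V\ge V_1:\ \Icmn^{j}(V)\not\strict I^{j}_{[\co,m,n-1]}(V)\}
\]
exists and lies in $[V_1,4]$. At $V_2$ we still have the weak inclusion $\Icmn^{j}(V_2)\subseteq I^{j}_{[\co,m,n-1]}(V_2)$, with equality of at least one edge. Say $E:=L(\Icmn^{j}(V_2))=L(I^{j}_{[\co,m,n-1]}(V_2))$; the right-edge case is symmetric. In particular
\[
|t_{[\co,m,n]}(E,V_2)|=|t_{[\co,m,n-1]}(E,V_2)|=2
\]
by Lemma~\ref{lem: Traces and spectral edges}~(\ref{enu: prop-Traces and spectral edges - 1}). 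Moreover, $E$ lies in the interior of $\Ic(V_2)$ by the hypothesis, so $E\notin\sigma(H_{\co,V_2}(\theta))$ for either $\theta\in\{0,\pi\}$.

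To reach a contradiction I would mimic Case~1 of Lemma~\ref{lem: Backward Implies Bands strictly included}. The lengths satisfy $q_{[\co,m,n]}=q_{[\co,m,n-1]}+q_{[\co,m]}$ (Lemma~\ref{lem: periods of three approximants}, applied to $[\co,m,n-1]$ and $[\co,m]$). So I would view $H_{[\co,m,n],V}$ as a finite-rank perturbation of $H_{[\co,m,n-1],V}\oplus H_{[\co,m],V}$, or equivalently as the decomposition $H^{\times n}_{[\co,m],V}\oplus H_{\co,V}$ used in Theorem~\ref{thm: perturbation thm for our matrices}. I would then apply Lemma~\ref{lem: Basics for eigenavlue inequalities} with $\lo=E$ and $\mo=L(\Icmn^{j}(V_2))$. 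The counting numbers would come from Lemma~\ref{lem: Floquet-Bloch matrices}, Lemma~\ref{lem: Floquet-Bloch matrices - n times fundamental domain} and the index identities of Lemma~\ref{lem: index identities for spectral bands}. For $j\le M$ these are~(\ref{eq: Cor-unifyrecform - index relation for I_=00007Bc.m.n=00007D^=00007Bi=00007D}) and~(\ref{eq: Cor-unifyrecform - index relation for I_=00007Bc.m.n=00007D^=00007Bi+1=00007D}), comparing $n$ and $n-1$; for $j=1$ and $j=M+1$ they are~(\ref{eq: Cor-unifyrecform - index relation for I_=00007Bc.m.n=00007D^1}) and~(\ref{eq: Cor-unifyrecform - index relation for I_=00007Bc.m.n=00007D^M+1}). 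Admissibility of the chosen $\theta$'s would be checked with Lemma~\ref{lem: Admissibility criterion}. Because $E$ is not an eigenvalue of $H_{\co,V_2}(\thc)$, Lemma~\ref{lem: Floquet-Bloch matrices - n times fundamental domain}~(\ref{enu: lem-Floquet-Bloch matrices - n times fundamental domain - 5}) should give simplicity of $\lo$. The resulting strict inequality $\lo<\mo$ then contradicts $\lo=\mo$.

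As a complementary, purely trace-theoretic route for the same contradiction, set $x=t_{[\co,m]}(E,V_2)$ and use the recursion of Lemma~\ref{lem: Traces and spectral edges}~(\ref{enu: prop-Traces and spectral edges - 3}) applied to $[\co,m]$:
\[
t_{[\co,m,k+1]}=t_{[\co,m]}\,t_{[\co,m,k]}-t_{[\co,m,k-1]}.
\]
Combining this with the Fricke--Vogt invariant $x^2+y^2+z^2-xyz-4=V_2^2$ for consecutive traces gives $(x\mp2)^2=V_2^2$. I would then try to show that this is incompatible with $E$ lying in the interior of $I^{j}_{[\co,m,n-1]}$'s ancestors, or with the estimates of Lemma~\ref{lem: Trace estimates =00005Bc.m.n=00005D}.

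The main obstacle is the middle step: the exact bookkeeping of $N_{\co}$, $N_{[\co,m]}$ and $N_{[\co,m,n]}$ at $E$. This requires knowing where $E$ sits relative to the bands of $\sigma_{[\co,m]}(V_2)$, namely inside some $\Icm^{i}$, inside $\Jcm$ or $\Kcm$, or in a gap. I expect a short case split there, handled with Lemma~\ref{lem: Task (I)} and Lemma~\ref{lem: Task (B2)} for $V>V_2$ together with continuity down to $V_2$. In the non-admissible cases I would use part~(\ref{enu: lem-Basics for eigenavlue inequalities - 2}) of Lemma~\ref{lem: Basics for eigenavlue inequalities}.
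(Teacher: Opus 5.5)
Your proposal has a genuine gap at exactly the step you call the main obstacle, and the setup that feeds into it does not match the tools you cite. Lemma~\ref{lem: Basics for eigenavlue inequalities} requires $\lo\in\sigma(X)$. With $X=H^{\times n}_{[\co,m],V_2}(\thcm)\oplus H_{\co,V_2}(\thc)$ and $\lo=E$ this fails in general. You observe yourself that $E\notin\sigma(H_{\co,V_2}(\thc))$, and nothing forces $E$ into $\sigma(H^{\times n}_{[\co,m],V_2}(\thcm))$ either. Those eigenvalues satisfy $t_{[\co,m]}=2\cos\frac{\thcm+2\pi k}{n}$, while your Fricke--Vogt computation only gives $t_{[\co,m]}(E,V_2)\in\{\pm2\pm V_2\}$. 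So ``simplicity of $\lo$'' has no content in that setting.

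The two decompositions you call equivalent are different instances of Theorem~\ref{thm: perturbation thm for our matrices}. The one in which $E$ really is an eigenvalue is $H_{[\co,m,n-1]}\oplus H_{[\co,m]}$, i.e.\ the triple $([\co,m],[\co,m,n-1],[\co,m,n])$, using $\varphi([\co,m,n])=\varphi([\co,m,n-1,1])$ for $n\ge2$. Even there you need the strict count $N_{[\co,m]}+N_{[\co,m,n-1]}<N_{[\co,m,n]}$, because equality gives no comparison between $\lo$ and $\mo$. By the index identities, this requires $E$ to lie left of the right edge of the $\sigma_{[\co,m]}(V_2)$-band immediately preceding $I^{j}_{[\co,m]}$ (resp.\ $K_{[\co,m]}$ when $j=M+1$). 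That is the step you leave open. The tools you propose for it, Lemmas~\ref{lem: Task (I)} and~\ref{lem: Task (B2)}, need $I^{i}_{[\co,m]}(V)\strict\Ic(V)$, respectively strict inclusion of both $I^{1}_{[\co,m,n]}$ and $I^{M+1}_{[\co,m,n]}$ in $\Ic$, for $V>V_2$. The corollary only assumes an inclusion for the single index $j$. The trace route is not an argument either: $(x\mp2)^2=V_2^2$ is not contradictory on its own.

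The missing idea is to use the induction hypothesis as information about \emph{backward type}, not about $\Ic$; as written, your induction step never really uses it. Set $\co'=[\co,m,n-1]$ and $I_{\co'}:=I^{j}_{[\co,m,n-1]}$. Since $I^{j}_{[\co,m,n-1]}(V)\strict I^{j}_{[\co,m,n-2]}(V)$ and $\sigma_{[\co,m,n-1,-1]}=\sigma_{[\co,m,n-2]}$ (Proposition~\ref{Prop-traceMaps}), $I_{\co'}$ is of backward type $B$ for all $V\ge V_1$, and $\varphi(\co')\in(0,1)$. Moreover $\sigma_{[\co,m,n]}=\sigma_{[\co',1]}$, and by uniqueness for $V>4$ the band $I^{j}_{[\co,m,n]}$ is the band $I^{1}_{[\co',1]}$ associated with $I_{\co'}$. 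Lemma~\ref{lem: Backward Implies Bands strictly included}, applied to $\co'$ with $m=1$, then gives $I^{1}_{[\co',1]}(V)\strict I_{\co'}(V)$ for all $V\ge V_1$, which is the claim. This is the paper's proof.

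The continuity/interlacing argument you sketch lives inside that lemma, and it avoids your obstacle by moving the contradiction one level deeper, to $[\co',1,1]$. There $\lo$ is an edge of $I_{\co'}$ itself, hence an eigenvalue of $H_{\co'}(\theta_{\co'})$. The trace estimate, which needs $\varphi(\co')\in(0,1)$ and the backward type of $I_{\co'}$, places $\lo$ inside a band of $\sigma_{[\co',1]}$ to the left of $I^{1}_{[\co',1]}$. The index identity for $I^{1}_{[\co',1,1]}$ then gives the strict count. Lemma~\ref{lem: Task (I)} is used there only for $I_{\co'}$'s own associated bands, where its hypothesis holds for $V>V_2$ by maximality of $V_2$.
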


\begin{proof}
The proof is by induction over $n\in\N$. The induction base ($n=1$)
holds trivially since $I^{j}_{[\co,m,1]}(V)\strict I_{\co}(V)$ for
all \textbf{$V\geq V_{1}$} and $\sigma_{[\co,m,n-1]}(V)=\sigma_{\co}(V)$
if $n=1$ by Proposition~\ref{Prop-traceMaps}~(\ref{enu:.Prop-traceMaps - 1}).

For the induction step, suppose $I^{j}_{[\co,m,n]}(V)\strict I^{j}_{[\co,m,n-1]}(V)$
holds for all $V\geq V_{1}$. We show that $I^{j}_{[\co,m,n+1]}(V)\strict I^{j}_{[\co,m,n]}(V)$
holds for all $V\geq V_{1}$. Due to Proposition~\ref{Prop-traceMaps}~(\ref{enu:.Prop-traceMaps - 1}),
we have $\sigma_{[\co,m,n+1]}(V)=\sigma_{[\co,m,n,1]}(V)$. Furthermore,
$I^{j}_{[\co,m,n+1]}(V)\strict I^{j}_{[\co,m,n]}(V)$ holds for $V>4$
since $\Ic(V)$ is either of type $A$ or $B$ for $V>4$ by Theorem~\ref{thm: V>4 Type A =000026 B}.
Thus, $I^{j}_{[\co,m,n+1]}(V)$ equals to the unique spectral band
$I^{1}_{[\co,m,n,1]}(V)$ of type $A$ that is strictly contained
in $I^{j}_{[\co,m,n]}(V)$ for $V>4$. Hence, it suffices to prove
$I^{1}_{[\co,m,n,1]}(V)\strict I^{j}_{[\co,m,n]}(V)$ for all $V\geq V_{1}$.

Let $V\geq V_{1}$. By induction hypothesis, we have $I^{j}_{[\co,m,n]}(V)\strict I^{j}_{[\co,m,n-1]}(V)$
for all $V\geq V_{1}$, namely $I^{j}_{[\co,m,n]}(V)$ is of backward
type $B$ for all $V\geq V_{1}$. Furthermore, $\varphi([\co,m,n])\in(0,1)$
holds as $m,n\in\N$. Thus, Lemma~\ref{lem: Backward Implies Bands strictly included}
applied to $[\co,m,n]$ implies $I^{1}_{[\co,m,n,1]}(V)\strict I^{j}_{[\co,m,n]}(V)$
for all $V\geq V_{1}$.
\end{proof}

The next lemma is the crucial ingredient to prove the forward property
\ref{enu: A2 property}.
\begin{lem}
\label{lem: task (B1)} Let $V_{1}>0$ and $\co\in\Co$ be such that
$\varphi(\co)\in(0,1)$. Consider a spectral band $V\mapsto I_{\co}(V)$
in $\sigma_{\co}(V)$ which is of backward type $B$ for all $V\geq V_{1}$
and $\Ico$ is the associated spectral band introduced in Definition~\ref{def: Notion Icmi. Icmnj and M}.
Then for all $V\geq V_{1}$, $I^{1}_{[\co,1]}(V)\strict I_{\co}(V)$
(namely $\Ico(V)$ is of backward type $A$) and $I^{1}_{[\co,1]}(V)$
is not of weak backward type $B$.
\end{lem}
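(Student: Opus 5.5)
The first claim should follow directly from Lemma~\ref{lem: Backward Implies Bands strictly included}. For a band $\Ic$ of backward type $B$ we have $M=m$, so with $m=1$ we get $M=1$. That lemma, applied for all $V\geq V_{1}$, gives $I^{1}_{[\co,1]}(V)\strict\Ic(V)$. Since $\varphi([\co,1,0])=\varphi(\co)$, this is exactly the statement that $\Ico(V)$ is of backward type $A$.

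For the second claim, recall that weak backward type $B$ for $\Ico$ refers to bands of $\sigma_{[\co,1,-1]}=\sigma_{[\co,0]}$. I therefore have to show that $\Ico(V)$ is never contained, even weakly, in a band of $\sigma_{[\co,0]}(V)$ for $V\geq V_{1}$. I will argue by continuity in $V$, as in the proof of Lemma~\ref{lem: Backward Implies Bands strictly included}.

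By Theorem~\ref{thm: V>4 Type A =000026 B}, the statement holds for $V>4$. Suppose it fails for some $V\geq V_{1}$. Then $V_{2}:=\max\{V\geq V_{1}:\Ico(V)\subseteq J(V)\text{ for some band }J\text{ of }\sigma_{[\co,0]}\}$ exists and $V_{2}\leq4$. The band of $\sigma_{[\co,0]}$ that matters here is essentially $\Jcz$ or $\Kcz$, since $\Ic$ is of type $B$ for $V>4$ and so no band of $\sigma_{[\co,0]}$ lies between them. Continuity of the band edges (Proposition~\ref{prop: Lipschitz spectral edges}) then forces, at $V_{2}$, an endpoint coincidence: $L(\Ico(V_{2}))=L(J(V_{2}))$ or $R(\Ico(V_{2}))=R(J(V_{2}))$.

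At such a common edge $E$ we have $|t_{[\co,1]}(E)|=|t_{[\co,0]}(E)|=2$ by Lemma~\ref{lem: Traces and spectral edges}~(\ref{enu: prop-Traces and spectral edges - 1}). Moreover $E$ lies in the interior of $\Ic(V_{2})$, so $|\tc(E)|<2$. Backward type $B$ together with Lemma~\ref{lem: Traces and spectral edges}~(\ref{enu: prop-Traces and spectral edges - 2}) gives $|t_{[\co,-1]}(E)|<2$. The recursion $t_{[\co,1]}=\tc t_{[\co,0]}-t_{[\co,-1]}$ then gives the sign relation $t_{[\co,-1]}(E)=\pm2\tc(E)\mp2$.

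I would try first to exclude this by an eigenvalue argument rather than by traces alone. I would rewrite $\sigma_{[\co,1]}$ as $\sigma_{[\tilde\co,c_{k-1},c_{k}+1]}$, where $\tilde\co=[0,0,c_{1},\ldots,c_{k-2}]$. Note that $[\tilde\co,c_{k-1}]$ has the same spectrum as $[\co,0]$, and that $\varphi([\tilde{\co},c_{k-1},c_{k}+1])=\varphi([\co,1])$. This puts the triple $\tilde\co,[\tilde\co,c_{k-1}],[\tilde\co,c_{k-1},n]$ with $n=c_{k}+1$ into the framework of Lemma~\ref{lem: Basics for eigenavlue inequalities}.

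The bands involved are $\Ico$, viewed as one of the tower bands $I^{j}_{[\tilde\co,c_{k-1},n]}$, and $J$, viewed as the relevant band of $\sigma_{[\tilde\co,c_{k-1}]}$. For these I would compute:
\begin{itemize}
\item their indices via Lemma~\ref{lem: index identities for spectral bands}, in particular (\ref{eq: Cor-unifyrecform - index relation for I_=00007Bc.1=00007D^=00007B1=00007D});
\item admissibility of the corresponding edges via Lemma~\ref{lem: Admissibility criterion};
\item the counting values $N$ via Lemma~\ref{lem: Floquet-Bloch matrices - n times fundamental domain}.
\end{itemize}
The goal is to obtain a strict counting inequality $N_{\tilde\co}+N_{[\tilde\co,c_{k-1}]}<N_{[\tilde\co,c_{k-1},n]}$, or the reverse one. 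Simplicity of the relevant eigenvalue would come from $E$ being interior to $\Ic$, hence not an eigenvalue of the other summand. Lemma~\ref{lem: Basics for eigenavlue inequalities} would then give a strict inequality between $L(\Ico)$ and $L(J)$ (or between the right edges), contradicting the assumed equality at $V_{2}$.

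The main obstacle is this bookkeeping step: checking that the index relations give the correct parity for admissibility and the correct strict counting inequality in both the left-edge and the right-edge case. If a case turns out to be non-admissible, I would switch to part~(\ref{enu: lem-Basics for eigenavlue inequalities - 2}) of Lemma~\ref{lem: Basics for eigenavlue inequalities}, which needs the non-intersection hypothesis $\sigma(\HcV(\thc))\cap J=\emptyset$ from the interior position of $E$. As a fallback, I would use the trace relation above together with $|\tc(E)|<2$ and $|t_{[\co,-1]}(E)|<2$, and try to rule out the coincidence through the sign pattern of Proposition~\ref{prop: Chatacterization Admissible via traces}.
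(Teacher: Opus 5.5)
Your first claim is handled exactly as in the paper, via Lemma~\ref{lem: Backward Implies Bands strictly included} with $m=M=1$. The second claim has a genuine gap, and it lies in the triple you feed into Lemma~\ref{lem: Basics for eigenavlue inequalities}.

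With $\tilde{\co}=[0,0,c_1,\ldots,c_{k-2}]$, $m=c_{k-1}$ and $n=c_k+1$, the three matrices are built on $\sigma_{\tilde{\co}}$, on $\sigma_{[\co,0]}$ (with $c_k+1$ repetitions of the period), and on $\sigma_{[\co,1]}$. The spectrum $\sigma_{\co}$ does not occur at all. Yet the only information the lemma gives you concerns $\sigma_{\co}$: $\Ic$ is of backward type $B$, and $\Ico\strict\Ic$.

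Take $\lo$ to be an edge of $\Jcz$ or $\Kcz$. The count $N_{\tilde{\co}}=N(\lo;H_{\tilde{\co},V_2}(\theta))$ and the simplicity of $\lo$ then depend on where this edge sits relative to the bands of $\sigma_{\tilde{\co}}(V_2)$. That is forward-type information about a band of $\sigma_{\tilde{\co}}$. It is not a hypothesis of the lemma, and it is exactly the kind of statement this lemma is used to produce in the induction.

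Your two key tools do not fit this triple either. The simplicity argument (``$E$ interior to $\Ic$, hence not an eigenvalue of the other summand'') and the identity $\ind(\Ico)=\ind(\Ic)+\ind(\Jcz)+1$ both concern $\sigma_{\co}$.

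Moreover, for $k=1$ your $\tilde{\co}$ degenerates to $[0]$ with $\varphi=\infty$, where Lemma~\ref{lem: Basics for eigenavlue inequalities} is unavailable. This case is needed, e.g.\ for $\co=[0,0,m]$ with $m\geq 2$ in Lemma~\ref{lem: I_0.0 properties}.

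The trace fallback does not close the gap. At a common edge $E$, the Fricke--Vogt invariant only yields $V=|t_{\co}(E)\mp 2|$, which is perfectly consistent with $|t_{\co}(E)|<2$.

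The missing step is to use $\tilde{\co}=[0,c_0,\ldots,c_{k-1}]$, $m=c_k$, $n=1$ instead. Then the three spectra are $\sigma_{[\co,0]}$, $\sigma_{\co}$, $\sigma_{[\co,1]}$, with $\Ic$ in the middle slot as $H^{\times 1}_{\co}$, and this works for every $k\geq 1$.

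Since $\ind(\Kcz)=\ind(\Jcz)+1$, non-containment reduces to two inequalities: $R(\Jcz(V))<R(\Ico(V))$ and $L(\Ico(V))<L(\Kcz(V))$. Both hold for $V>4$, so by continuity it suffices to exclude equality at each $V\geq V_1$. This reduction also removes the extra coincidences that your ``maximal containment'' $V_2$ allows, such as $L(\Ico)=L(\Jcz)$.

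Suppose $R(\Jcz)=R(\Ico)$, and take $\lo=R(\Jcz)$, $\mo=R(\Ico)$.
\begin{itemize}
\item The index identity gives admissibility.
\item $N_{[\co,0]}=\ind(\Jcz)$ and $N_{[\co,1]}=\ind(\Ico)$.
\item $N_{\co}\leq\ind(\Ic)$, because $\lo=R(\Ico)<R(\Ic)$.
\item Hence $N_{[\co,0]}+N_{\co}<N_{[\co,1]}$.
\item Since $\lo$ is interior to $\Ic$, it is not an eigenvalue of $H_{\co}(\thc)$, so it is simple.
\end{itemize}
The lemma then gives $\lo<\mo$, a contradiction. The case $L(\Ico)=L(\Kcz)$ is symmetric.
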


\begin{proof}
Since $\Ic(V)$ is of backward type $B$ for all $V\geq V_{1}$, it
follows that $\Ic(V)$ is of type $B$ for all $V>4$ by Theorem~\ref{thm: V>4 Type A =000026 B}.
Thus, there is a unique spectral band $I^{1}_{[\co,1]}$ in $\sigma_{[\co,1]}$
such that $I^{1}_{[\co,1]}(V)\strict I_{\co}(V)$ for all $V>4$.
By Theorem~\ref{thm: V>4 Type A =000026 B}, the lemma holds for
all $V_{1}>4$, and so we can assume in the proof that $V_{1}\leq4$.

Consider the spectral bands $\Jcz$ and $\Kcz$ associated with $\Ic$
(see Definition~\ref{def: left and right proceeded band}). Since
$\Ic(V)$is of backward type $B$ for $V>4$, we have $\ind(\Kcz)=\ind(\Jcz)+1$
(i.e., there is no other spectral band between those two) and
\[
\forall V>4,\quad\quad\Jcz(V)\prec\Ic(V)\prec\Kcz(V).
\]
Lemma~\ref{lem: Backward Implies Bands strictly included} implies
$\Ico(V)\strict\Ic(V)$ for all $V\geq V_{1}$. It is left to prove
that for $V\geq V_{1}$, $I^{1}_{[\co,1]}(V)$ is not contained in
any spectral band of $\sigma_{[\co,1,-1]}(V)=\sigma_{[\co,0]}(V)$
(where the last equality follows from Proposition~\ref{Prop-traceMaps}).
Actually, it suffices to prove that for all $V\geq V_{1}$,
\begin{equation}
R(\Jcz(V))<R(\Ico(V))\quad\textrm{and\ensuremath{\quad}}L(\Ico(V))<L(\Kcz(V)).\label{eq: equivalent to Lem-Task_(B1)}
\end{equation}

Assume by contradiction that (\ref{eq: equivalent to Lem-Task_(B1)})
does not hold for some $V\geq V_{1}$. By Theorem~\ref{thm: V>4 Type A =000026 B},
(\ref{eq: equivalent to Lem-Task_(B1)}) holds for $V>4$. Thus, the
continuity of the spectral band edges in $V>0$ (Corollary~\ref{prop: Lipschitz spectral edges})
implies that the maximum
\[
V_{2}:=\max\set{V\geq V_{1}}{R(\Jcz(V))=R(\Ico(V))\;\textrm{or}\;L(\Ico(V))=L(\Kcz(V))}
\]
exists and $V_{2}\in\left[V_{1},4\right]$. We split into cases according
to the nature of failure of (\ref{eq: equivalent to Lem-Task_(B1)})
at $V=V_{2}$, and show a contradiction for each of these cases. First
note that Equation (\ref{eq: Cor-unifyrecform - index relation for I_=00007Bc.1=00007D^=00007B1=00007D})
of Lemma~\ref{lem: index identities for spectral bands} implies
\begin{equation}
\ind(I^{1}_{[\co,1]})=\ind(\Jcz)+1+\ind(\Ic)=\ind(\Kcz)+\ind(\Ic).\label{eq: Lem-Task_(B1) Index relation}
\end{equation}
Since $\varphi(\co)\in(0,1)$, there is a $k\in\N$ such that $\co=[0,c_{0},\ldots,c_{k}]$.
In the following we apply Lemma~\ref{lem: Basics for eigenavlue inequalities}
to $\cop,[\cop,m],[\cop,m,n]\in\Co$ where $\cop=[0,c_{0},\ldots,c_{k-1}]$,
$m=c_{k}$ and $n=1$. Note that $\varphi(\cop)=\varphi([\co,0])$,
$\varphi([\cop,m])=\varphi(\co)$ and $\varphi([\cop,m,n])=\varphi([\co,1])$.
Thus, in effect it is as if we apply Lemma~\ref{lem: Basics for eigenavlue inequalities}
to $[\co,0],\co,[\co,1]\in\Co$ (rather than to $\cop,[\cop,m],[\cop,m,n]\in\Co$).
We use this convention until the end of the current proof.

\underline{Case 1:} We show that $R(\Jcz(V_{2}))=R(\Ico(V_{2}))$
yields a contradiction. We aim to apply Lemma~\ref{lem: Basics for eigenavlue inequalities}
for $\lo=R(\Jcz(V_{2}))$ and $\mo=R(\Ico(V_{2}))$. For the sake
of simplification, we drop the $V_{2}$ notation in the following
unless we want to emphasize its dependence. Let $\theta_{[\co,0]},\thc,\theta_{[\co,1]}\in\{0,\pi\}$
be such that
\[
R(\Jcz)\in\sigma\left(\Hcz(\theta_{[\co,0]})\right),~R(\Ic)\in\sigma\left(H^{\times1}_{\co}(\thc)\right)~\textrm{and}~R(\Ico)\in\sigma\left(\Hco(\theta_{[\co,1]})\right).
\]
Then these spectral edges, respectively $\theta_{[\co,0]},\thc,\theta_{[\co,1]}$,
are admissible by inserting the index relation (\ref{eq: Lem-Task_(B1) Index relation})
into the characterization of admissibility from Lemma \ref{lem: Admissibility criterion}
for $n=1$. With this at hand, Lemma~\ref{lem: Floquet-Bloch matrices}~(\ref{enu: lem-Floquet-Bloch matrices - 3})
applied to $[\co,0]\in\Co$ and $[\co,1]\in\Co$ leads to
\begin{align*}
N_{[\co,0]} & :=N\left(R(\Jcz);~\Hcz(\theta_{[\co,0]})\right)=\ind(\Jcz)
\end{align*}
and
\[
N_{[\co,1]}:=N\left(R(\Ico);~\Hco(\theta_{[\co,1]})\right)=\ind(\Ico).
\]
Furthermore, $\Ico(V_{2})\strict\Ic(V_{2})$ and the assumption $R(\Jcz(V_{2}))=R(\Ico(V_{2}))$
imply $R(\Jcz(V_{2}))<R(\Ic(V_{2}))\in\sigma\left(H^{\times1}_{\co,V_{2}}(\thc)\right)$.
Thus, Lemma~\ref{lem: Floquet-Bloch matrices - n times fundamental domain}~(\ref{enu: lem-Floquet-Bloch matrices - n times fundamental domain - 4})
applied to $n=1$ and $\co\in\Co$ imply
\begin{align*}
\Nc & :=N\left(R(\Jcz);~H^{\times1}_{\co}(\thc)\right)\leq N\left(R(\Ic);~H^{\times1}_{\co}(\thc)\right)=\ind(\Ic).
\end{align*}
Thus, (\ref{eq: Lem-Task_(B1) Index relation}) implies $N_{[\co,1]}>\Nc+N_{[\co,0]}$.
If we prove that $\lo=R(\Jcz(V_{2}))$ is a simple eigenvalue of $H^{\times1}_{\co,V_{2}}(\thc)\oplus H_{[\co,0],V_{2}}(\theta_{[\co,0]})$,
then Lemma~\ref{lem: Basics for eigenavlue inequalities} yields
$\lo=R(\Jcz(V_{2}))<R(\Ico(V_{2}))=\mo$, a contradiction.

By Lemma~\ref{lem: Floquet-Bloch matrices - n times fundamental domain}~(\ref{enu: lem-Floquet-Bloch matrices - n times fundamental domain - 5}),
simplicity of the eigenvalue $\lo$ holds if it is not an eigenvalue
of $H^{\times1}_{\co,V_{2}}(\thc)=H_{\co,V_{2}}(\thc)$. Using Lemma~\ref{lem: Floquet-Bloch matrices - n times fundamental domain}~(\ref{enu: lem-Floquet-Bloch matrices - n times fundamental domain - 5}),
$R(\Ic(V_{2}))$ is the only eigenvalue of $H_{\co,V_{2}}(\thc)$
in $\Ic(V_{2})$ . Thus, our working assumption, $R(\Jcz(V_{2}))=R(\Ico(V_{2}))<R(\Ic(V_{2}))$
implies that $R(\Jcz(V_{2}))$ is not an eigenvalue of $H_{\co,V_{2}}(\thc)$.

\underline{Case 2:} We show that $L(\Ico(V_{2}))=L(\Kcz(V_{2}))$
yields a contradiction. We aim to apply Lemma \ref{lem: Basics for eigenavlue inequalities}
for $\lo=L(\Kcz(V_{2}))$ and $\mo=L(\Ico(V_{2}))$. For the sake
of simplification, we drop the $V_{2}$ notation in the following
unless we want to emphasize its dependence. Let $\theta_{[\co,0]},\thc,\theta_{[\co,1]}\in\{0,\pi\}$
be such that
\[
L(\Kcz)\in\sigma\left(\Hcz(\theta_{[\co,0]})\right),~L(\Ic)\in\sigma\left(H^{\times1}_{\co}(\thc)\right)~\textrm{and}~L(\Ico)\in\sigma\left(\Hco(\theta_{[\co,1]})\right).
\]
Then these spectral edges, respectively $\theta_{[\co,0]},\thc,\theta_{[\co,1]}$,
are admissible by inserting the index relation (\ref{eq: Lem-Task_(B1) Index relation})
into the characterization of admissibility from Lemma~\ref{lem: Admissibility criterion}
for $n=1$. With this at hand, Lemma~\ref{lem: Floquet-Bloch matrices}~(\ref{enu: lem-Floquet-Bloch matrices - 2})
applied to $[\co,0]\in\Co$ and $[\co,1]\in\Co$ leads to
\begin{align*}
N_{[\co,0]} & :=N\left(L(\Kcz);~\Hcz(\theta_{[\co,0]})\right)=\ind(\Kcz)
\end{align*}
and
\[
N_{[\co,1]}:=N\left(L(\Ico);~\Hco(\theta_{[\co,1]})\right)=\ind(\Ico).
\]
Furthermore, $\Ico(V_{2})\strict\Ic(V_{2})$ and the assumption $L(\Ico(V_{2}))=L(\Kcz(V_{2}))$
imply $\sigma\left(H^{\times1}_{\co,V_{2}}(\thc)\right)\ni L(\Ic(V_{2}))<L(\Kcz(V_{2}))$.
Thus, Lemma~\ref{lem: Floquet-Bloch matrices - n times fundamental domain}~(\ref{enu: lem-Floquet-Bloch matrices - n times fundamental domain - 3})
applied to $n=1$ and $\co\in\Co$ leads to
\begin{align*}
\Nc & :=N\left(L(\Kcz);~H^{\times1}_{\co}(\thc)\right)\geq N\left(L(\Ic);~H^{\times1}_{\co}(\thc)\right)+1=\ind(\Ic)+1.
\end{align*}
Thus, (\ref{eq: Lem-Task_(B1) Index relation}) implies $N_{[\co,1]}<\Nc+N_{[\co,0]}$.
If we prove that $L(\Kcz(V_{2}))$ is a simple eigenvalue of $H^{\times1}_{\co,V_{2}}(\thc)\oplus H_{[\co,0],V_{2}}(\theta_{[\co,0]})$,
then Lemma \ref{lem: Basics for eigenavlue inequalities} yields $\lo=L(\Kcz(V_{2}))>L(\Ico(V_{2}))=\mo$,
a contradiction.

By Lemma~\ref{lem: Floquet-Bloch matrices - n times fundamental domain}~(\ref{enu: lem-Floquet-Bloch matrices - n times fundamental domain - 5}),
simplicity of the eigenvalue $\lo$ holds if it is not an eigenvalue
of $H^{\times1}_{\co,V_{2}}(\thc)=H_{\co,V_{2}}(\thc)$. Using Lemma~\ref{lem: Floquet-Bloch matrices - n times fundamental domain}~(\ref{enu: lem-Floquet-Bloch matrices - n times fundamental domain - 5}),
$L(\Ic(V_{2}))$ is the only eigenvalue of $H_{\co,V_{2}}(\thc)$
in $\Ic(V_{2})$. Thus, our working assumption, $L(\Ic(V_{2}))<L(\Ico(V_{2}))=L(\Kcz(V_{2}))$
implies that $L(\Kcz(V_{2}))$ is not an eigenvalue of $H_{\co,V_{2}}(\thc)$.
\end{proof}

\section{Proving that backward type implies forward type\label{subsec:Backward-implies-forward}}

In this section we prove Proposition~\ref{prop: Backward implies forward},
asserting that a fixed backward type implies a fixed forward type.

We recall (Definition~\ref{def: A B types}) that a spectral band
is of $m$-type $A$ (respectively $B$) if it is both of backward
type $A$ ($B$) and of $m$-forward type $A$ ($B$). We also recall
from Definition~\ref{def:(mV)-type property} the notion of the $(m,V)$-property
(or simply $\TmV(m,V)$) for spectral bands. For $\co\in\Co$ and
$m\in\N$ such that $\varphi(\co)\not\in\left\{ -1,\infty\right\} $
and $[\co,m]\in\Co$, the associated critical value is given by
\[
\crit([\co,m]):=\inf\set{V\geq0}{\begin{array}{c}
\text{each spectral band \ensuremath{\Ic} in \ensuremath{\sigma_{\co}}}\text{ satisfies \ensuremath{\TmV(m,V)}}\end{array}}
\]

We first state two lemmata and two corollaries leading to Proposition~\ref{prop: Backward implies forward}.
Before that, we slightly relax the notion of $\TmV(m,V)$ and the
definition of $\crit$.
\begin{defn}
\label{def:quasi (mV)-type property}Let $m\in\N$, $\co\in\Co$ be
such that $\varphi(\co)\not\in\left\{ -1,\infty\right\} $ and $[\co,m]\in\Co$.
For $V>0$ a spectral band $\Ic$ in $\sigc$ satisfies $\Tquas(m,V)$
(the \emph{quasi $(m,V)$-property}) if
\end{defn}

\begin{enumerate}
\item either
\begin{itemize}
\item for all $V'\geq V$, $\Ic(V')$ is of backward type $\tA$ with $M=m-1$,
\end{itemize}
or
\begin{itemize}
\item for all $V'\geq V$, $\Ic(V')$ is of backward type $\tB$ with $M=m$,
\end{itemize}
\item for all $V'\geq V$ and all $n\in\N$, the unique spectral bands $\left\{ I^{i}_{[\co,m]}(V')\right\} ^{M}_{i=1}$
of $\sigcm$ and the unique spectral bands $\left\{ I^{j}_{[\co,m,n]}(V')\right\} ^{M+1}_{j=1}$
of $\sigcmn$ satisfy \ref{enu: A1 property}, \ref{enu: B property}
and \ref{enu: I property} of Definition~\ref{def: forward type}.
\end{enumerate}
With this notion at hand, define 
\[
\crito([\co,m]):=\inf\set{V\geq0}{\begin{array}{c}
\text{each spectral band \ensuremath{\Ic} in \ensuremath{\sigma_{\co}}}\end{array}\text{satisfies \ensuremath{\Tquas(m,V)}}}.
\]

We note that the only difference compared to Definition~\ref{def:(mV)-type property}
is that here we do not require the property \ref{enu: A2 property}
to hold for the spectral bands $\left\{ I^{i}_{[\co,m]}\right\} ^{M}_{i=1}$.

Noting that 
\[
\crito([\co,m])\leq\crit([\co,m]),
\]
the strategy for proving Proposition~\ref{prop: Backward implies forward}
is to first show $\crito([\co,m])=0$ and afterwards $\crit([\co,m])=0$.
\begin{lem}
\label{lem: keyResult} Let $m\in\N$, $\co\in\Co$ be such $\varphi(\co)\not\in\left\{ -1,\infty\right\} $
and $[\co,m]\in\Co$, and $V_{0}\geq\crito([\co,m])$. Let $\Ic$
be a spectral band in $\sigc$ such that both of the following hold:
\begin{enumerate}
\item $I_{\co}(V)$ is either of backward type~$A$ for all $V>0$ or of
backward type~$B$ for all $V>0$,
\item $I^{1}_{[\co,m,1]}(V),~I^{M+1}_{[\co,m,1]}(V)\strict I_{\co}(V)$
holds for all $V\geq V_{0}$, where 
\[
M:=\begin{cases}
m-1 & I_{\co}(V)\textrm{~is~of~backward~type}~A~\textrm{for all }V>0,\\
m & I_{\co}(V)\textrm{~is~of~backward~type}~B~\textrm{for all }V>0.
\end{cases}
\]
\end{enumerate}
If $V_{0}>0$, then there is a $\delta>0$ such that $\Ic$ satisfies
$\Tquas(m,V_{0}-\delta)$.
\end{lem}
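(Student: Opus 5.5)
The plan is to work with the quasi $(m,V)$-property by listing its requirements and checking that each persists on a small interval below $V_0$. Condition (1) of $\Tquas(m,V)$ holds for every $V>0$ by assumption (1), with the stated $M$. It therefore remains to find $\delta>0$ such that, for all $V'\geq V_0-\delta$ and all $n\in\N$, properties \ref{enu: A1 property}, \ref{enu: B property} and \ref{enu: I property} hold for the bands $\Icmi$ and $\Icmnj$.

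For $V'\geq V_0$ everything except possibly $V'=V_0$ is already given. Indeed, $\crito([\co,m])\leq V_0$ means that for every $V>V_0$ each band satisfies $\Tquas(m,V)$. Hence (A1), (B) and (I) hold for all $V'>V_0$. Assumption (2) also gives $I^{1}_{[\co,m,1]},I^{M+1}_{[\co,m,1]}\strict\Ic$ on $[V_0,\infty)$, including at $V_0$.

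\textbf{Step 1: (A1) below $V_0$.} Strict inclusion is an open condition in $V$, because band edges are Lipschitz (Proposition~\ref{prop: Lipschitz spectral edges}). So it suffices to show $\Icm^{1}(V_0),\Icm^{M}(V_0)\strict\Ic(V_0)$. Openness then yields some $\delta_1>0$ that works on $[V_0-\delta_1,\infty)$. At $V_0$ itself, continuity only gives the weak inclusions $L(\Ic)\leq L(\Icm^1)$ and $R(\Icm^M)\leq R(\Ic)$. I would upgrade these to strict inequalities using the argument of Lemma~\ref{lem: Backward Implies Bands strictly included}. Assume $L(\Ic(V_0))=L(\Icm^1(V_0))=:E$. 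For $V>V_0$, property (I) with $n=1$ gives $L(I^1_{[\co,m,1]})<L(\Icm^1)$. Passing to the limit and using assumption (2) yields $L(\Ic)<L(I^1_{[\co,m,1]}(V_0))\leq E$, a contradiction. The right edge is handled symmetrically. Note that this does not need Lemma~\ref{lem: Trace estimates =00005Bc.m.n=00005D}; it only uses the strict inclusion of $I^{1}_{[\co,m,1]}$ from hypothesis (2) at $V_0$.

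\textbf{Step 2: the $n=1$ bands below $V_0$.} By assumption (2) and openness, there is $\delta_2>0$ with $I^{1}_{[\co,m,1]},I^{M+1}_{[\co,m,1]}\strict\Ic$ on $[V_0-\delta_2,\infty)$. Set $\delta=\min(\delta_1,\delta_2)$. Then Corollary~\ref{cor: Properties A1+B2+I} (with $n=1$) gives (A1), (B2) and (I) for $n=1$ at every $V'\geq V_0-\delta$. From (I) and the $\prec$ ordering, we also get $I^{j}_{[\co,m,1]}\strict\Ic$ for all $j$.

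\textbf{Step 3: all $n$, via the tower property.} This is where I expect the main obstacle: the $n$-uniformity. Openness gives a $\delta$ for each $n$ separately, but $n$ ranges over all of $\N$. To avoid this I would use Corollary~\ref{cor: Tower property}, whose input is $\Icmn^{j}\strict\Ic$ for all $n$. I would proceed inductively exactly as in its proof. The band $I^{j}_{[\co,m,1]}(V)$ is of backward type $B$ on $[V_0-\delta,\infty)$, because it is strictly inside $\Ic$ and $\sigma_{[\co,m,1,-1]}=\sigc$. Lemma~\ref{lem: Backward Implies Bands strictly included}, applied to $[\co,m,1]$ (where $\varphi\in(0,1)$), then gives $I^{1}_{[\co,m,1,1]}=I^{j}_{[\co,m,2]}\strict I^{j}_{[\co,m,1]}$ on the same range. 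Iterating in $n$ yields the tower (B1) on $[V_0-\delta,\infty)$, with the same $\delta$ for every $n$; this induction is the key to uniformity. It also yields $\Icmn^{j}\strict\Ic$ for all $n$.

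Once that holds, Corollary~\ref{cor: Properties A1+B2+I} applied for each $n$ supplies (B2) and (I) for all $n$ on $[V_0-\delta,\infty)$. Together with Step 1 this proves $\Tquas(m,V_0-\delta)$.
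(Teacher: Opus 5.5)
Your proof is correct, and it differs from the paper's mainly in where the uniformity in $n$ comes from. Your Step 1 is essentially the paper's chain of inequalities $L(\Ic(V_0))<L(I^{1}_{[\co,m,1]}(V_0))\le E(V_0)\le R(I^{M+1}_{[\co,m,1]}(V_0))<R(\Ic(V_0))$.

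\textbf{The paper's route.} It uses the information above $V_0$ (from $V_0\ge\crito([\co,m])$) that, for every $n$, the bands $I^{j}_{[\co,m,n]}(V)$ lie between $L(I^{1}_{[\co,m,1]}(V))$ and $R(I^{M+1}_{[\co,m,1]}(V))$. It then uses that all band edges are $1$-Lipschitz in $V$, with a constant independent of $\co$ and hence of $n$. Taking $\delta$ to be a third of the minimum of the two edge gaps at $V_0$ and $V_0$ itself, all these bands are strictly inside $\Ic(V)$ for $V>V_0-\delta$ simultaneously. Corollary~\ref{cor: Properties A1+B2+I} and Corollary~\ref{cor: Tower property} then finish.

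\textbf{Your route.} You use openness only at level $n=1$ and for $I^{1}_{[\co,m]},I^{M}_{[\co,m]}$. You then get every higher level by running the tower induction, applying Lemma~\ref{lem: Backward Implies Bands strictly included} to $[\co,m,n]$ for $n=1,2,\dots$. This is legitimate for three reasons: that lemma holds on any half-line $[V_1,\infty)$ with $V_1>0$ on which the band is of backward type $B$; $\varphi([\co,m,n])\in(0,1)$; and $\sigma_{[\co,m,n,-1]}=\sigma_{[\co,m,n-1]}$. In effect you noticed that the proof of Corollary~\ref{cor: Tower property} uses its hypothesis only at $n=1$, so the all-$n$ strict inclusion $I^{j}_{[\co,m,n]}\strict\Ic$ becomes an output rather than an input.

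\textbf{Comparison.} Your approach needs the $\crito$ information only for level $n=1$ and for the bands $I^{i}_{[\co,m]}$. It needs only continuity of finitely many edges, not the $\co$-uniform Lipschitz bound. The paper's approach obtains the all-$n$ inclusion by a purely metric argument, with an explicit $\delta$ depending only on the two gaps at $V_0$ and on $V_0$, as recorded in the remark after the lemma. It nevertheless still calls the same tower induction, via Corollary~\ref{cor: Tower property}, to get (B1).

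One small point to state explicitly: choose $\delta<V_0$, so that $V_1=V_0-\delta>0$ as Lemma~\ref{lem: Backward Implies Bands strictly included} requires.
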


\begin{rem*}
The $\delta$ in the statement of the lemma only depends on the values
$\big|L\big(I^{1}_{[\co,m,1]}(V_{0})\big)-L\big(I_{\co}(V_{0})\big)\big|$,
$\big|R\big(I^{M+1}_{[\co,m,1]}(V_{0})\big)-R\big(I_{\co}(V_{0})\big)\big|$
and $V_{0}$. Here we use that the Lipschitz continuity in Corollary~\ref{prop: Lipschitz spectral edges}
is independent in $\co\in\Co$.
\end{rem*}
\begin{proof}
Since for $V>4$, $I_{\co}(V)$ is either of $m$-type $A$ or of
$m$-type $B$ (by Theorem~\ref{thm: V>4 Type A =000026 B}), we
may proceed assuming that $V_{0}\leq4$. Since $I_{\co}(V)$ is either
of backward type $A$ or backward type $B$ for all $V>0$, it is
sufficient to prove the existence of a $\delta>0$ such that 
\begin{equation}
\textrm{all }\Icmi~\textrm{and}~\Icmnj~\textrm{satisfy properties \ref{enu: A1 property}, \ref{enu: B1 property}, \ref{enu: B2 property} and \ref{enu: I property} for all}~n\in\N\label{eq: keyResult: to show}
\end{equation}

for all $V>V_{0}-\delta$. Since by the assumptions of the lemma,
$V_{0}\geq\crito$, we get that (\ref{eq: keyResult: to show}) holds
for all $V>V_{0}$. In particular, we infer that 
\begin{equation}
I^{i}_{[\co,m]}(V),I^{j}_{[\co,m,n]}(V)\strict\left[L\big(I^{1}_{[\co,m,1]}(V)\big),R\big(I^{M+1}_{[\co,m,1]}(V)\big)\right],\label{Eq-keyResult0}
\end{equation}
for all $V>V_{0}$, $1\leq i\leq M$, $1\leq j\leq M+1$ and $n\in\N$.

\begin{figure}[hbt]
\includegraphics[scale=1.1]{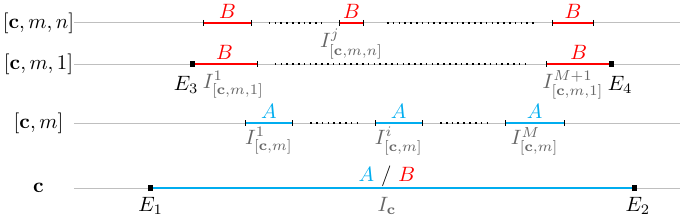} \caption{A sketch of the spectral bands considered in the proof of Lemma~\ref{lem: keyResult}.}
\label{Fig-keyResult1}
\end{figure}
Define 
\[
E_{1}(V_{0}):=L\big(I_{\co}(V_{0})\big),\qquad E_{2}(V_{0}):=R\big(I_{\co}(V_{0})\big),
\]
and 
\[
E_{3}(V_{0}):=L\big(I^{1}_{[\co,m,1]}(V_{0})\big),\qquad E_{4}(V_{0}):=R\big(I^{M+1}_{[\co,m,1]}(V_{0})\big),
\]
confer Figure~\ref{Fig-keyResult1}. Let $V\mapsto E(V)$ be a spectral
band edge of $I^{i}_{[\co,m]}(V)$ or $I^{j}_{[\co,m,n]}(V)$ for
some $1\leq i\leq M$ or $1\leq j\leq M+1$. By Corollary~\ref{prop: Lipschitz spectral edges},
the spectral band edges vary continuously in $V$, namely $V\mapsto E(V)$
is continuous, Thus, Equation~(\ref{Eq-keyResult0}) and assumption
(b) yield
\[
E_{1}(V_{0})<E_{3}(V_{0})\leq E(V_{0})\leq E_{4}(V_{0})<E_{2}(V_{0}).
\]
Hence, $\min_{k\in\{1,2\}}|E(V_{0})-E_{k}(V_{0})|\geq3\delta$ where
\[
\delta:=\frac{1}{3}\min\big\{|E_{1}(V_{0})-E_{3}(V_{0})|,~|E_{4}(V_{0})-E_{2}(V_{0})|,~V_{0}\big\}>0.
\]
Now, we use 
\[
\max\{|E(V)-E(V_{0})|,~|E_{i}(V)-E_{i}(V_{0})|\}\leq|V-V_{0}|,\qquad i\in\{1,2,3,4\},
\]
which holds by Corollary~\ref{prop: Lipschitz spectral edges}, to
conclude
\[
E_{3}(V)<E(V)<E_{4}(V),\quad V>V_{0},\qquad\Longrightarrow\qquad E_{1}(V)<E(V)<E_{2}(V),\qquad V>V_{0}-\delta.
\]

We note that $V_{0}-\delta>0$ holds, by the definition of $\delta$.

Since $E(V)$ was an arbitrary spectral edge of $I^{j}_{[\co,m]}(V)$
or $I^{j}_{[\co,m,n]}(V)$ for $n\in\N$, we deduce for all $V>V_{0}-\delta$,
$1\leq i\leq M$, $1\leq j\leq M+1$, and $n\in\N$, 
\[
I^{j}_{[\co,m]}(V)\strict I_{\co}(V)\qquad\text{and}\qquad I^{j}_{[\co,m,n]}(V)\strict I_{\co}(V).
\]
Now, we apply Corollary~\ref{cor: Properties A1+B2+I} which implies
that $I_{\co}(V)$ satisfies the forward properties \ref{enu: A1 property},
\ref{enu: B2 property} and \ref{enu: I property} for all $V>V_{0}-\delta$.
Since $I^{j}_{[\co,m,n]}(V)\strict\Ic(V)$ holds for all $n\in\N$
and $V>V_{0}-\delta$, Corollary~\ref{cor: Tower property} implies
that $\Ic(V)$ satisfies \ref{enu: B1 property} for all $V>V_{0}-\delta$.
\end{proof}

We now apply Lemma~\ref{lem: keyResult} for all spectral bands $\Ic$
in $\sigc$. Using that the number of spectral bands in $\sigc$ is
finite and taking the minimum $\delta$ among all $\Ic$ in $\sigc$
(the $\delta$ which is provided by Lemma~\ref{lem: keyResult}),
we get:
\begin{cor}
\label{Cor-keyResult} Let $V_{1}>0$, $m\in\N$ and $\co\in\Co$
be such that $\varphi(\co)\not\in\left\{ -1,\infty\right\} $ and
$[\co,m]\in\Co$. Suppose that each spectral band $\Ic$ in $\sigma_{\co}$
satisfies both of the following:
\begin{enumerate}
\item \label{enu: Cor-keyResult - 1}$I_{\co}(V)$ is either of backward
type~$A$ for all $V>0$ or of backward type~$B$ for all $V>0$,
\item \label{enu: Cor-keyResult - 2}$I^{1}_{[\co,m,1]}(V),~I^{M+1}_{[\co,m,1]}(V)\strict I_{\co}(V)$
holds for all $V\geq V_{1}$, where 
\[
M:=\begin{cases}
m-1 & I_{\co}(V)\textrm{~is~of~backward~type}~A~\textrm{for all }V>0,\\
m & I_{\co}(V)\textrm{~is~of~backward~type}~B~\textrm{for all }V>0.
\end{cases}
\]
\end{enumerate}
Then $\crito([\co,m])<V_{1}$. In particular, if (b) holds for all
$V_{1}>0$, then $\crito([\co,m])=0$.
\end{cor}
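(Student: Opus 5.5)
We prove Corollary~\ref{Cor-keyResult} by combining Lemma~\ref{lem: keyResult} with the finiteness of the number of spectral bands in $\sigc$ and a downward-continuity argument. Set $V_{0}:=\crito([\co,m])$ and suppose, for contradiction, that $V_{0}\geq V_{1}$.

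First, $V_{0}<\infty$. Indeed, by Theorem~\ref{thm: V>4 Type A =000026 B}, every band $\Ic$ satisfies $\Tquas(m,V)$ for all $V>4$, so $V_{0}\leq4$.

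Second, $V_{0}>0$. This holds because $V_{1}>0$ and we assumed $V_{0}\geq V_{1}$.

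Third, each band satisfies the hypotheses of Lemma~\ref{lem: keyResult} at $V_{0}$. Fix a spectral band $\Ic$ in $\sigc$. Hypothesis (a) of the lemma is exactly assumption (\ref{enu: Cor-keyResult - 1}). Hypothesis (b) requires $I^{1}_{[\co,m,1]}(V),I^{M+1}_{[\co,m,1]}(V)\strict\Ic(V)$ for all $V\geq V_{0}$. Since $V_{0}\geq V_{1}$, this follows from assumption (\ref{enu: Cor-keyResult - 2}). Finally, the lemma is stated for any $V_{0}\geq\crito([\co,m])$, so our choice of $V_{0}$ is admissible.

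Fourth, we obtain a uniform $\delta$. Lemma~\ref{lem: keyResult} yields, for each band $\Ic$, some $\delta_{\Ic}>0$ such that $\Ic$ satisfies $\Tquas(m,V_{0}-\delta_{\Ic})$. Since $\sigc$ has only finitely many bands ($q$ of them, by Proposition~\ref{prop: Basic spectral prop periodic}), we may set $\delta:=\min_{\Ic}\delta_{\Ic}>0$.

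The quasi property is monotone in $V$: it only asks that conditions hold for all $V'\geq V$. Hence every band satisfies $\Tquas(m,V_{0}-\delta)$, which gives $\crito([\co,m])\leq V_{0}-\delta<V_{0}$. This contradicts the definition of $V_{0}$, so $\crito([\co,m])<V_{1}$.

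For the last statement, suppose (\ref{enu: Cor-keyResult - 2}) holds for every $V_{1}>0$. Then $\crito([\co,m])<V_{1}$ for all $V_{1}>0$, and so $\crito([\co,m])=0$.

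The only delicate point is the infimum convention. We need to know that $\Tquas(m,V)$ holds for every $V>V_{0}$, because that is what Lemma~\ref{lem: keyResult} uses internally. This follows from monotonicity: if some $V'<V$ has the property for all bands, then so does $V$. The rest of the argument is bookkeeping.
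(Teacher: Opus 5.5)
Your proof is correct and essentially matches the paper's: assume $\crito([\co,m])\geq V_{1}$, apply Lemma~\ref{lem: keyResult} to each of the finitely many bands at $V_{0}=\crito([\co,m])$, take the minimal $\delta$, and contradict the definition of the infimum. Your remarks on the monotonicity of $\Tquas(m,\cdot)$, and on verifying hypothesis (b) for all $V\geq V_{0}$ (the paper only writes it at $V_{0}$), spell out steps the paper leaves implicit.
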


\begin{proof}
Set $V_{0}:=\crito([\co,m])$. First of all note that $V_{0}\leq4$
by Theorem~\ref{thm: V>4 Type A =000026 B}. We seek to prove $V_{0}<V_{1}$.
Assume by contradiction $V_{0}\geq V_{1}$.

Let $I_{\co}(V)$ be a spectral band in $\sigma_{\co}(V)$, which
by (\ref{enu: Cor-keyResult - 1}) is either of backward type A for
all $V>0$ or of backward type $B$ for all $V>0$. Since $V_{0}\geq V_{1}$,
(\ref{enu: Cor-keyResult - 2}) implies 
\[
I^{1}_{[\co,m,1]}(V_{0}),~I^{M+1}_{[\co,m,1]}(V_{0})\strict I_{\co}(V_{0}).
\]
Due to Lemma~\ref{lem: keyResult} and $V_{0}\geq V_{1}>0$, there
exists a $\delta:=\delta(I_{\co}(V))>0$ such that $\Ic$ satisfies
$\Tquas(m,V_{0}-\delta)$. Since there are at most finitely spectral
bands in $\sigma_{\co}(V)$, we can take the minimum of all these
$\delta(I_{\co}(V))$'s and denote it by $\delta'>0$. Then every
spectral band $\Ic$ in $\sigc$ satisfies $\Tquas(m,V_{0}-\delta')$.
Hence, by definition of $\crito([\co,m])$, we conclude 
\[
V_{0}:=\crito([\co,m])\leq V_{0}-\delta',
\]
a contradiction.
\end{proof}

By adding an additional condition to the assumption of Lemma~\ref{lem: keyResult}
we may get a stronger implication, which is done in the following
lemma.
\begin{lem}
\label{Lem-keyResult_A2} Let $m\in\N$, $\co\in\Co$ with $\varphi(\co)\not\in\left\{ -1,\infty\right\} $
and $[\co,m]\in\Co$, and $V_{0}\geq\crit([\co,m])$. Let $\Ic$ be
a spectral band in $\sigc$ such that all of the following hold:
\begin{enumerate}
\item $I_{\co}(V)$ is either of backward type~$A$ for all $V>0$ or of
backward type~$B$ for all $V>0$,
\item $I^{1}_{[\co,m,1]}(V),~I^{M+1}_{[\co,m,1]}(V)\strict I_{\co}(V)$
holds for all $V\geq V_{0}$, where 
\[
M:=\begin{cases}
m-1 & I_{\co}(V)\textrm{~is~of~backward~type}~A~\textrm{for all }V>0,\\
m & I_{\co}(V)\textrm{~is~of~backward~type}~B~\textrm{for all }V>0,
\end{cases}
\]
\item if $m=1$, then $\varphi(\co)\neq1$,\\
 if $m\geq2$, then $\crito([\co,m-1])=0$.
\end{enumerate}
If $V_{0}>0$, then there is a $\delta>0$ such that $\Ic$ satisfies
$\TmV(m,V_{0}-\delta)$.

\end{lem}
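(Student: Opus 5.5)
Since $V_0\geq\crit([\co,m])\geq\crito([\co,m])$, Lemma~\ref{lem: keyResult} applies directly. It gives $\delta_1>0$ such that $\Ic$ satisfies $\Tquas(m,V_0-\delta_1)$: the backward type is fixed, and properties \ref{enu: A1 property}, \ref{enu: B property}, \ref{enu: I property} hold for all $V>V_0-\delta_1$. By Theorem~\ref{thm: V>4 Type A =000026 B} we may assume $V_0\leq4$. What remains is property \ref{enu: A2 property}: each $\Icmi(V)$, $1\leq i\leq M$, must not be of weak backward type $B$ for $V$ slightly below $V_0$. By Proposition~\ref{prop: duality of A-B types}, weak backward type $B$ of $\Icmi$ in $\sigcm$ means containment in a band of $\sigma_{[\co,m,-1]}=\sigma_{[\co,m-1]}$. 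So it suffices to show that $\Icmi(V)$ is not contained in any band of $\sigma_{[\co,m-1]}(V)$ for $V>V_0-\delta$.

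\emph{Case $m\geq2$.} We use hypothesis (c), $\crito([\co,m-1])=0$. Then for every $V>0$ the band $\Ic$ has its associated bands in $\sigma_{[\co,m-1]}$ and $\sigma_{[\co,m-1,n]}$ satisfying \ref{enu: A1 property}, \ref{enu: B property}, \ref{enu: I property}. Note that $\varphi([\co,m-1,1])=\varphi([\co,m])$. By the index bookkeeping of Lemma~\ref{lem: index identities for spectral bands} (valid at $V>4$ and hence for all $V$), the bands $\Icmi$ coincide with the bands $I^{j}_{[\co,m-1,1]}$ (with $j=i$ or $j=i+1$ according to the type). The tower property at level $m-1$ then gives that each $\Icmi$ is strictly contained in $\Ic$ and interlaced with the bands $I^{i'}_{[\co,m-1]}$ via \ref{enu: I property}. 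Property \ref{enu: B2 property} at level $m-1$ already states that these bands are not of weak backward type $A$ in $\sigma_{[\co,m-1,1]}$, i.e.\ not contained in a band of $\sigma_{[\co,m-1,0]}$. What we actually need is non-containment in a band of $\sigma_{[\co,m-1]}$. Here the interlacing \ref{enu: I property} for $[\co,m-1,1]$ gives $I^{i}_{[\co,m-1,1]}\prec I^{i}_{[\co,m-1]}\prec I^{i+1}_{[\co,m-1,1]}$. Inside $\Ic$ the bands of $\sigma_{[\co,m-1]}$ are exactly the $I^{i'}_{[\co,m-1]}$, and the ones outside $\Ic$ are handled as in Lemma~\ref{lem: Task (B2)} through $J_{[\co,m-1]},K_{[\co,m-1]}$. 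Together these exclude containment, hence \ref{enu: A2 property} holds for all $V>0$ in this case, with no $\delta$ needed.

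\emph{Case $m=1$.} Here $M=m-1=0$ when $\Ic$ is of backward type $A$, so \ref{enu: A2 property} is vacuous. If $\Ic$ is of backward type $B$, then $M=1$ and the only band is $\Ico$. Hypothesis (c) gives $\varphi(\co)\neq1$, and also $\varphi(\co)\neq0$ by the assumptions on $\co$, so $\varphi(\co)\in(0,1)$ and Lemma~\ref{lem: task (B1)} applies with any $V_1>0$. It yields that $\Ico(V)$ is not of weak backward type $B$ for all $V>0$.

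Combining, we take $\delta=\delta_1$ (shrunk if necessary so that $V_0-\delta>0$), and $\Ic$ satisfies $\TmV(m,V_0-\delta)$. The main obstacle is the identification, in the case $m\geq2$, of the bands $\Icmi$ with the level-$(m-1)$ tower bands $I^{j}_{[\co,m-1,1]}$. The interlacing must then be converted into non-containment in every band of $\sigma_{[\co,m-1]}$, including the neighbouring bands $J_{[\co,m-1]}$, $K_{[\co,m-1]}$ outside $\Ic$. If this direct route fails, I would fall back on a continuity argument of the same type as in Lemma~\ref{lem: task (B1)}: take the largest $V_2<V_0$ where an edge of $\Icmi$ meets an edge of a $\sigma_{[\co,m-1]}$-band, and derive a contradiction at $V_2$ via Lemma~\ref{lem: Basics for eigenavlue inequalities} and the index relations.
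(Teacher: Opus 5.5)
Your skeleton matches the paper's. Lemma~\ref{lem: keyResult} applies because $\crito([\co,m])\le\crit([\co,m])\le V_0$, it yields $\Tquas(m,V_0-\delta)$, and only \ref{enu: A2 property} remains. Your case $m=1$ is the paper's argument via Lemma~\ref{lem: task (B1)}, with one correction: $\varphi(\co)\neq0$ does not follow from the assumptions on $\co$ alone. $\varphi(\co)=0$ forces $\co=[0,0]$. The actual reason is that the only band of $\sigma_{[0,0]}$ is of backward type $A$, so the type-$B$ subcase never arises for $\co=[0,0]$.

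In the case $m\geq2$ you misread property \ref{enu: B2 property}, and this is what sends you on the detour.
\begin{itemize}
\item For a band of $\sigma_{[\co,m-1,1]}$, weak backward type $A$ refers to $\sigma_{[\co,m-1,1,0]}$, and $\varphi([\co,m-1,1,0])=\varphi([\co,m-1])$.
\item It is backward type $B$ that refers to $\sigma_{[\co,m-1,1,-1]}=\sigma_{[\co,m-1,0]}=\sigma_{\co}$.
\item Your reading also contradicts \ref{enu: B1 property}, which places $I^{j}_{[\co,m-1,1]}$ strictly inside $\Ic$, a band of $\sigma_{[\co,m-1,0]}$.
\end{itemize}
Hence \ref{enu: B2 property} from $\Tquas(m-1,V)$, which holds for every $V>0$ by hypothesis (c), says verbatim that $I^{i}_{[\co,m-1,1]}(V)$ lies in no band of $\sigma_{[\co,m-1]}(V)$. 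That is exactly the non-containment you need, once you identify $\Icmi=I^{i}_{[\co,m-1,1]}$.

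In that identification the correspondence is always $j=i$, not ``$i$ or $i+1$''. Write $M'$ for the value of $M$ at level $m-1$. Then $M=M'+1$ in both the $A$ case and the $B$ case. Both families are the left-to-right list of bands of $\sigcm=\sigma_{[\co,m-1,1]}$ contained in $\Ic$ for $V>4$, by uniqueness in Theorem~\ref{thm: V>4 Type A =000026 B}. No index bookkeeping is needed. This identification, together with Proposition~\ref{prop: duality of A-B types}, is the whole of the paper's proof.

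Your detour through \ref{enu: I property} and the bands $J_{[\co,m-1]}$, $K_{[\co,m-1]}$ essentially re-derives \ref{enu: B2 property}, along the lines of the proof of Corollary~\ref{cor: Properties A1+B2+I}. It can be completed: Lemma~\ref{lem: Task (B2)} applies for all $V>0$ because \ref{enu: B1 property} at level $m-1$ supplies the strict inclusions it needs. But the detour is redundant. Also, as written it relies on ``the bands of $\sigma_{[\co,m-1]}$ inside $\Ic$ are exactly the $I^{i'}_{[\co,m-1]}$''. That is only known for $V>4$; for small $V$ the bands $J_{[\co,m-1]}$ and $K_{[\co,m-1]}$ may overlap $\Ic$, so the step has to be phrased through band indices. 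The fallback continuity argument is unnecessary.
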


\begin{proof}
Applying Lemma~\ref{lem: keyResult}, which is justified by assumptions
(a) and (b) here, there exists $\delta>0$ such that $\Ic$ satisfies
$\Tquas(m,V_{0}-\delta)$. Thus, we only have to show that $I_{\co}(V)$
also satisfies the forward property \ref{enu: A2 property} for all
$V>V_{0}-\delta$. We consider the following two cases:

\underline{Case 1: ($m=1$)} If $m=1$ and $I_{\co}(V)$ is of backward
type $A$ then it does not contain any spectral band of $\sigma_{[\co,1]}(V)$
for all $V>4$ by Theorem~\ref{thm: V>4 Type A =000026 B}. Hence,
there are no $I^{i}_{[\co,1]}$ spectral bands (see Definition~\ref{def: Notion Icmi. Icmnj and M})
and there is nothing to prove in this case. We need only to deal with
the case $m=1$ when $I_{\co}(V)$ is of backward type $B$. Towards
doing this, notice that if $\varphi(\co)=0$ then $\co=[0,0]$ since
$[\co,m]\in\Co$ is assumed. But, $\sigma_{[0,0]}(V)=[-2,2]$ only
consists of a backward type $A$ band, see Example~\ref{exa:A-B types}.
Hence, when checking the case that $I_{\co}(V)$ is of backward type
$B$, we may further assume $\varphi(\co)\neq0$.

Combining this with condition (c) of the lemma, we may now assume
that $\varphi(\co)\in(0,1)$ and $I_{\co}(V)$ is of backward type
$B$. By Definition~\ref{def: Notion Icmi. Icmnj and M}, since $m=1$,
we have exactly one spectral band $I^{1}_{[\co,1]}(V)$ for which
we need to show that it is not of weak backward type $B$ for all
$V>V_{0}-\delta$. Indeed, Lemma~\ref{lem: task (B1)} implies that
for all $V>V_{0}-\delta$, $I^{1}_{[\co,1]}(V)\strict I_{\co}(V)$
and $I^{1}_{[\co,1]}(V)$ is not of weak backward type $B$.

\underline{Case 2: ($m\geq 2$)} We need to show that $I^{i}_{[\co,m]}(V)$
is not of weak backward type $B$ for all $1\leq i\leq M$ and $V>V_{0}-\delta$.
Let $1\leq i\leq M$. We know by Theorem~\ref{thm: V>4 Type A =000026 B}
that $\Icmi(V)$ is of backward type $A$ in $\sigcm(V)$ for $V>4$.
Denoting $m':=m-1\geq1$, Proposition~\ref{prop: duality of A-B types}
implies that $\Icmi$ equals to the spectral band $I^{i}_{[\co,m',1]}$
in $\sigma_{[\co,m',1]}$, which is of backward type $B$ for $V>4$.
Using Proposition~\ref{prop: duality of A-B types} again, it suffices
to show that $I^{i}_{[\co,m',1]}(V)$ is not of weak backward type
$A$ in $\sigma_{[\co,m',1]}(V)$ for all $V>V_{0}-\delta$.

By assumption (c) for $m\ge2$, we have $\crito([\co,m'])=\crito([\co,m-1])=0$.
Hence, for all $V>0$, $\Ic$ satisfies $\Tquas(m-1,V)$. This implies
by \ref{enu: B2 property} that $I^{i}_{[\co,m',1]}(V)$ is not of
weak backward type $A$ for all $V>0$ .
\end{proof}

We now apply Lemma~\ref{Lem-keyResult_A2} for all spectral bands
$\Ic$ in $\sigc$. Using that the number of spectral bands in $\sigc$
is finite and taking the minimum $\delta$ among all $\Ic$ in $\sigc$
(the $\delta$ which is provided by Lemma~\ref{Lem-keyResult_A2}),
we get:
\begin{cor}
\label{Cor-keyResult_A2} Let $V_{1}>0$, $m\in\N$ and $\co\in\Co$
be such that $\varphi(\co)\not\in\left\{ -1,\infty\right\} $ and
$[\co,m]\in\Co$. Suppose that each spectral band $I_{\co}$ in $\sigc$
satisfies all of the following:
\begin{enumerate}
\item \label{enu: Cor-keyResult_A2 - 1}$I_{\co}(V)$ is either of backward
type~$A$ for all $V>0$ or of backward type~$B$ for all $V>0$,
\item \label{enu: Cor-keyResult_A2 - 2}$I^{1}_{[\co,m,1]}(V),~I^{M+1}_{[\co,m,1]}(V)\strict I_{\co}(V)$
holds for all $V\geq V_{1}$, where 
\[
M:=\begin{cases}
m-1 & I_{\co}(V)\textrm{~is~of~backward~type}~A~\textrm{for all }V>0,\\
m & I_{\co}(V)\textrm{~is~of~backward~type}~B~\textrm{for all }V>0,
\end{cases}
\]
\item \label{enu: Cor-keyResult_A2 - 3}if $m=1$, then $\varphi(\co)\neq1$,\\
 if $m\geq2$, then $\crito([\co,m-1])=0$.
\end{enumerate}
Then $\crit([\co,m])<V_{1}$. In particular, if (b) holds for all
$V_{1}>0$, then $\crit([\co,m])=0$.

\end{cor}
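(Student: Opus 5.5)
This is the exact analogue of Corollary~\ref{Cor-keyResult}, with Lemma~\ref{Lem-keyResult_A2} in place of Lemma~\ref{lem: keyResult}. I argue by contradiction on the critical value.

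Set $V_{0}:=\crit([\co,m])$. By Theorem~\ref{thm: V>4 Type A =000026 B} every spectral band of $\sigc$ satisfies $\TmV(m,V)$ for all $V>4$, so $V_{0}\leq4$ is finite. Suppose, towards a contradiction, that $V_{0}\geq V_{1}$; in particular $V_{0}>0$.

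Fix a spectral band $\Ic$ of $\sigc$. I check the hypotheses of Lemma~\ref{Lem-keyResult_A2} at this $V_{0}$.
\begin{itemize}
\item Condition (a) is assumption (\ref{enu: Cor-keyResult_A2 - 1}).
\item Condition (b) is needed for all $V\geq V_{0}$. Since $V_{0}\geq V_{1}$, it follows from assumption (\ref{enu: Cor-keyResult_A2 - 2}).
\item Condition (c) is assumption (\ref{enu: Cor-keyResult_A2 - 3}) verbatim.
\item The requirement $V_{0}\geq\crit([\co,m])$ holds with equality.
\end{itemize}
Lemma~\ref{Lem-keyResult_A2} therefore yields $\delta(\Ic)>0$ such that $\Ic$ satisfies $\TmV(m,V_{0}-\delta(\Ic))$.

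By Proposition~\ref{prop: Basic spectral prop periodic}, $\sigc$ has only finitely many spectral bands. Hence $\delta':=\min_{\Ic}\delta(\Ic)>0$. The property $\TmV(m,V)$ is monotone: it quantifies over all $V'\geq V$, so $\TmV(m,V)$ implies $\TmV(m,V'')$ for every $V''\geq V$. Consequently every band satisfies $\TmV(m,V_{0}-\delta')$. By the definition of the infimum this gives $\crit([\co,m])\leq V_{0}-\delta'<V_{0}$, which is a contradiction. Thus $\crit([\co,m])<V_{1}$.

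For the final assertion: if (b) holds for every $V_{1}>0$, then $\crit([\co,m])<V_{1}$ for every $V_{1}>0$, so $\crit([\co,m])=0$.

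There is no real obstacle here. The only points needing care are the finiteness of the band count, which gives a uniform $\delta'$, and the monotonicity of $\TmV(m,\cdot)$ in $V$, which lets the per-band statements be combined at the smallest $\delta$.
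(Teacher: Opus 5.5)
Your proof is correct and is essentially the paper's own argument: the paper proves this corollary by rerunning the contradiction argument of Corollary~\ref{Cor-keyResult}, with Lemma~\ref{Lem-keyResult_A2} in place of Lemma~\ref{lem: keyResult}, and taking the minimum $\delta$ over the finitely many bands of $\sigc$. Your explicit remarks that $\TmV(m,\cdot)$ is monotone in $V$ and that $V_{0}\leq4$ fill in details the paper leaves implicit.
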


\begin{proof}
Similarly as in Corollary~\ref{Cor-keyResult}, this follows immediately
from Lemma~\ref{Lem-keyResult_A2} and the fact that $\sigma_{\co}(V)$
consists only of finitely many spectral bands independent of $V>0$.
\end{proof}

Finally, we are ready to prove Proposition~\ref{prop: Backward implies forward}.
\begin{proof}
[Proof of Proposition \ref{prop: Backward implies forward}] Since
$\varphi(\co)\in(0,1)$ and each spectral band $I_{\co}(V)$ in $\sigma_{\co}(V)$
is either of backward type $A$ or $B$ for all $V>0$, Lemma~\ref{lem: Backward Implies Bands strictly included}
implies 
\begin{equation}
I^{1}_{[\co,m,1]}(V),I^{M+1}_{[\co,m,1]}(V)\strict I_{\co}(V),\qquad\textrm{for all }m\in\N,\;V>0,\label{Eq-Propo-v_c(=00005Bc.m=00005D)=00003D0}
\end{equation}
where $M=m-1$ if $\Ic$ is of backward type $A$ and $M=m$ if $\Ic$
is of backward type $B$. Now $\crit([\co,m])=0$ is proven by induction
over $m\in\N$.

For the induction base, let $m=1$. Since Equation~\eqref{Eq-Propo-v_c(=00005Bc.m=00005D)=00003D0}
holds for $m=1$ and $\varphi(\co)\neq1$, Corollary~\ref{Cor-keyResult_A2}
(for $m=1$) implies $\crit([\co,1])=0$.

For the induction step, let $m\in\N$ be such that $\crit([\co,m])=0$.
Thus, $\crito([\co,m])=0$ follows as $0\leq\crito([\co,m])\leq\crit([\co,m])$.
Since Equation~\eqref{Eq-Propo-v_c(=00005Bc.m=00005D)=00003D0} holds
for $m+1$ and $\crito([\co,m])=0$, Corollary~\ref{Cor-keyResult_A2}
(for $m+1\geq2$) implies $\crit([\co,m+1])=0$.
\end{proof}

\section{The induction base of the main theorem\label{sec: Pf_InductionBase}}

This section contains the proof of the induction base which is used
in the proof of Theorem~\ref{thm: Every band is A or B}. Specifically,
we show in this section that for all $V\neq0$, the spectral bands
in $\sigma_{[0,0]}(V)$ and $\sigma_{[0,0,1]}(V)$ are either of type
$A$ or $B$. For this proof we express the transfer matrices, $M_{\co}(E,V)$,
and their traces, $\tc(E,V)$, (see Section~\ref{subsec: spectra via transfer matrices})
using the dilated Chebyshev polynomials of the second kind $S_{l}:\R\to\R,\,l\in\N_{0}$.
These polynomials are defined by 
\[
S_{-1}(x):=0,\quad S_{0}(x):=1,\quad S_{l}(x):=xS_{l-1}(x)-S_{l-2}(x),
\]
see Appendix~\ref{App: Trace Maps} for more details and properties
of these polynomials.
\begin{lem}
\label{lem: powers of M_0.0} For all $m\in\N$ and $V\in\R$, we
have
\[
M^{m}_{[0,0]}(E,V)=\begin{pmatrix}S_{m}(E) & -S_{m-1}(E)\\
S_{m-1}(E) & -S_{m-2}(E)
\end{pmatrix},\qquad E\in\R.
\]
\end{lem}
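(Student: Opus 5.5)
We argue by induction on $m\in\N$, using only the Chebyshev recursion $S_{l}(E)=E\,S_{l-1}(E)-S_{l-2}(E)$ together with the conventions $S_{-1}=0$ and $S_{0}=1$. Note that $M_{[0,0]}(E,V)$ does not depend on $V$, so the claim is purely about powers of the fixed matrix $\begin{pmatrix}E & -1\\ 1 & 0\end{pmatrix}$.

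For the base case $m=1$ we need $S_{1}(E)=E$, $S_{0}(E)=1$ and $S_{-1}(E)=0$. The last two hold by definition, and the recursion gives $S_{1}=E\cdot S_{0}-S_{-1}=E$. Hence the right-hand side becomes $\begin{pmatrix}E & -1\\ 1 & 0\end{pmatrix}=M_{[0,0]}(E,V)$, as required.

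For the induction step, assume the formula holds for $m$. We multiply on the right by $M_{[0,0]}$:
\[
\begin{pmatrix}S_{m} & -S_{m-1}\\ S_{m-1} & -S_{m-2}\end{pmatrix}\begin{pmatrix}E & -1\\ 1 & 0\end{pmatrix}=\begin{pmatrix}E S_{m}-S_{m-1} & -S_{m}\\ E S_{m-1}-S_{m-2} & -S_{m-1}\end{pmatrix}.
\]
By the recursion, $E S_{m}-S_{m-1}=S_{m+1}$ and $E S_{m-1}-S_{m-2}=S_{m}$. The product is therefore exactly the claimed matrix with $m$ replaced by $m+1$.

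No step here poses a real obstacle. The only point needing care is the index bookkeeping at $m=1$, where $S_{-1}=0$ enters, and this is handled by the convention in the definition of the polynomials.
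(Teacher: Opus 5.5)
Your proof is correct and is the same as the paper's: induction on $m$, with the identical base case $S_{1}=E$ and the Chebyshev recursion closing the step. The only difference is cosmetic. You compute $M^{m}_{[0,0]}M_{[0,0]}$, whereas the paper computes $M_{[0,0]}M^{m}_{[0,0]}$, and this does not matter because powers of a single matrix commute.
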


\begin{proof}
We prove this by induction on $m$. The induction base ($m=1$) follows
just by definition as 
\[
M^{1}_{[0,0]}(E,V)=\begin{pmatrix}E & -1\\
1 & 0
\end{pmatrix}=\begin{pmatrix}S_{m}(E) & -S_{m-1}(E)\\
S_{m-1}(E) & -S_{m-2}(E)
\end{pmatrix}
\]
using that $S_{1}(E)=S_{0}(E)E-S_{-1}(E)=E$. For the induction step,
suppose the statement is true for $m$. Then 
\begin{align*}
M^{m+1}_{[0,0]}(E,V)=M_{[0,0]}(E,V)M^{m}_{[0,0]}(E,V)= & \begin{pmatrix}E & -1\\
1 & 0
\end{pmatrix}\begin{pmatrix}S_{m}(E) & -S_{m-1}(E)\\
S_{m-1}(E) & -S_{m-2}(E)
\end{pmatrix}\\
= & \begin{pmatrix}ES_{m}(E)-S_{m-1}(E) & -ES_{m-1}(E)+S_{m-2}(E)\\
S_{m}(E) & -S_{m-1}(E)
\end{pmatrix}\\
= & \begin{pmatrix}S_{m+1}(E) & -S_{m}(E)\\
S_{m}(E) & -S_{m-1}(E)
\end{pmatrix}
\end{align*}
proving the statement.
\end{proof}

\begin{lem}
\label{lem: basic traces t_0.0} For all $E,V\in\R$ and $m\in\N$
the following holds:
\begin{enumerate}
\item \label{enu: lem-basic traces t_0.0 - 1}$t_{[0,0,m]}(E,V)=S_{m}(E)-VS_{m-1}(E)-S_{m-2}(E)$
for all $E\in\R$.
\item \label{enu: lem-basic traces t_0.0 - 2}$t_{[0,0,1,m]}(E,V)=ES_{m}(E-V)-2S_{m-1}(E-V)$
for all $E\in\R$.
\item \label{enu: lem-basic traces t_0.0 - 3}$t_{[0,0,1,m,1]}(E,V)=ES_{m+1}(E-V)-2S_{m}(E-V)$
for all $E\in\R$.
\end{enumerate}
\end{lem}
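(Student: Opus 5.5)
The plan is to compute the transfer matrices directly from the recursion $M_{\co}=M_{[0,0,c_1,\ldots,c_{k-2}]}M_{[0,0,c_1,\ldots,c_{k-1}]}^{c_k}$, combined with Lemma~\ref{lem: powers of M_0.0}, and then take traces. Abbreviate $S_l=S_l(E)$ when the argument is $E$.

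For (a), the recursion gives $M_{[0,0,m]}=M_{[0]}M_{[0,0]}^{m}$. Multiply $\begin{pmatrix}1&-V\\0&1\end{pmatrix}$ by the matrix of Lemma~\ref{lem: powers of M_0.0}. The $(1,1)$ entry is $S_m-VS_{m-1}$ and the $(2,2)$ entry is $-S_{m-2}$, so their sum is the claimed trace.

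For (b), first note that $M_{[0,0,1]}=M_{[0]}M_{[0,0]}=\begin{pmatrix}E-V&-1\\1&0\end{pmatrix}$. This is $M_{[0,0]}$ with $E$ replaced by $E-V$, so Lemma~\ref{lem: powers of M_0.0} applied at the argument $E-V$ gives $M_{[0,0,1]}^m$ in terms of $S_l(E-V)$. Then $M_{[0,0,1,m]}=M_{[0,0]}M_{[0,0,1]}^m$, with $M_{[0,0]}=\begin{pmatrix}E&-1\\1&0\end{pmatrix}$. Its trace is $ES_m(E-V)-S_{m-1}(E-V)-S_{m-1}(E-V)$, which is the claim.

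For (c), use the recursion with last digit $1$: $M_{[0,0,1,m,1]}=M_{[0,0,1]}M_{[0,0,1,m]}=M_{[0,0,1]}M_{[0,0]}M_{[0,0,1]}^m$. By cyclicity of the trace,
\[
t_{[0,0,1,m,1]}=\mathrm{tr}\big(M_{[0,0]}M_{[0,0,1]}^{m+1}\big),
\]
which is exactly (b) with $m$ replaced by $m+1$.

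One bookkeeping point needs checking: for $m=1$, Lemma~\ref{lem: powers of M_0.0} uses $S_{-1}=0$, which holds by definition. Beyond that there is no real obstacle; the only care needed is the order of the matrix products in the recursion, and the cyclicity step in (c) makes that order irrelevant for the trace.
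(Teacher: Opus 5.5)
Your proposal is correct and for (a) and (b) follows the paper's proof essentially verbatim. It uses the same factorizations $M_{[0,0,m]}=M_{[0]}M_{[0,0]}^{m}$ and $M_{[0,0,1,m]}=M_{[0,0]}M_{[0,0,1]}^{m}$, with $M_{[0,0,1]}(E,V)=M_{[0,0]}(E-V,V)$, combined with Lemma~\ref{lem: powers of M_0.0}. For (c) the paper cites the trace-map identity $t_{[\co,m,1]}=t_{[\co,m+1]}$ from Proposition~\ref{Prop-traceMaps} and then applies (b). Your cyclicity argument proves exactly that identity in this special case, so the two routes coincide.
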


\begin{proof}
We recall (Section~\ref{subsec: spectra via transfer matrices})
that the transfer matrices are recursively defined by 
\[
M_{[0,0,c_{1},\ldots,c_{k}]}(E,V):=M_{[0,0,c_{1}\ldots,c_{k-2}]}(E,V)M_{[0,0,c_{1},\ldots,c_{k-1}]}(E,V)^{c_{k}}.
\]

(\ref{enu: lem-basic traces t_0.0 - 1}) Using Lemma~\ref{lem: powers of M_0.0},
we get 
\begin{align*}
t_{[0,0,m]}(E,V)= & \tr\left(M_{[0,0,m]}(E,V)\right)=\tr\left(M_{[0]}(E,V)M^{m}_{[0,0]}(E,V)\right)\\
= & \tr\left(\begin{pmatrix}1 & -V\\
0 & 1
\end{pmatrix}\begin{pmatrix}S_{m}(E) & -S_{m-1}(E)\\
S_{m-1}(E) & -S_{m-2}(E)
\end{pmatrix}\right)\\
= & S_{m}(E)-VS_{m-1}(E)-S_{m-2}(E).
\end{align*}
(\ref{enu: lem-basic traces t_0.0 - 2}) We first observe that
\begin{align*}
M_{[0,0,1]}(E,V)= & M_{[0]}(E,V)M_{[0,0]}(E,V)=\begin{pmatrix}1 & -V\\
0 & 1
\end{pmatrix}\begin{pmatrix}E & -1\\
1 & 0
\end{pmatrix}\\
= & \begin{pmatrix}E-V & -1\\
1 & 0
\end{pmatrix}=M_{[0,0]}(E-V,V).
\end{align*}
Thus, Lemma~\ref{lem: powers of M_0.0} leads to
\begin{align*}
t_{[0,0,1,m]}= & \tr\left(M_{[0,0]}(E,V)M_{[0,0,1]}(E,V)^{m}\right)\\
= & \tr\left(M_{[0,0]}(E,V)M_{[0,0]}(E-V,V)^{m}\right)\\
= & \tr\left(\begin{pmatrix}E & -1\\
1 & 0
\end{pmatrix}\begin{pmatrix}S_{m}(E-V) & -S_{m-1}(E-V)\\
S_{m-1}(E-V) & -S_{m-2}(E-V)
\end{pmatrix}\right)\\
= & ES_{m}(E-V)-2S_{m-1}(E-V).
\end{align*}

(\ref{enu: lem-basic traces t_0.0 - 3}) This follows from (\ref{enu: lem-basic traces t_0.0 - 2})
and Proposition~\ref{Prop-traceMaps}~(\ref{enu:.Prop-traceMaps - 1})
asserting $t_{[0,0,1,m,1]}=t_{[0,0,1,m+1]}$.
\end{proof}

\begin{example}
\label{exa: first spectral bands} We explicitly write here the expressions
of a few traces, which will turn to be useful in the sequel. We have
\[
t_{[0,0,1,-1]}(E,V)=t_{[0]}(E,V)=t_{[0,0,0]}(E,V)=2,
\]
and
\begin{align*}
t_{[0,0]}(E,V)= & E, & t_{[0,0,-1]}(E,V)= & E+V,\\
t_{[0,0,1]}(E,V)= & E-V, & t_{[0,0,2]}(E,V)= & E^{2}-EV-2,\\
t_{[0,0,1,2]}(E,V)= & E^{3}-2E^{2}V+E(V^{2}-3)+2V, & t_{[0,0,3]}(E,V)= & E^{3}-E^{2}V-3E+V.
\end{align*}
\end{example}

Next, we prove two lemmata. The first lemma states that the spectral
band $I_{[0,0]}(V):=[-2,2]$ (in $\sigma_{[0,0]}(V)$) is of type
$A$. The second lemma states that the spectral band $I_{[0,0,1]}(V):=[-2+V,2+V]$
(in $\sigma_{[0,0]}(V)$) is of type $B$. Hence, both lemmata provide
the induction base needed to prove Theorem~\ref{thm: Every band is A or B}.
\begin{lem}
\label{lem: I_0.0 properties} Let $I_{[0,0]}(V):=[-2,2]$ be the
unique spectral band of $\sigma_{[0,0]}(V)$ for $V>0$. The following
assertions hold for all $V>0$.
\begin{enumerate}
\item \label{enu: lem-I_0.0 properties - 1}$I_{[0,0]}(V)$ is of backward
type $A$ and not of weak backward type $B$,
\item \label{enu: lem-I_0.0 properties - 2}For all $m\in\N$, $I_{[0,0]}(V)$
is of $m$-type $A$, namely $\crit([0,0,m])=0$,
\item \label{enu: lem-I_0.0 properties - 3}For all $m\in\N$, $\sigma_{[0,0,m]}(V)$
consists of $m$ spectral bands satisfying
\begin{enumerate}
\item[$\bullet$] the left-most $m-1$ spectral bands are of backward type $A$ and
not of weak backward type $B$. These spectral bands are strictly
contained in $I_{[0,0]}(V)$.
\item[$\bullet$] the right-most spectral band, which we denote $K_{[0,0,m]}(V)$,
is of backward type $B$ but not of weak backward type $A$. The spectral
bands $K_{[0,0,m]}(V)$ (one for each $m\in\N$) satisfy 
\[
I_{[0,0]}(V)\prec K_{[0,0,m]}(V)
\]
and
\[
K_{[0,0,m]}(V)\strict K_{[0,0,m-1]}(V)\strict\ldots\strict K_{[0,0,1]}(V)\strict K_{[0]}(V),
\]
with the notational convention $K_{[0]}(V):=\R=\sigma_{[0]}(V)$.
\end{enumerate}
\end{enumerate}
\end{lem}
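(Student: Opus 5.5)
The plan is to prove (a) and (c) by explicit trace computations, using Lemma~\ref{lem: basic traces t_0.0} and Lemma~\ref{lem: ConnecSpectrTrace}, and then to obtain (b) from (c) by a direct check of Definition~\ref{def: forward type}. I do not intend to pass through Proposition~\ref{prop: Backward implies forward}, since that result requires $\varphi(\co)\in(0,1)$.

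For (a), note that $\sigma_{[0,0,0]}=\R$ by Definition~\ref{def: sigma_c}, so $[-2,2]$ is trivially strictly contained in a band of $\sigma_{[0,0,0]}$ and is of backward type $A$. On the other hand, $\sigma_{[0,0,-1]}(V)=[-2-V,2-V]$, which does not contain $2$ when $V>0$. Hence $[-2,2]$ is not of weak backward type $B$.

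For (c), fix $E\in[-2,2]$ and write $E=2\cos\phi$. By Lemma~\ref{lem: basic traces t_0.0}(a), $t_{[0,0,m]}(E)=2\cos(m\phi)-VS_{m-1}(E)$.
\begin{itemize}
\item At the $m-1$ interior zeros of $S_{m-1}$ in $(-2,2)$, the trace equals $2\cos(m\phi)=\pm2$ with alternating signs. So these points are not in the interior of a band, and between consecutive zeros the trace crosses the strip $[-2,2]$.
\item At $E=\pm2$ one has $S_{m-1}(\pm2)=(\pm1)^{m-1}m$, so $|t_{[0,0,m]}(\pm 2)|=|2\pm mV|$ in the appropriate signs. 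At $E=-2$ this gives $|t|=2+mV>2$.
\item Hence the $m$ bands are distributed as follows: $m-1$ bands lie in $(-2,2)$, each separated from the next by a zero of $S_{m-1}$ at which $|t|=2$. The remaining band is the right-most one, which must contain points $E>2$.
\end{itemize}
The bands inside $(-2,2)$ are strictly contained in $I_{[0,0]}$, so they are of backward type $A$. To show they are not of weak backward type $B$, I would check that they are not contained in $[-2-V,2-V]$. A clean route is the following: the band adjacent to the right zero region contains a point where $t_{[0,0,m]}=t_{[0,0,m-1]}t_{[0,0]}-\dots$ fails $|t_{[0,0,-1]}|\le 2$. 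Concretely, one shows that each interior band contains a point $E$ with $|E+V|>2$, or alternatively one counts via the tower.

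For the right-most band $K_{[0,0,m]}$, the nesting $K_{[0,0,m]}\strict K_{[0,0,m-1]}$ is the statement that it is of backward type $B$, because $\varphi([0,0,m,-1])=\varphi([0,0,m-1])$. I would prove this nesting using the recursion of Lemma~\ref{lem: Traces and spectral edges}(c), $t_{[0,0,m+1]}=E\,t_{[0,0,m]}-t_{[0,0,m-1]}$, on the region $E>2$ (where $E=t_{[0,0]}>2$ there). Suppose $|t_{[0,0,m]}|\le 2$ fails at an edge of $K_{[0,0,m-1]}$. Then Lemma~\ref{lem: Trace estimates =00005Bc.m.n=00005D}-type monotonicity shows that $|t_{[0,0,m]}|$ grows outside $K_{[0,0,m-1]}$, which gives strict inclusion. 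The claim that $K_{[0,0,m]}$ is not of weak backward type $A$ follows once one shows it contains a point $E>2$. This is because $t_{[0,0,m]}(2)=2-mV<2$ and the band must extend past $2$, since the trace at $E=2$ lies inside $(-2,2)$ for small $V$ or below $-2$ otherwise. The relation $I_{[0,0]}\prec K_{[0,0,m]}$ then follows from this together with the edge analysis at $E=2$.

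For (b), $M=m-1$. Property (A1) and (A2) are exactly the first bullet of (c). For (B), I would apply (c)-type computations to $\sigma_{[0,0,m,n]}$, or use Lemma~\ref{lem: Backward Implies Bands strictly included}-style arguments. The cleanest way is to invoke Lemmas~\ref{lem: Task (I)}, \ref{lem: Task (B2)} and Corollaries~\ref{cor: Properties A1+B2+I}, \ref{cor: Tower property}: since those only need strict inclusions of the associated bands in $I_{[0,0]}$, it remains to show $I^{j}_{[0,0,m,n]}\strict[-2,2]$ for all $n$. For this, $t_{[0,0,m,n]}$ at $E=\pm2$ has modulus $>2$ by Lemma~\ref{lem: Trace estimates =00005Bc.m.n=00005D}(b), since $|t_{[0,0,m]}(\pm2)|$ exceeds $2$ at $-2$. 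At $E=2$ the trace equals $2-mV$, which is a problem exactly when $mV\le4$. I expect the main obstacle to be showing that the edge $E=2$ is never a band edge of $\sigma_{[0,0,m,n]}$. My first attempt would be a continuity/critical-$V$ argument as in Lemma~\ref{lem: Backward Implies Bands strictly included}, with case analysis at $E=2$ via the explicit Chebyshev formulas.
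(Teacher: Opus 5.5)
\emph{Assessment: the proposal has genuine gaps, most seriously the nesting $K_{[0,0,m]}\subseteq_{\mathrm{str}}K_{[0,0,m-1]}$.}

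What works: your part (a) is correct. The Chebyshev localization in (c) is also sound, and more direct than the paper. The zeros $2\cos(k\pi/m)$ of $S_{m-1}$ are band edges with $t_{[0,0,m]}=\pm2$. Hence, for every $V>0$, the $m-1$ left bands of $\sigma_{[0,0,m]}$ lie in $(-2,2\cos(\pi/m)]$, and the right-most band contains points $E>2$. This gives (A1), $I_{[0,0]}\prec K_{[0,0,m]}$, and that $K_{[0,0,m]}$ is not of weak backward type $A$.

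\textbf{(i) Weak backward type $B$ of the left bands.} For a band of $\sigma_{[0,0,m]}$, weak backward type $B$ refers to $\sigma_{[0,0,m,-1]}=\sigma_{[0,0,m-1]}$, not to $\sigma_{[0,0,-1]}=[-2-V,2-V]$. You use the correct identification yourself for $K_{[0,0,m]}$. Your proposed check is also false: for $m=2$, the band $[V/2-\sqrt{V^2/4+4},\,0]$ lies inside $[-2-V,2-V]$ for $0<V\le 2$. The correct statement is property (A2), and for general $m$ the paper does not get it by computation. It uses the duality $I^{i}_{[0,0,m]}=I^{i}_{[0,0,m-1,1]}$ of Proposition~\ref{prop: duality of A-B types} together with (B2) at level $m-1$, i.e.\ Lemma~\ref{Lem-keyResult_A2} under the hypothesis $V^{quasi}_{\mathrm{crit}}([0,0,m-1])=0$. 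So your (b) cannot take (A2) from (c).

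\textbf{(ii) The nesting of the $K$'s.} $K_{[0,0,m]}\subseteq_{\mathrm{str}}K_{[0,0,m-1]}$ does not follow from trace growth. For small $V$ the left edge of $K_{[0,0,m-1]}$ lies in $(-2,2)$, and $|t_{[0,0,m]}|<2$ can hold there because that point sits in a \emph{different} band of $\sigma_{[0,0,m]}$. For example, $t_{[0,0,2]}(-2+V,V)=2-2V$, and $-2+V$ is interior to $[V/2-\sqrt{V^2/4+4},0]$ for $0<V<2$. Strictness on the left is therefore a statement about which band a point belongs to. The paper obtains it from Lemma~\ref{lem: task (B1)} applied to $K_{[0,0,m-1]}$. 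That lemma needs $\varphi([0,0,m-1])\in(0,1)$ and the induction hypothesis that $K_{[0,0,m-1]}$ is of backward type $B$ for all $V>0$; the cases $m\le 2$ are computed explicitly. Nothing in your proposal replaces this step.

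\textbf{(iii) Property (B) at $E=2$.} You aim at $I^{j}_{[0,0,m,n]}\subseteq_{\mathrm{str}}[-2,2]$ for every $n$, and propose to show that $E=2$ is never a band edge of $\sigma_{[0,0,m,n]}$. That is false: $t_{[0,0,m,1]}(2,V)=t_{[0,0,m+1]}(2,V)=2-(m+1)V$ equals $-2$ at $V=4/(m+1)$, where $2$ is the left edge of $K_{[0,0,m+1]}$. Lemma~\ref{lem: Backward Implies Bands strictly included} cannot be imitated either. Since $\varphi([0,0])=0$ and $t_{[0,0,0]}\equiv2$, the strict estimate (c) of Lemma~\ref{lem: Trace estimates =00005Bc.m.n=00005D} is unavailable.

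Two facts are missing here:
\begin{itemize}
\item Only $n=1$ is needed, because Corollaries~\ref{Cor-keyResult} and \ref{Cor-keyResult_A2} propagate the inclusions through the tower.
\item For $n=1$ and small $V$, the paper handles $E=2$ by interlacing. It applies Lemma~\ref{lem: Basics for eigenavlue inequalities} with $\lambda_{\mathbf{o}}=2$ for the triple $[0,0],[0,0,m],[0,0,m,1]$, using $L(K_{[0,0,m]})<2$.
\end{itemize}

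In fact your own localization, applied with $m+1$ in place of $m$, already settles the $n=1$ step. The bands $I^{j}_{[0,0,m,1]}$, $1\le j\le m$, are the $m$ left-most bands of $\sigma_{[0,0,m+1]}$, so they lie in $(-2,2\cos(\pi/(m+1))]$ for all $V>0$. Corollary~\ref{Cor-keyResult} then gives $V^{quasi}_{\mathrm{crit}}([0,0,m])=0$ for all $m$. Corollary~\ref{Cor-keyResult_A2} then gives $V_{\mathrm{crit}}([0,0,m])=0$, which repairs (b) and gap (i). Gap (ii), the tower of the $K_{[0,0,m]}$, still requires an argument like Lemma~\ref{lem: task (B1)} inside an induction on $m$.
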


\begin{figure}[htb]
\includegraphics[scale=1.2]{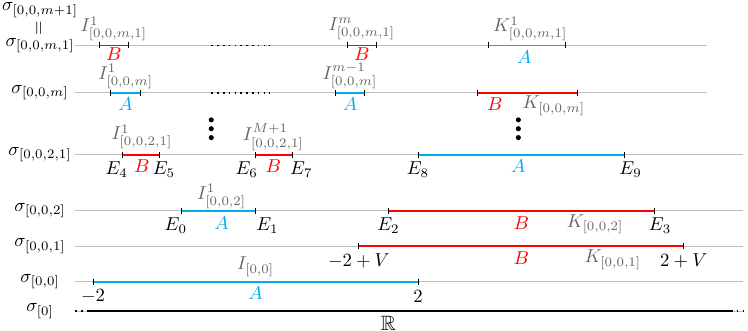} \caption{A sketch of the spectral bands considered in the proof of Lemma~\ref{lem: I_0.0 properties}.}
\label{Fig-=00005B0.0=00005D_m-forward_A}
\end{figure}
For the convenience of the reader, we sketch in Figure~\ref{Fig-=00005B0.0=00005D_m-forward_A}
the spectral bands mentioned in the proof of Lemma~\ref{lem: I_0.0 properties}
and their relations ($\prec$ and $\strict$).
\begin{proof}
Part (\ref{enu: lem-I_0.0 properties - 1}) follows from $\sigma_{[0,0,0]}(V)=\sigma_{[0]}(V)=\R$
and $\sigma_{[0,0,-1]}(V)=[-2-V,2-V]$.

Parts (\ref{enu: lem-I_0.0 properties - 2}) and (\ref{enu: lem-I_0.0 properties - 3})
of the lemma are proven together by induction over $m\in\N$. Towards
proving them we denote $M:=m-1$, since we are trying to prove that
$I_{[0,0]}$ is of $m$-forward type $A$, confer Definition~\ref{def: forward type}.

\textbf{Induction base:} The induction base consists of $m=1$ and
$m=2$. We start with proving (\ref{enu: lem-I_0.0 properties - 2})
and (\ref{enu: lem-I_0.0 properties - 3}) for $m=1$ and $m=2$.

Let $m=1$. Then $\sigma_{[0,0,1]}(V)=[-2+V,2+V]$ and its unique
spectral band $K_{[0,0,1]}:=[-2+V,2+V]$ is of backward type $B$
but not of weak backward type $A$ (as it is strictly contained in
$\sigma_{[0]}(V)=\R=K_{[0]}(V)$ and $\sigma_{[0,0]}(V)=I_{[0,0]}(V)\prec K_{[0,0,1]}(V)$).
This proves (\ref{enu: lem-I_0.0 properties - 3}) for $m=1$.

Let $m=2$. We have $t_{[0,0,2]}(E,V)=E^{2}-EV-2$ (see Example~\ref{exa: first spectral bands})
 implying 
\begin{align*}
t_{[0,0,2]}(E,V)=2 & \quad\Longleftrightarrow\quad E=\frac{V}{2}\pm\sqrt{\frac{V^{2}}{4}+4},\\
t_{[0,0,2]}(E,V)=-2 & \quad\Longleftrightarrow\quad E\in\{0,V\}.
\end{align*}
This motivates to denote
\[
E_{0}(V):=\frac{V}{2}-\sqrt{\frac{V^{2}}{4}+4},\quad E_{1}(V):=0,\quad E_{2}(V):=V,\quad E_{3}(V):=\frac{V}{2}+\sqrt{\frac{V^{2}}{4}+4},
\]
so that $E_{0}(V)<E_{1}(V)<E_{2}(V)<E_{3}(V)$ holds for all $V>0$.
Thus, $I_{[0,0,2]}(V):=[E_{0}(V),E_{1}(V)]$ and $K_{[0,0,2]}(V):=[E_{2}(V),E_{3}(V)]$
are the two spectral bands in $\sigma_{[0,0,2]}(V)$. Clearly $I_{[0,0,2]}(V)$
is of backward type $A$ for all $V>0$ since $I_{[0,0,2]}(V)\strict[-2,2]=I_{[0,0]}(V)$.
In addition $E_{0}(V)<-2+V$ for $V>0$ and so $I_{[0,0,2]}(V)$ is
not contained in $K_{[0,0,1]}(V)=[-2+V,2+V]$. Hence, $I_{[0,0,2]}(V)$
is not of weak backward type $B$ for all $V>0$. The spectral band
$K_{[0,0,2]}(V)$ is of backward type $B$ since $K_{[0,0,2]}(V)\strict[-2+V,2+V]=K_{[0,0,1]}(V)$
for all $V>0$. In addition, $E_{3}(V)\geq\frac{V}{2}+2>2$ for all
$V>0$ leading to $I_{[0,0]}(V)\prec K_{[0,0,2]}(V)$. Thus, $K_{[0,0,2]}(V)$
is not of weak backward type $A$.

Summing up, we have proven (\ref{enu: lem-I_0.0 properties - 2})
and (\ref{enu: lem-I_0.0 properties - 3}), for $m=1$ and $m=2$.
It is left to show (\ref{enu: lem-I_0.0 properties - 2}) for $m=1$
and $m=2$, namely that $\crit([0,0,1])=0$ and $\crit([0,0,2])=0$.

\underline{$\crit([0,0,1])=0$:} We first aim to apply Corollary~\ref{Cor-keyResult}
for $\co=[0,0]$ and $m=1$. Applying Corollary~\ref{Cor-keyResult}
would give that $\crito([0,0,1])=0$ (see Definition~\ref{def:quasi (mV)-type property})
and then one needs only to show property \ref{enu: A2 property} in
order to conclude $\crit([0,0,1])=0$. But, in this case property
\ref{enu: A2 property} is an empty statement since $M=m-1=0$.

In order to apply Corollary~\ref{Cor-keyResult} for $\co=[0,0]$
and $m=1$ we note the following. The spectrum $\sigc$ has only one
spectral band $I_{[0,0]}$ that is of backward type $A$ and not of
weak backward type $B$ for all $V>0$, which is assumption (\ref{enu: Cor-keyResult - 1})
of Corollary~\ref{Cor-keyResult}. To check assumption (\ref{enu: Cor-keyResult - 2})
of Corollary~\ref{Cor-keyResult} we need to prove that $I^{1}_{[0,0,1,1]}(V)\strict I_{[0,0]}(V)$
for all \textbf{$V>0$.}

By Proposition~\ref{prop: duality of A-B types}, $\sigma_{[0,0,1,1]}=\sigma_{[0,0,2]}$.
Consider the spectral band $I_{[0,0,2]}(V)=[E_{0}(V),E_{1}(V)]$,
which we calculated above. In particular, we have seen above that
$I_{[0,0,2]}(V)\strict I_{[0,0]}(V)$ holds for all $V>0$. Therefore
$I_{[0,0,2]}(V)$ equals to the unique spectral band $I^{1}_{[0,0,1,1]}(V)$
by Definition~\ref{def: Notion Icmi. Icmnj and M} . Thus, $I^{1}_{[0,0,1,1]}(V)\strict I_{[0,0]}(V)$
for all \textbf{$V>0$}, which verifies all the assumptions of Corollary~\ref{Cor-keyResult}.
As explained above, we conclude $\crit([0,0,1])=0$.

\underline{$\crit([0,0,2])=0$:} We aim to apply Corollary~\ref{Cor-keyResult_A2}
for $\co=[0,0]$ and $m=2$ in order to conclude $\crit([0,0,2])=0$.
Condition (\ref{enu: Cor-keyResult_A2 - 1}) of Corollary~\ref{Cor-keyResult_A2}
was already verified above, as the spectral band $I_{[0,0]}$ that
is of backward type $A$ and not of weak backward type $B$ for all
$V>0$. We have also proved above $\crito([0,0,1])=0$, which verifies
condition (\ref{enu: Cor-keyResult_A2 - 3}) of Corollary~\ref{Cor-keyResult_A2}.
We only have to check condition (\ref{enu: Cor-keyResult_A2 - 2})
for all $V>0$. Specifically, it is sufficient to prove $I^{1}_{[0,0,2,1]}(V),~I^{M+1}_{[0,0,2,1]}(V)\strict I_{[0,0]}(V)$
for all $V>0$. Note that $M+1=m=2$.

Using Proposition~\ref{Prop-traceMaps}~(\ref{enu:.Prop-traceMaps - 1})
and Lemma~\ref{lem: basic traces t_0.0} we conclude 
\[
t_{[0,0,2,1]}(E,V)=t_{[0,0,3]}(E,V)=S_{3}(E)-VS_{2}(E)-S_{1}(E)
\]
We use this to express all the $E$ values for which $t_{[0,0,2,1]}(E,V)\in\{-2,2\}$:
\begin{align*}
E_{4}(V):= & \frac{V-1}{2}-\frac{\sqrt{V^{2}+2V+9}}{2}, & E_{5}(V) & :=-1,\\
E_{6}(V):= & \frac{V+1}{2}-\frac{\sqrt{V^{2}-2V+9}}{2}, & E_{7}(V) & :=1,\\
E_{8}(V):= & \frac{V-1}{2}+\frac{\sqrt{V^{2}+2V+9}}{2}, & E_{9}(V) & :=\frac{V+1}{2}+\frac{\sqrt{V^{2}-2V+9}}{2},
\end{align*}
where $E_{4}(V)<E_{5}(V)<E_{6}(V)<E_{7}(V)<E_{8}(V)<E_{9}(V)$. Now,
it is straightforward to check that the three spectral bands in $\sigma_{[0,0,2,1]}$
are

\[
I^{1}_{[0,0,2,1]}(V)=[E_{4}(V),E_{5}(V)],\qquad I^{M+1}_{[0,0,2,1]}(V)=[E_{6}(V),E_{7}(V)]
\]
and
\[
K_{[0,0,3]}(V)=[E_{8}(V),E_{9}(V)].
\]
Furthermore, $I^{1}_{[0,0,2,1]}(V),I^{M+1}_{[0,0,2,1]}(V)\strict I_{[0,0]}(V)$
for all $V>0$. Thus, Corollary~\ref{Cor-keyResult_A2} implies $\crit([0,0,2])=0$,
hence statement (\ref{enu: lem-I_0.0 properties - 2}) of the current
lemma holds for $m=2$, and this finishes the proof of the induction
base.

\textbf{Induction step:} (see Figure~\ref{Fig-=00005B0.0=00005D_m-forward_A})
Let $m\geq2$ and suppose (induction hypothesis) that $\crit([0,0,m])=0$
and $\sigma_{[0,0,m]}(V)$ satisfies (\ref{enu: lem-I_0.0 properties - 3})
for all $V>0$.

We have $\varphi([0,0,m+1])=\frac{1}{m+1}$ and so $\sigma_{[0,0,m+1]}(V)$
consists of exactly $m+1$ spectral bands by Proposition~\ref{prop: Basic spectral prop periodic}
and Lemma~\ref{lem: ConnecSpectrTrace}.  Since $K_{[0,0,m]}(V)$
is of backward type $B$ (and not of weak backward type $A$) for
all $V>0$, we conclude that $K_{[0,0,m]}(V)$ is of type $B$ for
$V>4$, see Theorem~\ref{thm: V>4 Type A =000026 B}. Then property
\ref{enu: A1 property} of $K_{[0,0,m]}$ implies that for $V>4$,
there is a spectral band $K_{[0,0,m,1]}(V)$ in $\sigma_{[0,0,m,1]}(V)$
of backward type $A$ such that $K_{[0,0,m,1]}(V)\strict K_{[0,0,m]}(V)$.

Furthermore (referring again to Theorem~\ref{thm: V>4 Type A =000026 B}),
$I_{[0,0]}(V)=[-2,2]$ is of type $A$ for $V>4$ and so it strictly
contains $m$ spectral bands of type $B$ in $\sigma_{[0,0,m,1]}(V)$
for $V>4$. Since $\sigma_{[0,0,m,1]}(V)$ has $m+1$ spectral bands,
the spectral band $K_{[0,0,m,1]}(V)$ mentioned above satisfies the
following: for $V>4$, it is the unique spectral band in $\sigma_{[0,0,m,1]}(V)$
of backward type $A$. In addition $K_{[0,0,m,1]}(V)\strict K_{[0,0,m]}(V)$
(as seen above) and $I_{[0,0]}(V)\prec K_{[0,0,m,1]}(V)$ for $V>4$
(since $I_{[0,0]}(V)\prec K_{[0,0,m]}(V)$ by the induction hypothesis).
Furthermore, the left-most $m$ spectral bands in $\sigma_{[0,0,m,1]}(V)$
are strictly contained in $\sigma_{[0,0]}(V)$ for $V>4$. By Proposition~\ref{prop: duality of A-B types},
$\sigma_{[0,0,m,1]}(V)=\sigma_{[0,0,m+1]}(V)$, and in particular,
we can identify $K_{[0,0,m,1]}(V)$ with a spectral band $K_{[0,0,m+1]}(V)$
in $\sigma_{[0,0,m+1]}(V)$ and $K_{[0,0,m+1]}(V)$ is of backward
type $B$ for $V>4$.

We will show that
\begin{enumerate}
\item[(d)] $K_{[0,0,m+1]}(V)\strict K_{[0,0,m]}(V)$ and $K_{[0,0,m+1]}(V)\not\subseteq I_{[0,0]}(V)$
for all $V>0$,
\item[(e)] $I_{[0,0]}(V)\prec K_{[0,0,m+1]}(V)$ for all $V>0$,
\item[(f)] $\crit([0,0,m+1])=0$.
\end{enumerate}
Observe that these statements imply that parts (\ref{enu: lem-I_0.0 properties - 2})
and (\ref{enu: lem-I_0.0 properties - 3}) of the lemma hold for $m+1$.
These implications are rather straightforward, and one just needs
to notice that to get the first bullet of (\ref{enu: lem-I_0.0 properties - 3})
for $m+1$, one needs also to employ $\crit([0,0,m+1])=0$, which
provides the \ref{enu: A1 property} and \ref{enu: A2 property} properties
of $I_{[0,0]}(V)$ for all $V>0$.

\uline{Proof of (d):} Since $m\geq2$, we have $\varphi([0,0,m])\in(0,1)$
and by induction hypothesis $K_{[0,0,m]}(V)$ is of backward type
$B$ for all $V>0$. Proposition~\ref{prop: duality of A-B types}
implies $\sigma_{[0,0,m,1]}=\sigma_{[0,0,m+1]}$. Thus, Lemma~\ref{lem: task (B1)}
applied to the spectral band $K_{[0,0,m]}(V)$ implies $K_{[0,0,m+1]}(V)=K_{[0,0,m,1]}(V)\strict K_{[0,0,m]}(V)$
for all $V>0$. Moreover, Lemma~\ref{lem: task (B1)} asserts that
$K_{[0,0,m,1]}(V)$ is not of weak backward type $B$ for all $V>0$,
namely $K_{[0,0,m,1]}(V)$ is not contained in a spectral band of
$\sigma_{[0,0,m,1,-1]}=\sigma_{[0,0]}=[-2,2]$. Since $K_{[0,0,m+1]}(V)=K_{[0,0,m,1]}(V)$
holds for all $V>0$, we conclude $K_{[0,0,m+1]}(V)\not\subseteq[-2,2]$
for $V>0$.

\uline{Proof of (e):} For $V>0$, (d) and $[-2,2]=I_{[0,0]}(V)\prec K_{[0,0,m]}(V)$
imply 
\[
-2=L(I_{[0,0]}(V))<L(K_{[0,0,m]}(V))<L(K_{[0,0,m+1]}(V)).
\]
Furthermore, (d) asserts $K_{[0,0,m+1]}(V)\not\subseteq[-2,2]$ for
all $V>0$ implying $R(I_{[0,0]}(V))<R(K_{[0,0,m+1]}(V))$ for all
$V>0$. Thus, $I_{[0,0]}(V)\prec K_{[0,0,m+1]}(V)$ follows for $V>0$.

\uline{Proof of (f):} Since $\sigma_{[0,0]}(V)$ consists only of
the spectral band $I_{[0,0]}(V)$ we need to show that $I_{[0,0]}(V)$
is of $(m+1)$-type $A$ for all $V>0$. Since $\crit([0,0,m])=0$
and $I_{[0,0]}(V)$ if of backward type $A$ but not of weak backward
type $B$ for all $V>0$, Corollary~\ref{Cor-keyResult_A2} (for
$m\geq2$) asserts that we only have to prove
\[
I^{1}_{[0,0,m+1,1]}(V),I^{M+1}_{[0,0,m+1,1]}(V)\strict I_{[0,0]}(V)
\]
for all $V>0$ where $M=(m+1)-1=m$. Lemma~\ref{Lem-Chebyshev-Traces}~(\ref{enu: Chebychev - |x|=00003D2})
asserts $S_{l}(\pm2)=(\pm1)^{l}(l+1)$ for $l\in\N$. Thus, Lemma~\ref{lem: basic traces t_0.0}~(\ref{enu: lem-basic traces t_0.0 - 1})
leads to 
\begin{align}
\big|t_{[0,0,l+1]}(\pm2,V)\big|= & \big|S_{l+1}(\pm2)-VS_{l}(\pm2)-S_{l-1}(\pm2)\big|\label{eq:=00005B0.0=00005D m-forward A - induction step}\\
= & \big|(\pm1)^{l+1}(l+2)-(\pm1)^{l}(l+1)V-(\pm1)^{l-1}l\big|\nonumber \\
= & \big|(\pm1)^{l+1}2\mp(\pm1)^{l+1}(l+1)V\big|\nonumber \\
= & \big|2\mp(l+1)V\big|.\nonumber 
\end{align}
Hence, we conclude from Proposition~\ref{Prop-traceMaps} (\ref{enu:.Prop-traceMaps - 1})
that 
\[
\big|t_{[0,0,m+1,1]}(-2,V)\big|=\big|t_{[0,0,m+2]}(-2,V)\big|=2+(m+2)V>2,\qquad V>0.
\]
This means that for all $V>0$, $E=-2$ is not a spectral edge of
any spectral band in $\sigma_{[0,0,m+1,1]}(V)$, see Lemma~\ref{lem: Traces and spectral edges}~(\ref{enu: prop-Traces and spectral edges - 1}).
Since $E=-2$ is a spectral band edge of $I_{[0,0]}(V)$, and since
$I^{j}_{[0,0,m+1,1]}(V)\strict I_{[0,0]}(V)$ for $1\leq j\leq M+1$
and $V>4$ (by Theorem~\ref{thm: V>4 Type A =000026 B}), we conclude
that $I^{j}_{[0,0,m+1,1]}(V)\strict I_{[0,0]}(V)$ may be violated
if and only if $R\left(I^{j}_{[0,0,m+1,1]}(V)\right)\geq2$. Since
$I^{j}_{[0,0,m+1,1]}\prec I^{M+1}_{[0,0,m+1,1]}$ holds by definition
for all $1\leq j<M+1$, it suffices to show that for all $V>0$,
\begin{equation}
R\left(I^{M+1}_{[0,0,m+1,1]}(V)\right)<2.\label{eq:=00005B0.0=00005D m-forward A. induction step - sufficient}
\end{equation}
For $V\geq\frac{4}{m+1}$, Equation (\ref{eq:=00005B0.0=00005D m-forward A - induction step})
and Proposition~\ref{Prop-traceMaps} (\ref{enu:.Prop-traceMaps - 1})
lead to
\[
\big|t_{[0,0,m+1,1]}(2,V)\big|=\big|t_{[0,0,m+2]}(2,V)\big|=\big|(m+2)V-2\big|\geq4\frac{m+2}{m+1}-2>2.
\]
Hence, (\ref{eq:=00005B0.0=00005D m-forward A. induction step - sufficient})
holds for all $V\geq\frac{4}{m+1}$, proving $I^{1}_{[0,0,m+1,1]}(V),I^{M+1}_{[0,0,m+1,1]}(V)\strict I_{[0,0]}(V)$
for all $V\geq\frac{4}{m+1}$. Recalling also the induction hypothesis,
$\crit([0,0,m])=0$, we apply Lemma~\ref{Lem-keyResult_A2} for $m+1\geq2$
and $\co=[0,0]$ and conclude that there exists $\delta>0$ such that
$I_{[0,0]}(V)$ is of $(m+1)$-type $A$ for $V>\tfrac{4}{m+1}-\delta$.
Since, $I_{[0,0]}$ is the only spectral band in $\sigma_{[0,0]}$
this implies $\crit([0,0,m+1])\leq\tfrac{4}{m+1}-\delta$. Thus, it
is left to prove (\ref{eq:=00005B0.0=00005D m-forward A. induction step - sufficient})
for $0<V\leq\frac{4}{m+1}-\delta$. Equation (\ref{eq:=00005B0.0=00005D m-forward A - induction step})
together with $0<V\leq\frac{4}{m+1}-\delta$ implies
\begin{align*}
\big|t_{[0,0,m+1]}(2,V)\big|=\big|2-(m+1)V\big|<2,
\end{align*}
so that there is a spectral band of $\sigma_{[0,0,m+1]}(V)$ which
contains $E=2$, for $0<V\leq\frac{4}{m+1}-\delta$. But, since $\sigma_{[0,0,m,1]}=\sigma_{[0,0,m+1]}$
(Proposition~\ref{prop: duality of A-B types}) and by the induction
hypothesis, $I^{m}_{[0,0,m,1]}(V)\strict I_{[0,0]}(V)$ for $V>0$,
the only spectral band which can contain $E=2$ is $K^{1}_{[0,0,m,1]}(V)=K_{[0,0,m+1]}(V)$,
and so
\[
L\left(K_{[0,0,m+1]}(V)\right)<R\left(I_{[0,0]}(V)\right)=2\qquad\textrm{for }0<V\leq\frac{4}{m+1}-\delta.
\]
In order to conclude (\ref{eq:=00005B0.0=00005D m-forward A. induction step - sufficient})
for $0<V\leq\frac{4}{m+1}-\delta$, we will apply Lemma~\ref{lem: Basics for eigenavlue inequalities}
for $[0,0],[0,0,m+1],[0,0,m+1,1]\in\Co$ and $\lo=R(I_{[0,0]}(V))$
and $\mo=R(I^{M+1}_{[0,0,m+1,1]}(V))$. A direct computation invoking
Lemma~\ref{lem: Floquet-Bloch matrices} and Lemma~\ref{lem: Floquet-Bloch matrices - n times fundamental domain}
yields
\begin{align*}
R(I_{[0,0]}(V)) & \in\sigma\left(H_{[0,0],V}(0)\right),~L(K_{[0,0,m+1]}(V))\in\sigma\left(H^{\times1}_{[0,0,m+1],V}(\pi)\right)
\end{align*}
and
\[
R(I^{M+1}_{[0,0,m+1,1]}(V))\in\sigma\left(H_{[0,0,m+1,1],V}(\pi)\right).
\]
Thus, $\theta_{[0,0]}=0$ , $\theta_{[0,0,m+1]}=\pi$, $\theta_{[0,0,m+1,1]}=\pi$
and these spectral edges are admissible (Definition~\ref{def: admissibility}).
Moreover, we can directly compute the values of the counting function
\begin{align*}
N_{[0,0]} & :=N(R(I_{[0,0]}(V));~H_{[0,0],V}(0))=0,\\
N_{[0,0,m+1,1]} & :=N(R(I^{M+1}_{[0,0,m+1,1]}(V);~H_{[0,0,m+1,1],V}(\pi))=m,
\end{align*}
and using $L\left(K_{[0,0,m+1]}(V)\right)<R\left(I_{[0,0]}(V)\right)$for
$0<V\leq\frac{4}{m+1}-\delta$, we get
\[
N_{[0,0,m+1]}:=N(R(I_{[0,0]}(V));~H^{\times1}_{[0,0,m+1],V}(\pi))=m+1.
\]
Hence, $N_{[0,0]}+N_{[0,0,m+1]}>N_{[0,0,m+1,1]}$ follows. Moreover,
$L\left(K_{[0,0,m+1]}(V)\right)<R\left(I_{[0,0]}(V)\right)=\lo$ for
$0<V\leq\frac{4}{m+1}-\delta$ implies that $\lo$ is a simple eigenvalue
in $H^{\times1}_{[0,0,m+1],V}(\pi)\oplus H_{[0,0],V}(0)$. Thus, Lemma
\ref{lem: Basics for eigenavlue inequalities}~(\ref{enu: lem-Basics for eigenavlue inequalities - 1})
yields 
\[
2=R(I_{[0,0]}(V))=\lo>R(I^{M+1}_{[0,0,m+1,1]}(V))=\mo.
\]
Hence, we have proven (\ref{eq:=00005B0.0=00005D m-forward A. induction step - sufficient})
for all $0<V\leq\frac{4}{m+1}-\delta$. Thus, Corollary~\ref{Cor-keyResult_A2}
(for $m\geq2$) implies $\crit([0,0,m+1])=0$ proving (f).
\end{proof}

Next we prove that the unique spectral band $I_{[0,0,1]}(V)=[-2+V,2+V]$
of $\sigma_{[0,0,1]}(V)$ is of type $B$ for all $V>0$. This is
demonstrated in Figure~\ref{fig: =00005B0.0.1=00005D_m-forward_B}.
\begin{lem}
\label{Lem-=00005B0.0.1=00005D_m-forward_B} Let $I_{[0,0,1]}(V)=[-2+V,2+V]$
be the unique spectral band of $\sigma_{[0,0,1]}(V)$ for $V\in\R$.
The following holds for all $V>0$:
\begin{enumerate}
\item \label{enu: Lem-=00005B0.0.1=00005D_m-forward_B - 1}$I_{[0,0,1]}(V)$
is of backward type $B$ but not of weak backward type $A$,
\item \label{enu: Lem-=00005B0.0.1=00005D_m-forward_B - 2}For all $m\in\N$,
$I_{[0,0,1]}(V)$ is of $m$-type $B$, i.e. $\crit([0,0,1,m])=0$.
\end{enumerate}
\end{lem}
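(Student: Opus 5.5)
Part (a) is immediate from the explicit spectra. We have $\sigma_{[0,0,1,-1]}(V)=\sigma_{[0]}(V)=\R$, so $I_{[0,0,1]}(V)=[-2+V,2+V]$ is strictly contained in a band of $\sigma_{[0,0,1,-1]}$, i.e.\ it is of backward type $B$. Moreover $\sigma_{[0,0,1,0]}(V)=\sigma_{[0,0]}(V)=[-2,2]$, and $2+V>2$ for $V>0$. Hence $I_{[0,0,1]}(V)\not\subseteq[-2,2]$, so the band is not of weak backward type $A$.

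For (b) I would mirror the induction of Lemma~\ref{lem: I_0.0 properties}, now with $\co=[0,0,1]$ and $M=m$. The plan is to apply Corollary~\ref{Cor-keyResult_A2} successively for $m=1,2,\ldots$.

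Condition (a) of that corollary is part (a) above. Condition (c) holds for $m=1$ because $\varphi([0,0,1])=1$ is excluded there. This is a problem, since condition (c) for $m=1$ demands $\varphi(\co)\neq1$. I therefore treat $m=1$ separately. First use Corollary~\ref{Cor-keyResult} to get $\crito([0,0,1,1])=0$. Then verify \ref{enu: A2 property} for the single band $I^{1}_{[0,0,1,1]}$ directly. Since $\sigma_{[0,0,1,1]}=\sigma_{[0,0,2]}$, this band should be $K_{[0,0,2]}(V)=[V,\tfrac V2+\sqrt{V^2/4+4}]$ from the proof of Lemma~\ref{lem: I_0.0 properties}. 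Its weak backward type $B$ would mean containment in a band of $\sigma_{[0,0,1,1,-1]}=\sigma_{[0,0,1,0]}=[-2,2]$. This fails because its right edge exceeds $2$, so \ref{enu: A2 property} holds.

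For $m\geq2$, condition (c) follows from the previous step, since $\crito([0,0,1,m-1])\le\crit([0,0,1,m-1])=0$. In every case it then remains to verify condition (b): for all $V>0$,
\[
I^{1}_{[0,0,1,m,1]}(V),\ I^{m+1}_{[0,0,1,m,1]}(V)\strict[-2+V,2+V].
\]

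To verify (b) I use the trace formula of Lemma~\ref{lem: basic traces t_0.0}(c), $t_{[0,0,1,m,1]}(E,V)=ES_{m+1}(E-V)-2S_m(E-V)$. At the edges $E=V\pm2$ we have $S_l(\pm2)=(\pm1)^l(l+1)$, which gives
\[
|t_{[0,0,1,m,1]}(V\pm2,V)|=|(V\pm2)(m+2)-2(m+1)|.
\]
At $E=V+2$ this equals $|(m+2)V+2|>2$, so the right edge is never a band edge. By continuity from $V>4$ (Theorem~\ref{thm: V>4 Type A =000026 B}), the right inclusion can then never fail.

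At the left edge $E=V-2$ the value is $|(m+2)V-2(2m+3)|$. This exceeds $2$ only for $V>4\frac{m+2}{m+2}=4$ or $V<\frac{4m+4}{m+2}$, so it vanishes-check territory near an intermediate window around $V\approx 4$. This is the main obstacle. Just as in step (f) of Lemma~\ref{lem: I_0.0 properties}, the trace estimate alone does not cover the whole range. There I would first apply Lemma~\ref{Lem-keyResult_A2} to push the critical value below the window, and then, in the regime where a band of $\sigma_{[0,0,1,m]}$ straddles $L(I_{[0,0,1]})$, apply Lemma~\ref{lem: Basics for eigenavlue inequalities} to $[0,0,1],[0,0,1,m],[0,0,1,m,1]$. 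I would take $\lo=L(I_{[0,0,1]})$ and $\mo=L(I^{1}_{[0,0,1,m,1]})$, with admissibility read off from Proposition~\ref{prop: Chatacterization Admissible via traces}. The counting gives $N_{\co}+N_{[\co,m]}<N_{[\co,m,1]}$ with $\lo$ simple, and hence $\lo<\mo$.

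An alternative that may be cleaner is to apply Lemma~\ref{lem: Trace estimates =00005Bc.m.n=00005D}(a),(b): whenever $|t_{[0,0,1,m]}(V-2,V)|>2$, it gives $|t_{[0,0,1,m,1]}|>2$. One would then only need the eigenvalue argument where $|t_{[0,0,1,m]}(V-2,V)|\le2$. Here $t_{[0,0,1,m]}(V-2,V)=(-1)^m\big((V-2)(m+1)+2m\big)$, which is explicit.

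Once (b) holds for all $V>0$, Corollary~\ref{Cor-keyResult_A2} yields $\crit([0,0,1,m])=0$, completing the induction over $m$.
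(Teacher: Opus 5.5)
Your overall architecture matches the paper's: part (a), the separate treatment of $m=1$ via $I^{1}_{[0,0,1,1]}=[V,\tfrac V2+\sqrt{V^2/4+4}]\not\subseteq[-2,2]$, the right-edge value $|(m+2)V+2|>2$, and Corollary~\ref{Cor-keyResult_A2} for $m\ge2$. The gap is at the left edge. Your value $|(m+2)V-2(2m+3)|$ is wrong. The factors $S_{m+1}(-2)=(-1)^{m+1}(m+2)$ and $S_m(-2)=(-1)^m(m+1)$ have opposite signs, so
\[
t_{[0,0,1,m,1]}(V-2,V)=(-1)^{m+1}\big((V-2)(m+2)+2(m+1)\big)=(-1)^{m+1}\big((m+2)V-2\big).
\]
For $m=1$ this gives $3V-2$, which matches $t_{[0,0,1,2]}$ in Example~\ref{exa: first spectral bands}. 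So there is no window near $V\approx4$: the trace exceeds $2$ exactly for $V>\frac{4}{m+2}$. Your claim that it exceeds $2$ for all $V<\frac{4m+4}{m+2}$ is false; it vanishes at $V=\frac{2}{m+2}$. Your primary route therefore leaves the real obstacle, small coupling, untreated. Moreover, the straddling configuration you want to exploit does not occur in your window, since there $|t_{[0,0,1,m]}(V-2,V)|=(m+1)V-2>2$.

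Your ``alternative'' is the right route and is what the paper does; your formula for $t_{[0,0,1,m]}(V-2,V)$ is correct. It still omits the step that makes the counting work. You apply Lemma~\ref{lem: Basics for eigenavlue inequalities} with $\lo=V-2$, $\mo=L(I^{1}_{[0,0,1,m,1]})$ and $R(J_{[0,0,1,m]})$ as the $[\co,m]$-edge. For this you need $N_{[\co,m]}=0$ and $\lo$ simple, i.e.\ $V-2<R(J_{[0,0,1,m]}(V))$ for the \emph{leftmost} band. Knowing only that some band straddles $V-2$ is not enough.

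The paper supplies this step as follows.
\begin{enumerate}
\item The correct edge traces of $t_{[0,0,1,m,1]}$ exceed $2$ for $V\ge\frac{4}{m+1}$. Continuity from $V>4$ then gives the inclusion there.
\item Corollary~\ref{Cor-keyResult} (equivalently Lemma~\ref{lem: keyResult}) yields $\crito([0,0,1,m])\le\frac{4}{m+1}-\delta$. Note that Lemma~\ref{Lem-keyResult_A2}, which you cite, is unavailable for $m=1$.
\item Property (A1) then gives $V-2<L(I^{1}_{[0,0,1,m]}(V))$ for $V>\frac{4}{m+1}-\delta$.
\item Since $|t_{[0,0,1,m]}(V-2,V)|<2$ on $(0,\frac{4}{m+1})$, the point $V-2$ never meets a band edge there. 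By continuity it stays inside $J_{[0,0,1,m]}$.
\item The indices are $\ind(I_{[0,0,1]})=\ind(J_{[0,0,1,m]})=0$ and $\ind(I^{1}_{[0,0,1,m,1]})=1$. Lemma~\ref{lem: Admissibility criterion} gives admissibility, and $0+0<1$ gives $\lo<\mo$.
\end{enumerate}
The $\delta$-margin also avoids the point $V=\frac{4}{m+1}$. There Lemma~\ref{lem: Trace estimates =00005Bc.m.n=00005D}(a) only yields $|t_{[0,0,1,m,1]}|\ge2$.
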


\begin{proof}
Statement (\ref{enu: Lem-=00005B0.0.1=00005D_m-forward_B - 1}) follows
immediately from the equalities $\sigma_{[0,0,1,0]}(V)=\sigma_{[0,0]}(V)=[-2,2]$
and $\sigma_{[0,0,1,-1]}(V)=\sigma_{[0]}(V)=\R$ for all $V>0$. 
\begin{figure}[hbt]
\includegraphics[scale=1.2]{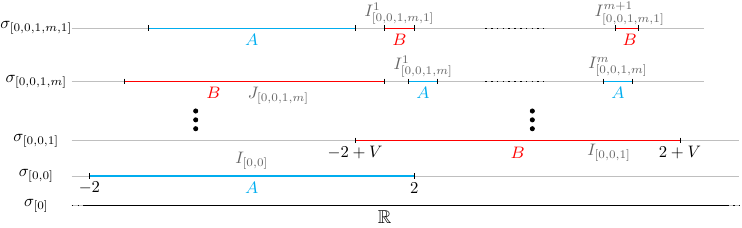} \caption{A sketch of the spectral bands considered in the proof of Lemma~\ref{Lem-=00005B0.0.1=00005D_m-forward_B}.}
\label{fig: =00005B0.0.1=00005D_m-forward_B}
\end{figure}

In order to prove (\ref{enu: Lem-=00005B0.0.1=00005D_m-forward_B - 2}),
let $m\in\N$ and $M=m$ since we aim to show that $I_{[0,0,1]}$
is of $m$-type $B$, see Definition~\ref{def: forward type}. This
will be done in two steps: first by applying Lemma~\ref{lem: keyResult}
to show $\crito([0,0,1,m])=0$ for all $m\in\N$; then applying Corollary~\ref{Cor-keyResult_A2}
to show $\crit([0,0,1,m])=0$ for all $m\in\N$. In order to apply
both corollaries we need to show that 
\begin{equation}
I^{1}_{[0,0,1,m,1]}(V),I^{M+1}_{[0,0,1,m,1]}(V)\strict I_{[0,0,1]}(V),\qquad V>0.\label{eq: =00005B0.0.1=00005D m-forward B: to show}
\end{equation}

\underline{Step 1:} We prove $\crito([0,0,1,m])=0$. Let $V\geq\frac{4}{m+1}$.
Lemma~\ref{Lem-Chebyshev-Traces}~(\ref{enu: Chebychev - |x|=00003D2})
asserts $S_{l}(\pm2)=(\pm1)^{l}(l+1)$. Thus, Lemma~\ref{lem: basic traces t_0.0}~(\ref{enu: lem-basic traces t_0.0 - 3})
leads to 
\begin{align*}
\big|t_{[0,0,1,m,1]}(\pm2+V,V)\big|= & \big|(\pm2+V)S_{m+1}(\pm2)-2S_{m}(\pm2)\big|\\
= & \big|(\pm2+V)(m+2)\mp2(m+1)\big|\\
= & \big|V(m+2)\pm2\big|.
\end{align*}
Thus, $\big|t_{[0,0,1,m,1]}(\pm2+V,V)\big|>2$ follows whenever $V\geq\frac{4}{m+1}$.
The values $\pm2+V$, at which the trace is evaluated, are the spectral
band edges of $I_{[0,0,1]}(V)$. By Theorem~\ref{thm: V>4 Type A =000026 B},
the strict inclusions (\ref{eq: =00005B0.0.1=00005D m-forward B: to show})
hold for $V>4.$ Thus, by continuity of the spectral edges (Corollary~\ref{prop: Lipschitz spectral edges})
and $\big|t_{[0,0,1,m,1]}(\pm2+V,V)\big|>2$, (\ref{eq: =00005B0.0.1=00005D m-forward B: to show})
holds for all $V\geq\frac{4}{m+1}$. Hence, Lemma~\ref{lem: keyResult}
implies that there is a $\delta>0$ such that $I_{[0,0,1]}$ satisfies
$\Tquas\left(m,\tfrac{4}{m+1}-\delta\right)$, namely $\crito([0,0,1,m])\leq\tfrac{4}{m+1}-\delta$.

Now let us consider the range $0<V\leq\tfrac{4}{m+1}-\delta$. First,
for all $V>0$, the trace computation above gives 
\[
\big|t_{[0,0,1,m,1]}(2+V,V)\big|=V(m+2)+2>2.
\]
Hence, $R(I^{M+1}_{[0,0,1,m,1]}(V))<2+V$ follows for all $V>0$,
since it holds for $V>4$ by Theorem~\ref{thm: V>4 Type A =000026 B},
and using continuity (Corollary~\ref{prop: Lipschitz spectral edges}).
Since $I^{1}_{[0,0,1,m,1]}(V)\prec I^{M+1}_{[0,0,1,m,1]}(V)$ holds
for all $V>0$, it thus suffices to prove
\begin{equation}
-2+V<L(I^{1}_{[0,0,1,m,1]}(V))\qquad\textrm{for all }0<V\leq\tfrac{4}{m+1}-\delta.\label{eq: =00005B0.0.1=00005D m-forward B: sufficient to show}
\end{equation}
Let $J_{[0,0,1,m]}$ be the spectral band associated with $I_{[0,0,1]}$
of Definition~\ref{def: left and right proceeded band}, see also
Figure~\ref{fig: =00005B0.0.1=00005D_m-forward_B}. Hence, it is
the spectral band satisfying $\ind(J_{[0,0,1,m]})=\ind(I^{1}_{[0,0,1,m]})-1$.
Lemma~\ref{lem: basic traces t_0.0}~(\ref{enu: lem-basic traces t_0.0 - 2}),
Lemma~\ref{Lem-Chebyshev-Traces}~(\ref{enu: Chebychev - |x|=00003D2})
and $0<V\leq\tfrac{4}{m+1}-\delta$ lead to 
\begin{align*}
\big|t_{[0,0,1,m]}(-2+V,V)\big|= & \big|(-2+V)S_{m}(-2)-2S_{m-1}(-2)\big|\\
= & \big|(-2+V)(-1)^{m}(m+1)-2(-1)^{m-1}m\big|\\
= & \big|(-2+V)(m+1)+2m\big|\\
= & \big|V(m+1)-2\big|<2.
\end{align*}
By Lemma~\ref{lem: ConnecSpectrTrace}, we conclude that $-2+V$
is in the interior of a spectral band of $\sigma_{[0,0,1,m]}(V)$.
On the other hand, $-2+V<L(I^{1}_{[0,0,1,m]}(V))$ holds for $V>\tfrac{4}{m+1}-\delta$.
Thus, the continuity of the spectral edges in $V$ implies $-2+V<R(J_{[0,0,1,m]})$
for $0\leq V\leq\tfrac{4}{m+1}-\delta$.

With this at hand, we prove (\ref{eq: =00005B0.0.1=00005D m-forward B: sufficient to show})
by applying Lemma~\ref{lem: Basics for eigenavlue inequalities}
with $\co:=[0,0,1]$, $[\co,m]=[0,0,1,m]$, $[\co,m,n]=[0,0,1,m,1]$,
$\lo=L(I_{[0,0,1]}(V))=-2+V$ and $\mo=L(I^{1}_{[0,0,1,m,1]}(V))$.
Let $\thc,\thcm,\theta_{[\co,m,1]}\in\{0,\pi\}$ be such that
\begin{align*}
L(I_{[0,0,1]}(V)) & \in\sigma\left(H_{[0,0,1],V}(\thc)\right),~R(J_{[0,0,1,m]}(V))\in\sigma\left(H^{\times1}_{[0,0,1,m],V}(\thcm)\right)
\end{align*}
and
\[
L(I^{1}_{[0,0,1,m,1]}(V))\in\sigma\left(H_{[0,0,1,m,1],V}(\theta_{[\co,m,1]})\right).
\]
A direct computation yields $\ind(I_{[0,0,1]})=0$, $\ind(J_{[0,0,1,m]})=0$
and $\ind(I^{1}_{[0,0,1,m,1]})=1$, see Figure~\ref{fig: =00005B0.0.1=00005D_m-forward_B}.
Inserting these indices into the characterization of admissibility
in Lemma~\ref{lem: Admissibility criterion}, we conclude that these
spectral edges are admissible. Furthermore, we can directly compute
the values of the counting function
\begin{align*}
\Nc & :=N(L(I_{[0,0,1]}(V));~H_{[0,0,1],V}(\thc))=0,\\
N_{[\co,m,1]} & :=N(L(I^{1}_{[0,0,1,m,1]}(V);~H_{[0,0,1,m,1],V}(\theta_{[0,0,1,m,1]}))=1,
\end{align*}
and using $L(I_{[0,0,1]}(V))=-2+V<R(J_{[0,0,1,m]})$ for $0<V\leq\frac{4}{m+1}-\delta$,
we get
\[
\Ncm:=N(L(I_{[0,0,1]}(V));~H^{\times1}_{[0,0,1,m],V}(\thcm))=0.
\]
Hence, $\Nc+\Ncm<N_{[\co,m,1]}$ follows. Moreover, $\lo=L(I_{[0,0,1]}(V))<R(J_{[0,0,1,m]}(V))$
for $0<V\leq\frac{4}{m+1}-\delta$ implies that $\lo$ is a simple
eigenvalue in $H^{\times1}_{[0,0,1,m],V}(\thcm)\oplus H_{[0,0,1],V}(\thc)$.
Thus, Lemma~\ref{lem: Basics for eigenavlue inequalities} yields
$-2+V=\lo<\mo=L(I^{1}_{[0,0,1,m,1]}(V))$ proving (\ref{eq: =00005B0.0.1=00005D m-forward B: sufficient to show})
for all $0<V\leq\frac{4}{m+1}-\delta$. Hence, $\crito([0,0,1,m])=0$
follows.

\underline{Step 2:} We prove $\crit([0,0,1,m])=0$. Let $m\geq2$.
We aim to apply Corollary~\ref{Cor-keyResult_A2} and need to check
its assumptions (\ref{enu: Cor-keyResult_A2 - 1}), (\ref{enu: Cor-keyResult_A2 - 2})
and (\ref{enu: Cor-keyResult_A2 - 3}). We have seen above that $\sigma_{[0,0,1]}(V)$
has exactly one spectral band $I_{[0,0,1]}$ which is of backward
type $B$ but not of weak backward type $A$, so that assumption (\ref{enu: Cor-keyResult_A2 - 1})
of Corollary~\ref{Cor-keyResult_A2} holds. By step 1 of the proof,
we have $\crito([0,0,1,m])=0$ and $\crito([0,0,1,m-1])=0$. The first
implies that assumption (\ref{enu: Cor-keyResult_A2 - 2}) in Corollary~\ref{Cor-keyResult_A2}
holds and the second implies that assumption (\ref{enu: Cor-keyResult_A2 - 3})
in Corollary~\ref{Cor-keyResult_A2} holds using $m\geq2$. Hence,
$\crit([0,0,1,m])=0$ follows for all $m\geq2$.

Let $m=1$. Since $\varphi([0,0,1])=1$, assumption (\ref{enu: Cor-keyResult_A2 - 3})
in Corollary~\ref{Cor-keyResult_A2} does not hold for $m=1$, and
we cannot apply that corollary. Instead, we directly verify that $\crit([0,0,1,1])=0$.
Recall $t_{[0,0,1,1]}=t_{[0,0,2]}=E^{2}-EV-2$, see Example~\ref{exa: first spectral bands}.
Thus, $\sigma_{[0,0,1,1]}(V)=[E_{0}(V),E_{1}(V)]\cup[E_{2}(V),E_{3}(V)]$
with 
\[
E_{0}(V):=\frac{V}{2}-\sqrt{\frac{V^{2}}{4}+4},\quad E_{1}(V):=0,\quad E_{2}(V):=V,\quad E_{3}(V):=\frac{V}{2}+\sqrt{\frac{V^{2}}{4}+4}.
\]
Thus, $I_{[0,0,1,1]}(V)=[E_{2}(V),E_{3}(V)]\strict I_{[0,0,1]}(V)$
and $I_{[0,0,1,1]}(V)$ is the unique spectral band in $\sigma_{[0,0,1,1]}(V)$
of backward type $A$. Since $2<E_{3}(V)$ for all $V>0$, we conclude
that $I_{[0,0,1,1]}(V)$ is not included in $I_{[0,0]}(V)$ and hence
it is not of weak backward type $B$ for all $V>0$. Thus, $I_{[0,0,1]}(V)$
satisfies \ref{enu: A2 property} for all $V>0$. Since $\crito([0,0,1,1])=0$
by step 1, we conclude $\crit([0,0,1,1])=0$. 
\end{proof}

\subsection*{Acknowledgments}

 We are grateful to David Damanik, Jake Fillman and Anton Gorodetski
for inspiring discussions on Sturmian systems. We thank Barak Biber
and Yannik Thomas for extensive discussions on the work of Laurent
Raymond helping us to get a deeper insights. We are thankful to Michael
Baake for organizing and hosting a joint meeting with Laurent Raymond
in Bielefeld on December 2023. We thank Laurent Raymond for his support
and our interesting discussions.

We thank the Israel Institute of Technology and the University of
Potsdam for providing excellent working conditions during our mutual
visits.  SB was partially supported by the Deutsche Forschungsgemeinschaft
{[}BE 6789/1-1 to SB{]} and the Maria-Weber Grant 2022 offered by
the Hans Böckler Stiftung. RB was supported by the Israel Science
Foundation (ISF Grants No. 844/19 and 2362/25). This article was finalized
at the Israel Institute for Advanced Studies, as part of the Research
Group Analysis, Geometry, and Spectral Theory of Graphs (2025). RB
and SB are grateful to the IIAS for the excellent working conditions.

\bibliographystyle{amsalpha}
\bibliography{references}

\appendix
\appendix
\renewcommand{\thesection}{\Roman{section}}

\section{Sturmian dynamical systems\label{App: Sturmian dynamical systems}}

This appendix contains a very short description of Sturmian dynamical
systems. A thorough background may be found in the books \cite{Fogg_book02,Loth02,DaFi24-book_2}.
In addition, we state a lemma summarizing some basic properties of
Sturmian sequences and mechanical words, which are applied in this
paper.

To define the Sturmian Hamiltonian we have defined the sequences
\[
\omega_{\alpha}(n):=\chi_{[1-\alpha,1[}(n\alpha\mod 1),\quad n\in\N,~\alpha\in[0,1],
\]
which are called \emph{mechanical words} \cite[Sec.~2.1.2]{Loth02}
. If $\alpha\not\in\Q$, $\omega_{\alpha}$ is also called a \emph{Sturmian
sequence}. These sequences naturally define a dynamical system as
follows. Let $\Aa:=\{0,1\}$ be equipped with the discrete topology
and $\Aa^{\Z}:=\{\omega:\Z\to\Aa\}$ be the compact metrizable space
equipped with the product topology. Consider the shift $T:\Aa^{\Z}\to\Aa^{\Z},~(T\omega)(n):=\omega(n-1),~n\in\Z,$
being a homeomorphism. This induces a continuous group action $\Z\curvearrowright\Aa^{\Z}$
via $(n,\omega)\mapsto T^{n}\omega$. For $\alpha\in[0,1]$, we have
$\omega_{\alpha}\in\{0,1\}^{\Z}$ and its associated orbit closure
(in the product topology)
\[
\Omega_{\alpha}:=\overline{\textrm{Orb}(\omega_{\alpha})}:=\overline{\set{T^{n}\omega_{\alpha}}{n\in\Z}}
\]
defines a dynamical system $\Z\curvearrowright\Omega_{\alpha}$. If
$\alpha\text{\ensuremath{\in\Q}}$, then $\omega_{\alpha}$ is periodic,
i.e. there is a period $q\in\N$ such that $T^{q}\omega_{\alpha}=\omega_{\alpha}$.
Note that in this case $\textrm{Orb}(\omega_{\alpha})=\Omega_{\alpha}$.
There are various different representations of this dynamical system.
For instance, the authors in \cite[Lem.~1]{BIST89} proved that 
\[
\omega_{\alpha}(n)=\fl{\alpha(n+1)}-\fl{\alpha\thinspace n},\qquad n\in\N,~\alpha\in[0,1]\setminus\Q.
\]
A different approach to describe these words is via a recursive rule
using the continued fraction expansion $(0,c_{1},c_{2},\ldots)$ of
$\alpha\in[0,1]\setminus\Q$ is described in \cite[Eq.~.9]{Loth02}.
The reader is also referred to \cite{BaBeBiTh22} for a more detailed
discussion.  The following lemma provides the properties of mechanical
words which are useful in our paper.
\begin{lem}
\label{lem: periods of three approximants} Let $\co=\left[0,0,c_{1},c_{2},\ldots,c_{k}\right]\in\Co$
for $k\in\N$ and $\frac{p_{k}}{q_{k}}:=\varphi(\co)$, with co-prime
$p_{k},q_{k}$. Then the following holds.
\end{lem}

\begin{enumerate}
\item \label{enu: lem-periods of three approximants-1}The sequence $\ohn$
is periodic with period length $q_{k}$. Let its period $W_{k}\in\left\{ 0,1\right\} ^{q_{k}}$
be defined by
\[
W_{k}(i):=\omega_{\alpha_{k}}(i),\qquad0\leq i\leq q_{k}-1
\]
\item \label{enu: lem-periods of three approximants-2}For $k\in\N$, we
have $q_{k}=c_{k}\cdot q_{k-1}+q_{k-2}$ with $q_{-1}=0$ and $q_{0}=1$.
\item \label{enu: lem-periods of three approximants-3}The period of $\ohn$
satisfy the following $W_{0}=0,W_{1}=0^{c_{1}-1}1$ and if $k\geq2$,
then
\[
W_{k}=\begin{cases}
W_{k-2}W^{c_{k}}_{k-1},\quad & k\equiv0\mod 2,\\
W^{c_{k}}_{k-1}W_{k-2},\quad & k\equiv1\mod 2.
\end{cases}
\]
\item \label{enu: lem-periods of three approximants-4}For $k\geq1$,
\begin{itemize}
\item If $k\equiv0\mod 2$ then $\omega_{\alpha}(i)=W_{k}(i)$ for all $0\leq i\leq q_{k}-1$.
\item If $k\equiv1\mod 2$ then $\omega_{\alpha}(i)=W_{k}(i)$ for all $0\leq i\leq q_{k}-2$.
\end{itemize}
\end{enumerate}
\begin{proof}
The first two parts of the lemma are basic. The third and fourth parts
appears e.g. in \cite[Lem.~2.4]{BaBeBiTh22}.
\end{proof}

\section{Chebyshev polynomials and trace identities\label{App: Trace Maps}}

In this section, we provide several known identities of traces and
their connection to Chebyshev polynomials, see e.g. \cite{Casdagli1986,Raym95,BIST89,BaBeBiTh22,Raym-AperiodicOrder,DaFi24-book_2}.
Moreover, we prove Lemma~\ref{lem: Trace estimates =00005Bc.m.n=00005D}.

\subsection{Chebyshev polynomials}

A crucial tool for studying the spectral theory of Sturmian Hamiltonians
are the dilated Chebyshev polynomials of the second kind (see \cite[\href{https://dlmf.nist.gov/18.1.E3}{(18.1.3)}]{DigLibMathFunc})
$S_{n}:\R\to\R,\,n\in\N_{-1},$ defined by 
\[
S_{-1}(x):=0,\quad S_{0}(x):=1,\quad S_{n}(x):=xS_{n-1}(x)-S_{n-2}(x)\quad\textrm{for }x\in\R.
\]
For $x\in\mathbb{R}\setminus\left\{ 0\right\} $, denote by $\sgn(x)\in\{+1,-1\}$
the sign of $x$.
\begin{lem}
\label{Lem-Chebyshev-Traces} Let $x\in\mathbb{R}$ and $n\in\N_{0}$.
Then the following holds.
\begin{enumerate}
\item \label{enu: Chebychev - invariant}We have $S_{n+1}S_{n-1}-S^{2}_{n}=-1$.
\item \label{enu: Chebychev - |x|=00003D2}If $|x|=2$, then $\sgn(x)^{n-1}S_{n-1}(x)=n$.
\item \label{enu: Chebychev - |x|=00005Cgeq 2}If $|x|\geq2$, then $\sgn(x)^{n}S_{n}(x)=|S_{n}(x)|$
and
\[
\sgn(x)^{n}xS_{n-1}(x)\geq2\big|S_{n-1}(x)\big|.
\]
\item \label{enu: Chebychev - |x|=00005Cgeq 2 - term =00005Cgeq 1}If $|x|\geq2$,
then 
\[
\sgn(x)^{n}\big(S_{n}(x)-\frac{x}{2}S_{n-1}(x)\big)\geq1.
\]
\item \label{enu: Chebychev - |x|> 2 - term > 1}If $|x|>2$ and $n\geq1$,
then 
\[
\sgn(x)^{n}\big(S_{n}(x)-\frac{x}{2}S_{n-1}(x)\big)>1.
\]
\end{enumerate}
\end{lem}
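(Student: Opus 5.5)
We prove the five items of Lemma~\ref{Lem-Chebyshev-Traces} in order; (c)--(e) rest on (b) and on a monotonicity argument for $|x|\geq 2$. We first reduce to $x\geq 0$ (respectively $x\geq2$ in (c)--(e)). By induction on the recursion, $S_{n}(-x)=(-1)^{n}S_{n}(x)$, since $S_{-1},S_{0}$ have this parity and the recursion preserves it. Consequently every expression of the form $\sgn(x)^{n}S_{n}(x)$, $\sgn(x)^{n}xS_{n-1}(x)$ and $\sgn(x)^{n}(S_{n}(x)-\tfrac{x}{2}S_{n-1}(x))$ is invariant under $x\mapsto -x$. It therefore suffices to treat $x\geq 2$, where $\sgn(x)=1$.

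For (a) we use induction on $n$. Set $D_{n}:=S_{n+1}S_{n-1}-S_{n}^{2}$; for $n=0$ this is $S_{1}S_{-1}-S_{0}^{2}=-1$. Substituting $S_{n+1}=xS_{n}-S_{n-1}$ gives
\[
D_{n}=xS_{n}S_{n-1}-S_{n-1}^{2}-S_{n}^{2}=S_{n}(xS_{n-1}-S_{n})-S_{n-1}^{2}=S_{n}S_{n-2}-S_{n-1}^{2}=D_{n-1},
\]
so $D_{n}=-1$ for all $n$. (Equivalently, this is the determinant of $M_{[0,0]}^{n+1}$ read off from Lemma~\ref{lem: powers of M_0.0}.)

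For (b), take $x=2$. Then $S_{n}-S_{n-1}=S_{n-1}-S_{n-2}$, so the differences are constant and equal to $S_{0}-S_{-1}=1$. Hence $S_{n-1}(2)=n$, and the case $x=-2$ follows from the parity reduction.

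For (c)--(e) with $x\geq2$, we show by induction that $S_{n}-S_{n-1}\geq S_{n-1}-S_{n-2}+(x-2)S_{n-1}\geq 0$ and $S_{n}\geq S_{n-1}\geq 0$. This holds because $S_{n}-S_{n-1}=(x-2)S_{n-1}+(S_{n-1}-S_{n-2})$, with base case $S_{0}-S_{-1}=1$. Thus $S_{n}\geq0$, which gives the first claim of (c). The second claim, $xS_{n-1}\geq 2S_{n-1}$, is immediate from $S_{n-1}\geq0$ and $x\geq 2$. For (d), write
\[
S_{n}-\tfrac{x}{2}S_{n-1}=\tfrac{x}{2}S_{n-1}-S_{n-2}=:T_{n}.
\]
Then $T_{n}-T_{n-1}=\tfrac{x}{2}(S_{n-1}-S_{n-2})-(S_{n-2}-S_{n-3})\geq (S_{n-1}-S_{n-2})-(S_{n-2}-S_{n-3})\geq0$ by the increasing-differences estimate. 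Hence $T_{n}\geq T_{0}=S_{0}-\tfrac{x}{2}S_{-1}=1$. One must handle the start of the induction with $S_{-2}:=-1$, which is consistent with the recursion; alternatively, check $T_{1}=x/2\geq1$ directly and start there.

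For (e), with $x>2$ we have $T_{1}=x/2>1$, and the monotonicity $T_{n}\geq T_{1}$ for $n\geq1$ gives the strict inequality. The only delicate step is bookkeeping the index shifts near $n=0,1$ (the role of $S_{-1}$ and the convention for $S_{-2}$) so that the monotonicity induction is correctly anchored; everything else is routine induction on the three-term recursion.
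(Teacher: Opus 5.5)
Your proof is correct and follows the route the paper indicates, namely induction on the three-term recursion; the paper itself only sketches this and refers to the literature for details. Your parity reduction $S_n(-x)=(-1)^nS_n(x)$, together with the monotone quantities $S_n-S_{n-1}$ and $T_n=\tfrac{x}{2}S_{n-1}-S_{n-2}$, organizes that induction cleanly, and you anchor it correctly at $n=0,1$.
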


\begin{proof}
The proof follows by induction using the recursive relation, the details
can be found for instance in \cite[Lem.~III.2.]{BaBeBiTh22}.
\end{proof}

\subsection{Trace identities \label{Sec-TraceMaps}}

This section is devoted to various trace identities and the proof
of Lemma~\ref{lem: Trace estimates =00005Bc.m.n=00005D}.

The following proposition is a collection of well-known identities
of the traces, see e.g. \cite{Casdagli1986,Raym95,BIST89,DaFi22-book_1,BaBeBiTh22,Raym-AperiodicOrder,DaFi24-book_2}.
Recall that for $\co\in\Co$, $\tc$ is a function of $E,V\in\R$,
but we abbreviate notation and suppress this dependencies in the following.
\begin{prop}
[trace maps] \label{Prop-traceMaps} Let $m\in\N$, $n\in\N_{0}$
and $\co\in\Co$ be such that $[\co,m]\in\Co$. Then the following
holds.
\begin{enumerate}
\item \label{enu:.Prop-traceMaps - 1}We have $t_{[\co,m,0]}=t_{\co}$,
$t_{[\co,m,1]}=t_{[\co,m+1]}$ and $t_{[\co,m,-1]}=t_{[\co,m-1]}$.
\item \label{enu:.Prop-traceMaps - 2}We have for all $V\in\R$, \textup{(the
Fricke--Vogt invariant)}
\[
V^{2}+4=t^{2}_{[\co,n+1]}+t^{2}_{[\co,n]}+t^{2}_{\co}-t_{[\co,n+1]}t_{[\co,n]}t_{\co}
\]
\item \label{enu:.Prop-traceMaps - 3}For $-1\leq l\leq n$, we have
\end{enumerate}
\[
t_{[\co,n+1]}=S_{l+1}(t_{\co})t_{[\co,n-l]}-S_{l}(t_{\co})t_{[\co,n-l-1]}.
\]
In particular, we have $t_{[\co,n+1]}=t_{\co}t_{[\co,n]}-t_{[\co,n-1]}$
(for $l=0$).
\end{prop}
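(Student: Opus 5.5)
The plan is to derive all three statements directly from the recursive definition of the transfer matrices. The only structural facts needed are that $M_{[0]}$ and $M_{[0,0]}$ have determinant one, so every $M_{\co}\in SL(2,\R)$, and the Cayley--Hamilton identity $A^{2}=\tr(A)A-\Id$ for $A\in SL(2,\R)$. Throughout, write $\co^{-}$ for $\co$ with its last digit removed, so that $M_{[\co,m]}=M_{\co^{-}}M_{\co}^{m}$ for all $m$. This includes $\co=[0,0]$ with $\co^{-}=[0]$, and negative exponents are allowed, consistently with $[\co,-1]\in\Co$. The convention for $M_{[\co,m,n]}$ is then $M_{[\co,m,n]}=M_{\co}M_{[\co,m]}^{n}$.

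For (a), the case $n=0$ gives $M_{[\co,m,0]}=M_{\co}$ at once. For $n=1$, cyclicity of the trace gives
\[
\tr\big(M_{\co}M_{\co^{-}}M_{\co}^{m}\big)=\tr\big(M_{\co^{-}}M_{\co}^{m+1}\big)=t_{[\co,m+1]}.
\]
For $n=-1$, we have $M_{\co}M_{[\co,m]}^{-1}=M_{\co}^{1-m}M_{\co^{-}}^{-1}$. By cyclicity and the fact that $\tr(B^{-1})=\tr(B)$ for $B\in SL(2,\R)$, its trace equals $\tr\big(M_{\co^{-}}M_{\co}^{m-1}\big)=t_{[\co,m-1]}$.

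For (c), induction with Cayley--Hamilton yields $A^{k}=S_{k-1}(\tr A)A-S_{k-2}(\tr A)\Id$ for $k\ge0$. Apply this with $A=M_{\co}$ and $k=l+1$ in
\[
M_{[\co,n+1]}=M_{\co^{-}}M_{\co}^{\,n-l-1}\,M_{\co}^{\,l+1},
\]
and take traces. This gives $t_{[\co,n+1]}=S_{l}(t_{\co})\,t_{[\co,n-l]}-S_{l-1}(t_{\co})\,t_{[\co,n-l-1]}$. I still need to make sure the indices line up with the form stated in (c) ($S_{l+1},S_{l}$ attached to $t_{[\co,n-l]},t_{[\co,n-l-1]}$); if they do not, the fix is to take $k=l+2$ instead, i.e.\ to split off $M_{\co}^{\,l+2}$ and use $M_{[\co,n+1]}=M_{\co^{-}}M_{\co}^{\,n-l-1}M_{\co}^{\,l+2}$. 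The admissible range $-1\le l\le n$ must also be checked, since the exponent may drop to $-1$, which is allowed. The case $l=0$ is the familiar recursion $t_{[\co,n+1]}=t_{\co}t_{[\co,n]}-t_{[\co,n-1]}$.

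For (b) I use the Fricke identity for $A,B\in SL(2,\R)$:
\[
\tr(A)^{2}+\tr(B)^{2}+\tr(AB)^{2}-\tr(A)\tr(B)\tr(AB)=\tr(ABA^{-1}B^{-1})+2.
\]
Take $A=M_{\co}$ and $B=M_{[\co,n]}$; then $\tr(AB)=\tr(BA)=t_{[\co,n+1]}$. The commutator $[M_{\co},M_{\co^{-}}M_{\co}^{n}]$ is conjugate to $[M_{\co},M_{\co^{-}}]$ up to inversion, and inversion does not change the trace. Hence the invariant depends neither on $n$ nor on the last digit. Iterating the reduction $\co\mapsto\co^{-}$ brings it down to the pair $(M_{[0]},M_{[0,0]})$. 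At that base the invariant is evaluated directly from Example~\ref{exa: first spectral bands}:
\[
2^{2}+E^{2}+(E-V)^{2}-2E(E-V)=V^{2}+4.
\]

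I expect the main obstacle to be bookkeeping rather than ideas: getting the conjugation/inversion step in (b) right at each level of the recursion, including the degenerate base $\co=[0,0]$ with $\co^{-}=[0]$, and the index shift and range in (c).
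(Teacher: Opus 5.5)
The paper gives no proof of this proposition. It is quoted as a collection of well-known trace-map identities, with references to Raymond, Bellissard et al., Damanik--Fillman and others. Your derivation from $\det M_{\co}=1$, Cayley--Hamilton and the Fricke commutator identity is the standard argument behind those references. It has one advantage over citing them: it works directly with the paper's own conventions, namely the base pair $M_{[0]},M_{[0,0]}$ and the non-standard last digits $0$ and $-1$, i.e.\ exponents $0$ and $-1$ in $M_{[\co,m]}=M_{\co^{-}}M_{\co}^{m}$. The paper says these conventions deviate slightly from the usual ones. Parts (a) and (b) are correct as you wrote them. In (b) you can drop the ``conjugate up to inversion'' hedge, because the identities hold literally:
\begin{itemize}
\item With $B=M_{\co^{-}}M_{\co}^{n}$ one has $M_{\co}BM_{\co}^{-1}B^{-1}=M_{\co}M_{\co^{-}}M_{\co}^{-1}M_{\co^{-}}^{-1}$.
\item With $X=M_{\co^{--}}$ and $Y=M_{\co^{-}}$, the identity $XY^{c}\,Y\,(XY^{c})^{-1}Y^{-1}=XYX^{-1}Y^{-1}$, followed by one inversion, gives the descent.
\item The descent ends at $(M_{[0,0]},M_{[0]})$, where your computation yields $V^{2}+4$.
\end{itemize}

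In (c), however, the first decomposition you write is false. The exponents add to $n$, not $n+1$: $M_{\co^{-}}M_{\co}^{\,n-l-1}M_{\co}^{\,l+1}=M_{\co^{-}}M_{\co}^{\,n}=M_{[\co,n]}$. So the identity $t_{[\co,n+1]}=S_{l}(t_{\co})t_{[\co,n-l]}-S_{l-1}(t_{\co})t_{[\co,n-l-1]}$ that you derive from it is wrong. It is the correct formula for $t_{[\co,n]}$, i.e.\ (c) with $n$ and $l$ both shifted down by one.

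Your fallback is the right argument and should be the proof. Write $M_{[\co,n+1]}=M_{\co^{-}}M_{\co}^{\,n-l-1}M_{\co}^{\,l+2}$ and use $M_{\co}^{\,l+2}=S_{l+1}(t_{\co})M_{\co}-S_{l}(t_{\co})\Id$. Taking traces gives exactly $t_{[\co,n+1]}=S_{l+1}(t_{\co})t_{[\co,n-l]}-S_{l}(t_{\co})t_{[\co,n-l-1]}$.

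Check the range $-1\le l\le n$ as follows:
\begin{itemize}
\item The power $l+2$ is at least $1$, so only $S_{-1},S_{0},\dots$ occur. Your formula $A^{k}=S_{k-1}A-S_{k-2}\Id$ at $k=0$ would need $S_{-2}=-1$, which the paper does not define.
\item The power $n-l-1$ is at least $-1$. The extreme case $l=n$ produces $t_{[\co,-1]}=\tr(M_{\co^{-}}M_{\co}^{-1})$, exactly as in the recursive definition.
\end{itemize}
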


We will continue with two auxiliary lemmata which are needed to prove
Lemma~\ref{lem: Trace estimates =00005Bc.m.n=00005D}. In order to
treat certain cases of the backward type $A$ ($\ell=0$) bands or
backward type $B$ ($\ell=-1$) bands, we need the following identity.
\begin{lem}
\label{Lem-TraceMap_Identity} Let $m\in\N$ and $\co\in\Co$ be such
that $[\co,m]\in\Co$. Let $V\in\mathbb{R}$ and $E\in\mathbb{R}$
be such that $|t_{\co}(E,V)|=2$. Then for all $n\in\N$ and $\ell\in\left\{ -1,0\right\} $,
\begin{align*}
t_{[\co,\ell]}(E,V)S_{n}\big(t_{[\co,m]}(E,V)\big)= & z^{m-\ell}\Big(2(m-\ell)S_{n+1}\big(t_{[\co,m]}(E,V)\big)-z(m-\ell)t_{[\co,m,n]}(E,V)\\
 & -(m-1-\ell)t_{[\co,m]}(E,V)S_{n}\big(t_{[\co,m]}(E,V)\big)\Big)
\end{align*}
holds where $z:=\sgn(t_{\co}(E,V))$.
\end{lem}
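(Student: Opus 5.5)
We prove the identity purely algebraically, using only the trace-map relations of Proposition~\ref{Prop-traceMaps} and the Chebyshev recursion. Throughout, abbreviate $a:=t_{\co}$ (so $a=2z$ with $z\in\{\pm1\}$), $x:=t_{[\co,m]}$, $b:=t_{[\co,\ell]}$ and $S_j:=S_j(x)$.

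The first step is to express $x$ in terms of $b$. Apply Proposition~\ref{Prop-traceMaps}~(\ref{enu:.Prop-traceMaps - 3}) repeatedly, starting from the base pair $t_{[\co,\ell]},t_{[\co,\ell-1]}$; concretely, use the recursion $t_{[\co,j+1]}=a\,t_{[\co,j]}-t_{[\co,j-1]}$ from Lemma~\ref{lem: Traces and spectral edges}~(\ref{enu: prop-Traces and spectral edges - 3}). Since $a=2z$, we have $S_j(2z)=z^j(j+1)$ by Lemma~\ref{Lem-Chebyshev-Traces}~(\ref{enu: Chebychev - |x|=00003D2}). Hence the sequence $u_j:=z^{-j}t_{[\co,j]}$ satisfies $u_{j+1}=2u_j-u_{j-1}$, so it is affine in $j$.

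The two anchor values depend on $\ell$. For $\ell=0$ we have $t_{[\co,0]}=b$. For $\ell=-1$ we have $t_{[\co,-1]}=b$, and $t_{[\co,0]}$ is then determined by the recursion from the earlier digits, which is awkward. To avoid this, use a second relation that does not require knowing that term: the fact that $t_{[\co,m,1]}=t_{[\co,m+1]}$.

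The second step is to express $t_{[\co,m,n]}$ via the Chebyshev expansion in $x$. Proposition~\ref{Prop-traceMaps}~(\ref{enu:.Prop-traceMaps - 3}), applied to $[\co,m]$ in place of $\co$ with $l=n-1$, gives
\[
t_{[\co,m,n]}=S_n\,t_{[\co,m,0]}-S_{n-1}\,t_{[\co,m,-1]}=S_n\,a-S_{n-1}\,t_{[\co,m-1]},
\]
using $t_{[\co,m,0]}=t_{\co}$ and $t_{[\co,m,-1]}=t_{[\co,m-1]}$. By the affine structure, $t_{[\co,m-1]}$ and $x=t_{[\co,m]}$ are both linear combinations of $b$ and one other anchor term; eliminating that other anchor expresses $t_{[\co,m-1]}$ through $x$ and $b$. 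Concretely, $z^{-(m-1)}t_{[\co,m-1]}$ and $z^{-m}x$ are values of an affine function at $j=m-1$ and $j=m$, and the value at $j=\ell$ is $z^{-\ell}b$. Affine interpolation then gives
\[
(m-\ell)\,z^{-(m-1)}t_{[\co,m-1]}=(m-1-\ell)\,z^{-m}x+z^{-\ell}b.
\]

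The third step is the substitution. Insert the interpolation formula into the expression for $t_{[\co,m,n]}$. Then rewrite $S_{n-1}$ using $xS_n-S_{n+1}=S_{n-1}$, use $a=2z$, and solve for $bS_n$. This yields an identity of the claimed shape, with coefficients $2(m-\ell)$, $-z(m-\ell)$ and $-(m-1-\ell)$.

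The main obstacle is the bookkeeping of the powers of $z$. The identities $z^2=1$ and $z^{-1}=z$ must be used to arrive exactly at the prefactor $z^{m-\ell}$ and the inner factor $z$ in front of $t_{[\co,m,n]}$.

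A second point needs checking: the affine interpolation relies on the recursion holding down to index $\ell=-1$, i.e.\ on $t_{[\co,0]}=a\,t_{[\co,-1]}-t_{[\co,-2]}$-type relations at the boundary. Here $t_{[\co,-1]}=t_{[\co',c_k-1]}$, so the interpolation across $j=-1,0,1$ should be justified via Proposition~\ref{Prop-traceMaps}~(\ref{enu:.Prop-traceMaps - 3}) with $l=-1$ and $n=0$. That case reads $t_{[\co,1]}=S_0 t_{[\co,1]}-S_{-1}\cdots$, which is trivially consistent, so the recursion $t_{[\co,1]}=a\,t_{[\co,0]}-t_{[\co,-1]}$ still holds; this is the $m=0$ case of Lemma~\ref{lem: Traces and spectral edges}~(\ref{enu: prop-Traces and spectral edges - 3}).
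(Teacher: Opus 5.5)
Your first two steps are correct, and the affine observation is neater than the paper's route. In your notation, $x=t_{[\co,m]}$ and $b=t_{[\co,\ell]}$. The recursion $t_{[\co,j+1]}=t_{\co}t_{[\co,j]}-t_{[\co,j-1]}$ holds for all $j\ge0$ (this is the case $m\ge 0$ of Lemma~\ref{lem: Traces and spectral edges}~(\ref{enu: prop-Traces and spectral edges - 3})), so $j\mapsto z^{-j}t_{[\co,j]}$ is affine on $j\ge-1$. This gives $(m-\ell)\,t_{[\co,m-1]}=z(m-1-\ell)\,x+z^{m-1-\ell}\,b$ uniformly in $\ell$. You need neither the Chebyshev invariant nor the separate case $\ell=0$, $m=1$ that the paper uses. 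Your expansion $t_{[\co,m,n]}=S_n(x)\,t_{\co}-S_{n-1}(x)\,t_{[\co,m-1]}$ is the correct instance of Proposition~\ref{Prop-traceMaps}~(\ref{enu:.Prop-traceMaps - 3}).

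The gap is the third step, which you assert rather than carry out. Substituting gives
\[
(m-\ell)\,t_{[\co,m,n]}=2z(m-\ell)S_n(x)-z(m-1-\ell)\,x\,S_{n-1}(x)-z^{m-1-\ell}\,b\,S_{n-1}(x),
\]
so $b$ appears only multiplied by $S_{n-1}(x)$. Rewriting $S_{n-1}=xS_n-S_{n+1}$ gives $b(xS_n-S_{n+1})$, not an isolated $bS_n$. Solving yields
\[
t_{[\co,\ell]}\,S_{n-1}(x)=z^{m-\ell}\Big(2(m-\ell)S_{n}(x)-z(m-\ell)\,t_{[\co,m,n]}-(m-1-\ell)\,x\,S_{n-1}(x)\Big).
\]
This is the claimed identity with every Chebyshev index lowered by one, or equivalently with $t_{[\co,m,n]}$ replaced by $t_{[\co,m,n+1]}$. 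The claimed form would force $t_{[\co,m,n]}=t_{[\co,m,n+1]}$, so no handling of powers of $z$ can produce it.

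In fact the statement as printed is false. Take $\co=[0,0]$, $E=2$ (so $z=1$), $m=1$, $\ell=0$, $n=1$, $V=1$. Then $t_{[0,0,0]}=2$, $x=t_{[0,0,1]}=E-V=1$, and $t_{[0,0,1,1]}=t_{[0,0,2]}=E^2-EV-2=0$. The left side is $2S_1(1)=2$, while the right side is $2S_2(1)-0=0$.

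The paper's proof reaches the printed form only through an index slip. It writes $t_{[\co,m,n]}=S_{n+1}(x)\,t_{\co}-S_n(x)\,t_{[\co,m-1]}$, citing Proposition~\ref{Prop-traceMaps}~(\ref{enu:.Prop-traceMaps - 3}) ``with $l=n$'', but that right-hand side equals $t_{[\co,m,n+1]}$. So the statement to prove is the shifted identity above, and your argument establishes it once the algebra is written out.

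The shifted identity is also all that is needed later. Lemma~\ref{Lem-TraceMap_Est} and Lemma~\ref{lem: Trace estimates =00005Bc.m.n=00005D} go through with $S_n(x)$ replaced by $S_{n-1}(x)$. This uses Lemma~\ref{Lem-Chebyshev-Traces} at index $n$, together with $|S_{n-1}(x)|\ge n\ge1$ for $|x|\ge2$.
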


\begin{rem*}
This lemma is closely related to \cite[Prop.~2]{BIST89}. 
\end{rem*}
\begin{proof}
For the sake of simplicity, we abbreviate the notation in the following
and write $t_{\cop}=t_{\cop}(E,V)$ for $\cop\in\Co$. As a direct
consequence of Lemma~\ref{Lem-Chebyshev-Traces}~(\ref{enu: Chebychev - |x|=00003D2}),
we conclude that $S_{l}(t_{\co})\neq0$ for all $l\geq0$ since $|t_{\co}|=2$.
Proposition~\ref{Prop-traceMaps}~(\ref{enu:.Prop-traceMaps - 3})
(applied for $n=m-1$ and $l=m-2-\ell\geq-1$) leads to 
\[
t_{[\co,m]}=S_{m-1-\ell}\big(t_{\co}\big)t_{[\co,1+\ell]}-S_{m-2-\ell}\big(t_{\co}\big)t_{[\co,\ell]}.
\]
Since $S_{m-1-\ell}(t_{\co})\neq0$, we derive 
\begin{equation}
t_{[\co,1+\ell]}=\frac{1}{S_{m-1-\ell}\big(t_{\co}\big)}\Big(t_{[\co,m]}+S_{m-2-\ell}\big(t_{\co}\big)t_{[\co,\ell]}\Big).\label{Eq-TraceMap_Identity_1}
\end{equation}
Let $n\in\N$. Using again Proposition~\ref{Prop-traceMaps}~(\ref{enu:.Prop-traceMaps - 1})
and (\ref{enu:.Prop-traceMaps - 3}) (with $l=n$), we conclude 
\begin{align}
t_{[\co,m,n]}=S_{n+1}\big(t_{[\co,m]}\big)t_{[\co,m,0]}-S_{n}\big(t_{[\co,m]}\big)t_{[\co,m,-1]}=S_{n+1}\big(t_{[\co,m]}\big)t_{\co}-S_{n}\big(t_{[\co,m]}\big)t_{[\co,m-1]}.\label{Eq-TraceMap_Identity_2}
\end{align}
The case $\ell=0$ and $m=1$ need to be treated separately. We treat
this case later and first assume that if $\ell=0$ then $m\geq2$.
Then Proposition~\ref{Prop-traceMaps}~(\ref{enu:.Prop-traceMaps - 3})
(applied for $n=m-2$ and $l=m-3-\ell\geq-1$) and Equation~\eqref{Eq-TraceMap_Identity_1}
leads to 
\begin{align*}
t_{[\co,m-1]}= & S_{m-2-\ell}\big(t_{\co}\big)t_{[\co,1+\ell]}-S_{m-3-\ell}\big(t_{\co}\big)t_{[\co,\ell]}\\
= & \frac{S_{m-2-\ell}\big(t_{\co}\big)}{S_{m-1-\ell}\big(t_{\co}\big)}t_{[\co,m]}+t_{[\co,\ell]}\left(\frac{S_{m-2-\ell}\big(t_{\co}\big)^{2}}{S_{m-1-\ell}\big(t_{\co}\big)}-S_{m-3-\ell}\big(t_{\co}\big)\right)\\
= & \frac{S_{m-2-\ell}\big(t_{\co}\big)}{S_{m-1-\ell}\big(t_{\co}\big)}t_{[\co,m]}+t_{[\co,\ell]}\left(\frac{S_{m-2-\ell}\big(t_{\co}\big)^{2}-S_{m-3-\ell}\big(t_{\co}\big)S_{m-1-\ell}\big(t_{\co}\big)}{S_{m-1-\ell}\big(t_{\co}\big)}\right).
\end{align*}
Since $S_{k}S_{k-2}-S^{2}_{k-1}=-1$ for $k=m-1-\ell$ by Lemma~\ref{Lem-Chebyshev-Traces}~(\ref{enu: Chebychev - invariant}),
we conclude 
\begin{align*}
t_{[\co,m-1)}= & \frac{S_{m-2-\ell}\big(t_{\co}\big)}{S_{m-1-\ell}\big(t_{\co}\big)}t_{[\co,m]}+t_{[\co,\ell]}\frac{1}{S_{m-1-\ell}\big(t_{\co}\big)}.
\end{align*}
Inserting the latter into Equation~\eqref{Eq-TraceMap_Identity_2},
we get 
\begin{equation}
t_{[\co,m,n]}=S_{n+1}\big(t_{[\co,m]}\big)t_{\co}-S_{n}\big(t_{[\co,m]}\big)\frac{S_{m-2-\ell}\big(t_{\co}\big)}{S_{m-1-\ell}\big(t_{\co}\big)}t_{[\co,m]}-t_{[\co,\ell]}\frac{S_{n}\big(t_{[\co,m]}\big)}{S_{m-1-\ell}\big(t_{\co}\big)}.\label{Eq-TraceMap_Identity_3}
\end{equation}
We claim that the latter identity holds also if $\ell=0$ and $m=1$.
Indeed, this follows immediately from Equation~\eqref{Eq-TraceMap_Identity_2},
$t_{[\co,m-1)}=t_{[\co,\ell]}$, $S_{m-1-\ell}\big(t_{\co}\big)=1$
and $S_{m-2-\ell}\big(t_{\co}\big)=0$.

Now we can proceed with arbitrary $\ell\in\left\{ -1,0\right\} $
and $m\in\N$. Reorganizing Equation~\eqref{Eq-TraceMap_Identity_3}
leads to 
\begin{align*}
t_{[\co,\ell]}S_{n}\big(t_{[\co,m]}\big)= & S_{m-1-\ell}\big(t_{\co}\big)\Big(S_{n+1}\big(t_{[\co,m]}\big)t_{\co}-t_{[\co,m,n]}\Big)-S_{m-2-\ell}\big(t_{\co}\big)S_{n}\big(t_{[\co,m]}\big)t_{[\co,m]}.
\end{align*}
Since we assumed that $t_{\co}=t_{\co}(E,V)=2z$, Lemma~\ref{Lem-Chebyshev-Traces}~(\ref{enu: Chebychev - |x|=00003D2})
implies $S_{n}(t_{\co})=z^{n}(n+1)$ for $n\geq0$. Thus, 
\begin{align*}
t_{[\co,\ell]}S_{n}\big(t_{[\co,m]}\big)= & z^{m-1-\ell}\big(m-\ell\big)\Big(S_{n+1}\big(t_{[\co,m]}\big)2z-t_{[\co,m,n]}\Big)\\
 & -z^{m-2-\ell}\big(m-1-\ell\big)S_{n}\big(t_{[\co,m]}\big)t_{[\co,m]}
\end{align*}
follows proving the desired identity.
\end{proof}

\begin{lem}
\label{Lem-TraceMap_Est} Let $m,n\in\N$ and $\co\in\Co$ be such
that $[\co,m]\in\Co$. Let $V\in\R$ and $E\in\R$ be such that 
\[
|t_{\co}(E,V)|=2\qquad\text{and}\qquad|t_{[\co,m]}(E,V)|\geq2.
\]
Then for all $n\in\N$ and $\ell\in\left\{ -1,0\right\} $, 
\[
\left|t_{[\co,\ell]}(E,V)S_{n}\big(t_{[\co,m]}(E,V)\big)\right|\geq(m-\ell)\big(2-\big|t_{[\co,m,n]}(E,V)\big|\big)+2\big|S_{n}\big(t_{[\co,m]}(E,V)\big)\big|
\]
holds and the estimate is strict if additionally $|t_{[\co,m]}(E,V)|>2$.
\end{lem}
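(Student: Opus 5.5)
Abbreviate $x:=t_{[\co,m]}(E,V)$, $z:=\sgn(t_{\co}(E,V))$, $S_{k}:=S_{k}(x)$, $k':=m-\ell\geq1$ and $\sigma:=\sgn(x)$ (well defined, since $|x|\geq2$). The plan is to take absolute values in the identity of Lemma~\ref{Lem-TraceMap_Identity} after regrouping its right-hand side. Using $S_{n+1}=xS_{n}-S_{n-1}$, the bracket becomes
\[
2k'S_{n+1}-(k'-1)xS_{n}-zk't_{[\co,m,n]}
=k'\Big(2S_{n+1}-xS_{n}\Big)+\Big(xS_{n}-zk't_{[\co,m,n]}\Big)\quad\text{(up to rearrangement)}.
\]
The aim is to write this as $2k'\big(S_{n+1}-\tfrac{x}{2}S_{n}\big)+xS_{n}-zk't_{[\co,m,n]}$ and then multiply by the sign $\sigma^{n+1}$.

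The key steps use Lemma~\ref{Lem-Chebyshev-Traces}. By part (\ref{enu: Chebychev - |x|=00005Cgeq 2 - term =00005Cgeq 1}) with index $n+1$, we get $\sigma^{n+1}\big(S_{n+1}-\tfrac{x}{2}S_{n}\big)\geq1$. By part (\ref{enu: Chebychev - |x|=00005Cgeq 2}) with index $n+1$, we get $\sigma^{n+1}xS_{n}\geq2|S_{n}|$. The last term is controlled trivially: $\sigma^{n+1}zk't_{[\co,m,n]}\leq k'|t_{[\co,m,n]}|$. Combining the three bounds,
\[
\sigma^{n+1}\cdot\text{bracket}\;\geq\;2k'+2|S_{n}|-k'|t_{[\co,m,n]}|=k'\big(2-|t_{[\co,m,n]}|\big)+2|S_{n}|.
\]
Since $|z^{m-\ell}|=1$, the left-hand side of the identity in Lemma~\ref{Lem-TraceMap_Identity} has absolute value at least $|\sigma^{n+1}\cdot\text{bracket}|$, which is at least the displayed quantity. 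This is exactly the claimed inequality.

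For strictness when $|x|>2$, part (\ref{enu: Chebychev - |x|> 2 - term > 1}) applies with index $n+1\geq1$ and gives $\sigma^{n+1}\big(S_{n+1}-\tfrac{x}{2}S_{n}\big)>1$. Since $k'\geq1$, this strict improvement carries through the chain, so the final inequality is strict.

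The main obstacle is getting the regrouping of the bracket exactly right: the coefficient $(m-1-\ell)$ in front of $xS_{n}$ has to be split as $k'\cdot x S_n - xS_n$ so that both Chebyshev estimates apply with the common sign $\sigma^{n+1}$. I would check this algebra carefully, including the index shift between $S_{n}$ and $S_{n+1}$ in the cited parts of Lemma~\ref{Lem-Chebyshev-Traces}.
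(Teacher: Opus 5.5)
Your proposal is correct and is essentially the paper's argument: take absolute values in the identity of Lemma~\ref{Lem-TraceMap_Identity}, write the bracket as $2(m-\ell)\bigl(S_{n+1}-\tfrac{x}{2}S_{n}\bigr)+xS_{n}-z(m-\ell)t_{[\co,m,n]}$ with the common sign $\sgn(x)^{n+1}$, and apply parts (c)--(e) of Lemma~\ref{Lem-Chebyshev-Traces} with index $n+1$. The only cosmetic difference is that the paper first splits off the $t_{[\co,m,n]}$ term using the triangle inequality, whereas you bound the signed quantity $\sgn(x)^{n+1}\cdot(\text{bracket})$ directly, which is if anything slightly cleaner.
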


\begin{rem*}
The latter estimate is the general formula that we need to treat backward
type $A$ bands ($\ell=0$) or backward type $B$ bands ($\ell=-1$).
\end{rem*}
\begin{proof}
In order to simplify notation, set $t:=t_{[\co,m]}(E,V)$, $z_{1}:=\sgn(t)$
and $z_{0}:=\sgn(t_{\co}(E,V))$. Furthermore, we abbreviate the notation
and write $t_{\cop}=t_{\cop}(E,V)$ for $\cop\in\Co$. Due to Lemma~\ref{Lem-TraceMap_Identity}
and $z^{2n}_{1}=1$, we have 
\begin{align*}
&t_{[\co,\ell]}S_{n}\big(t\big)\\
	= &z^{m-\ell}_{0}z^{n+1}_{1}\Big(2(m-\ell)z^{n+1}_{1}S_{n+1}\big(t\big)-(m-1-\ell)z^{n+1}_{1}tS_{n}\big(t\big)-z_{0}z^{n+1}_{1}(m-\ell)t_{[\co,m,n]}\Big).
\end{align*}
Hence, 
\begin{align*}
\big|t_{[\co,\ell]}S_{n}\big(t\big)\big|\geq & \Big|2(m-\ell)z^{n+1}_{1}S_{n+1}\big(t\big)-(m-1-\ell)z^{n+1}_{1}tS_{n}\big(t\big)\Big|-\big|(m-\ell)t_{[\co,m,n]}\big|\\
= & 2(m-\ell)z^{n+1}_{1}\Big(S_{n+1}\big(t\big)-\frac{t}{2}S_{n}\big(t\big)\Big)+z^{n+1}_{1}tS_{n}\big(t\big)-(m-\ell)\big|t_{[\co,m,n]}\big|\\
\geq & 2(m-\ell)+2\big|S_{n}\big(t\big)\big|-(m-\ell)\big|t_{[\co,m,n]}\big|
\end{align*}
follows by first using the triangle inequality, secondly Lemma~\ref{Lem-Chebyshev-Traces}~(\ref{enu: Chebychev - |x|=00005Cgeq 2})
and (\ref{enu: Chebychev - |x|=00005Cgeq 2 - term =00005Cgeq 1})
since $|t|\geq2$ and finally Lemma~\ref{Lem-Chebyshev-Traces}~(\ref{enu: Chebychev - |x|=00005Cgeq 2})
and (\ref{enu: Chebychev - |x|=00005Cgeq 2 - term =00005Cgeq 1}).
Note that the last estimate is strict by Lemma~\ref{Lem-Chebyshev-Traces}~(\ref{enu: Chebychev - |x|> 2 - term > 1})
if $|t|>2$. This leads to the desired estimate.
\end{proof}

Now we can prove Lemma~\ref{lem: Trace estimates =00005Bc.m.n=00005D}.
\begin{proof}[Proof of Lemma~\ref{lem: Trace estimates =00005Bc.m.n=00005D}]
 Recall the assumptions of the proposition. Let $V\in\R$, $m,n\in\N$,
$\co\in\Co$. Let $I(V)$ be a spectral band of $\sigc(V)$ of backward
type $A$ or $B$. Set 
\[
\ell:=\begin{cases}
0,\qquad & I(V)\text{ is of backward type }A,\\
-1,\qquad & I(V)\text{ is of backward type }B.
\end{cases}
\]
Let $E\in\left\{ L(I(V)),~R(I(V))\right\} $. Then $|t_{[\co,\ell]}(E,V)|\leq2$
follows from Lemma~\ref{lem: Traces and spectral edges} and the
estimate is strict if $\varphi(\co)\in(0,1)$.

(\ref{enu: prop-Trace estimates =00005Bc.m.n=00005D - 1}) If $|\tcm(E,V)|\geq2$,
then Lemma~\ref{Lem-TraceMap_Est} and $m-\ell\geq1$ imply 
\begin{align*}
2\left|S_{n}\big(t_{[\co,m]}\big)\right|\geq & \left|t_{[\co,\ell]}S_{n}\big(t_{[\co,m]}\big)\right|\geq(m-\ell)\big(2-\big|t_{[\co,m,n]}\big|\big)+2\big|S_{n}\big(t_{[\co,m]}\big)\big|
\end{align*}
and so $|t_{[\co,m,n]}|\geq2$ must hold.

(\ref{enu: prop-Trace estimates =00005Bc.m.n=00005D - 2}) If $|\tcm(E,V)|>2$,
then Lemma~\ref{Lem-TraceMap_Est} and $m-\ell\geq1$ imply 
\begin{align*}
2\left|S_{n}\big(t_{[\co,m]}\big)\right|\geq & \left|t_{[\co,\ell]}S_{n}\big(t_{[\co,m]}\big)\right|>(m-\ell)\big(2-\big|t_{[\co,m,n]}\big|\big)+2\big|S_{n}\big(t_{[\co,m]}\big)\big|
\end{align*}
and so $|t_{[\co,m,n]}|>2$ must hold.

(\ref{enu: prop-Trace estimates =00005Bc.m.n=00005D - 3}) Since $\varphi(\co)\in(0,1)$,
Lemma~\ref{lem: Traces and spectral edges} asserts $|t_{[\co,\ell]}(E,V)|<2$.
If $|\tcm(E,V)|\geq2$, then Lemma~\ref{Lem-TraceMap_Est} and $m-\ell\geq1$
imply 
\begin{align*}
2\left|S_{n}\big(t_{[\co,m]}\big)\right|> & \left|t_{[\co,\ell]}S_{n}\big(t_{[\co,m]}\big)\right|\geq(m-\ell)\big(2-\big|t_{[\co,m,n]}\big|\big)+2\big|S_{n}\big(t_{[\co,m]}\big)\big|.
\end{align*}
Thus, $|t_{[\co,m,n]}|>2$ must hold.
\end{proof}

\section{A perturbation argument for eigenvalue interlacing\label{App: Perturbation argument}}

This section is devoted to the proof of Theorem~\ref{thm: perturbation thm for our matrices}.
Given an $n\times n$ hermitian matrix $X$, we denote its eigenvalues
in non-decreasing order by
\[
\lambda_{0}(X)\leq\lambda_{1}(X)\cdots\leq\lambda_{n-2}(X)\leq\lambda_{n-1}(X).
\]
We first recall a well-known result on interlacing of eigenvalues
using classical Weyl inequalities, see e.g. \cite[Cor.~4.3.3, Thm.~4.3.6]{HJ13}.

\begin{prop}
\label{prop: Rank1_Pert} Let $X$ and $Q$ be $n\times n$ hermitian
matrices, and suppose that $Q$ is a positive semidefinite, rank one
matrix.
\begin{enumerate}
\item \label{enu: Rank1_pert - (a)}For $j=1,\ldots n-1$, we have
\[
\lambda_{j-1}(X+Q)\leq\lambda_{j}(X)\leq\lambda_{j}(X+Q).
\]
\item \label{enu: Rank1_pert - (b)}For $j=0,\ldots n-2$, we have 
\[
\lambda_{j}(X-Q)\leq\lambda_{j}(X)\leq\lambda_{j+1}(X-Q).
\]
\end{enumerate}
\end{prop}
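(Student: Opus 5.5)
The plan is to derive both parts from the Courant--Fischer min--max characterization, i.e.\ Weyl's inequalities for a rank-one perturbation. Write $Q=vv^{*}$ with $v\neq0$; since $Q$ is positive semidefinite of rank one, such a $v$ exists. First, $X+Q\geq X$ in the Loewner order. The min--max formula
\[
\lambda_{j}(X)=\min_{\dim U=j+1}\ \max_{x\in U,\|x\|=1}\langle Xx,x\rangle
\]
is monotone under this order, so it gives $\lambda_{j}(X)\leq\lambda_{j}(X+Q)$ for every $j$. This is the right-hand inequality in (a).

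For the left-hand inequality in (a), $\lambda_{j-1}(X+Q)\leq\lambda_{j}(X)$, take $U$ to be the span of eigenvectors of $X$ belonging to $\lambda_{0}(X),\dots,\lambda_{j}(X)$, so $\dim U=j+1$. Then set $U'=U\cap v^{\perp}$, which has dimension at least $j$. On $U'$ the perturbation vanishes, so $\langle (X+Q)x,x\rangle=\langle Xx,x\rangle\leq\lambda_{j}(X)$ for unit vectors $x\in U'$. Choosing a $j$-dimensional subspace of $U'$ in the min--max formula for $\lambda_{j-1}(X+Q)$ then gives the claimed bound. The restriction $j\geq1$ is exactly what is needed for this subspace to be nonempty.

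Part (b) follows from (a) with no new work. Apply (a) to the pair $X':=X-Q$ and $Q$, so that $X'+Q=X$. This yields $\lambda_{j}(X-Q)\leq\lambda_{j}(X)$ and $\lambda_{j}(X)\leq\lambda_{j+1}(X-Q)$, the latter obtained by shifting the index in (a) by one, which is why $j$ runs from $0$ to $n-2$. The only point needing care is the dimension count $\dim(U\cap v^{\perp})\geq j$ together with these index ranges. There is no genuine obstacle, since this is the standard result cited from \cite{HJ13}.
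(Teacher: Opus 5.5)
Your proof is correct and is essentially the paper's approach: the paper gives no argument of its own but cites the classical Weyl/Courant--Fischer inequalities from Horn--Johnson, and your min--max argument is exactly that. The only index subtlety is the case $j=0$ of $\lambda_0(X-Q)\le\lambda_0(X)$ in (b), which comes from your monotonicity statement for all $j$ rather than from (a) as restricted to $j\ge1$, and you did establish that.
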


From these inequalities we can directly derive the following estimates
for traceless rank two perturbations.
\begin{cor}
\label{cor: Rank2_Pert} Let $X$ and $Y$ be $n\times n$ hermitian
matrices, and let $Q=Y-X$. If $Q$ has rank two and trace zero, then
\[
\lambda_{j-1}(Y)\leq\lambda_{j}(X)\leq\lambda_{j+1}(Y),\qquad j=1,2,\ldots n-2.
\]
\end{cor}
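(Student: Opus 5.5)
The plan is to decompose the rank-two traceless perturbation into a difference of two positive semidefinite rank-one matrices, and then apply Proposition~\ref{prop: Rank1_Pert} twice, once in each direction.

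First, since $Q=Y-X$ is hermitian with rank two and trace zero, it has exactly two nonzero eigenvalues, and they sum to zero. Write them as $\mu>0$ and $-\mu$, with orthonormal eigenvectors $u,v$. Then
\[
Q=Q_{+}-Q_{-},\qquad Q_{+}:=\mu\,uu^{*},\quad Q_{-}:=\mu\,vv^{*},
\]
where both $Q_{\pm}$ are positive semidefinite of rank one. (If both nonzero eigenvalues had the same sign, the trace could not vanish, so this case does not occur.) Introduce the intermediate hermitian matrix $Z:=X+Q_{+}$, so that $Y=Z-Q_{-}$.

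For the upper bound $\lambda_{j}(X)\leq\lambda_{j+1}(Y)$, I would apply Proposition~\ref{prop: Rank1_Pert}~(\ref{enu: Rank1_pert - (a)}) to $X$ and $Q_{+}$, which gives $\lambda_{j}(X)\leq\lambda_{j}(Z)$. Then Proposition~\ref{prop: Rank1_Pert}~(\ref{enu: Rank1_pert - (b)}) applied to $Z$ and $Q_{-}$ gives $\lambda_{j}(Z)\leq\lambda_{j+1}(Z-Q_{-})=\lambda_{j+1}(Y)$, valid for $j\leq n-2$.

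For the lower bound $\lambda_{j-1}(Y)\leq\lambda_{j}(X)$, I would use the intermediate matrix $W:=X-Q_{-}$ instead, so that $Y=W+Q_{+}$. Part~(\ref{enu: Rank1_pert - (a)}) applied to $W$ and $Q_{+}$ gives $\lambda_{j-1}(Y)\leq\lambda_{j}(W)$ for $j\geq1$. Part~(\ref{enu: Rank1_pert - (b)}) applied to $X$ and $Q_{-}$ gives $\lambda_{j}(W)\leq\lambda_{j}(X)$. Chaining these yields the claim for $1\leq j\leq n-2$.

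There is no real obstacle in this argument. The only points needing care are the sign structure of the spectrum of $Q$, which is forced by the trace-zero condition, and checking that the index ranges of Proposition~\ref{prop: Rank1_Pert} cover $j=1,\ldots,n-2$. The inequalities from part~(\ref{enu: Rank1_pert - (a)}) hold for all indices (one side is Weyl monotonicity), so the ranges are fine.
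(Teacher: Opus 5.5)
Your proof is correct and is essentially the paper's argument: you split $Q=Q_{+}-Q_{-}$ into rank-one positive semidefinite pieces by diagonalization (which the trace-zero condition makes possible), then apply part (a) of Proposition~\ref{prop: Rank1_Pert} followed by part (b). The only cosmetic difference is that the paper uses the single intermediate matrix $X+Q_{+}$ for both bounds, getting the lower one as $\lambda_{j-1}(Y)\le\lambda_{j-1}(X+Q_{+})\le\lambda_{j}(X)$, whereas you introduce a second intermediate matrix $X-Q_{-}$ for it; both routes are valid.
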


\begin{proof}
Using matrix diagonalization one may verify that there exist $n\times n$
hermitian, positive semidefinite matrices $Q_{1}$, $Q_{2}$ of rank
one such that $Q=Q_{1}-Q_{2}$. Applying first Proposition~\ref{prop: Rank1_Pert}~(\ref{enu: Rank1_pert - (a)})
to $X$ and $X+Q_{1}$, and then Proposition~\ref{prop: Rank1_Pert}~(\ref{enu: Rank1_pert - (b)})
to $X+Q_{1}$ and $Y=(X+Q_{1})-Q_{2}$ yields the desired inequalities.
\end{proof}

Let $\co=[0,c_{0},c_{1},\ldots,c_{k}]\in\Co$ be such that $k\in\N_{0}$
and $c_{k}\in\N$ if $k\geq1$. Note that this implies $\varphi(\co)\not\in\left\{ -1,\infty\right\} $.
Recall the notations of the matrices $\HcV(\theta)$ and $\nHcV(\theta)$
for $\theta\in[0,\pi]$ as introduced in Section~\ref{subsec: Perturbation argument}.
We aim to apply Corollary~\ref{cor: Rank2_Pert} to the matrices
\[
H_{[\co,m,n],V}\left(\theta_{[\co,m,n]}\right)\textrm{ and }\nHcmV\left(\theta_{[\co,m]}\right)\oplus H_{\co,V}(\theta_{\co})
\]
with appropriate choice of $\thc,\thcm,\thcmn\in\left\{ 0,\pi\right\} $.
It turns out that if $\thc,\thcm,\thcmn\in\left\{ 0,\pi\right\} $
are admissible (see Definition~\ref{def: admissibility}), then these
matrices are a rank two perturbation with trace zero of each other.
To formalize this statement, we define the matrix
\[
\Hcmn^{\oplus}(\theta_{[\co,m]},\thc):=\begin{cases}
\begin{pmatrix}H_{\co,V}(\theta_{\co}) & 0\\
0 & \nHcmV\left(\theta_{[\co,m]}\right)
\end{pmatrix} & \textrm{if \ensuremath{k\equiv0\mod 2},}\\
 & \textrm{}\\
\begin{pmatrix}\nHcmV\left(\theta_{[\co,m]}\right) & 0\\
0 & H_{\co,V}(\theta_{\co})
\end{pmatrix} & \textrm{if \ensuremath{k\equiv1\mod 2}.}
\end{cases}
\]

In the following statements we refer to vectors $x\in\R^{q}$ as column
vectors and use the notation $x^{t}$ to indicate the transpose of
a vector (which is then a row vector). We also use the notation $\langle x,y\rangle$
to denote the (Euclidean) inner product between vectors.
\begin{lem}
\label{lem: Pert_Hamil} Let $V\in\R$, $m,n\in\N$ and $\co=[0,c_{0},c_{1},\ldots,c_{k}]\in\Co$
be such that $k\in\N_{0}$ and $c_{k}\in\N$ if $k\geq1$ be such
that $[\co,m]\in\Co$. Let $\frac{p_{1}}{q_{1}}=\varphi(\co)$, $\frac{p_{2}}{q_{2}}=\varphi([\co,m])$
and $\frac{p_{3}}{q_{3}}=\varphi([\co,m,n])$ be such that $p_{i},q_{i}$
are coprime. If \textup{$\thc,\thcm,\thcmn\in\left\{ 0,\pi\right\} $
are admissible}, then there are $x:=x(\thc,\thcm,\thcmn),y:=y(\thc,\thcm,\thcmn)\in\R^{q_{3}}$
such that
\[
H_{[\co,m,n],V}\left(\theta_{[\co,m,n]}\right)-\Hcmn^{\oplus}(\theta_{[\co,m]},\thc)=xx^{t}-yy^{t}
\]
is a symmetric rank two perturbation with trace zero. Furthermore,
set 
\[
d_{1}:=\begin{cases}
q_{1} & \textrm{if }\ensuremath{k\equiv0\mod 2},\\
nq_{2} & \textrm{if }\ensuremath{k\equiv1\mod 2},
\end{cases}\qquad\textrm{and}\qquad d_{2}:=\begin{cases}
nq_{2} & \textrm{if }\ensuremath{k\equiv0\mod 2},\\
q_{1} & \textrm{if \ensuremath{k\equiv1\mod 2}}.
\end{cases}
\]
\begin{enumerate}
\item \label{enu: Pert_Hamil - n even - first components}If $w=(w_{1},\ldots w_{d_{1}},0,\ldots,0)^{t}\in\R^{q_{3}}$
is orthogonal to $x$ and to $y$ then $w_{1}=w_{d_{1}}=0$.
\item \label{enu: Pert_Hamil - n even - second components}If $w=(0,\ldots,0,w_{1},\ldots w_{d_{2}})^{t}\in\R^{q_{3}}$
is orthogonal to $x$ and to $y$ then $w_{1}=w_{d_{2}}=0$.
\end{enumerate}
\end{lem}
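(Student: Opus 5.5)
We will compute the difference matrix $D:=H_{[\co,m,n],V}(\thcmn)-\Hcmn^{\oplus}(\thcm,\thc)$ explicitly and then factor it. By Lemma~\ref{lem: periods of three approximants}~(\ref{enu: lem-periods of three approximants-3}) and (\ref{enu: lem-periods of three approximants-4}), the potential word of length $q_3=q_1+nq_2$ of $[\co,m,n]$ is the concatenation of the period of $\co$ and $n$ copies of the period of $[\co,m]$, in the order fixed by the parity of $k$. (The single-letter discrepancy in the odd case of part (\ref{enu: lem-periods of three approximants-4}) has to be checked to be compatible with this concatenation.) Hence the diagonals of the two matrices coincide. The off-diagonal hopping entries inside each block also coincide. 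The only differences are in the four "corner" positions $1,d_1,d_1+1,q_3$, so $D$ is supported on the coordinates $\{1,d_1,d_1+1,q_3\}$.

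Concretely, write $e_i$ for the standard basis vectors of $\R^{q_3}$, and set $a:=e^{i\theta_1}$, $b:=e^{i\theta_2}$, $c:=e^{i\thcmn}$, where $\theta_1,\theta_2$ are the phases of the first and second diagonal block. These phases are $\real$, namely $\pm1$, since all $\theta\in\{0,\pi\}$. Inside its block, $\Hcmn^{\oplus}$ carries the entry $a$ at $(1,d_1)$ and the entry $b$ at $(d_1+1,q_3)$. The matrix $H_{[\co,m,n],V}(\thcmn)$ instead carries $1$ at $(d_1,d_1+1)$ and $c$ at $(1,q_3)$. Therefore
\[
D=E_{d_1,d_1+1}+E_{d_1+1,d_1}+c(E_{1,q_3}+E_{q_3,1})-a(E_{1,d_1}+E_{d_1,1})-b(E_{d_1+1,q_3}+E_{q_3,d_1+1}).
\]
When $d_1=1$ or $q_3=d_1+1$, the corresponding block is $1\times1$ and its "corner" entry is a diagonal term; this degenerate case will be treated separately.

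Admissibility means $abc=1$. The plan is to use exactly this to show that the $4\times4$ matrix
\[
\begin{pmatrix}0&-a&0&c\\-a&0&1&0\\0&1&0&-b\\c&0&-b&0\end{pmatrix}
\]
has rank two. For instance, the second row times $-ac$ equals the fourth row when $abc=1$ and $a^2=1$, and the first and third rows are related in the same way. The matrix is symmetric with zero trace and rank two, so its nonzero eigenvalues are $\pm\mu$ with $\mu\neq 0$. Spectral decomposition then gives $D=xx^t-yy^t$, where $x,y$ are real vectors supported on $\{e_1,e_{d_1},e_{d_1+1},e_{q_3}\}$. One can guess them explicitly, for example $x\propto e_1-ae_{d_1}+\dots$, and verify by hand.

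For (a) and (b), a vector $w$ supported on the first block that is orthogonal to $x$ and $y$ is orthogonal to the span of $x$ and $y$, which is the column space of $D$. From the explicit form, that span projected onto $\{e_1,e_{d_1}\}$ is two-dimensional; equivalently, the $2\times2$ minor of $(x\,|\,y)$ on rows $1,d_1$ is nonsingular. This forces $w_1=w_{d_1}=0$, and (b) follows the same way on rows $d_1+1,q_3$. The main obstacle is bookkeeping rather than ideas. One has to match the block order to the parity of $k$ via the lemma on periods, including the off-by-one in part (\ref{enu: lem-periods of three approximants-4}). One also has to handle the degenerate small blocks, $q_1=1$ (e.g. $\co=[0,0]$) or $nq_2=1$, where corner entries merge onto the diagonal. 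Those cases I would check directly, expecting the same rank-two conclusion with $a+\bar a=2a$ appearing on the diagonal.
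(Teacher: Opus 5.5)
Your generic-case argument is correct and follows essentially the paper's route. The paper also reduces to the difference $D$ supported on the four corner coordinates $1,d_1,d_1+1,q_3$, and it simply lists $x,y$ for each admissible triple. You instead derive $x,y$ from the spectral decomposition, after using $abc=1$ to get rank two; the nonzero eigenvalues are $\pm2$, which matches the paper's $|x|^2=|y|^2=2$. Your argument for (a) and (b) is sound and justifies what the paper only calls straightforward: $w\perp x,y$ iff $Dw=0$, and $De_1,De_{d_1}$ (respectively $De_{d_1+1},De_{q_3}$) are linearly independent. Part (d) of Lemma~\ref{lem: periods of three approximants} plays no role here. Part (c), applied at index $k+2\equiv k \pmod 2$, already gives the exact concatenation of the diagonals, since $W_j$ is by definition the period of $\omega_{\alpha_j}$.

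The weak point is the degenerate cases you deferred: your expectation of ``the same rank-two conclusion'' there is wrong. Take $q_1=1$, i.e.\ $\co\in\{[0,0],[0,0,1]\}$; the paper does use these cases, e.g.\ in Section~\ref{sec: Pf_InductionBase}. The $1\times1$ block is $V\omega(0)+2\cos\thc$, so $D$ has the diagonal entry $-2\cos\thc$ and $\mathrm{tr}\,D=-2\cos\thc\neq0$. For instance, for $\co=[0,0]$ and $nq_2\ge2$, on the coordinates $1,2,q_3$ you get
\[
D=\begin{pmatrix}-2a&1&c\\1&0&-b\\c&-b&0\end{pmatrix}.
\]
Its determinant $2(a-bc)$ vanishes under $abc=1$, and its nonzero eigenvalues are $-a\pm2$. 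For $\co=[0,0]$, $m=n=1$ (so $q_3=2$) with all angles $0$, even the rank drops: $D=\begin{pmatrix}-2&2\\2&-2\end{pmatrix}$. So in these cases the lemma's trace-zero assertion (and, for $q_3=2$, its rank-two assertion) is false, and no argument can deliver the generic conclusion. The paper's explicit vectors, written with four distinct indices, tacitly assume $d_1,d_2\ge2$ as well.

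What does survive is that $D$ has at most one positive and one negative eigenvalue. Hence $D=xx^t-yy^t$, possibly with $|x|\neq|y|$ or with one of $x,y$ equal to zero. This is all that the proofs of Corollary~\ref{cor: Rank2_Pert} and Theorem~\ref{thm: perturbation thm for our matrices} actually use. Your $\ker D$ argument for (a) and (b) goes through unchanged in these cases. In the small cases, prove this weaker statement rather than the generic one.
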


\begin{proof}
Let $e_{1},\ldots,e_{q_{3}}$ be the standard orthonormal basis of
$\R^{q_{3}}$, namely $e_{i}$ is the $i$-th unit vector in $\R^{q_{3}}$.
Recall that for the continued fraction expansion, we have the identity
$q_{3}=nq_{2}+q_{1}=d_{1}+d_{2}$, see e.g. Lemma~\ref{lem: periods of three approximants}.
The statement of the lemma follows by straightforward calculations
invoking Lemma~\ref{lem: periods of three approximants}, so we just
explicitly write here the expressions for the $x,y\in\R^{q_{3}}$
in the statement of the lemma. Let $k\in\N$ be even. If $(\thc,\thcm,\thcmn)=(0,0,0)$,
then
\begin{align*}
x=\frac{1}{\sqrt{2}}\big(e_{1}-e_{nq_{2}}-e_{nq_{2}+1}+e_{q_{3}}\big),\quad & y=\frac{1}{\sqrt{2}}\big(-e_{1}-e_{nq_{2}}+e_{nq_{2}+1}+e_{q_{3}}\big).
\end{align*}
If $(\thc,\thcm,\thcmn)=(\pi,\pi,0)$, then 
\begin{align*}
x=\frac{1}{\sqrt{2}}\big(e_{1}+e_{nq_{2}}+e_{nq_{2}+1}+e_{q_{3}}\big),\quad & y=\frac{1}{\sqrt{2}}\big(-e_{1}+e_{nq_{2}}-e_{nq_{2}+1}+e_{q_{3}}\big).
\end{align*}
If $(\thc,\thcm,\thcmn)=(\pi,0,\pi)$, then 
\begin{align*}
x=\frac{1}{\sqrt{2}}\big(-e_{1}+e_{nq_{2}}+e_{nq_{2}+1}+e_{q_{3}}\big),\quad & y=\frac{1}{\sqrt{2}}\big(e_{1}+e_{nq_{2}}-e_{nq_{2}+1}+e_{q_{3}}\big).
\end{align*}
If $(\thc,\thcm,\thcmn)=(0,\pi,\pi)$, then 
\begin{align*}
x=\frac{1}{\sqrt{2}}\big(e_{1}+e_{nq_{2}}+e_{nq_{2}+1}-e_{q_{3}}\big),\quad & y=\frac{1}{\sqrt{2}}\big(-e_{1}+e_{nq_{2}}-e_{nq_{2}+1}-e_{q_{3}}\big).
\end{align*}
The case when $n\in\N$ is odd is treated similarly.
\end{proof}

\begin{rem}
Statements (\ref{enu: Pert_Hamil - n even - first components}) and
(\ref{enu: Pert_Hamil - n even - second components}) in Lemma~\ref{lem: Pert_Hamil}
are used to conclude strict inequalities in Theorem~\ref{thm: perturbation thm for our matrices}.
Towards this, we use that if two consecutive (up to cyclic permutation)
entries of a solution (see (\ref{enu: Pert_Hamil - n even - first components})
and (\ref{enu: Pert_Hamil - n even - second components})) vanish,
then the whole solution vanishes, since we have a nearest neighbor
interaction.
\end{rem}

Now we have all tools at hand to prove Theorem~\ref{thm: perturbation thm for our matrices}.
\begin{proof}[Proof of Theorem~\ref{thm: perturbation thm for our matrices}]
 Recall the statement of the theorem. Let $V>0$, $m,n\in\N$ and
$\co=[0,c_{0},\ldots,c_{k}]\in\Co$ be such that $\varphi(\co)\not\in\left\{ -1,\infty\right\} $
and $[\co,m]\in\Co$. Thus, $c_{k}\in\N$ if $k\geq1$. Furthermore,
$\thc,\thcm,\thcmn\in\left\{ 0,\pi\right\} $ are admissible, i.e.,
they satisfy $\thc+\thcm+\thcmn\in\left\{ 0,2\pi\right\} $. Consider
$Y=\HcmnV(\thcmn)$ and $X=\nHcmV(\thcm)\oplus\HcV(\thc)$. We need
to prove that 
\[
\lambda_{j-1}(Y)\leq\lambda_{j}(X)\leq\lambda_{j+1}(Y)
\]
and that the inequalities are strict whenever $\lambda_{j}(X)$ is
a simple eigenvalue of $X$. First, observe that by construction $X$
and $Z:=\Hcmn^{\oplus}(\theta_{[\co,m]},\thc)$ share the same eigenvalues
(with same multiplicities). Thus, the claimed inequalities follow
directly from Corollary~\ref{cor: Rank2_Pert} and Lemma~\ref{lem: Pert_Hamil}.
It is left to prove that those inequalities are strict if $\lambda_{j}(X)=\lambda_{j}(Z)$
is simple.

Let $n\in\N$ and borrow the notation of Lemma~\ref{lem: Pert_Hamil}
for $q_{1},q_{2},q_{3}\in\N$ and $d_{1},d_{2}\in\N$. Following Lemma~\ref{lem: Pert_Hamil},
there are $x:=x(\thc,\thcm,\thcmn),~y:=y(\thc,\thcm,\thcmn)\in\R^{q_{3}}$
such that $Y-Z=xx^{t}-yy^{t}$. Moreover, $x$ and $y$ satisfy the
assertions (\ref{enu: Pert_Hamil - n even - first components}) and
(\ref{enu: Pert_Hamil - n even - second components}) in Lemma~\ref{lem: Pert_Hamil}.
Set 
\[
Z(x):=Z+xx^{t}\qquad\text{and}\qquad Z(y):=Z-yy^{t}.
\]
Then we have 
\[
Y=Z+xx^{t}-yy^{t}=Z(x)-yy^{t}=Z(y)+xx^{t}.
\]
Recall that $q_{3}=nq_{2}+q_{1}=d_{1}+d_{2}$ (Lemma~\ref{lem: periods of three approximants})
and $\nHcmV(\thcm)$ is an $nq_{2}\times nq_{2}$ matrix while $\HcV(\thc)$
is an $q_{1}\times q_{1}$.

(a) We prove $\lambda_{j}(Z)<\lambda_{j+1}(Y)$. Assume by contradiction
that $\lambda_{j}(Z)=\lambda_{j+1}(Y)$ holds and $\lambda_{j}(Z)$
is a simple eigenvalue of $Z$. Due to Proposition~\ref{prop: Rank1_Pert}
(using that $xx^{t}$ and $yy^{t}$ are positive semidefinite), the
previous identities lead to 
\[
\lambda_{j}(Z)\leq\lambda_{j+1}(Z(y))\leq\lambda_{j+1}(Y)\quad\text{and}\quad\lambda_{j}(Z)\leq\lambda_{j}(Z(x))\leq\lambda_{j+1}(Y).
\]
Thus, 
\[
\lambda:=\lambda_{j}(Z)=\lambda_{j}(Z(x))=\lambda_{j+1}(Z(y))=\lambda_{j+1}(Y).
\]
follows by our assumption.

Let $w\in\R^{q_{3}}\setminus\left\{ 0\right\} $ be an eigenvector
of $Z$ corresponding to the eigenvalue $\lambda$. Since $\lambda$
is a simple eigenvalue of $Z$, then either (1) $\lambda$ is an eigenvalue
of $\nHcmV(\thcm)$ or (2) $\lambda$ is an eigenvalue of $\HcV(\thc)$,
but not both. These two cases can be treated similarly using Lemma~\ref{lem: Pert_Hamil}.
We only prove here case (1).

Since $\lambda$ is an eigenvalue of $\nHcmV(\thcm)$ but not of $\HcV(\thc)$,
we conclude that the corresponding eigenvector $w$ of $Z$ is of
the form $w=(w_{1},\ldots,w_{nq_{2}},0,\ldots,0)^{t}\in\R^{q_{3}}$
if $k\in\N$ is odd (where $k$ is determined by the length of the
tuple $\co$) and $w=(0,\ldots,0,w_{1},\ldots,w_{nq_{2}})^{t}\in\R^{q_{3}}$
if $k\in\N$ is even. Set $u:=(w_{1},\ldots,w_{nq_{2}})^{t}$.

We claim that $\langle x,w\rangle=0=\langle w,y\rangle$ holds. Before
proving this identity, let us show how these equalities finish our
proof. If $\langle x,w\rangle=0=\langle w,y\rangle$, then $w_{1}=w_{nq_{2}}=0$
follow from Lemma~\ref{lem: Pert_Hamil}~(\ref{enu: Pert_Hamil - n even - first components})
if $k$ is odd and from Lemma~\ref{lem: Pert_Hamil}~(\ref{enu: Pert_Hamil - n even - second components})
if $k$ is even. Since $\nHcmV(\thcm)u=\lambda u$ and each equation
in the system involves three consecutive (going cyclically) of the
entries of $u\in\R^{nq_{2}}$, we derive $u=0$ and so $w=0$. This
is a contradiction as $w\neq0$ is an eigenvector of $Z$ for the
eigenvalue $\lambda$.

Now let us prove the claim $\langle x,w\rangle=0=\langle w,y\rangle$.
Since $\lambda=\lambda_{j}(Z)=\lambda_{j}(Z(x))$, there is an eigenvector
$v$ of $Z+xx^{t}$ with eigenvalue $\lambda$. Using that $Z+xx^{t}$
is hermitian and $x^{t}w=\langle x,w\rangle$, we conclude 
\[
\lambda\langle v,w\rangle=\langle v,(Z+xx^{t})w\rangle=\langle v,Zw\rangle+\langle x,w\rangle\langle v,x\rangle=\lambda\langle v,w\rangle+\langle x,w\rangle\langle v,x\rangle
\]
implying $\langle x,w\rangle\langle v,x\rangle=0$. If $\langle v,x\rangle\neq0$,
we immediately derive $\langle x,w\rangle=0$ as desired. If $\langle v,x\rangle=0$,
then 
\[
\lambda v=(Z+xx^{t})v=Zv+\langle x,v\rangle x=Zv
\]
follows. Thus, $v=Cw$ holds for some $C\in\mathbb{R}\setminus\{0\}$
as $\lambda$ is a simple eigenvalue of $Z$ with eigenvector $w$.
Hence, $\langle v,x\rangle=0$ leads to $\langle w,x\rangle=0$ as
claimed.

Similarly, we conclude $\langle w,y\rangle=0$ using that $\lambda$
is an eigenvalue of $Z(y)$. 

(b) Similarly to case (a), we can prove $\lambda_{j-1}(Y)<\lambda_{j}(Z)$.
Assume by contradiction that $\lambda_{j-1}(Y)=\lambda_{j}(Z)$ holds
and $\lambda_{j}(Z)$ is a simple eigenvalue of $Z$. Then Proposition~\ref{prop: Rank1_Pert}
leads to 
\[
\lambda_{j-1}(Y)\leq\lambda_{j-1}(Z(x))\leq\lambda_{j}(Z)\quad\text{and}\quad\lambda_{j-1}(Y)\leq\lambda_{j}(Z(y))\leq\lambda_{j}(Z).
\]
Thus, our assumption yields 
\[
\lambda:=\lambda_{j}(Z)=\lambda_{j-1}(Z(x))=\lambda_{j}(Z(y))=\lambda_{j-1}(Y).
\]
Next, let $w\in\R^{q_{3}}\setminus\left\{ 0\right\} $ be an eigenvector
of $Z$ for the eigenvalue $\lambda$. By simplicity of the eigenvalue
$\lambda=\lambda_{j}(Z)$, $\lambda$ is either an eigenvalue of $\nHcmV(\thcm)$
or of $\HcV(\thc)$. Thus, $w$ has either the form $w=(w_{1},\ldots,w_{d_{1}},0,\ldots,0)^{t}\in\R^{q_{3}}$
or $w=(0,\ldots,0,w_{1},\ldots,w_{d_{2}})^{t}\in\R^{q_{3}}$. As before
one can show that in both cases $\langle x,w\rangle=0=\langle w,y\rangle$
holds. Then Lemma~\ref{lem: Pert_Hamil} yields $w_{1}=w_{d_{1}}=0$
(respectively $w_{1}=w_{d_{2}}=0$) and so $w=0$ follows, a contradiction.
\end{proof}

\end{document}